\DeclareMathAlphabet{\mathcall}{T1}{calligra}{m}{n}
\definecolor{darkblue}{rgb}{0.1,0.1,0.45}
\newtheorem*{prop*}{Proposition}
\newtheorem{RHP}{Riemann-Hilbert problem}
\newtheorem{col*}{Corollary}
\newcommand{\e}{\mathrm{e}}
\newcommand{\ii}{\mathrm{i}}
\renewcommand{\i}{\mathrm{i}}
\newcommand{\Ai}{\mathrm{Ai}}
\renewcommand{\d}{\mathrm{d}}
\renewcommand{\Im}{\operatorname{Im}}
\renewcommand{\(}{\left(}
\renewcommand{\)}{\right)}
\newcommand{\ol}{\overline}
\newcommand{\C}{\mathbb{C}}
\newcommand{\R}{\mathbb{R}}
\providecommand{\C}{\mathcal} \providecommand{\Db}{\mathbb}
\renewcommand{\Re}{\operatorname{Re}}
\renewcommand{\l}{\mathfrak{l}}
\renewcommand{\r}{\mathfrak{r}}
\newcommand{\widebar}{\overline}
\renewcommand{\l}{-}
\renewcommand{\r}{+}
 \newtheorem{theorem}{Theorem}[section]
    \newtheorem{corollary}[theorem]{Corollary}
    \newtheorem{proposition}[theorem]{Proposition}
    \newtheorem{Definition}[theorem]{Definition}
    \newtheorem{Remark}[theorem]{Remark}
    \newenvironment{remark}{\begin{Remark}\rm}{\end{Remark}}
    \newtheorem{Example}[theorem]{Example}
    \newtheorem{Assumption}[theorem]{Assumption}
\newtheorem{lemma}[theorem]{Lemma}
\numberwithin{equation}{section}
\begin{document}

\title{On the long time asymptotic behaviour  of the modified Korteweg de Vries equation  with step-like initial data}
\author{{\normalsize Tamara \textsc{Grava}$^\dagger$ and \normalsize  Alexander  \textsc{Minakov}$^\ddagger$}\\[1mm]
{\scriptsize{$^{\dagger}$SISSA, via Bonomea 265, 34136 Trieste, Italy and School of Mathematics, University of Bristol, UK}}\\
{\scriptsize $^{\ddagger}$Institut de Recherche en Math\'{e}matique et Physique (IRMP), Universit\'{e} Catholique de Louvain
(UCL), Louvain-la-Neuve, Belgium}\\
{\scriptsize{grava@sissa.it \qquad oleksandr.minakov@uclouvain.be}}\\
}
\date{}
\maketitle

\begin{abstract}
We study the long time asymptotic behaviour of  the  solution  $q(x,t) $, $x\in\mathbb{R}$, $t\in\mathbb{R}^+$,  of the modified Korteweg de Vries equation  (MKdV)  $q_t+6q^2q_x+q_{xxx}=0$
  with step-like initial datum  
\begin{equation*}
q(x,0)\to
\begin{cases}
c_-\quad \mbox{ for $x\to-\infty$}\\
c_+\quad \mbox{ for $x\to +\infty$},
\end{cases}
\end{equation*}
with $c_->c_+\geq 0$.
For the  step 
 initial data
\begin{equation*}
q(x,0)=
\begin{cases}
c_-\quad \mbox{ for $x\leq0$},\\
c_+\quad \mbox{ for $x>0$},
\end{cases}
\end{equation*}
the solution develops an oscillatory region  called dispersive shock wave region that   connects the two constant regions $c_+$ and $c_-$.
We show that the dispersive shock wave  is described by a modulated periodic travelling wave solution of the  MKdV  equation where the modulation parameters evolve
according to \color{black}{a} Whitham modulation equation.
 The oscillatory region is expanding within a cone in the $(x,t)$  plane defined as 
$ -6c_{-}^2+12c_{+}^2<\frac{x}{t}<4c_{-}^2+2c_{+}^2,$ with $t\gg 1$.
  
For step-like initial data we show that the solution decomposes for long times in three main regions:
\begin{itemize}
\item a   region where solitons  and  breathers   travel with positive velocities on a constant background $c_+$;
\item an expanding oscillatory region   {\color{black} (that generically  contains breathers)};
\item a region  of breathers travelling with negative velocities on the constant background $c_-$. 
\end{itemize}
When the oscillatory region does not contain  breathers, the form of the asymptotic  solution coincides up to a phase shift with the dispersive shock wave solution obtained for the step initial data.
The phase shift depends on  the solitons, breathers and the radiation of the initial data.
This shows that   the dispersive shock wave is a coherent structure that interacts in an elastic way with solitons,  breathers and radiation.
\end{abstract}

%

\section{Introduction}
We consider the   Cauchy problem for the focusing modified Korteweg--de Vries  (MKdV) equation 
\begin{equation}\label{MKdV}
q_t(x,t)+6q^2(x,t)q_x(x,t)+q_{xxx}(x,t)=0,\quad x\in\mathbb{R},\;t\in\mathbb{R}^+,
\end{equation}
with a step--like initial datum $q_0(x)$  of the form
\begin{equation}\label{ic0}
q(x,0)=
q_0(x)\to c_{\pm}\qquad {\rm as }\quad x\to\pm\infty,
\end{equation}
where $c_{\pm}$ are some  real  constants.
We are interested in the long-time behavior of the solution.

Due to symmetries $q\mapsto-q$ and $x\mapsto-x, t\to-t,$ it is enough to consider the case
\begin{equation}\label{ic_assump}
 c_{-}\geq |c_{+}|,
\end{equation}
\color{black} since the other cases of mutual location of the constants $c_-, c_+$ can be reduced to it; however, the cases $c_->c_+\geq0$ and $c_->0>c_+\geq-c_-$  though qualitatively similar, leads to a quite different asymptotic analysis.
In the present manuscript   we restrict ourselves to the case $c_->c_+\geq0$  and  we discuss  briefly the differences with respect to the case 
$c_+<0,$ $|c_+|\leq c_-$  in the Appendix~\ref{WhithamApp}.

The   focusing MKdV equation  is a canonical model  for  the  description  of  nonlinear  long  waves  when there is a polarity symmetry, and it has many
physical applications; \color{black}{ in particular, } this includes waves in a quantized film \cite{PS}  internal ocean waves \cite{Grimshaw}, ion acoustic waves in a two component plasma \cite{RTP}.
\color{black}
The MKdV equation  is an integrable equation  \cite{Wadati} with an infinite number of conserved quantities.
For the  class of initial data  considered, the classical mass and momentum   have to be replaced by the conserved quantities
  \begin{equation*}
  \begin{split}&
  H_0 = \int\limits_{-\infty} ^ {x} (q(\widetilde x, t)-c_-)\d\widetilde x + \int\limits_{x}^{+\infty} (q(\widetilde x, t) - c_+) \d \widetilde x +(c_--c_+)x-2(c_-^3-c_+^3)t,
  \\&
    H_1 = \int\limits_{-\infty} ^ {x} (q^2(\widetilde x, t)-c^2_-)\d\widetilde x + \int\limits_{x}^{+\infty} (q^2(\widetilde x, t) - c^2_+) \d \widetilde x +(c^2_--c^2_+)x-4(c_-^4-c_+^4)t.
    \end{split}
  \end{equation*}

  
%
%
%

The study of the long-time asymptotic behaviour  of integrable  dispersive equations with   initial datum vanishing at infinity 
 was initiated in the mid-seventies  using the inverse scattering \color{black}{method} in the works of  Ablowitz and Segur \cite{AS} and Manakov and Zakharov \cite{MZ}. In the seminal paper \cite{DZ93} Deift and Zhou  
  introduced   the  steepest descent  method  for oscillatory  Riemann-Hilbert (RH) problems  to study the long-time asymptotic behaviour of the defocusing MKdV   equation with initial data vanishing at infinity. Such technique was extensively implemented in the asymptotic analysis of a wide variety of integrable problems  ( see e.g. \cite{DIZ}) \color{black}{(which in turn can be applied to some near-integrable cases, like the long-time behaviour of the perturbed defocusing nonlinear Schr\"odinger equation \cite{DZnonint}).}
  An extension of the  steepest descent  method  for oscillatory RH problems, called $\bar{\partial}$ method, was introduced in \cite{MM1}  and applied  to study the long-time behaviour of integrable dispersive equations with   initial data with low regularity  \cite{BJM, CL, CL2, DMM} in the strongly nonlinear regime.  In particular, for the MKdV equation  there is a vast body of literature studying   existence of solution
  for initial data with low regularity (see e.g. \cite{Linares}). 
  Regarding the weakly nonlinear regime, the long-time asymptotic behaviour with small initial data   is quite similar for the focusing and defocusing MKdV equation and it was also  obtained without using  the integrability property 
   in \cite{Germain}, \cite{Harrop} and  \cite{Hayashi}.
  In the last years a vast literature of results concerns the long-time dynamics of initial boundary value problems of nonlinear dispersive equations. For a review see \cite{BoutetKotlyarovShepelskyZheng}.

%
The first results on the long time asymptotic analysis of Cauchy problems  with step like initial data were obtained for the Korteweg-de Vries (KdV) equation. Physicists have begun to understand the qualitative behaviour of the solution with the pioneering work of Gurevich and Pitaevsky \cite{GP}, who, working in the framework of Whitham theory \cite{W}, predicted  the  appearance of high oscillations  called "dispersive shock waves".  These oscillations were described by modulated travelling waves.
This phenomenon was justified rigorously in the pioneering work of Khruslov \cite{Kh2} who, working in the framework of the inverse scattering theory, obtained formulas for the first finite number of peaks  of the oscillations (which were called {\it asymptotic} solitons to distinguish them from  the usual solitons). This approach was extended to many other integrable models \cite{KK}.
Note that for a long time it was believed that Khruslov's solitons can be obtained from the Gurevich-Pitaevsky dispersive shock wave, until it was shown in \cite{KM2015, BM} that dispersive shock waves (expressed in terms of elliptic functions) do not describe the asymptotic behaviour  of the wave in Khruslov's region, and the full matching of the two regions was obtained in \cite{BM}  for MKdV and in \cite{CG} for KdV.
Using ansatz for solutions of RH problems, Bikbaev \cite{Bikb1} obtained interesting results for step-like quasi-periodic initial data.
 The KdV dispersive shock wave solution   emerges also in the small dispersion limit \cite{DVZ}. In particular, for the exact step initial data, the long time asymptotic and the small dispersion asymptotic description are equivalent, while for step like initial data the two asymptotic descriptions are   quite different. Indeed the dispersive shock wave obtained from  the long time asymptotic limit is always described  by the self-similar solution of the Whitham modulation equations \cite{W}, while in the small dispersion limit  this is not generically the case  (see e.g. \cite{Grava_LN}, \cite{GT}).
 
 Implementation of the rigorous asymptotic analysis to step-like Cauchy problems for integrable equations started in the papers \cite{BV, BIK}.
 Since then,  the   long-time asymptotic behaviour of dispersive equations with {\it step-like}  initial conditions has been studied for KdV in  \cite{EGKT, EGT16},  for the nonlinear Schr\"odinger equation in \cite{BiM, BKS11, BV, RS20, MLS20,Jenkins, CJ}, for the Camassa-Holm equation in \cite{M16}.
For the MKdV equation the analysis was initiated in the work \cite{KK} and  later  in \cite{KM, KM2, KM2015, BM} via the asymptotic analysis of the RH problem,  in \cite{Leach, Marchant} via matching ansatz method and in \cite{EHS} via the Whitham method.  
The main feature in the long-time behaviour that distinguish  step-like initial conditions from decaying initial conditions is the formation of an oscillatory region that connects the  different behaviour at $\pm\infty$ of the solution.   These oscillatory regions are typically described by elliptic or hyperelliptic   modulated waves. 

%

%
%


\color{black}

The scattering problem  for MKdV  with  non vanishing initial condition was developed in \cite{KM}, \cite{KM2}, \cite{AK91}.
The linear spectral problem  is a non self-adjoint problem  and for  a step-like  initial data  $q_0(x)$ as in \eqref{ic0} and satisfying certain assumption (see below) the Zakharov-Shabat or AKNS operator for \eqref{MKdV}
has a continuous spectrum $r(k):\Sigma\to \C$ where $\Sigma=\mathbb{R}\cup[-\i c_-,\i c_-]$ and  {\color{black}  generically  it might have   discrete spectrum anywhere in  $\mathbb{C}\backslash \{\mathbb{R}\cup [-\i c_-,\i c_-]\}$, }
that corresponds to the zeros of $a(k)$, the inverse of the transmission coefficient.
  Pure imaginary  couples of conjugated eigenvalues correspond to solitons, while quadruplets of complex conjugated eigenvalues correspond to breathers  \cite{Wadati}. A breather is a solution that is 
periodic in the time variable and decay exponentially in the space variable. 
 Unlike the KdV equation the MKdV equation can have higher order solitons and breathers.  In this manuscript we consider the  generic case when only first order solitons and breathers appear. Nongeneric cases are considered in the Appendix~\ref{sect_nongen}.
  First order solitons and breathers are    the  fundamental localised non radiating solutions   of the MKdV equation.
Since the MKdV equation is not Galilean invariant,    solitary wave solutions  and breathers on a constant background $c>0$  cannot be  mapped to solutions on zero background. 

Our main result   contained in Theorem~\ref{thrm:asymp:rl} below,  is to show that the long-time asymptotic solution of the MKdV equation  with step-like initial data  of the form \eqref{ic0} with $c_->c_+\geq 0$ decomposes into three main regions:
\begin{itemize}
\item a region of  solitons  and  breathers  on a constant background $c_{+}$  travelling in the positive direction;
\item a  dispersive shock wave region, which connects the two different asymptotic behaviours of the initial data and interact elastically with breathers  and solitons.
 This  region is  described by a modulated travelling wave solution of  MKdV, or  by  a modulated travelling wave solution and breathers on an elliptic background.
 \item a  region  of breathers  on a constant background $c_{-}$  {\color{black} travelling with a slower speed with respect to the dispersive shock wave}. This region contains also radiation decaying in time.

\end{itemize}

The localised travelling wave   solution on a constant background  $c>0$ is parametrised by two  real  constants  $\nu$ and  $\kappa_0>c$, where \color{black} the points $\pm\i \kappa_0$  constitute the \color{black} discrete spectrum of the Zakharov- Shabat
 linear operator, (namely they are the simple zeros of $a(k)$) \color{black}{ and $\nu$ is the corresponding norming constant}, and  takes the form
\begin{equation}\label{soliton_intro}
q_{soliton}(x,t;c,\kappa_0,x_0)=c- \frac{2\, \mathrm{sign} (\nu)(\kappa_0^2-c^2)}{\kappa_0\cosh\left[2\sqrt{\kappa_0^2-c^2}\(x-(2c^2+4\kappa_0^2)t\)+x_0\right]-\mbox{sign}(\nu) c},
\end{equation}
  where  the phase shift $x_0$ depends on the spectral data via the relation $$x_0=\log\dfrac{2(\kappa_0^2-c^2)}{|\nu|\kappa_0}\in\mathbb{R}.$$ The solution with  $\nu<0$ is called soliton  and corresponds to  a positive hump, while the solution with $\nu>0$  is called antisoliton and corresponds to a negative hump. In both cases  the speed  is $4\kappa_0^2 +2c^2$, namely the speed of the soliton increases with the size of the step.
   The  maximal   amplitude of   the soliton is $2\kappa_0-c$ while the  minimal  amplitude  of the antisoliton is  $-2\kappa_0-c$.
\color{black}{This means that the values of the soliton span over the interval $[0,2\kappa_0-c],$ with a pronounced peak at the maximal value $2\kappa_0-c,$ and the antisoliton ranges from $0$ to $-2\kappa_0-c,$ with a pronounced peak at the minimal value $-2\kappa_0-c.$}


%
The  breather solution  on a constant background $c$ have been obtained using the bilinear method  and Darboux transformations  in \cite{Grim1},  and inverse scattering in \cite{Alejo}.
In this manuscript we obtain the breather on a constant background as a solution of a RH  problem that is parametrised by the  complex number   $\kappa$,  with $\Re\kappa> 0$, $\Im\kappa>0$,  \color{black} 
 and by the complex parameter $\nu$. 
Introducing the   complex number $\chi$  defined as $\chi=\chi_1+\i \chi_2=\sqrt{\kappa^2+c^2}$, with $\chi_1>0$, $\chi_2>0,$  the breather solution  on a constant background takes the form
\begin{equation}
  \label{q_breath_intro}
  \begin{split}
& q_{breather}(x,t;c,\kappa,\nu)=c+\\
 &+2\partial_x\arctan\left[\dfrac{{|\chi|}\cos\varphi +\frac{c|\nu| \chi_1^2}{2|\chi|^2 \chi_2}\e^{- 2 Z \chi_2}}{\frac{|\chi|^2}{|\nu|}\e^{2 Z \chi_2}+\frac{\chi_1^2(|\chi|^2-c^2)}{4|\chi|^2\chi_2^2}|\nu|\e^{ -2 Z \chi_2}+c \sin\left(\varphi-\theta_2\right)}\right]\,,
\end{split}
\end{equation}
\noindent  where
 \begin{equation*}
 Z=x+4t(3\chi_1^2-\chi_2^2-\frac32c^2),
 \qquad
 \varphi=2(Z-8t|\chi|^2)\chi_1+\theta_1-\theta_2,
 \end{equation*}
 and  phases \color{black}{$\theta_1=\mbox{arccos}\dfrac{-\Im \nu}{|\nu|}$} and $\theta_2=\mbox{arccos}\dfrac{\chi_1}{|\chi|}$.
 \color{black}{Note that the denominator in \eqref{q_breath_intro} can be written as $\cosh$ when $|\chi|>c$ and as $\sinh$ when $|\chi|<c$. Despite having a $\sinh$ in the denominator, the expression remains regular  (see Remark~\ref{Remark_b}  and cfr.  \cite{Alejo}).
 \color{black}
 On the line $Z=0$    the breather oscillates with period $\dfrac{\pi}{8|\chi|^2\chi_1}$
 and the envelope of the oscillations moves with  a speed 
 \begin{equation}
 \label{breather_speed}
 V=4\chi_2^2+6c^2-12\chi_1^2,\;\;\;\;\; \chi_1=\Re\sqrt{\kappa^2+c^2},\;\;\chi_2=\Im \sqrt{\kappa^2+c^2}.
 \end{equation}
 We observe that for fixed $\kappa$ and large values of $c$ the velocity of the breathers is always negative.
 The level set  of the spectrum   in the complex  $\kappa$-plane corresponding to breathers with   equal speed is shown in Figure~\ref{fig_breather}.
 \begin{figure}[htb]
 \begin{center}
\hskip-1cm\epsfig{width=4.1cm,figure=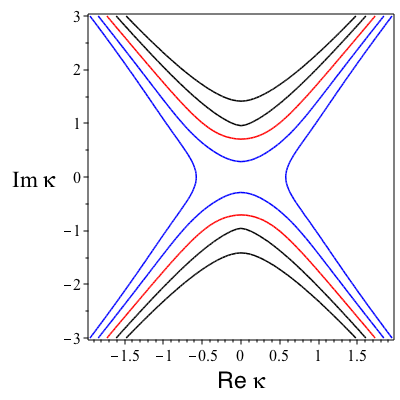}
\epsfig{width=4.5cm,figure=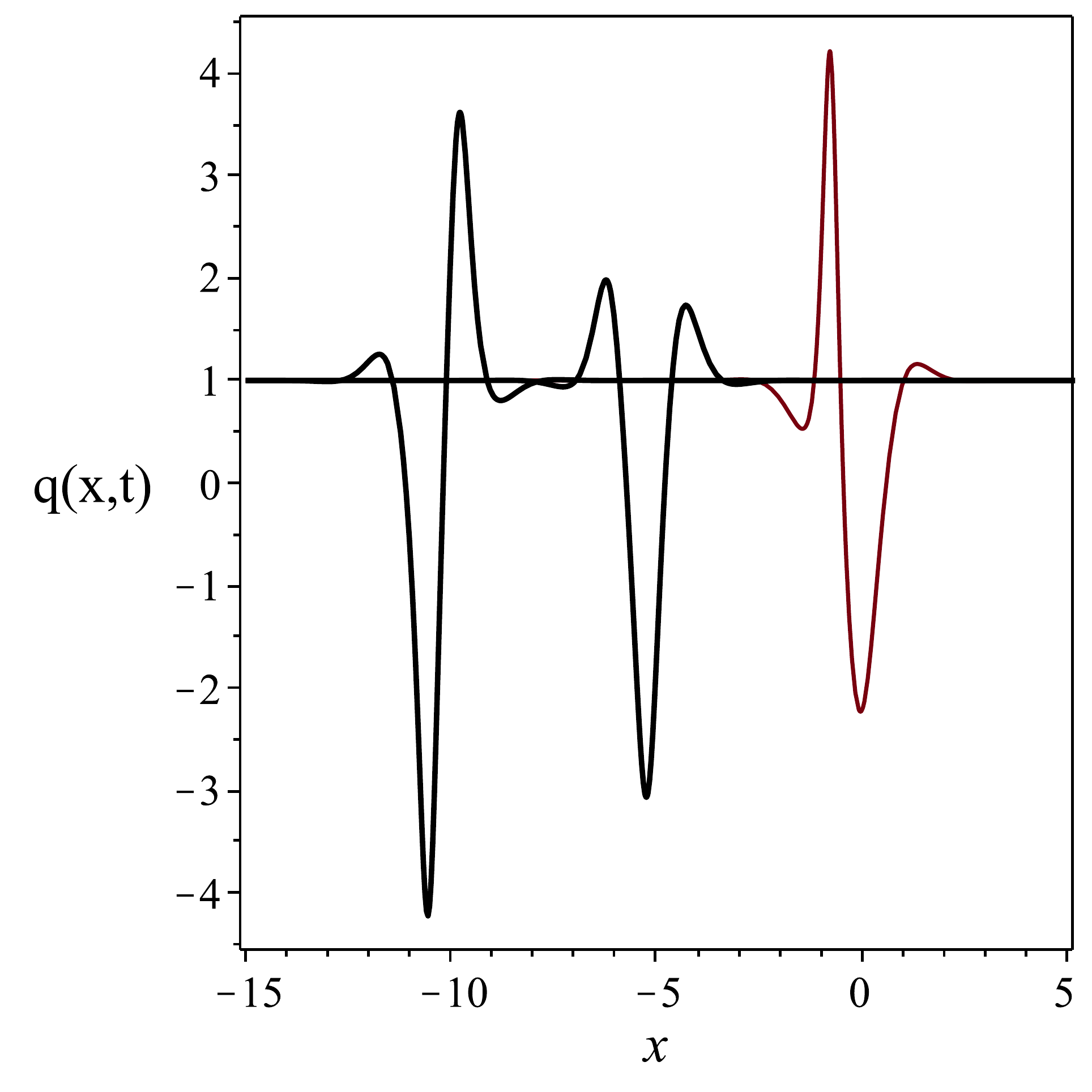}
\epsfig{width=4.5cm,figure=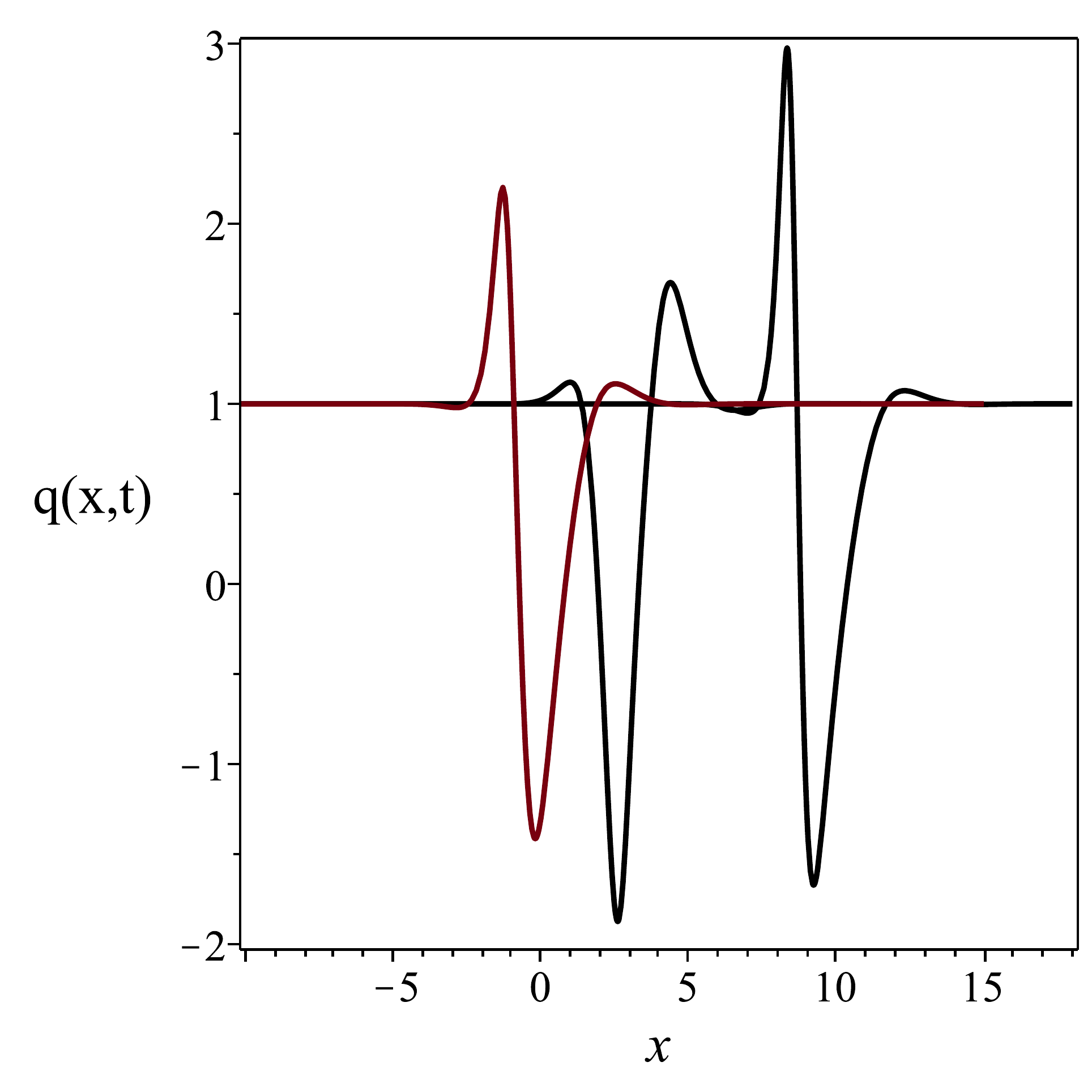}
\end{center}
\caption{The  level  set of curves  of the breather speed  $V$ in \eqref{breather_speed} in the plane $(\Re\kappa,\Im\kappa)$ for  $c=1$. In black the level curves with positive velocity, in blue with negative velocity and in red the line with zero velocity. The snapshot  of a breather at  $t=0$ (red) and at later times in black. In the first figure $c=1,\Re\kappa=1,\Im \kappa=1.5$ and $V<0$ and in  the second figure $c=1,\Re\kappa=0.5,\Im\kappa=1.5$ and $V>0$.}
 \label{fig_breather}
 \end{figure}

 It has been shown in \cite{Grim1} that solitons and breathers on a constant background  $c>0$  interact in an elastic way as in the case $c=0$.
 The breather \eqref{q_breath_intro} turns into a pair of soliton and antisoliton  \eqref{soliton_intro} on a constant background when we let $\chi_1=0$  and $\chi_2>0$.
 

For $c=0$  we have $ \chi_1=\Re\kappa$ and $ \chi_2=\Im \kappa$  and the  solution \eqref{q_breath_intro} reduces to the standard  breather~\cite{WO82}
\begin{align*}
q_{breather}(x,t;c=0,\kappa,\nu)&=2\partial_x\arctan\left[\dfrac{(\Im \kappa) \cos \varphi}{(\Re \kappa) \cosh\Theta}\right]\\
&=
-4(\Im \kappa)(\Re \kappa) \dfrac{(\Im \kappa) \sinh\Theta\cos \varphi+(\Re \kappa) \cosh \Theta\sin \varphi}{(\Im \kappa)^2\cos^2( \varphi)+(\Re\kappa)^2\cosh^2\Theta}\,,
\end{align*}
where 
\begin{equation*}
\Theta=2\Im \kappa\left(x+4\left(3(\Re \kappa)^2-(\Im \kappa)^2\right)t\right)+\log\dfrac{2\Im \kappa |\kappa|}{\Re\kappa|\nu|},
\end{equation*}
and\color{black}{\begin{equation*}
 \varphi(x,t)=2\Re\kappa\left(x+4\left((\Re \kappa)^2-3(\Im \kappa)^2\right)t\right)+\mbox{arccos}\dfrac{-\Im \nu}{|\nu|}-\mbox{arccos}\dfrac{\Re \kappa}{|\kappa|}.
\end{equation*}}
It has been shown in \cite{CGM} that   the formation of breathers is generic  for certain 
\color{black}
compactly supported
\color{black} initial conditions.

\noindent 
The periodic travelling  wave solution of the MKdV equation  takes the form  (see Appendix\ref{Appendix_travelling})
\begin{equation}
\label{periodic_intro}
\begin{split}
q_{per}(x,t;\beta_1,\beta_2,\beta_3,x_0&)=-\beta_1-\beta_2-\beta_3+\\
&+\frac{2(\beta_2+\beta_3)(\beta_1+\beta_3)}{\beta_2+\beta_3-(\beta_2-\beta_1)\mathrm{cn}^2\(\sqrt{\beta_3^2-\beta_1^2} (x-{\mathcal V}t)+x_0|m\)},
\end{split}\end{equation}
where $\beta_3>\beta_2>\beta_1$,  the speed ${\mathcal V}=2(\beta_1^2+\beta_2^2+\beta_3^2)$ and $x_0$ is an arbitrary phase. 
The function   $\mathrm{cn}(z|m)$ is the Jacobi elliptic function of modulus 
\begin{equation}
\label{mm}
m^2=\dfrac{\beta_2^2-\beta_1^2}{\beta_3^2-\beta_1^2},
\end{equation}
and $\mathrm{cn}(z+2K(m)|m)=-\mathrm{cn}(z|m)$ where $K(m)=\int_0^{\frac{\pi}{2}}\dfrac{ds}{\sqrt{1-m^2\sin^2s}}$ is the  complete  elliptic integral of the first kind.
The periodic solution \eqref{periodic_intro} has wave number  $\mathrm{k}$, frequency $\omega$   and amplitude $a$  given by 
\begin{equation*}
\mathrm{k}=\dfrac{\pi\sqrt{\beta_3^2-\beta_1^2}}{K(m)},  \quad \omega=\mathcal{V}{\mathrm k},\quad a=2(\beta_2-\beta_1),
\end{equation*}
respectively.  When $m\to 1,$  the travelling wave solution (\ref{periodic_intro})  converges to the soliton solution  (\ref{soliton_intro})   with $\beta_2=\beta_3=\kappa_0$ and $\beta_1=c$.
 \color{black}

 Our main result concerns the asymptotic description  for large times of the MKdV initial value problem for step-like initial data of the form \eqref{ic0}.
 Before stating our result, we  remark that the description   in \cite{KM2}  of the long-time behaviour  of the solution of MKdV with  the  shock  initial data  
 \begin{equation}\label{ic0exact}
q_0(x)=\begin{cases}
c_{+}\qquad {\rm for}\quad
x>0,\\c_{-}\qquad {\rm for }\quad x\leq0, \end{cases}
\end{equation}
with $c_->c_+>0$  is as follows: there are two constant regions $\frac{x}{t}<-6c_{-}^2+12c_{+}^2-\delta$ and $\frac{x}{t}>4c_{-}^2+2c_{+}^2+\delta$ for any sufficiently small $\delta>0$,   where the solution $q(x,t)=c_\mp+ \it{o}(1)$   as $t\to\infty$  respectively. The   solution that connects the   two constant regions is oscillatory and it is  described   in terms of a  genus $2$ quasi-periodic solution. 
In our  asymptotic analysis  we  show that such a genus 2 solution \color{black}{ is in fact  a genus 1 solution,} and can be reduced to the modulated travelling wave solution  (\ref{periodic_intro}) of MKdV, namely 
 \begin{equation}
 \label{qp1}
   q(x,t)=q_{per}(x,t,c_{-},d,c_{+},x_0)+O(t^{-1}), \quad  -6c_{-}^2+12c_{+}^2+\delta<\frac{x}{t}<4c_{-}^2+2c_{+}^2-\delta,
   \end{equation}
    where   $d=d(x,t)$ depends on   space and time  according to 
 \begin{equation}
 \label{Whitham}
 \dfrac{x}{t}=W_2(c_{+},d,c_{-}).
 \end{equation}
 Here 
 \begin{equation}
 \label{Whitham_d0}
 W_2(\beta_1,\beta_2,\beta_3)=2(\beta_1^2+\beta^2_2+\beta_3^2)+4\dfrac{(\beta_2^2-\beta_1^2)(\beta_2^2-\beta_3^2)}{\beta_2^2-\beta_3^2+(\beta_3^2-\beta_1^2)\frac{E(m)}{K(m)}},
 \end{equation}
 with $\beta_1\leq \beta_2\leq \beta_3$ and   $E(m)=\int_0^{\frac{\pi}{2}}\sqrt{1-m^2\sin^2\theta}d\theta$  the complete elliptic integral of the second kind. We have  that
  $$W_2(c_{+},c_{+},c_{-}) =-6c_{-}^2+12c_{+}^2<\frac{x}{t}<  4c_{-}^2+2c_{+}^2=W_2(c_{+},c_{-},c_{-}).$$  
  \color{black}
  \begin{figure}[htb]
 \begin{center}
 \includegraphics[scale=0.3]{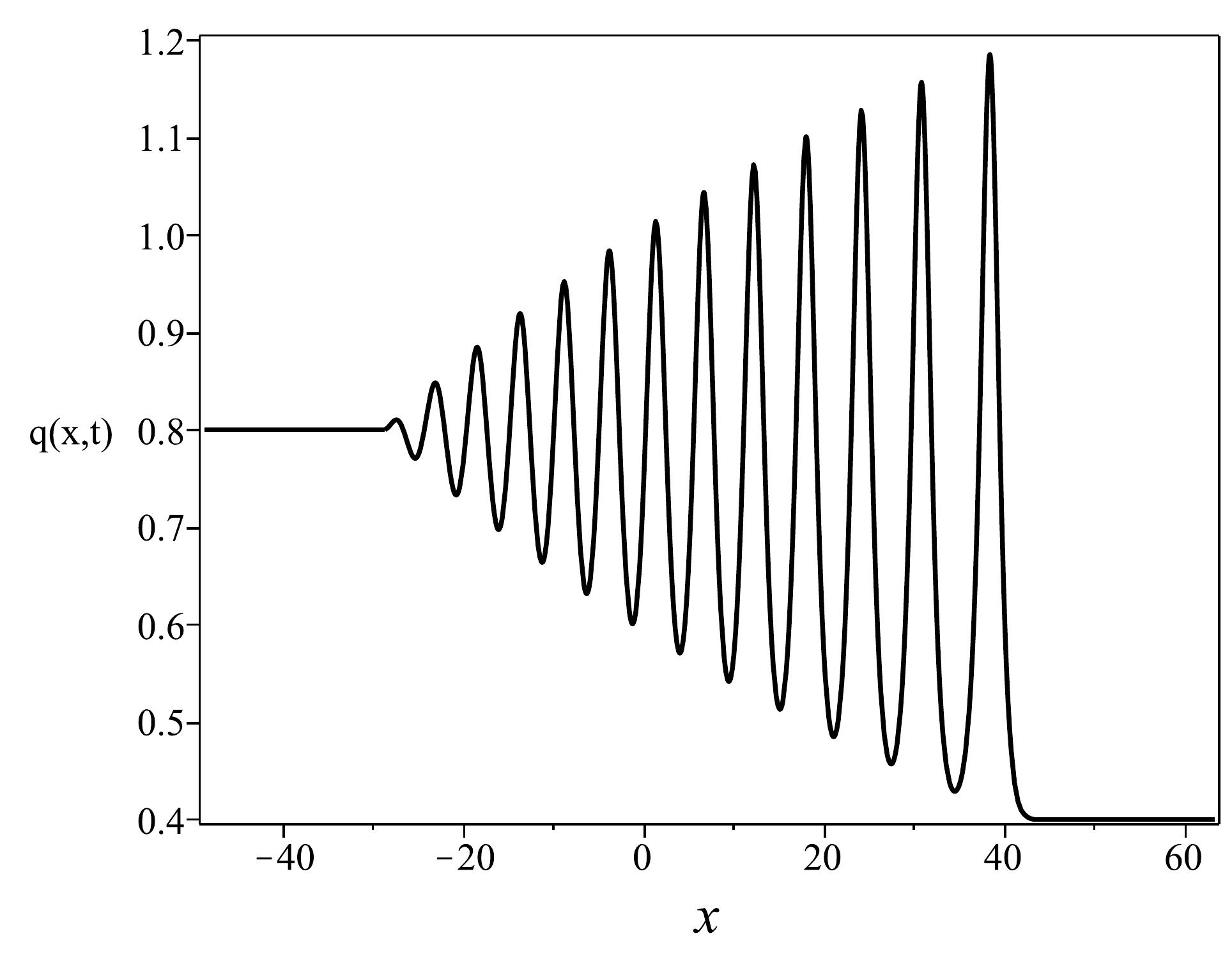}
  \includegraphics[scale=0.3]{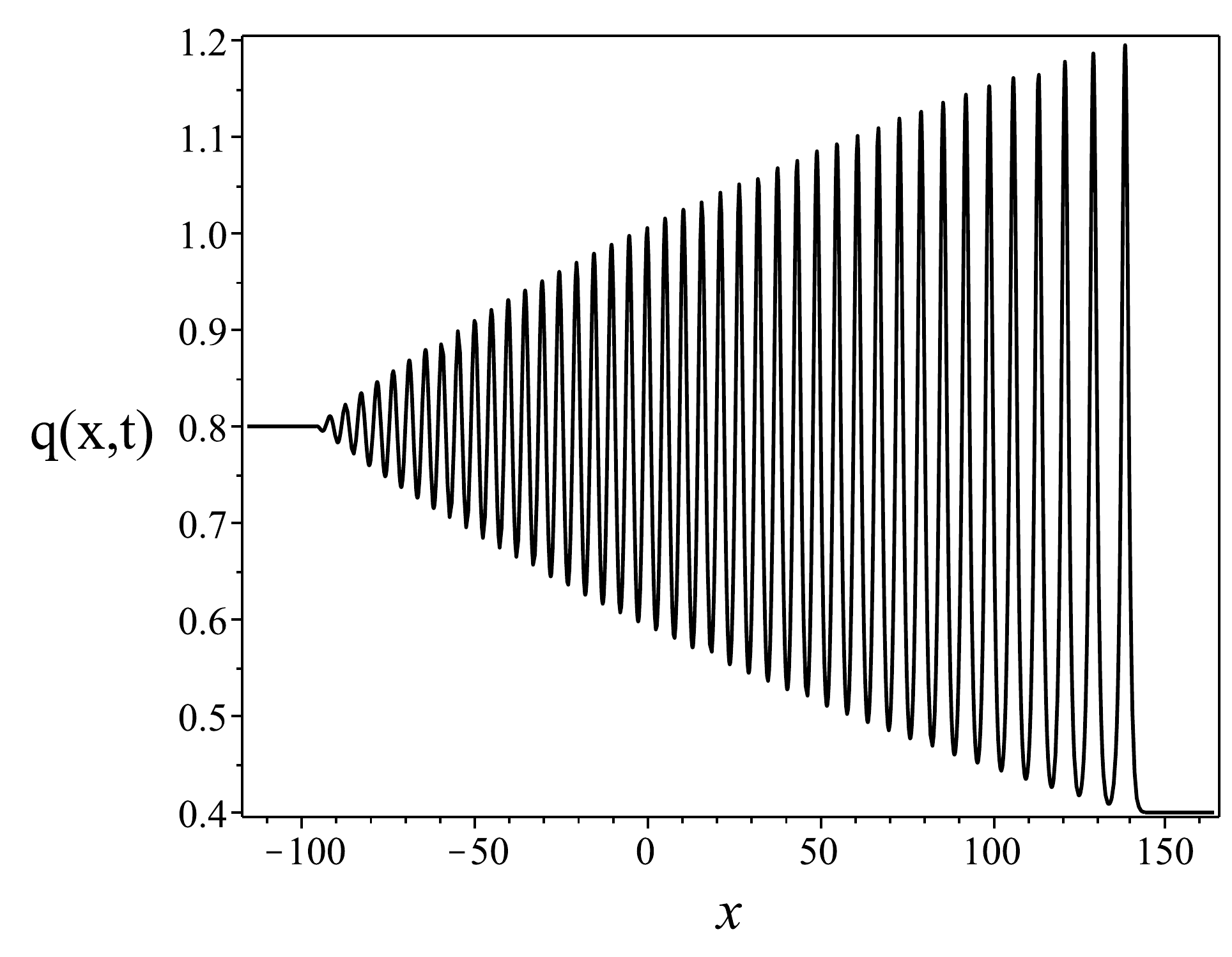}
\end{center}
\caption{The solution of
the modulated travelling wave  \eqref{qp1} with $x_0=K(m)$, $c_+=0.4$, $c_-=0.8$  and $t=15$ (left), $t=50$ (right). We observe that the leading oscillation is approximately a soliton  of maximal height   $2c_--c_+$. }
 \label{fig02}
 \end{figure}  
  The quantity  $W_2(\beta_1,\beta_2,\beta_3)$ is the speed  of the Whitham modulation equations \cite{Driscol} for $\beta_1, \beta_2$ and $\beta_3$   referring to   the Riemann invariants $\beta_2$. 
 The  solution  \eqref{qp1}  is the  dispersive shock wave (see Figure~\ref{fig02})  and it was  first derived for  the Korteweg-de Vries equation \cite{GP}.  The Whitham equations for MKdV are strictly hyperbolic  and genuinely  nonlinear \cite{Levermore} only for $\beta_2\neq 0$ which implies that   the equation \eqref{Whitham}  can be inverted for $d$ as a function of $\xi$ only when $c_+>0$. The case $c_->-c_+>0$ can be studied in a similar way, by getting a slightly different evolution  of the wave  parameters (see the Appendix~ \ref{WhithamApp} for further comments).  A comprehensive set of cases arising in the long-time asymptotic solution for the MKdV equation with step initial data  for various values of the parameters $c_-$ and $c_+$ has been discussed in \cite{EHS, Leach, Marchant}.
 The phase $x_0$ of the travelling wave solution  \eqref{qp1}  is given explicitly in terms of the   scattering data associated to the  MKdV equation.
 
We consider general step-like initial data and  we subject the initial function to the following conditions.
  \begin{Assumption}
  \label{Assump1}
  The initial data  $q_0(x)$ is assumed to be locally a function of bounded variation $BV_{loc}(\R)$ and  satisfying the following conditions
\begin{equation}
\label{integral}
\int_{-\infty}^{+\infty} |x|^2 |\d q_0(x)| <\infty
\end{equation}
and
\begin{equation}\label{exponential}
 \int\limits_{\R^{\pm}}e^{2 \sigma |x|}|q_0(x)-c_{\pm}|\d x<\infty,
\end{equation}
where $\sigma>\sqrt{c_-^2-c_+^2}>0,$ and $\d q_0(x)$ is the corresponding signed measure (distributional derivative of $q_0(x)$).
  \end{Assumption}
\color{black}{This class includes the case of exact (discontinuous) step function \eqref{ic0exact}}. \color{black}{Note that for a function $q_0(x)$ which is locally of bounded variation and tends to $c_{\pm}$ as $x\to\pm\infty,$ the condition \eqref{integral} is equivalent to 
\begin{equation}\label{first_moment}\int_{-\infty}^{0}(1+|x|)|q_0(x)-c_-|\d x+\int_0^{+\infty}(1+|x|)|q_0(x)-c_+|\d x<\infty.\end{equation}}

  
  \begin{theorem}
  Under Assumption~\ref{Assump1}, the initial value problem of the MKdV equation (\ref{MKdV}) has a classical solution  for all $t>0$.
  \end{theorem}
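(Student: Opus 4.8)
The plan is to prove existence constructively through the inverse scattering transform, realised as a Riemann--Hilbert problem, and then to verify a posteriori that the reconstructed function is a classical solution of \eqref{MKdV} attaining the prescribed datum. First I would carry out the direct scattering step: for the potential $q_0(x)$ I build the Jost solutions of the associated Zakharov--Shabat (AKNS) linear system with the constant asymptotics $c_\pm$ at $x\to\pm\infty$, normalised against the plane-wave solutions of the limiting constant-coefficient problems. The branch points $\pm\ii c_\pm$ and the two-sheeted structure of $\sqrt{k^2+c_\pm^2}$ produce the spectral set $\Sigma=\R\cup[-\ii c_-,\ii c_-]$. Here the hypotheses of Assumption~\ref{Assump1} enter decisively: the second-moment condition \eqref{integral} (equivalently \eqref{first_moment}) guarantees that the Volterra integral equations for the Jost functions converge and that the reflection coefficient $r(k)$ is continuous with controlled decay, while the exponential bound \eqref{exponential} with $\sigma>\sqrt{c_-^2-c_+^2}$ forces the Jost functions, and hence $a(k)$ and $r(k)$, to extend analytically into a strip wide enough to cover the segment $[-\ii c_-,\ii c_-]$. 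This analytic continuation across the gap is precisely what makes the step-like focusing problem tractable, and it also confines the zeros of $a(k)$ (the discrete spectrum) to a compact set, so that only finitely many solitons and breathers occur.

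With the scattering data $\{r(k),\ \text{zeros of } a,\ \text{norming constants}\}$ in hand, I would pose the inverse problem as a matrix Riemann--Hilbert problem on $\Sigma$: a jump relation across $\R$ and across $[-\ii c_-,\ii c_-]$ built from $r(k)$, together with residue (or, on the segment, Schwarz-type) conditions at the discrete spectrum, and the normalisation to the identity at $k=\infty$. The time dependence is installed through the explicit evolution $r(k)\mapsto r(k)\,\ee^{(\cdots)t}$ dictated by the second Lax operator, so that the RHP depends analytically on $(x,t)$. The crucial analytic point is solvability: I would show the RHP is equivalent to a Fredholm singular integral equation of index zero and invoke a vanishing lemma to conclude that the homogeneous problem has only the trivial solution, whence the RHP is uniquely solvable for every $(x,t)$. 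The solution $q(x,t)$ is then recovered from the coefficient of $1/k$ in the large-$k$ expansion of the RHP solution, and its smoothness in $(x,t)$ follows by differentiating the uniquely-solvable problem in the parameters; the dressing/Lax-pair computation shows that this $q$ satisfies \eqref{MKdV}, while the normalisation at the branch points yields the boundary values $c_\pm$. Finally, evaluating at $t=0$ and using that inverse scattering inverts direct scattering on this function class gives $q(x,0)=q_0(x)$.

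The main obstacle is the vanishing lemma, which is genuinely harder here than for decaying data. In the focusing setting the jump matrix is not of the Hermitian-positive type that makes the standard argument automatic, so one must exploit the two symmetries $k\mapsto\bar k$ and $k\mapsto-\bar k$ of the AKNS scattering problem to recast the homogeneous problem in a form carrying a definite quadratic form across both $\R$ and the segment $[-\ii c_-,\ii c_-]$. The segment contribution is the delicate piece: the jump there originates from the branch cut rather than from true reflection, and one must check that the reality and symmetry constraints inherited from $q_0$ real make the associated bilinear form sign-definite rather than indefinite. The complex quadruplets of eigenvalues (breathers) lying off the imaginary axis add residue terms that must be shown not to destroy this definiteness; handling these poles requires the usual device of trading residue conditions for a regular jump on small circles before applying the vanishing argument. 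Once the vanishing lemma is secured uniformly in $(x,t)$, the remaining regularity and consistency checks are routine.
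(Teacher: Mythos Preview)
Your outline is essentially the paper's proof: direct scattering under Assumption~\ref{Assump1}, formulation of the RHP on $\Sigma=\mathbb{R}\cup[\i c_-,-\i c_-]$ with the standard time evolution, solvability via Zhou's Schwarz-symmetric vanishing lemma after trading poles for jumps on small circles, and reconstruction of $q$ through the Zakharov--Shabat dressing argument. Two points of emphasis are off, however.

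First, you locate ``the main obstacle'' in the segment $[\i c_-,-\i c_-]$ and anticipate a delicate sign-definiteness analysis there. In fact no such analysis is needed: the reality of $q_0$ forces the Schwarz symmetry $J^{-1}(k)=\ol{J(\ol k)}^{\,T}$ on \emph{all} parts of the contour off $\mathbb{R}$ (segment and pole-circles alike), so in Zhou's argument those contributions cancel identically. Positivity is required only on $\mathbb{R}$, where $J+\ol{J}^{\,T}$ is manifestly positive definite because of the $1+|r|^2$ entry. What the paper \emph{does} do before invoking Zhou is use the analytic continuation of $r(k)$ (from the exponential condition \eqref{exponential}) to enclose $[\i c_-,-\i c_-]$ in a curve $C_0$, reduce the segment jump to the constant matrix $\left(\begin{smallmatrix}0&\i\\\i&0\end{smallmatrix}\right)$ on $[\i c_+,-\i c_+]$, and remove even that by an explicit factor; the purpose is to eliminate the discontinuities and fourth-root singularities at $\pm\i c_\pm$ so that the $L_2$ framework applies cleanly, not to repair a failure of symmetry.

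Second, ``smoothness follows by differentiating the uniquely solvable problem'' is too quick. Since $r(k)=\mathcal{O}(k^{-1})$, the matrix $I-J$ decays only like $k^{-1}$ on $\mathbb{R}$, and differentiating in $x$ or $t$ brings down factors $k,k^3$ from $\e^{2\i\theta}$ that destroy integrability of the singular integral equation. The paper's remedy---and the place where the exponential hypothesis \eqref{exponential} is genuinely indispensable---is a further contour deformation that pushes the tails $(\pm K,\pm\infty)$ off the real axis, after which $I-\widetilde J$ decays exponentially and one may differentiate the integral equation as many times as desired. You should make this step explicit.
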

Under Assumption~\ref{Assump1}, the  inverse of the transmission coefficient $a(k)$  is analytic for  $k\in\mathbb{C}_+ \setminus[\i c_{\l},0]$, and it has continuous limits to 
the boundary, 
with the exception of the points $\i c_{\l},$ $\i c_{\r},$ where $a(k)$ may have at
most a fourth root singularity, namely $a(k)\sqrt[4]{k-\i c_{\pm}}$ is bounded  (see  Lemma~\ref{lem_abr}). 
The zeros of $a(k)$
form the point spectrum and by analiticity, the number of zeros is finite.
   We fix the number of zeros in  the quarter plane $\Im k\geq0$ and $\Re k\geq 0$  equal to $N$.
In order to formulate our results, we  enumerate the zeros   of $a(k)$ in $\Im k\geq0,$ $\Re k\geq 0$ in the decreasing order of the speed of the corresponding solitons or breathers,
namely the points 
$$\kappa_1,\dots,\kappa_N,\quad \Im \kappa_j\geq0,\;\;\Re \kappa_j\geq 0$$
correspond to the speeds
\begin{equation*}+\infty>V_1\geq V_2\geq \ldots \geq V_N>-\infty.\end{equation*} We recall that a soliton with the point spectrum $\kappa$ on a constant background $c$ has the speed $2c^2+4|\kappa|^2$, while a breather with the point spectrum $\kappa$ 
on a constant background $c$ has the speed specified in \eqref{breather_speed}. {\color{black}The speed of a breather on a elliptic background is specified in \eqref{g_def}}. In our case for each  breather  there are three options for large times:  it travels in  either the left constant background, right constant background, or dispersive shock wave background.
Note that it follows from properties 11  and 12 of Lemma \ref{lem_abr} below, that $a(k)$ cannot have zeros  on $(\i c_-,  0).$  Therefore a soliton can travel only to the right of the dispersive shock vave.
\color{black}

\color{black}
We make further (generic) assumptions on the potential $q_0(x)$ which are formulated in terms of  the  associated spectral function $a(k).$
\begin{Assumption}\label{Assump2}
The spectral function $a(k)$ (see Definition~\ref{T_a_b}) satisfies the following generic conditions:
\begin{itemize}
\item the zeros of $a(k)$ are simple;
\item  the  zeros of $a(k)$ do not lie on $\mathbb{R};$
\item all the speeds $V_j$ of breathers and solitons are distinct,  namely  
\begin{equation*}+\infty>V_1> V_2> \ldots > V_N>-\infty;
\end{equation*} 
\item the behavior of $a(k)$ at the points $\i c_{\pm}$ is as follows: for some nonzero constants $C_1, C_{2+}, C_{2-},$
\begin{equation*}\begin{split}
&a(k)\sim C_1(k-\i c_-)^{-1/4}\mbox{ as } k\to\i c_-;
\\
&
\mbox{ in the case $c_+>0,$\quad }
\color{black}
\quad
a(k)\sim C_{2,\pm}(k-\i c_+)^{-1/4}\mbox{ as } k\to\i c_+\pm0,
\end{split}
\end{equation*}
where $k\to\i c_+\pm0$ stands for the non tangential limit to the point $\i c_+$ from the left ($+$) and right $ (-)$  side of the oriented segment $[\i c_-,0]$, where the orientation is downward.
\end{itemize}
\end{Assumption}
\color{black}

Similarly to Beals and Coifman \cite{BC}, one can show that the set of potentials satisfying Assumption \ref{Assump2} form a dense open set in the space of all potentials satisfying Assumption \ref{Assump1} with respect to  the  topology induced by \eqref{first_moment}
\color{black} but it is beyond the scope of this manuscript to prove this  here.



\color{black}
\medskip

\color{black}
 The  question  we address  is: how,  under Assumptions \ref{Assump1} and \ref{Assump2}, 
the solitons, breathers  and  the dispersive shock wave interact for large values of time (see Figure~\ref{fig0}).
    Theorem~\ref{thrm:asymp:rl}
  characterises these interactions in the general setting and \eqref{eq_long1}, \eqref{eq_long2} and \eqref{eq_long3}   below show explicitly how these interactions affect the asymptotic phase shifts of individual solitons  breathers and the dispersive shock wave.
  
\begin{figure}[htb]
 \begin{center}
 \includegraphics[width=6.5cm]{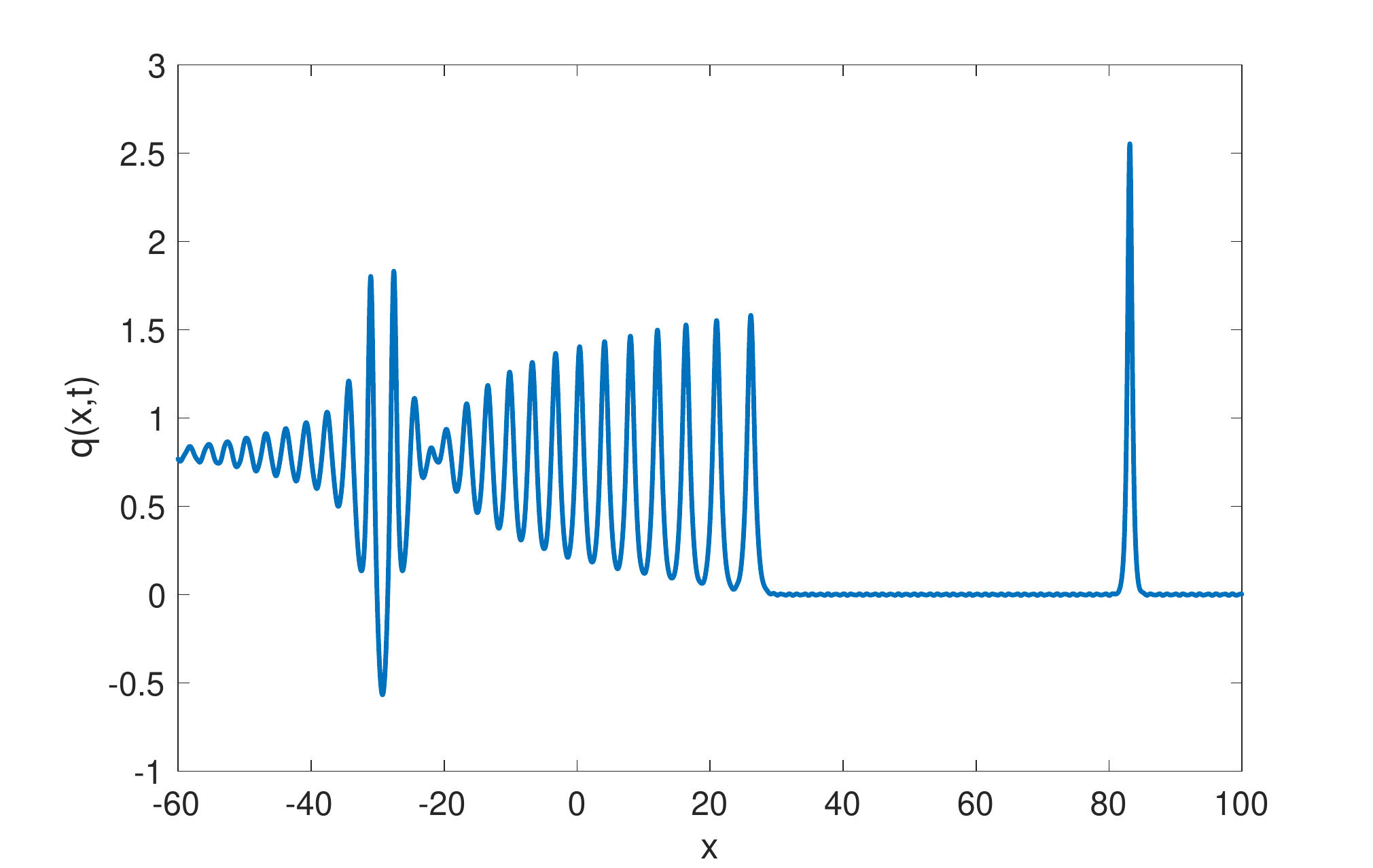}
\includegraphics[ width=6cm]{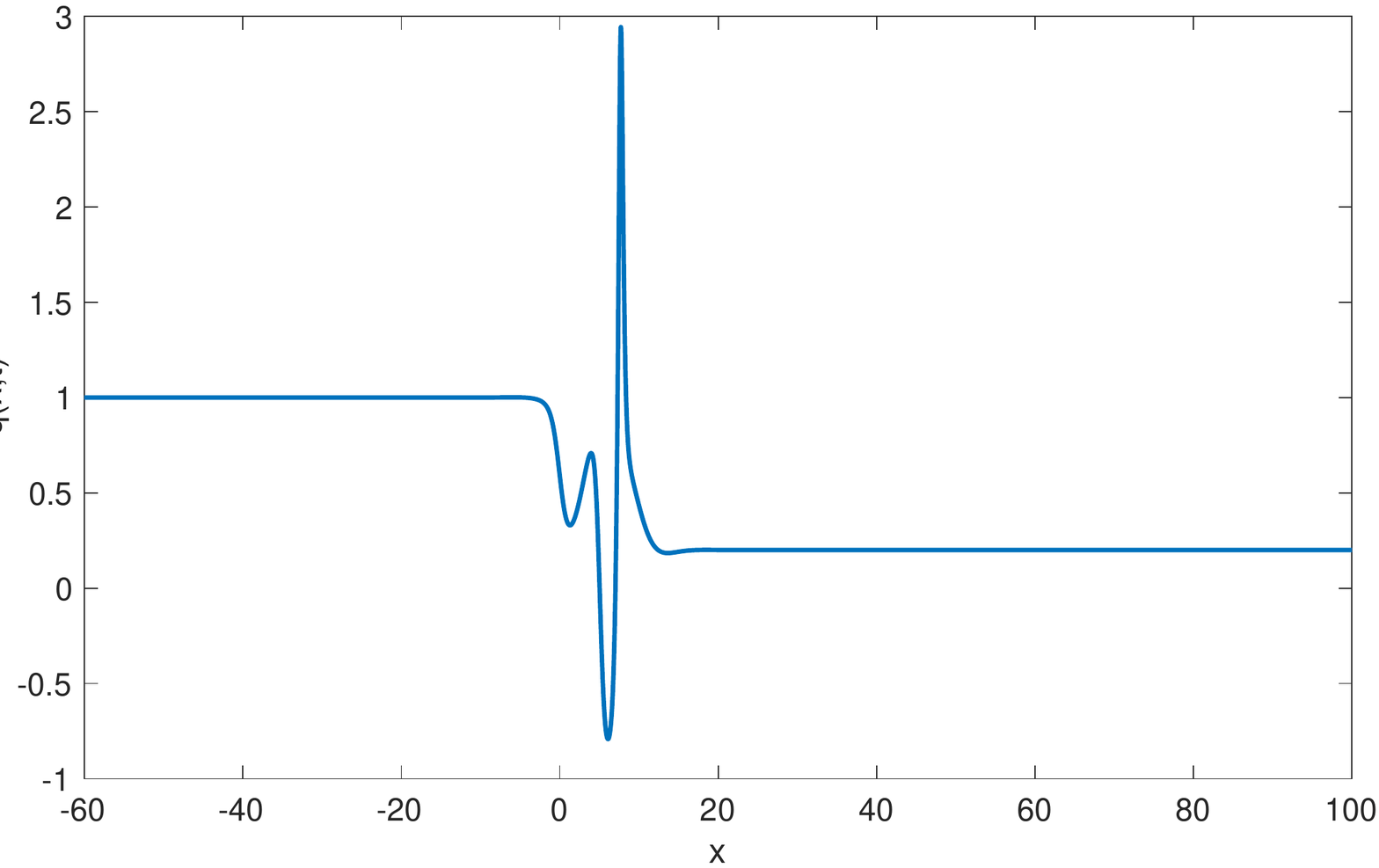}
\end{center}
\caption{ The  evolution  into a breather with negative speed, a dispersive shock wave and a soliton (left) of a steplike initial data  (right). }
 \label{fig0}
 \end{figure}

\noindent 
Now   we  define
\begin{equation}
\label{T_tilde}\widetilde T_j(k)=\prod\limits_{l<j,\Re \kappa_l>0}\frac{k-\ol\kappa_l}{k-\kappa_l}\frac{k+\kappa_l}{k+\ol\kappa_l}\cdot\prod\limits_{l<j, \Re\kappa_l=0}\frac{k-\ol\kappa_l}{k-\kappa_l},\quad j\geq 2, \qquad \widetilde T_1(k)=1.
\end{equation}

  \begin{theorem}\label{thrm:asymp:rl}
Let an initial datum satisfy 
Assumptions  \ref{Assump1}, \ref{Assump2} and 
\color{black}{ let $\delta>0$ be a sufficiently small positive number such that all the breather/soliton speeds $V_j$ satisfy $|V_j-V_l|>4\delta$ for all $j\neq l,$ $1\leq j,l\leq N.$ 
}

Then the solution of the Cauchy problem for  the  MKdV   equation \eqref{MKdV}
  behaves for large $t$ in the following way:

  \begin{itemize}
  \item[(a)] Soliton and breather region:  $\frac{x}{t}>4c_{-}^2+2c_{+}^2+\delta_1$, $\delta_1>0$.
  
For $\frac{x}{t}$ such that $|\frac{x}{t}-V_j|>\delta$ for all $j,$
	\begin{equation*}q(x,t)=c_++\mathcal{O}(\e^{-C t}),\end{equation*} with some $C>0;$
\\
for $|\frac{x}{t}-V_j|<\delta$ for some $j,$  
  \begin{equation}
  \label{eq_long1}
  q(x,t)=\left\{
  \begin{array}{ll}
  q_{soliton}(x,t; c_{\r}, \kappa_j, x_j) + \mathcal{O}(\e^{-C t}),\quad \mbox{ if }\quad \Re\kappa_j=0,\\
    &\\
 q_{breather}(x,t; c_{\r}, \kappa_j,\hat\nu_j) +\mathcal{O}(\e^{-C t}),\quad \mbox{ if }\quad \Re\kappa_j>0,
 \end{array}\right.
  \end{equation}
where  $q_{soliton},$   and $q_{breather}$ are defined in \eqref{soliton_intro}, \eqref{q_breath_intro}  respectively and 
   \begin{equation*}\hat\nu_j = \frac{\nu_j}{T_j^2(\kappa_j)},\qquad \quad
    x_j=\ln\frac{(|\kappa_j|^2-c_+^2)T_j^2(\i|\kappa_j|)}{|\nu_j| |\kappa_j|}
   \end{equation*} and 
   \begin{equation*}
   T_j(k)=\widetilde T_j(k)\cdot\exp\left[
   \frac{\sqrt{k^2+c_+^2}}{2\pi\i}\int\limits^{\i c_+}_{-\i c_+}\frac{\ln\widetilde T_j^2(s)\ \d s}{(s-k)\left(\sqrt{s^2+c_+^2}\right)_+}
   \right],
   \end{equation*}
   with $\widetilde T_j(k)$ as in  \eqref{T_tilde} and $\left(\sqrt{s^2+c_+^2}\right)_+$ is the boundary value on the positive side of the oriented segment $[\i c_+,-\i c_+]$.

  \item[(b)] Dispersive shock wave region: $-6c_{-}^2+12c_{+}^2+\delta_1<\frac{x}{t}<4c_{-}^2+2c_{+}^2-\delta_1$,  $\delta_1>0.$ 
  
  For $\frac{x}{t}$ such that $|\frac{x}{t}-V_j|>\delta$ for all $j,$
  one has
  \begin{equation}\label{eq_long2}
  q(x,t)=q_{per}(x,t;c_{-},d,c_{+},x_0)+O(t^{-1}), 
  \end{equation}
  where  the travelling wave  $q_{per}$ has been defined in (\ref{periodic_intro}),  the quantity  $d=d(x,t)$   depends on $x$ and $t$ as in (\ref{Whitham}) and  the phase 
  \begin{equation*}
x_0=-\dfrac{K(m)\Delta}{\pi}+K(m)
  \end{equation*}
  depends on the discrete and continuous spectrum via the relation 
\begin{equation}
\label{Delta_intro}
\begin{split}
&\Delta(x,t) =-\dfrac{\sqrt{c_-^2-c_+^2}}{K(m)}\left[\int\limits^{\i c_{\l}}_{\i d(x,t)}\frac{\ln(|a(s)|^2\widetilde{T}_n^2(s))s\d s}{R_+(s)}+
\int\limits^{\i c_{\r}}_{0}\frac{\(\ln \widetilde{T}_n^2(s)\)s\d s}{R_+(s)}\right]\\
&R_+(s)=\(\sqrt{(s^2+c_+^2)(s^2+c_-^2)(s^2+d^2)}\)_+
\end{split}
\end{equation}
\color{black}{where the quantity $\widetilde{T}_n(s)$ is as in \eqref{T_tilde}, where $n=n(\frac{x}{t})$ is  the number of soliton/breather  speed  $V_j$  satisfying $V_j>\frac{x}{t},$ }and $a(k)$ is the  inverse of the  transmission coefficient associated to the  Zakharov-Shabat  spectral problem.
{\color{black} For $\frac{x}{t}$ such that $|\frac{x}{t}-V_j|<\delta$ for some $j,$ one has  $q(x,t)=q_{be}(x,t)+o(1)$ where $q_{be}(x,t)$ is the breather solution on the  elliptic background and it  is specified by the solution of the RH  problem~\ref{RH_problem_W2} in Section~\ref{lenses}. 
The  corresponding  speed  $V_j$  is given  in \eqref{ellbreather_speed}.}

     \item[(c)] Breathers on a constant background: $\frac{x}{t}<-6c_{-}^2-\delta_1$
     or   $-6c_{-}^2+\delta_1 < \frac{x}{t} < -6c_{-}^2+12c_{+}^2-\delta_1,$     
      $\delta_1>0$. \\
For $t$  large and  such that $|\frac{x}{t}-V_j|>\delta$ for all $V_j,$
	\begin{equation*}q(x,t)=c_-+\mathcal{O}(t^{-1/2}),\end{equation*}
	and for $x/t$ such that $|\frac{x}{t}-V_j|<\delta$ for some $j,$ 
   \begin{equation}
   \label{eq_long3}
   q(x,t)= q_{breather}(x, t; c_{\l}, \kappa_j, \hat\nu_j) + \mathcal{O}(t^{-1/2}),
   \end{equation}
   where $q_{breather}$ is defined in \eqref{q_breath_intro} and the phase $\hat\nu_j$ 
      \begin{equation*}\hat\nu_j = \frac{\nu_j
   }{T_j^2(\kappa_j, \xi)},\end{equation*}
  with 
  \begin{equation*}
  \begin{split}
  T_j(k,\xi)=\widetilde T_j(k) 
\exp&\left[\frac{-\sqrt{k^2+c_{\l}^2}}{2\pi\i}\left\{\int\limits_{\i c_{\l}}^{\i d_0(\xi)}-\int\limits^{-\i c_{\l}}_{-\i d_0(\xi)}\frac{\ln|a(s)|^2\d s}{(s-k)\left(\sqrt{s^2+c_{\l}^2}\right)_+}+\right.\right.\\
&+\left.\left.
\int\limits_{\i c_-}^{-\i c_-}\frac{\(\ln\widetilde T_j^2(s)\)\d s}{(s-k)\left(\sqrt{s^2+c_{\l}^2}\right)_+}
\right\}\right]
\end{split}
\end{equation*}
for $\frac{-c_{\l}^2}{2}<\xi<\frac{-c_{\l}^2}{2}+c_{\r}^2$ with  $d_0(\xi)=\i\sqrt{\xi+\frac{c_-^2}{2}}$, 
and
 \begin{equation*}
 \begin{split}
  T_j(k,\xi)&=  \widetilde T_j(k) 
\exp\left[\frac{-\sqrt{k^2+c_{\l}^2}}{2\pi\i}\left\{\int\limits_{\i c_{\l}}^{0}-\int\limits^{-\i c_{\l}}_{0}\frac{\ln|a(s)|^2\d s}{(s-k) \left(\sqrt{s^2+c_{\l}^2}\right)_+}\right.\right.\\
&\left.\left.+\int\limits_{\i c_-}^{-\i c_-}\frac{\ln\widetilde T_j^2(s)\d s}{(s-k)\left(\sqrt{s^2+c_{\l}^2}\right)_+}-\int\limits_{-k_0(\xi)}^{k_0(\xi)}\frac{\ln\(1+|r(s)|^2\)\d s}{(s-k)\sqrt{s^2+c_{\l}^2}}\right\}
\right],
\end{split}
 \end{equation*}
 for $\xi<-\frac{c_-^2}{2}$ and $\color{black}{k_0}(\xi)=\sqrt{-\xi-\frac{c_-^2}{2}}$  and $ \widetilde T_j(k) $ as in \eqref{T_tilde}.

    \end{itemize}
  \end{theorem}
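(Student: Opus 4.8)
The plan is to carry out the Deift--Zhou nonlinear steepest descent analysis of the matrix Riemann--Hilbert (RH) problem produced by the inverse scattering transform for \eqref{MKdV}. First I would set up the basic problem for $m(k)=m(k;x,t)$, normalized to the identity at $k=\infty$, with jump across $\Sigma=\mathbb{R}\cup[-\i c_-,\i c_-]$ assembled from the reflection coefficient $r(k)$, and with residue relations at the zeros $\pm\kappa_j,\pm\ol\kappa_j$ of $a(k)$ classified after Lemma~\ref{lem_abr}. The dependence on $x,t$ is carried by an oscillatory factor $\e^{\pm t\,\theta(k;\xi)}$, $\xi=x/t$, where $\theta$ is built from the background branch $\sqrt{k^2+c_\pm^2}$; the speeds $V_j$ are exactly the values of $\xi$ at which the exponential attached to $\kappa_j$ switches from decay to growth along the real axis. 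Charting the sign of $\Re\,\theta$ as $\xi$ varies partitions the upper half $(x,t)$-plane into the three regimes (a), (b), (c).

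A uniform device handles the discrete spectrum. After ordering the $\kappa_j$ by decreasing speed, for fixed $\xi$ one separates the zeros into those already overtaken ($V_j>\xi$) and the remainder. The Blaschke-type product $\widetilde T_j(k)$ of \eqref{T_tilde} removes the overtaken poles, while the scalar integral correction in $T_j(k)$ solves the accompanying scalar RH problem on the relevant imaginary cut so that the conjugated jump acquires the correct triangular structure; this is precisely the step that renormalizes the norming constants to $\hat\nu_j=\nu_j/T_j^2(\kappa_j)$ and generates the phase shifts $x_j$. Physically it records, as an elastic shift, the earlier passage of the faster solitons and breathers together with the radiation.

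With this in place I would install the appropriate $g$-function and open lenses region by region. In region (a), to the right of all oscillation, no stationary point lies on $\mathbb{R}$ and, after the standard upper/lower factorization, the continuous-spectrum jump is exponentially close to the identity; $m$ then reduces to a finite meromorphic problem retaining only the poles with $|\xi-V_j|<\delta$, whose explicit $1\times1$ or $2\times 2$ solution gives $q_{soliton}$ or $q_{breather}$ of \eqref{soliton_intro}, \eqref{q_breath_intro} with data $\hat\nu_j,x_j$, and error $\mathcal{O}(\e^{-Ct})$. In region (c), governed by the background $c_-$, $\theta$ has genuine real stationary points, so after opening lenses one inserts local parabolic-cylinder parametrices whose contribution is the $\mathcal{O}(t^{-1/2})$ radiation, while each nearby $\kappa_j$ yields a breather on the $c_-$ background with the $\xi$-dependent factor $T_j(k,\xi)$. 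The core is region (b): here I would construct the $g$-function on the curve $R^2(s)=(s^2+c_+^2)(s^2+c_-^2)(s^2+d^2)$, which has six branch points and is formally of genus two but collapses, under the parity symmetry $s\mapsto -s$ together with reality, to the elliptic curve in $\lambda=s^2$---this is the genus-one reduction of the genus-two picture of \cite{KM2}. The moving branch point $\i d=\i d(\xi)$ is fixed by a vanishing-period (modulation) condition equivalent to \eqref{Whitham}--\eqref{Whitham_d0}. The outer parametrix, expressed through Jacobi theta functions, yields the modulated travelling wave $q_{per}$ of \eqref{periodic_intro}; the phase $x_0=-K(m)\Delta/\pi+K(m)$ emerges as the Abel map of the accumulated conjugation data, which after the symmetry reduction collapses to the single scalar integral $\Delta(x,t)$ of \eqref{Delta_intro}. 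When a speed $V_j$ lies inside the band the associated pole cannot be dropped and must be carried through the elliptic parametrix, producing the breather-on-elliptic-background $q_{be}$ encoded by RH problem~\ref{RH_problem_W2}.

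The hard part will be region (b), specifically the construction and verification of this genus-one $g$-function. One must prove the correct sign of $\Re(g-\theta)$ on each side of the band throughout the whole cone $-6c_-^2+12c_+^2<\xi<4c_-^2+2c_+^2$, show that $\i d(\xi)$ remains in $(\i c_+,\i c_-)$ and moves monotonically so that \eqref{Whitham} is invertible---which forces $c_+>0$ by genuine nonlinearity, $\beta_2\neq0$---and verify that the genus-two surface truly degenerates to the elliptic one under the reality symmetries. Establishing these inequalities and the monotonicity of $d(\xi)$, and then pinning down $x_0$ exactly via the Abel map, is the delicate work; once the $g$-function and all parametrices are in hand, the closing small-norm estimate for the residual RH problem is routine and supplies the stated $\mathcal{O}(t^{-1})$ and $\mathcal{O}(t^{-1/2})$ remainders.
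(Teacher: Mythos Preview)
Your outline is correct and tracks the paper's Deift--Zhou steepest descent proof closely: the same $g$-function in each of the three regions, the same Blaschke product $\widetilde T_j$ plus scalar correction to handle the discrete spectrum and produce the renormalized $\hat\nu_j$, the same lens openings, and the same local parametrices (Airy in region (b), parabolic cylinder in region (c)). Two small slips: in the paper's conventions the original phase is the polynomial $\theta(x,t;k)=kx+4k^3t$, and the square roots $\sqrt{k^2+c_\pm^2}$ enter only through the $g$-function, not through $\theta$ itself; correspondingly it is $\Im g$, not $\Re(g-\theta)$, that controls the sign table.

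The one genuine point of divergence is your genus-two to genus-one reduction in region (b). You propose to quotient by the parity $s\mapsto -s$ and pass to $\lambda=s^2$. The paper explicitly records (Section~\ref{sect_hyperModel}) that this natural route was tried---the hyperelliptic curve does cover two elliptic curves via the involutions $(y,k)\mapsto(y,-k)$ and $(y,k)\mapsto(-y,-k)$, and the genus-two theta function does factor---but that the authors could not extract the travelling wave formula \eqref{periodic_intro} cleanly from it. Instead they take a different path: they set up an \emph{auxiliary} elliptic RH problem in a variable $\lambda$ with cuts $[\pm\i\tilde c,\pm\i\tilde d]$ where $\tilde c^2=c_-^2-c_+^2$, $\tilde d^2=d^2-c_+^2$, pull it back to the $k$-plane via the conformal map $\lambda=\sqrt{k^2+c_+^2}$, repair the induced jumps on $[\i c_+,-\i c_+]$ with an explicit scalar function $F(k)$, and then remove a spurious pole at $k=0$ by a rational left multiplier. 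This hands-on construction identifies the outer parametrix directly with the Jacobi-$\mathrm{cn}$ travelling wave and pins down $x_0$. Your symmetry-quotient idea is morally the same reduction, and may well be made to work, but you should be aware that the paper found it opaque and replaced it with this concrete conformal-map argument; if you pursue your route you will need to supply the missing identification with \eqref{periodic_intro} yourself.
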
 
  
\noindent 
The  subleading term  of order ${\mathcal O}(t^{-\frac{1}{2}}) $ of the expansion of $q(x,t)$ as $t\to\infty$  in the   left constant region is oscillatory and is described by the theorem below.
\begin{theorem}\label{theorem_error}
Away from breathers, the subleading term of the expansion of $q(x,t)$ as $t\to\infty$  in the regions
$x<(-6c_-^2-\delta_1)t$ and $(-6c_-^2+\delta_1)t<x<(-6c_-^2+12c_+^2-\delta_1)t$
is  given by the formula 
\begin{equation}\label{q_errIntro}
\begin{split}
q(x,t)&=c_-+
\sqrt{\frac{|\nu(\xi)|\sqrt{-\xi+\frac{c_-^2}{2}}}{3t\,|\xi+\frac{c_-^2}{2}|}}
\times\\
&\times\cos
\left[
16t\(-\xi+\frac{c_-^2}{2}\)^{\frac32}
+\nu(\xi)\ln\(\frac{192\, t\, (\xi+\frac{c_-^2}{2})^2}{\sqrt{-\xi+\frac{c_-^2}{2}}}\)
+
\phi(\xi)
\right]
+\mathcal{O}(t^{-1}),
\end{split}
\end{equation}
where $\xi=\frac{x}{12t}$ and the phase shift and parameter $\nu(\xi)$  are different for different regions of $\xi=\frac{x}{12t}:$ 
\\
$\bullet $ for $\xi<-\frac{c_-^2}{2}$ they are given by the formulas
\begin{equation}\label{phaseUtmostIntro}
\begin{split}
&\nu(\xi)=\frac{1}{2\pi}\ln(1+|r(k_0)|^2),\quad k_0=\sqrt{-\xi-\frac{c_-^2}{2}}, \\
&\phi(\xi):=\frac{\pi}{4}-\arg r(k_0)-\arg\Gamma(\i\nu(\xi))+\arg\chi^2(k_0),
\end{split}
\end{equation}
and $\chi(k_0)=\lim\limits_{k\to k_0}T(k)\(\frac{k-k_0}{k+k_0}\)^{\i\nu(\xi)},$
\\$\bullet $ and for $-\frac{c_-^2}{2}<\xi<-\frac{c_-^2}{2}+c_+^2$ they are given by the formulas
\begin{equation}\label{phaseMiddleIntro}
\begin{split}
&\nu(\xi)=\frac{1}{2\pi}\ln(1-|r_+(\i d_0)|^2)<0,
\quad \i d_0 = \i \sqrt{\xi+\frac{c_-^2}{2}},\\
&\phi(\xi)=-\frac{\pi}{4}-\arg r_+(\i d_0)-\arg\Gamma(\i\nu(\xi))+\arg\chi^2(\i d_0),
\end{split}
\end{equation}
and $\chi(\i d_0)=\lim\limits_{k\to \i d_0+0}T(k)\(\frac{k-\i d_0}{-(k+\i d_0)}\)^{\i\nu(\xi)},$
$\quad |\chi(\i d_0)|=1.$
\end{theorem}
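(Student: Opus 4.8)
The plan is to establish \eqref{q_errIntro} by the Deift--Zhou nonlinear steepest descent analysis of the oscillatory Riemann--Hilbert problem for $q(x,t)$ normalised at the left background $c_-$, specialised to the two sectors in which the governing saddle points are real (for $\xi<-c_-^2/2$) or purely imaginary and pinned to the branch cut (for $-c_-^2/2<\xi<-c_-^2/2+c_+^2$). The continuous-spectrum jump carries the oscillatory exponential $e^{\pm 2\i t\theta(k;\xi)}$, and the first task is to locate the critical points of $\theta$: solving $\partial_k\theta=0$ gives the pair $\pm k_0$ with $k_0=\sqrt{-\xi-c_-^2/2}$ on the real axis when $\xi<-c_-^2/2$, and the pair $\pm\i d_0$ with $d_0=\sqrt{\xi+c_-^2/2}\in(0,c_-)$ lying on the segment $[0,\i c_-]$ when $\xi>-c_-^2/2$. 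The value of the phase at the saddle reproduces the leading term $16t(-\xi+c_-^2/2)^{3/2}$, while the curvature $\theta''$ at the saddle is responsible for the factor $3t|\xi+c_-^2/2|/\sqrt{-\xi+c_-^2/2}$ that enters the amplitude and, through the local rescaling, the logarithmic phase $\nu\ln(192\,t(\xi+c_-^2/2)^2/\sqrt{-\xi+c_-^2/2})$.

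Next I would conjugate the problem by a scalar function $T(k)$ designed to do two things at once. First, $T$ must absorb the poles produced by the discrete spectrum whose speeds exceed $x/t$; this is exactly the role of the Blaschke-type factor $\widetilde T_n(k)$ of \eqref{T_tilde} with $n=n(x/t)$ counting those eigenvalues. Second, $T$ must implement the triangular factorisation of the jump needed to open lenses; accordingly it is the exponential of a Cauchy transform of $\ln(1+|r|^2)$ over the real axis (for the utmost-left sector) and of $\ln(1-|r_+|^2)$ over the relevant part of the cut (for the middle sector), weighted by $(\sqrt{k^2+c_-^2})_+$ exactly as in the definitions of $T_j(k,\xi)$. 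The local behaviour of $T$ at the saddle, captured by $\chi=\lim T(k)((k\mp k_0)/(k\pm k_0))^{\i\nu}$, feeds the term $\arg\chi^2$ into the final phase shift; in the middle sector one must additionally check $|\chi(\i d_0)|=1$, which guarantees that the amplitude in \eqref{q_errIntro} stays real and positive.

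After opening the lenses along the steepest-descent rays issuing from the saddles, the deformed jumps are exponentially close to the identity everywhere except on small fixed neighbourhoods of $\pm k_0$ (respectively $\pm\i d_0$). On each neighbourhood I would install the standard parabolic-cylinder parametrix with parameter $\nu=\frac{1}{2\pi}\ln(1+|r(k_0)|^2)$ in the first sector and $\nu=\frac{1}{2\pi}\ln(1-|r_+(\i d_0)|^2)<0$ in the second. The matching ratio of this local model against the global parametrix is of order $t^{-1/2}$, and reading off its off-diagonal entries yields the oscillatory correction: the amplitude $\sqrt{|\nu|/(t\,\theta'')}$ up to the explicit curvature factor, and the phase built from the leading term at $k_0$, the logarithmic contribution $\nu\ln(t\,\theta'')$, the connection coefficient $\arg\Gamma(\i\nu)$ of the parabolic-cylinder functions, the jump datum $\arg r$ (and $\arg r_+$ in the second sector), and $\arg\chi^2$ from the conjugation. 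Taking the real part assembles the cosine of \eqref{q_errIntro}, while the residual is controlled as $\mathcal{O}(t^{-1})$ by the usual small-norm estimate on the remaining Riemann--Hilbert problem.

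The main obstacle is the middle sector $-c_-^2/2<\xi<-c_-^2/2+c_+^2$, where the saddle $\i d_0$ does not sit in the interior of the region of analyticity but on the branch cut $[0,\i c_-]$ itself. The real-axis parabolic-cylinder construction cannot be transplanted verbatim: the relevant jump there uses the boundary value $r_+$, the factorisation produces $1-|r_+|^2$ rather than $1+|r|^2$ (hence the sign $\nu<0$), and the local coordinate must be adapted to a saddle lying on, rather than transverse to, the contour. Keeping track of the phase contributions accumulated by the $\sqrt{k^2+c_-^2}$-weighted conjugation $T$ across the two sides of the cut, and verifying their consistent cancellation so that $|\chi(\i d_0)|=1$ and the phase $\phi(\xi)$ comes out as in \eqref{phaseMiddleIntro}, is the technically delicate point; the rest is the routine, if lengthy, bookkeeping of the steepest-descent method, and an entirely parallel but simpler analysis handles the real-saddle sector of \eqref{phaseUtmostIntro}.
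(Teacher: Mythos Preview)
Your proposal is correct and follows essentially the same route as the paper: conjugation by the scalar $T$ (Blaschke factors for the discrete spectrum plus a $\sqrt{k^2+c_-^2}$-weighted Cauchy integral), lens-opening, parabolic-cylinder local parametrices at the saddle points $\pm k_0$ (respectively $\pm\i d_0$), and extraction of the $t^{-1/2}$ correction via residues on the parametrix boundary circles. One terminological slip worth fixing: the phase whose critical points are $\pm\sqrt{-\xi-c_-^2/2}$ and whose saddle value gives $16t(-\xi+c_-^2/2)^{3/2}$ is not the original $\theta(k)=4k^3+12\xi k$ but the $g$-function $g(k,\xi)=2(2k^2-c_-^2+6\xi)\sqrt{k^2+c_-^2}$, which is exactly the ``normalisation at the left background $c_-$'' you allude to---the paper makes this substitution explicit as its first transformation, and all the curvature and phase bookkeeping you describe refers to $g$, not $\theta$.
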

\color{black}
\begin{remark}Note that the term $\frac{\pi}{4}$ in the phase shifts has different signs for different values of $\xi,$ and that the parameter $\nu(\xi)$ has different signs in different regions.
We also  observe  that the amplitude in the formula \eqref{q_errIntro} does not blow up when $\xi$ approaches $-\frac{c_-^2}{2}$ if $c_+>0,$ since  $\nu(\xi)$ has a first order zero as $\xi=-\frac{c_-^2}{2}$  for $c_+>0$ while it blows up at $c_+=0$. This suggests the existence of a non trivial transition zone.
\end{remark}

  Our analysis is obtained by formulating the inverse scattering problem for the MKdV equation with step initial data as a RH  problem and then we implement 
   the long-time asymptotic analysis via the Deift-Zhou steepest descent method \cite{DZ93}. The proof that the oscillations in the  oscillatory region that were obtained in \cite{KM2} via genus two theta functions, can be reduced to the dispersive shock wave solution for the MKdV equation, is obtained by ``folding" the genus two Riemann Hilbert problem to a genus one RH  problem with poles. Further  analysis permits us to get rid of the poles and  to reduce the solution to the travelling wave solution of the MKdV equation.
   The estimate of the errors and the calculation of the subleading terms of the expansion   in Theorems~\ref{thrm:asymp:rl} and Theorem~\ref{theorem_error} is obtained via the construction of local parametrices in terms of 
Airy functions and parabolic cylinder functions.

      We illustrate our results with the following examples.
   \begin{Example}\label{Ex1}
   For the exact step \eqref{ic0exact} one has   that the  reflection coefficient and the inverse of the transmission coefficients are  \cite[formula (3.3)]{KM2}
\begin{equation*}
r(k)=\dfrac{\gamma(k)^2-1}{\gamma(k)^2+1},\quad 
a(k)=\frac{1}{2}\left(\gamma(k)+\frac{1}{\gamma(k)}\right),\ 
\gamma(k)=\left(\dfrac{(k-\i c_-)(k+\i c_+)}{(k+\i c_-)(k-\i c_+)}\right)^{\frac{1}{4}}\,.
\end{equation*}
The dispersive shock wave is given by the  relation \eqref{eq_long2}  with phase shift
\begin{equation*}
x_0=-\dfrac{\sqrt{c_-^2-c_+^2}}{\pi}\int\limits^{\i c_{\l}}_{\i d(x,t)}\frac{\ln(|a(s)|^2)s\d s}{\sqrt{(s^2+c_+^2)(s^2+c_-^2)(s^2+d^2)}}+K(m)\,,
\end{equation*}
The evolution for such initial data is illustrated in Figure~\ref{fig2}. The  leading edge of the oscillatory region has been studied in 
 \cite{KK}, where it has been identified with a train of asymptotic solitons, 
and it has been shown that the amplitude of the first soliton   is approximately described
\begin{equation*}
q(x,t)\simeq q_{soliton}(x,t;c_+,c_-,\tilde{x}_0)
\end{equation*}
where $\tilde{x}_0=\frac{3}{2}\log t+\alpha$ for some constant $\alpha$.  Here $q_{soliton}(x,t;c_+,c_-,\tilde{x}_0)$ is the soliton solution  \eqref{soliton_intro}
 on the constant background $c_+$ with spectrum $c_-$. The highest peak of the first soliton is  approximately 
 located at  the position $x_s(t)=2(c_+^2+2c_-^2)t-\frac{1}{2}\frac{\frac{3}{2}\log t+\alpha}{\sqrt{c_-^2-c_+^2}}$.
The transition region between the dispersive shock wave and asymptotic solitons turned out to be very rich, and has been studied recently in \cite{BM}.

 \begin{figure}[htb]
 \begin{center}
\includegraphics[width=7cm]{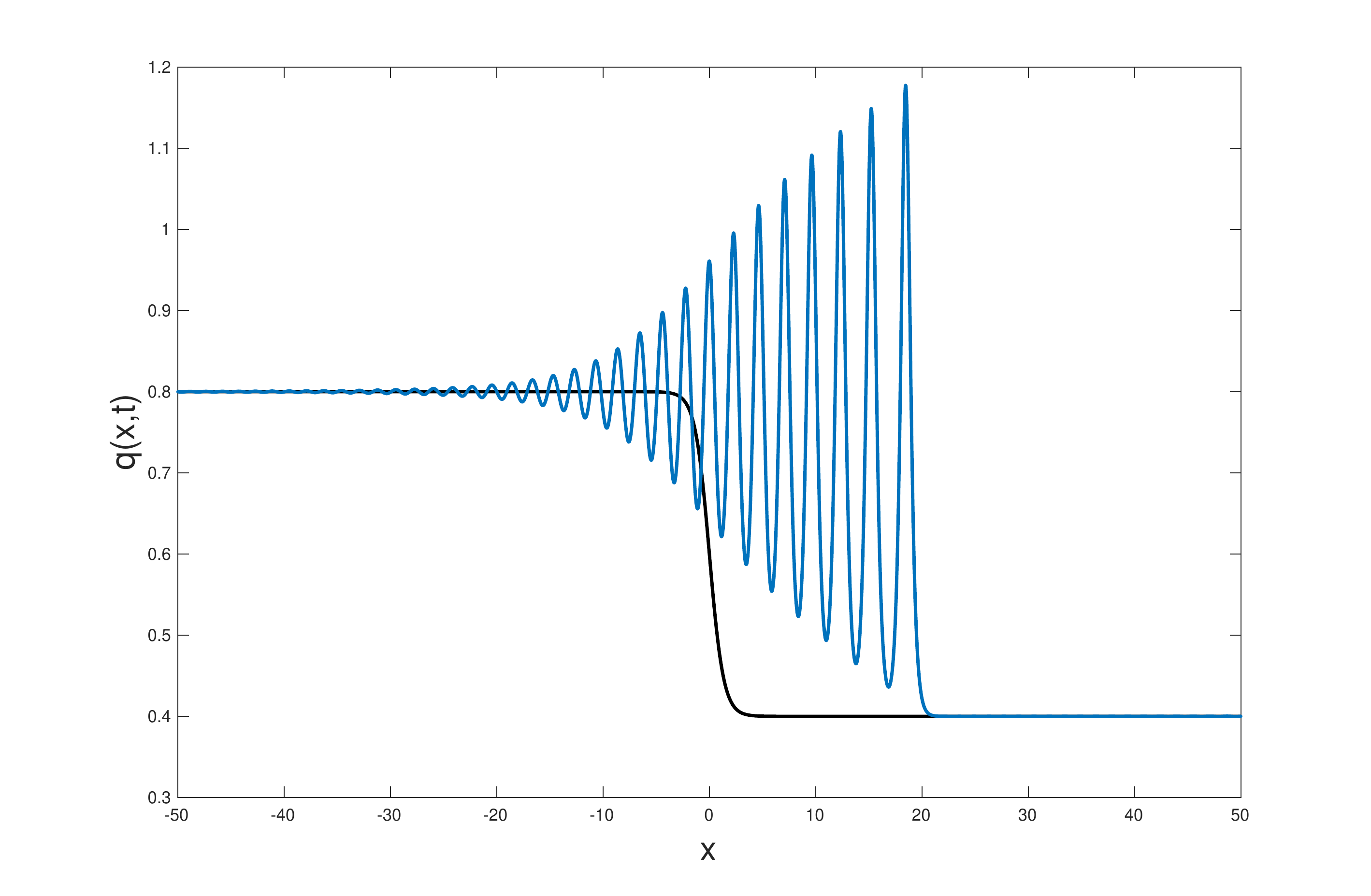}
\end{center}
\caption{The  (smoothed) step  initial data  \eqref{ic0exact}  in black for $c_-=0.8$  and $c_+=0.4$. The blue  line shows the evolution at time $t=15$.  }
 \label{fig2}
 \end{figure}
\end{Example}
\begin{Example}\label{Ex2}
 For an initial function in the form of a soliton on a constant background \eqref{soliton_intro} on the left, and a constant on the right, i.e. 
\begin{equation*}
q_0(x) = 
\begin{cases}
q_{soliton}(x,0;c_{\l}, \kappa_0,x_0),\quad x<0,\qquad \mbox{ where } \quad \kappa_0>c_{\l}>0, 
\\
c_{\r},\quad x>0,\quad \mbox{where } c_{\r}\in\mathbb{R}.
\end{cases}
\end{equation*}
Here $x_0=\log\dfrac{2(\kappa_0^2-c_-^2)}{|\nu|\kappa_0}\in\mathbb{R}$  and $\nu\neq 0$.

 The   evolution is shown in Figure~\ref{fig}, where it is shown that the initial  approximate "soliton" decomposes in two solitons that pass rapidly through the dispersive shock wave.
 
\begin{figure}[htb]
 \begin{center}
\includegraphics[ width=7.5cm]{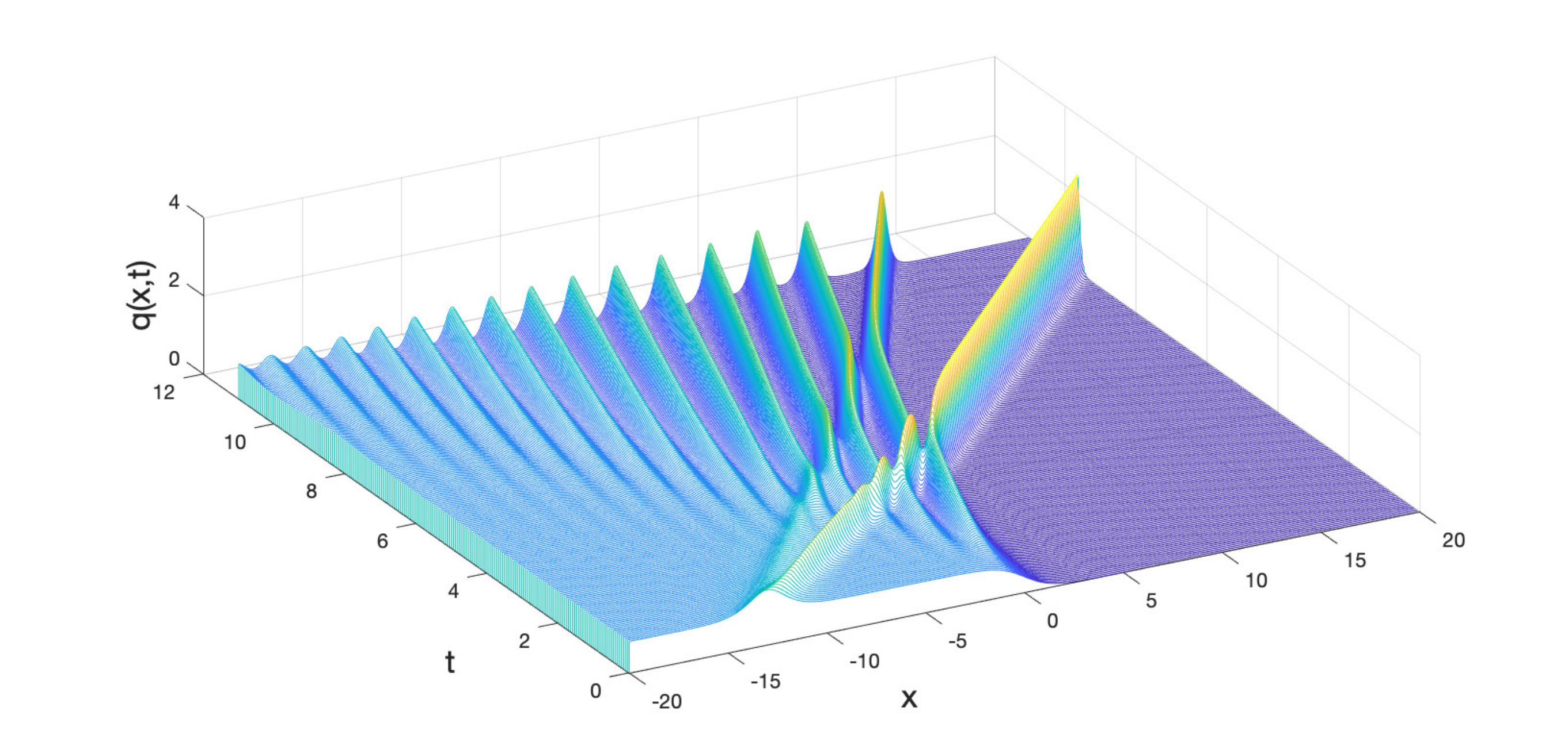}
\includegraphics[ width=5.0cm]{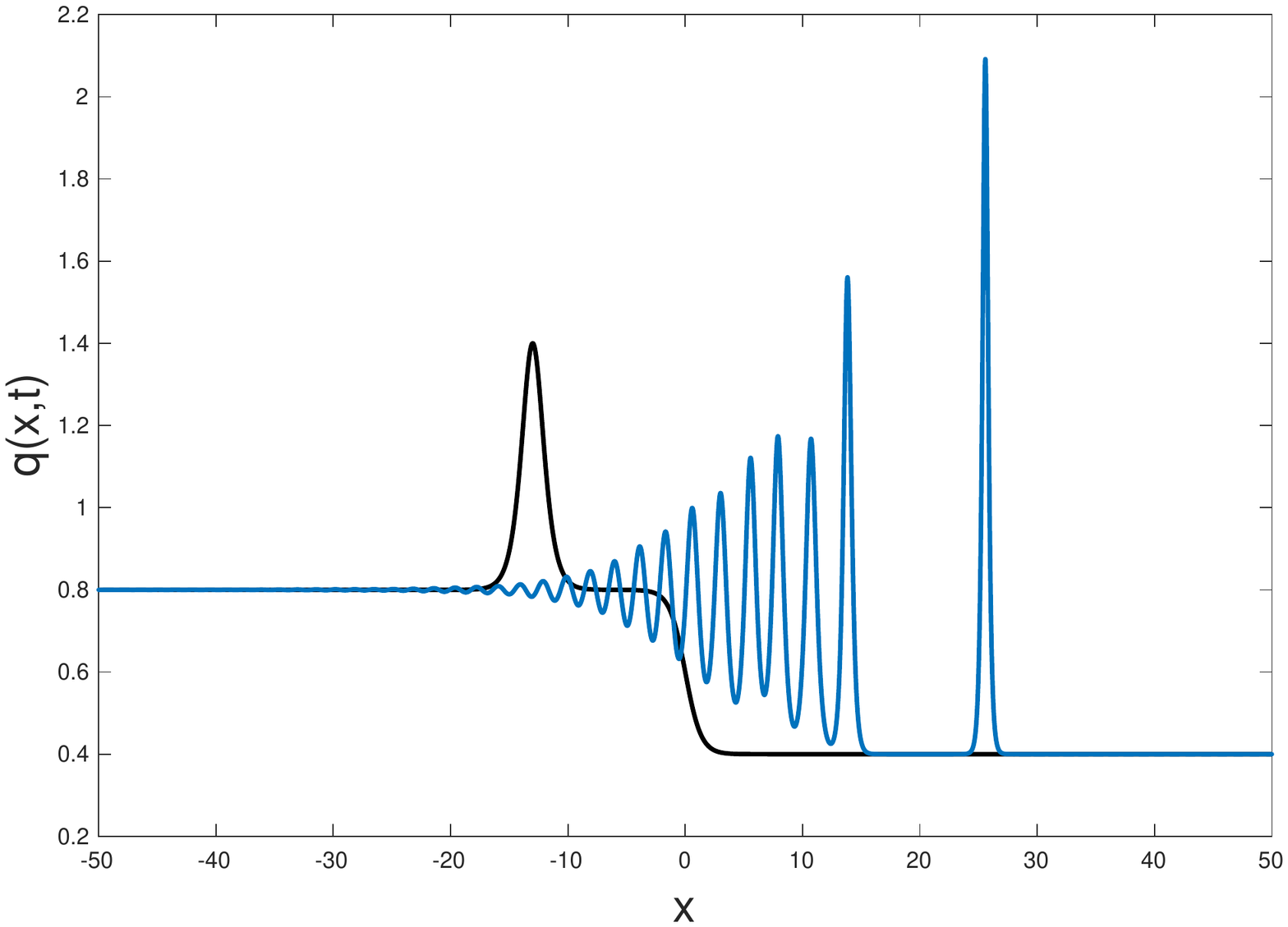}
\end{center}
\caption{The  (smoothed) initial data  of Example~\ref{Ex2}  for $c_-=0.8$  and $c_+=0.4$, $\kappa_0=1$ and $x_0=10$,  and $\nu>0$. On the left figure the evolution of $q(x,t)$ in the $(x,t)$ plane.
On the right  figure,  the black  line shows the  initial data and the blue line the evolution at time  $t=11$.  }
 \label{fig}
 \end{figure}

\end{Example}
 \begin{Example}\label{Ex3}
\color{black}{ Consider the initial function in the form of a  constant $c>0$ on the left, and a soliton on  a zero background on  the right, i.e. }
\begin{equation}
\label{solitonstep}
q_0(x)=\begin{cases}c>0,\quad x<0, \\ 
\frac{-2\kappa_0 \mathrm{sgn}(\nu)}{\cosh\left[2\kappa_0(x-x_0)\right]}, \quad x>0,\quad
\nu\in\mathbb{R}\setminus\left\{0\right\}, \ c>\kappa_0>0, \quad x_0=\frac{1}{2\kappa_0}\ln\left|\frac{\nu}{2\kappa_0}\right|.\end{cases}
\end{equation}
This initial data does not satisfy the decay at infinity of Assumption \ref{Assump1}. Below we  show that the reflection coefficient is singular on the imaginary axis $[\i c,0]$.
The reflection coefficient  is  $r(k)=\frac{b(k)}{a(k)}$  with 
\begin{equation*}
\begin{split}
&a(k) =\frac12 \left[\(\gamma(k)+\gamma(k)^{-1}\)\(1-\frac{\i\alpha}{k+\i\kappa_0}\)-\(\gamma(k)-\gamma(k)^{-1}\)\frac{\i\beta}{k+\i\kappa_0}\right],
\\
&b(k) = \frac12\left[\(\gamma(k)-\gamma(k)^{-1}\)\(1+\frac{\i\alpha}{k-\i\kappa_0}\)-\(\gamma(k)+\gamma(k)^{-1}\)\frac{\i\beta}{k-\i\kappa_0}\right],
\end{split}
\end{equation*}
where 
\begin{equation*}
\gamma(k)=\sqrt[4]{\frac{k-\i c}{k+\i c}},\quad \alpha = \frac{2\nu^2\kappa_0}{4\kappa_0^2+\nu^2},
\quad
\beta = \frac{4\nu\kappa_0^2}{4\kappa_0^2+\nu^2}.
\end{equation*}
The coefficient  $b(k)$ has  a pole at $k=\i \kappa_0$ and the  function $a(k)$ has two zeros $k_1, k_2$ in the half-plane  $\Im k\geq0$ :
\begin{equation*}
k_{1,2} = \dfrac{\pm\beta\sqrt{c^2+2c\beta-(\alpha-\kappa_0)^2}
+\i(\alpha-\kappa_0)(\beta+c)}{c+2\beta}
\end{equation*}
and they are symmetric w.r.t. imaginary axis: $k_1=-\ol{k_2}.$
Let us notice that when $x_0>0$ is not small, and hence $\nu>0$ is big, we have $\alpha\sim2\kappa_0,$ $\beta\sim 0,$ and hence 
\begin{equation*}k_{1,2}\sim \pm0+\i\kappa_0.\end{equation*}

Therefore the soliton  part of the initial data is  a breather from the spectral point of view,  with point spectrum very close to the imaginary axis.
The velocity of the breather is approximately $4\kappa_0^2<4c^2$   while the velocity of the dispersive shock wave is  $-6c^2<\frac{x}{t}<4c^2$.
Namely the breather has a  positive velocity that is smaller then the velocity of the leading front of the dispersive shock wave and it will remain trapped by the dispersive shock wave. Furthermore,   since $b(k)$ has a pole at $k=\i \kappa_0,$ also  the reflection coefficient has a pole at $k=\i \kappa_0$, and this pole requires a quite delicate asymptotic analysis
that is beyond the scope of the present article.
In the physical literature \cite{ElHoefer} this phenomenon received the  name of `soliton trapping' inside the dispersive shock wave  (see Figure~\ref{fig_trapped}).

\begin{figure}[htb]
 \begin{center}
\includegraphics[width=6.3cm]{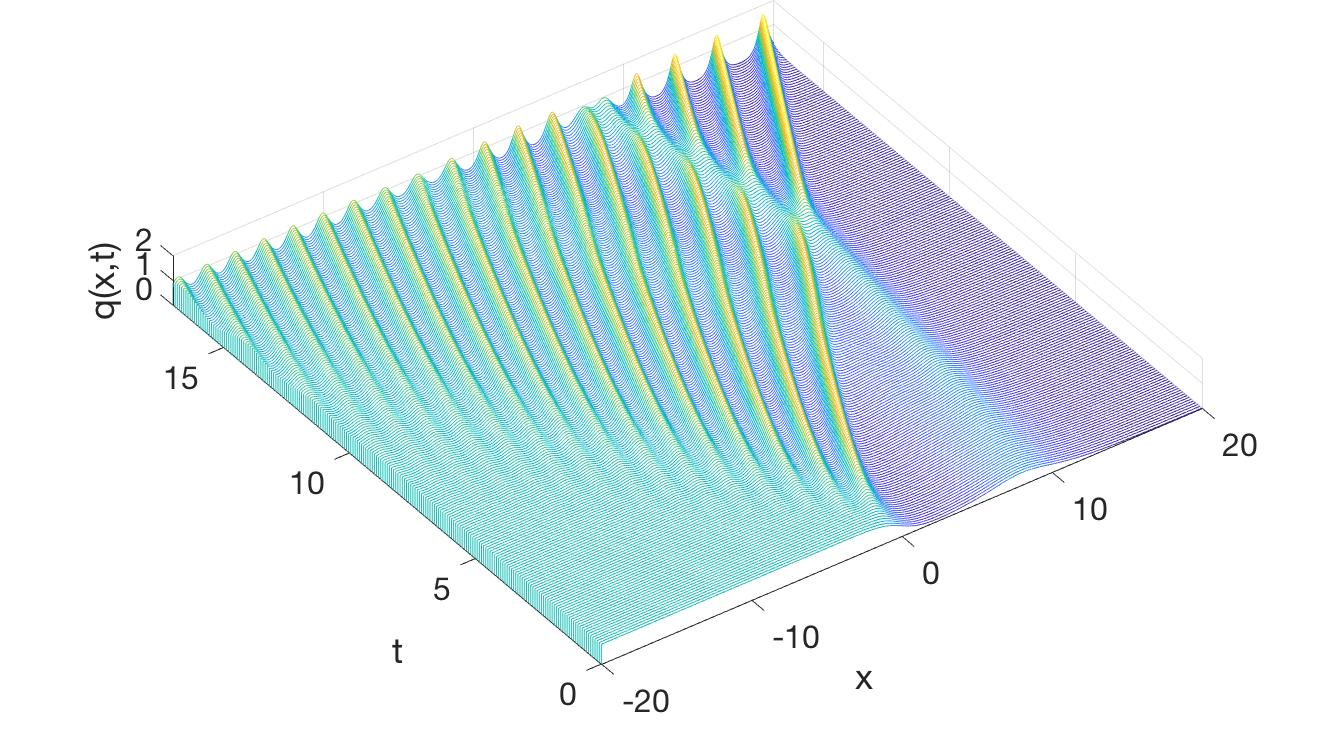}
\includegraphics[width=6.3cm]{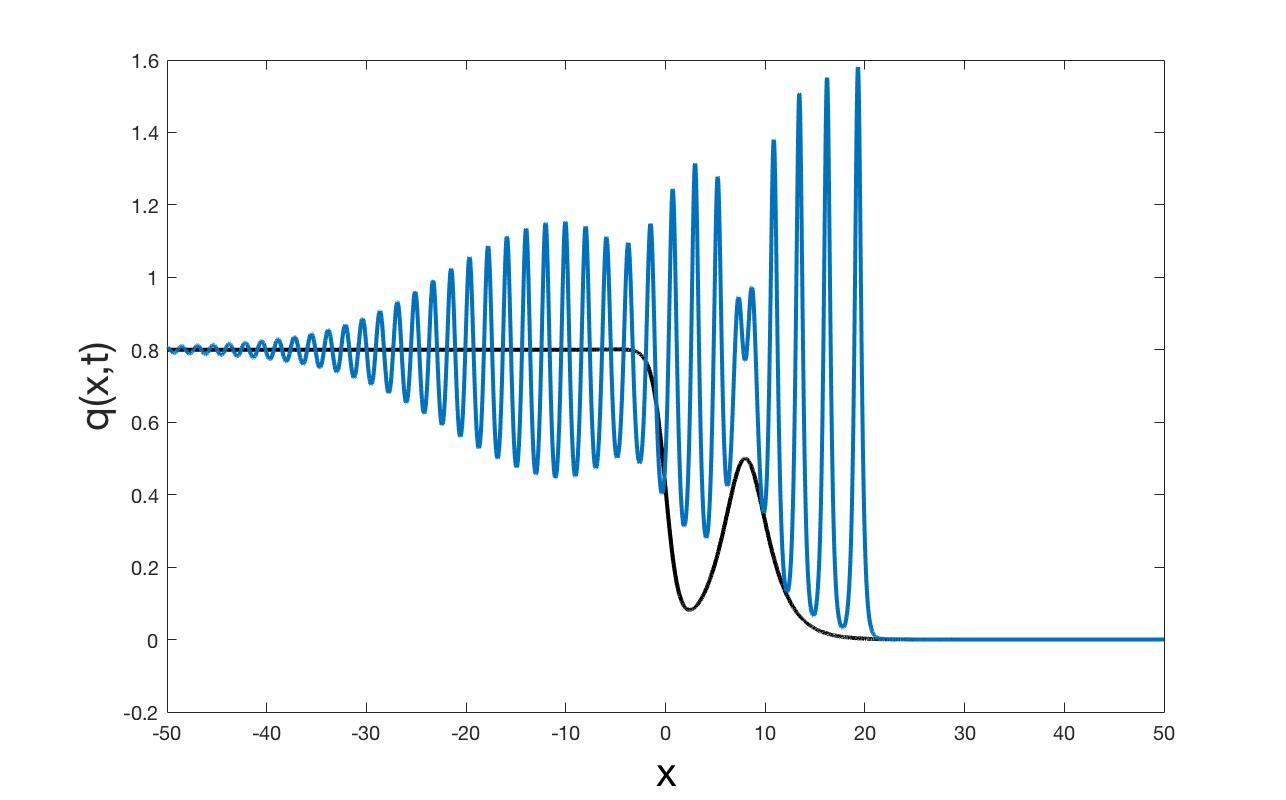}
\end{center}
\caption{The  initial data \eqref{solitonstep}  for $c_-=0.8$  and $c_+=0.0$, $\kappa_0=0.25 $. On the left, the 2D plot in the $(x,t)$ plane. On the right figure,   the black  line shows  the initial data, while the blue line  the evolution at time $t=17$.  }
 \label{fig_trapped}
 \end{figure}
%
\end{Example}

This manuscript is organized as follow. 
We study the direct scattering problem and the main properties of the scattering data and the solvability of the associated RH  problem in section 2.
In section 3 we formulate  and solve  the RH   problems associated to the model problems that will be used in asymptotic analysis,
namely the  RH  problem for a breather and a soliton on a constant background and  then the model problems for the periodic 
 solution that is expressed via  hyperelliptic curves. We will then show how to reduce the  hyperelliptic solution to a travelling wave solution of MKdV.

In section 4 we start the asymptotic analysis introducing the $g$-function and performing the contour deformation and asymptotic analysis 
in the elliptic region and in section 5 we perform the contour deformation   and asymptotic analysis in the left and right constant regions,
to arrive at our main result, namely the 
proof of theorem~\ref{thrm:asymp:rl}.  In section 6 we calculate the subleading order correction of the left constant region thus proving  theorem~\ref{theorem_error}.
Several appendices are used  to prove the most technical results.

\section{Preliminaries}\label{sect_prel}

\color{black}

Let us briefly describe what will happen in this Section.

In section~\ref{sect_Lax} we review the Lax pair formulation of the MKdV equation and we construct the Jost solutions at time $t=0$.
In section~\ref{sect_scat}  we analyze the direct scattering problem following and  extending the derivation in \cite{KM} for our class of initial data.
In Section \ref{sect_RHP}
 we introduce the RH problem \ref{RH_problem_1}, where we imposed a time dependence of the  form $e^{i4k^3 t}.$ 
The way we arrive to the RH problem \ref{RH_problem_1} is heuristic, nevertheless, after the RH problem \ref{RH_problem_1} is stated, it serves as a basis for the rest of the section.
\color{black} 
In Section \ref{sect_solv} we follow \cite{KMshort} and, first, we prove that the solution of the RH problem exists and admits differentiation with respect to the parameters $x,t,$ and, second, that the columns of the solution of the RH problem \ref{RH_problem_1} are solutions of the overdetermined system which constitutes the Lax pair \eqref{x-eq}, \eqref{t-eq}. This allows to define a function $q(x,t),$ which satisfies the MKdV equation \eqref{MKdV} and equals to the initial datum at $t=0.$

\medskip

%

\medskip

Note that the existence of RH problem's solution is guaranteed a priori by the Zhou's Schwartz reflection argument \cite{Zhou}, but in order to guarantee the differentiability in the parameters $x,t$ we need either to require analytic continuation of the reflection coefficient off the real axis, or to require sufficiently fast decay of the reflection coefficient at the infinite parts of the RH problem's contour. The former can be achieved by imposing the exponential decay \eqref{exponential}, and  the latter by requiring sufficient smoothness of the initial function. 
Neither of these function spaces is embedded in each other, but we chose to impose the exponential decay of the initial datum  rather than the smoothness condition. This allows us 
 to treat  the  important cases of discontinuous initial functions described in Examples \eqref{Ex1}-\eqref{Ex2}.

\medskip

%

\medskip
\begin{remark}
Results similar to that of Theorem \ref{theorem_error} but written for large $x\to-\infty$ and  fixed $t$ allow to conclude that $|q(x,t)|\sim\frac{|r(-\frac{x}{12t})|}{\sqrt[4]{x}}\sim|x|^{-\frac{j}{2}-\frac34}$ for an initial datum   $q(x,0)$ whose $j$-th derivative is locally a function of bounded variation.  In the above the symbol $\sim$ means of the same order. Thus, the function $q(x,t)$ at a later time $t>0$ satisfies the first moment condition \eqref{first_moment} only when $j\geq 3,$ which is a much stronger  condition than the  conditions set in Assumption \ref{Assump1}.
Note that the classical direct analysis of the $x-$equation \eqref{x-eq} allows to guarantee the existence of the Jost solutions only under the convergence of the first moment \eqref{first_moment}.
\end{remark}

\color{black}

\color{black}
\subsection{Lax pair and  Jost solutions}
\label{sect_Lax}
\color{black}
The MKdV equation \eqref{MKdV} admits the Lax pair representation in the form 
\color{black}
of an over-determined system of linear ordinary equations for a $2\times2$ matrix-valued function $\Phi(x,t;k),$ where $k\in\C$ is the spectral variable
\color{black}
\cite{Wadati}, 
\begin{eqnarray}\label{x-eq}
&&
\Phi_x(x,t;k)=(-\i k\sigma_3+Q(x,t))\Phi(x,t;k),\\ \label{t-eq}
&&\Phi_t(x,t;k)=(-4\i k^3\sigma_3+\widehat Q(x,t;k))\Phi(x,t;k),
\end{eqnarray}
where \color{black} 
$\Phi_x$ and $\Phi_t$ stand for partial derivative with respect to $x$ and $t$ respectively, $\sigma_3=\begin{pmatrix}1&0\\0&-1\end{pmatrix}$ and \color{black}
\begin{equation}\label{Q_Qhat}
       Q(x,t)=\begin{pmatrix}
          0 & q(x,t) \\ -q(x,t) & 0
         \end{pmatrix},\quad \widehat Q(x,t;k)=4k^2Q-2\i k(Q^2+Q_x)\sigma_3+2Q^3-Q_{xx}.
\end{equation}
\color{black} The compatibility condition $\Phi_{xt}=\Phi_{tx}$ is equivalent to the MKdV equation. \color{black}
If we substitute $q(x,t)=c$ (constant)  in \eqref{Q_Qhat}, then the Lax pair equations \eqref{x-eq}, \eqref{t-eq} admit  explicit solutions  \cite[page 3]{KM}
\begin{equation}\nonumber
 E_0(x,t;k)=\e^{-(i k x+4ik^3t)\sigma_3}, 
\end{equation}
for $c=0$ and 
\begin{equation}
\label{E_c}
\begin{split}
 E_c(x,t;k)&=\dfrac{1}{2}\begin{pmatrix}
\gamma(k)+\gamma^{-1}(k)&\gamma(k)-\gamma^{-1}(k) \\
\gamma(k)-\gamma^{-1}(k)&\gamma(k)+\gamma^{-1}(k)
 \end{pmatrix}\e^{-g_c(x,t;k)\sigma_3}, \quad
 \\
 \mbox{ where }&\gamma(k)=\sqrt[4]{\frac{k-\i c}{k+\i c}},
  \quad g_c(x,t;k)=\i( x +2t(2k^2-c^2)\sqrt{k^2+c^2},
 \end{split}
 \end{equation}
 for  $c\neq 0;$ \color{black} the root functions  are defined in such a way that they have branch cut across the interval $[\i c,-\i c].$  In particular
 $\sqrt{k^2+c^2}$ is real for $k\in \R\cup[\i c,-\i c]$  and positive for $k= 0+$ where $0+$ means the non tangential  limit from the  right with respect to the imaginary axis. 
Here and in the rest of the manuscript  the imaginary axis is oriented downwards and for this reason we use the notation $[\i c,-\i c]$  for the oriented segment with endpoints $\pm\i c_-$.
\color{black}

 \subsection{Direct scattering}\label{sect_scat}
 Below we give a quick review of the direct scattering problem for step-like initial data  $q(x,0)$  \eqref{ic0}.

 \color{black} We denote $E^{\r}(x;k)\equiv E_{c_{\r}}(x,0;k)$   and $E^{\l}(x;k)\equiv E_{c_{\l}}(x,0;k)$  where $E_{c_{\pm}}(x,t;k)$ is the matrix defined \eqref{E_c}.
Now we   consider equation \eqref{x-eq} for $t=0$ and  define the {\it Jost solutions} $\Phi^{\pm}(x;k):=\Phi^{\pm}(x,0;k),$  which satisfy the equation \eqref{x-eq}, and have the (defining) property that 
\begin{equation}
\label{Jost0}
\Phi^{\pm}(x;k)=E^{\pm}(x;k)(1+o(1))\quad \mbox{as $x\to\pm \infty$},\;\; \color{black}{ k\in\mathbb{R}\cup[\i c_{\pm}, -\i c_{\pm}].}
\end{equation}
The Jost solutions  have the following integral representation
\begin{equation}
\label{Jostp}
\Phi^{\pm}(x;k)=E^{\pm}(x;k)+\int^{x}_{\pm\infty}L^\pm(x,y)E^\pm(y;k)dy,
\end{equation}
where the kernels $L^{\pm}(x,y)$ are independent from $k$ and they are studied  in the Appendix~\ref{sect_appendC}.
From the above expression  it is  clear that the Jost functions are continuous  in the variable $k$ for $k\in\mathbb{R}\cup(\i c_{\pm}, -\i c_{\pm}).$
The following two Lemmas appeared before in \cite[section II]{KM}, \cite[section 2]{KM2}  under the condition $c_+=0$ or  $c_+>0$ for  the exact step,  ( see also \cite{AK91}) but their proof was skipped there.
\color{black}
%
%
\begin{lemma}\label{lem_prop_Jost}
 Suppose the the initial data $q(x,0)$ satisfies Assumption~\ref{Assump1}  and Assumption~\ref{Assump2}. Then the  Jost solutions  \eqref{Jostp}  their columns $\Phi^{\pm}_{j}(x;k)$,
 $j=1,2$ and their entries $\Phi^{\pm}_{ij}(x;k)$ have the following
properties.
\begin{enumerate}\label{propPhiPsi}
  \setcounter{enumi}{0}
  \item   $\det\Phi^{\pm}(x;k)=1$.
  \item
  $\Phi_{1}^+(x;k)$ is analytic in $k\in
   \Db{C}^-\setminus[0,-\i c_{\r}]$,  and continuous  up to the boundary\\
  $\Phi_{2}^+(x;k)$ is analytic in $k\in
   \Db{C}^+\setminus[\i c_{\r},0]$  and continuous  up to the boundary,\\
  $\Phi_{1}^-(x;k)$ is analytic in $k\in
   \Db{C}^+\setminus[\i c_{\l},0]$ and continuous   up to the boundary\\
  $\Phi_{2}^-(x;k)$ is analytic in $k\in
    \Db{C}^-\setminus[0,-\i c_{\l}]$ and continuous   up to the boundary.
    \\
        \quad here $\mathbb{C}^{\pm} = \left\{k:\ \pm\Im k>0\right\};$
    \item    Let  $\Phi(x;k)$ denote  either $\Phi^{\l}(x;k)$ or $\Phi^{\r}(x;k)$. By the reality of $q(x)$ we have the symmetries:
  \begin{equation*}
 \begin{split}
 \widebar{\Phi_1(x;\widebar{k})  }&=\begin{pmatrix}0&1\\-1&0\end{pmatrix} \Phi_2(x;k),\\
    \Phi_1(x;-k)&=\begin{pmatrix}0&1\\-1&0\end{pmatrix} \Phi_2(x; k),\\
     \widebar{\Phi(x;-\widebar{k})}&= \Phi(x,k).
           \end{split}
       \end{equation*}
   
  \item Large $k$ asymptotics:
   \begin{equation*}
\begin{pmatrix}\Phi^{\r}_{ 1}(x;k)e^{+ikx},&
   \Phi^{\l}_{2}(x;k)e^{-ikx}\end{pmatrix}
   =I+{\mathrm{O}}\left(\displaystyle\frac{1}{k}
       \right),\quad k\rightarrow\infty, \quad \Im k\leq 0,
  \end{equation*}
 \begin{equation*}
\begin{pmatrix}\Phi^{\l}_{1}(x;k)e^{+ikx},&
  \Phi^{\r}_{ 2}(x;k)e^{-ikx}\end{pmatrix}
  =I+{\mathrm{O}}\left(\displaystyle\frac{1}{k}
      \right),\quad k\rightarrow\infty, \quad \Im k\ge0,
 \end{equation*}
 where $I$ stands for the $2\times 2$ identity matrix.
  \item Jump conditions:  \\
  $\Phi_-(x;k)=\Phi_+(x;k)\left(
  \begin{array}{ccc}0&i\\i&0\end{array}\right),\quad
  k\in(\ii c,-\ii c)$,\\
  where $\Phi(x;k)$ and $c$ denote $\Phi^{\l}(x;k)$ and $c_{\l}$
  or $\Phi^{\mathfrak{\r}}(x;k)$ and $c_{{\r}}$, respectively,
  and $\Phi_\pm(x; k)$      denotes the non-tangential limits of the matrix
  $\Phi(x;k)$  from the left ($+$) and from the right ($-$) of the  segment $(\ii c, -\ii c)$  which is oriented from $\i c$ to $-\i c$ (i.e., $\Phi_{\pm}(x;k) = \lim\limits_{\epsilon\to+0}\Phi(x;k\pm\epsilon)$).
\end{enumerate}
\end{lemma}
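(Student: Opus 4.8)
The plan is to read off all five properties from the integral representation \eqref{Jostp}, the explicit background matrices $E^\pm$ in \eqref{E_c}, and the kernel estimates for $L^\pm$ proved in Appendix~\ref{sect_appendC} (which in turn use Assumption~\ref{Assump1}). I would dispatch the two elementary statements first. For the \emph{determinant} (property~1), each $\Phi^\pm$ solves the linear system \eqref{x-eq} whose coefficient $-\i k\sigma_3+Q$ is trace-free, so Liouville's formula makes $\det\Phi^\pm(x;k)$ independent of $x$; letting $x\to\pm\infty$ and using \eqref{Jost0} reduces it to $\det E^\pm$, and a one-line computation gives $\det\tfrac12\begin{pmatrix}\gamma+\gamma^{-1}&\gamma-\gamma^{-1}\\\gamma-\gamma^{-1}&\gamma+\gamma^{-1}\end{pmatrix}=\tfrac14\big[(\gamma+\gamma^{-1})^2-(\gamma-\gamma^{-1})^2\big]=1$ together with $\det \e^{-g_c\sigma_3}=1$. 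For the \emph{large}-$k$ behaviour (property~4), I use $\gamma(k)=1+\mathrm{O}(1/k)$ and $\sqrt{k^2+c^2}=k+\mathrm{O}(1/k)$, so $E^\pm(x;k)\e^{\pm\i kx\sigma_3}\to I$; the integral term is $\mathrm{O}(1/k)$ by extracting the oscillatory exponential from $E^\pm$ and integrating by parts once, using the differentiability and decay of $L^\pm$ from Appendix~\ref{sect_appendC}. Regrouping the columns according to their half-plane of analyticity then produces the two stated asymptotic matrices.

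For the \emph{symmetries} (property~3) the strategy is to verify each identity first for the background $E^\pm$ and then propagate it by uniqueness of the Jost solution. The key branch relations are that $\sqrt{k^2+c^2}$ is \emph{odd} in $k$ (so $g_c(-k)=-g_c(k)$ and, by Schwarz reflection, $\overline{g_c(\bar k)}=-g_c(k)$) and that $\gamma(-k)=\gamma(k)^{-1}=\overline{\gamma(\bar k)}$; these turn the first two identities into a direct matrix computation on \eqref{E_c}. The third identity $\overline{\Phi(x;-\bar k)}=\Phi(x;k)$ then follows algebraically by composing the first two and using $J^2=-I$ for $J=\begin{pmatrix}0&1\\-1&0\end{pmatrix}$. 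Since both sides of each identity solve \eqref{x-eq} (here the reality of $Q$ and the behaviour of $-\i k\sigma_3+Q$ under $k\mapsto\bar k$ and $k\mapsto-k$ are used) and share the same normalization at infinity, they coincide for all $x$.

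The \emph{jump} relation (property~5) comes from the branch structure of $E^\pm$ across the oriented cut $(\i c,-\i c)$. There $\sqrt{k^2+c^2}$ changes sign, so $g_{c,-}=-g_{c,+}$, while the fourth root satisfies $\gamma_-=\i\gamma_+$; substituting these boundary values into \eqref{E_c} and using the anticommutation $\e^{-g\sigma_3}A=A\,\e^{g\sigma_3}$ for the off-diagonal matrix $A=\begin{pmatrix}0&\i\\\i&0\end{pmatrix}$ yields $E_-(x;k)=E_+(x;k)\,A$. Because $L^\pm$ is independent of $k$, the same factor passes through the integral in \eqref{Jostp}, giving the stated jump for $\Phi$.

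The main obstacle is the \emph{analyticity} statement (property~2), which I would treat last. Since $L^\pm$ carries no $k$-dependence, the analyticity of each column of $\Phi^\pm$ is inherited from the corresponding column of $E^\pm$ provided the integral in \eqref{Jostp} converges locally uniformly in $k$. Each column of $E^\pm$ factors as entries of $\gamma(k)^{\pm1}$ (analytic off the cut $[\i c,-\i c]$) times an exponential $\e^{\mp\i y\sqrt{k^2+c_\pm^2}}$. For the first column of $\Phi^+$ the integration runs over $y\in(x,+\infty)$ and $|\e^{-\i y\sqrt{k^2+c_+^2}}|=\e^{y\,\Im\sqrt{k^2+c_+^2}}$ decays as $y\to+\infty$ exactly when $\Im\sqrt{k^2+c_+^2}<0$, which with the chosen branch holds throughout $\mathbb{C}^-$; combined with the decay of $L^+(x,y)$ as $y\to+\infty$ this gives a locally uniformly convergent, hence analytic, integral on $\mathbb{C}^-$, the only residual non-analyticity being the part of $[\i c_+,-\i c_+]$ lying in $\mathbb{C}^-$, namely $[0,-\i c_+]$. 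The other three columns are identical after matching the sign of the exponent to the half-plane, and the half-segment cuts $[0,\mp\i c_\pm]$ appear precisely because only the portion of $[\i c,-\i c]$ inside the relevant half-plane is seen. The delicate points, and where Appendix~\ref{sect_appendC} and Assumption~\ref{Assump1} are essential, are keeping the branch of $\sqrt{k^2+c^2}$ consistent so that the correct half of the cut survives, obtaining uniform convergence up to the boundary including continuity at the fourth-root singular endpoints $\pm\i c_\pm$, and the differentiability in $k$ up to the contour, for which the moment and exponential-decay conditions of Assumption~\ref{Assump1} control the kernels.
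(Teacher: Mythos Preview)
Your proposal is correct and follows essentially the same architecture as the paper: properties~1, 3, 5 are handled directly in the text while properties~2 and~4 are deferred to the kernel estimates of Appendix~\ref{sect_appendC}. The only stylistic differences are minor. For property~3 the paper writes the $x$-equation entrywise and invokes uniqueness of solutions with the same large-$x$ asymptotics, whereas you first check each symmetry on the explicit background matrices $E^\pm$ and then invoke uniqueness of the Jost solution; these are two sides of the same argument. For property~5 the paper again uses ODE uniqueness (both $\Phi^\pm_-$ and $\Phi^\pm_+\begin{pmatrix}0&\i\\\i&0\end{pmatrix}$ solve \eqref{x-eq} with the same asymptotics), while you factor the constant jump matrix $\begin{pmatrix}0&\i\\\i&0\end{pmatrix}$ straight through the integral representation \eqref{Jostp}; your route is marginally more direct since it avoids re-invoking the ODE. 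Your branch identities $\gamma_-=\i\gamma_+$, $g_{c,-}=-g_{c,+}$, $\sqrt{k^2+c^2}$ odd, and $\gamma(-k)=\gamma(k)^{-1}=\overline{\gamma(\bar k)}$ are all correct with the stated branch choices, and the composition argument deriving the third symmetry from the first two via $J^2=-I$ is a clean touch not spelled out in the paper.
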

\begin{proof}\color{black}{The full proof of the Lemma is 
given in the Appendix~C.}
Here we will only prove properties 1,  3  and 5.\\
{\bf Property 1. }
The functions $\Phi^{\pm}$ satisfy the $x-$equation \eqref{x-eq} for $t=0$; since the trace of the function $-\i k \sigma_3+Q$ equals 0, by Liouville's formula for determinants, it follows that   $\det\Phi^{\pm}(x;k)$ are independent of $x.$ From the large $x$ asymptotics   \eqref{Jost0} we then obtain that the determinants are equal to $1.$ \\
{\bf Property 3.}
Let us write the  $x-$equation \eqref{x-eq} element-wise,
\begin{equation*}
\begin{split}
&\begin{cases}
\Phi_{11,x}(x;k) + \i k \Phi_{11}(x;k) = q(x) \Phi_{21}(x;k),
\\
\Phi_{21,x}(x;k) - \i k \Phi_{21}(x;k) = -q(x) \Phi_{11}(x;k),
\end{cases}\\
&\begin{cases}
\Phi_{12,x}(x;k) + \i k \Phi_{12}(x;k) = q(x) \Phi_{22}(x;k),
\\
\Phi_{22,x}(x;k) - \i k \Phi_{22}(x;k) = -q(x) \Phi_{12}(x;k).
\end{cases}
\end{split}
\end{equation*}
Changing in the last two equations $k\to- k$ and using the large $x$-asymptotics of $\Phi_{ij}, i,j=1,2$, we get that $\Phi_{12}(x;-k)=-\Phi_{21}(x;k),\quad \Phi_{11}(x;k)=\Phi_{22}(x;-k)$ and $\Phi_{ij}(x;-\overline{k}) = \overline{\Phi_{ij}(x;k)},\  i,j=1,2.$ \\
Indeed, the two pairs of vector-valued functions,
$(\Phi_{22}(x;-k), -\Phi_{12}(x;-k))$ and \\ $(\ol{\Phi_{11}(x;-\ol k)},  \ol{\Phi_{21}(x;-\ol k)})$ satisfy the same systems of equations as $(\Phi_{11}(x;k), \Phi_{21}(x;k))$ and the same large $x$ asymptotics, and hence coincide with the latter.\\
 {\bf Property 5}. We use the defining property \eqref{Jost0} of the functions $\Phi^{\pm}$ and note that the limiting values of $E^\pm$ on the two  sides of the  oriented  interval $(\i c_\pm,-\i c_{\pm})$ satisfy
$E^\pm_-(x;k)=E^\pm_+(x;k)\begin{pmatrix}0&\i \\ \i & 0\end{pmatrix}.$
Thus, both functions $\Phi^\pm_-(x;k)$ and $\Phi^\pm_+(x;k+0)\begin{pmatrix}0&\i \\ \i & 0\end{pmatrix}$ satisfy the same differential equation \eqref{x-eq} (at $t=0$), and the same asymptotics as $x\to-\infty,$ and thus they are equal.

\end{proof}
\color{black}

Since the matrix-valued func\-tions $\Phi^{\pm}(x ;k)$ 
 are solutions of the first order
dif\-fer\-en\-tial equation (\ref{x-eq})
, they are
linearly dependent, i.e. there exists an $x$-independent {\it transition matrix}
$T(k)$, such that
\begin{equation}\label{Tr}
T(k)=(\Phi^{{\r}}(x;k))^{-1}\Phi^{{\l}}(x;k),\quad
k\in\mathbb{R} \cup(\i c_{\r},-\i c_{\r}).
\end{equation}
Due to symmetries of the $x$-equation \eqref{x-eq} the transition matrix has the following structure:
\begin{equation}
\label{T_a_b}T(k)=\begin{pmatrix}
        a(k)& -\ol{b(\ol{k})} \\ b(k) & \ol{a(\ol{k})}
       \end{pmatrix},\qquad a(k)=\det\(\Phi^{\l}_{1},\Phi^{\r}_{ 2}\),\ b(k)=\det\(\Phi^{\r}_{1},\Phi^{\l}_{ 1}\).
\end{equation}
\color{black}{Note that the representation \eqref{T_a_b}  and Lemma~\ref{lem_prop_Jost}  allow to extend the domains of definition of the spectral functions $a,b$: the  function $a(k)$ is analytic for $k\in\mathbb{C}\setminus[\i c_-,0],$ and is defined also on both sides of the interval $(0,-\i c_{\r})$, and $b(k)$ is defined for $k\in\mathbb{R}\cup(\i c_+, -\i c_-).$}\\
{\color{black} By property 1  of Lemma~\ref{lem_prop_Jost}, we have that $\det T(k)=1$ so that
\begin{equation}
\label{det1}
a(k) \ol{a(\ol{k})}+b(k)\ol{b(\ol{k})}=1.
\end{equation}
}
\noindent
The functions $a^{-1}(k)$ and $$r(k):=\frac{b(k)}{a(k)},\quad k\in\mathbb{R} \cup(\i c_+,-\i c_+)$$ 
are called the (right) {\it transmission} and {\it reflection} coefficients, respectively.
Below we sumarize the analytical properties of the functions $a(k)$, $b(k)$ and $r(k)$.  
We denote by 
$a_\pm(k)$, $b_\pm(k)$ and $r_\pm(k)$ the non-tangential limits of  $a(k)$, $b(k)$ and $r(k)$ 
   from the left ($+$) and from the right ($-$) of the  segment $(\ii c_-, -\ii c_-)$  which is oriented from $\i c_-$ to $-\i c_-.$


 
\color{black}
\begin{lemma}\label{lem_abr}
Suppose that  the initial data $q(x,0)$ satisfies Assumption~\ref{Assump1}  and Assumption~\ref{Assump2}. 
Then the spectral functions $a(k), b(k), r(k)$ have the following properties enumerated below.
\begin{enumerate}\label{prop_a_r}
\setcounter{enumi}{5}
\item Analyticity: \\
$a(k)$ is analytic for  $k\in\mathbb{C}^+ \setminus[\i c_{\l},0]$, and it can be extended continuously up to the boundary, 
with the exception of the points $\i c_{\l},$ $\i c_{\r},$ where $a(k)$ may have at
most a fourth root singularity $(k-\i c_{\pm})^{-1/4}$ \color{black}{(i.e. the function $a(k)(k-\i c_{\pm})^{1/4}$ is bounded as $k\to\i c_{\pm}$).}
The function $a(k)$ might have at most a finite number of zeros for  $k\in\mathbb{C}_+ \setminus[\i c_{\l},0]$.\\
The function $b(k)$   is defined for  $k\in\mathbb{R}\cup(\i c_{{\r}},-\i c_+)$.
The function $r(k)$ is defined for  $k\in\mathbb{R}\cup(\i c_{{\r}},-\i
c_{{\r}})$ except for the points where $a(k)=0$.
\\
 Under the condtion \eqref{exponential},
 the function $a(k)$  and $b(k)$ have an analytic extension  in the $\delta-$neighbor\-hood of $\Sigma=\mathbb{R}\cup[\i c_{\l},-\i c_{\l}]$, where $\delta= \sqrt{\sigma^2+c_+^2}-c_-$ and $\sigma>\sqrt{c_-^2-c_+^2}>0$, with $\sigma$ as in Assumptions~\ref{Assump1}. The function $r(k)$ is meromorphic in the same domain  with poles at the zeros of $a(k).$

\item Asymptotics:\\
\begin{equation*}\begin{split}&a(k)=1+\mathcal{O}(k^{-1}) \quad \textrm{ as }\quad k\to\infty, \Im k\geq 0;
\\&r(k)=\mathcal{O}(k^{-1}).\end{split}\end{equation*}

 \item Symmetries: in their domains of  definition $$\ol{a\(-\ol{k}\)}=a(k),\quad \ol{b\(-\ol{k}\)}=b(k), \quad \ol{r\(-\ol{k}\)}=r(k).$$

\item On
 $(\i c_{\r},-\i c_{\r}):$\\
$$a_-(k)=\ol{a_+\(\ol{k}\)},\quad b_-(k)=-\ol{b_+\(\ol{k}\)},\quad r_-(k)=-\ol{r_+\(\ol{k}\)},$$
and on $\color{black}{(\i c_{\l},\i c_{\r})}:$
$$a_-(k)=-\i\ \ol{b_+\(\ol{k}\)},\quad a_+(k)=\i\ \ol{b_-\(\ol{k}\)}.$$
\item \label{prop10} Let \begin{equation*}f(k):=\displaystyle\frac{i}{a_-(k)a_+(k)},\quad k\in (\i
c_{{\l}},0),\end{equation*}
then  on $(\i c_{\l},\i c_{\r}):$
$$f(k)=\frac{\i}{a_-(k)a_+(k)}=r_-(k)-r_+(k)=\frac{-1}{a_+(k)\ol{b_+\(\ol{k}\)}}=\frac{1}{a_-(k)\ol{b_-\(\ol{k}\)}},$$
and on $(\i c_{\r},0):$
$$f(k)=\frac{\i}{a_-(k)a_+(k)}=\i (1-r_-(k)r_+(k)).$$

\end{enumerate}
\begin{enumerate}
\setcounter{enumi}{10}
\item If $c_+>0,$ then $|a(0)|=1,$ and $|a(k)|\leq 1$ for $k\in\mathbb{R}\setminus\left\{0\right\}$ and $|a(k)|\geq1$ for $k\in(\i c_+,0).$ 
\item \label{prop_nonvanish} Nonvanishing. 
The coefficients $a(k)$ and $\ol{b(\ol k)}$ do not vanish on the segment $(\i c_{\l},\i c_{\r});$

\item 
$\displaystyle\frac{(\Phi^-_{1}(x;k))_-}{a_-(k)}-
      \displaystyle\frac{(\Phi^{\l}_{1}(x;k))_+}{a_+(k)}=f(k)
      \Phi^{\r}_{2}(x;k),
      \quad k\in(\i c_{{\l}},\i c_{{\r}}),$\\
   and  $\displaystyle\frac{(\Phi^{\l}_{2}(x;k))_-}{\overline{a_-
  (\overline{k})}}-
      \displaystyle\frac{(\Phi^{\l}_{2}(x;k))_+}{\overline
      {a_+(\overline{k})}}=
      -\ol{f(\ol{k})}\Phi^{\r}_{1}(x;k),\quad k\in(-\i c_{{\r}},-\i
      c_{{\l}})\,.$

\end{enumerate}
\end{lemma}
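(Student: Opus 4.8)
The plan is to derive each property of the scattering coefficients by transporting the corresponding property of the Jost solutions from Lemma~\ref{lem_prop_Jost} through the determinantal representations $a(k)=\det(\Phi^{-}_1,\Phi^{+}_2)$ and $b(k)=\det(\Phi^{+}_1,\Phi^{-}_1)$ in \eqref{T_a_b}. I would begin with the analyticity (property~6). Since $\Phi^{-}_1$ is analytic in $\mathbb{C}^+\setminus[\i c_-,0]$ and $\Phi^{+}_2$ in $\mathbb{C}^+\setminus[\i c_+,0]$, and $[\i c_+,0]\subset[\i c_-,0]$ because $c_->c_+$, the determinant $a(k)$ is analytic in $\mathbb{C}^+\setminus[\i c_-,0]$ with continuous boundary values. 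The admissible fourth-root singularities at $\i c_\pm$ are inherited from the factor $\gamma(k)=\sqrt[4]{(k-\i c)/(k+\i c)}$ entering the normalising matrices $E^\pm$ in \eqref{E_c}: near $\i c_-$ only the column $\Phi^{-}_1$ can produce a $(k-\i c_-)^{-1/4}$ blow-up and near $\i c_+$ only $\Phi^{+}_2$ can produce $(k-\i c_+)^{-1/4}$, so $a(k)(k-\i c_\pm)^{1/4}$ stays bounded. The large-$k$ asymptotics (property~7) and the symmetries (property~8) then follow by substituting, respectively, the large-$k$ behaviour of property~4 and the reflection symmetry $\overline{\Phi(x;-\overline k)}=\Phi(x,k)$ of property~3 into the same two determinants; the former gives $a\to\det(e_1,e_2)=1$ and $r=\mathcal{O}(k^{-1})$, while the reality of $q(x)$ yields $\overline{a(-\overline k)}=a(k)$ and $\overline{b(-\overline k)}=b(k)$.

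Next I would treat the jump relations of property~9. Across the upper gap $(\i c_-,\i c_+)$ only $\Phi^{-}$ jumps, through $(\Phi^{-}_1)_-=\i(\Phi^{-}_2)_+$ coming from property~5, while $\Phi^{+}_2$ is analytic there; combining this with the symmetry identity $\overline{b(\overline k)}=\det(\Phi^{+}_2,\Phi^{-}_2)$ obtained from property~3 immediately produces $a_-(k)=-\i\,\overline{b_+(\overline k)}$ and, analogously, $a_+(k)=\i\,\overline{b_-(\overline k)}$. Across $(\i c_+,-\i c_+)$ both $\Phi^{\pm}$ jump compatibly and the same determinant manipulation gives $a_-=\overline{a_+(\overline k)}$, $b_-=-\overline{b_+(\overline k)}$ and hence $r_-=-\overline{r_+(\overline k)}$. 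Properties~10 and~13 are then purely algebraic consequences: dividing the scattering relation $\Phi^{-}=\Phi^{+}T(k)$ columnwise by $a_\pm$ and using $\det T=1$ from \eqref{det1} yields the identities for $f(k)=\i/(a_-a_+)$ on the two sub-segments, together with the jump formulas for $\Phi^{-}_1/a$ and $\Phi^{-}_2/\overline{a}$ recorded in property~13.

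For the modulus estimates in property~11 I would specialise \eqref{det1}. On $\mathbb{R}\setminus\{0\}$ one has $\overline{a(\overline k)}=\overline{a(k)}$, so \eqref{det1} becomes $|a|^2+|b|^2=1$, whence $|a|\le1$; on the gap $(\i c_+,0)$ the symmetries of property~8 force the relevant boundary values of $a$ to be real, and inserting the relation $f=\i(1-r_-r_+)$ of property~10 back into \eqref{det1} turns it into the inequality $|a|\ge1$, with equality $|a(0)|=1$ at the crossing point $k=0$. The genuinely delicate statement, and the one I expect to be the main obstacle, is the nonvanishing of $a(k)$ and $\overline{b(\overline k)}$ on the open gap $(\i c_-,\i c_+)$ in property~12: here a zero of $a$ would be a spurious spectral point embedded in the continuous-spectrum band $[\i c_-,-\i c_-]$, and it cannot be excluded by the determinant identity alone nor by a growth argument (on this gap $\Phi^{-}_1$ is oscillatory at $-\infty$ while $\Phi^{+}_2$ is recessive at $+\infty$, which is not by itself contradictory). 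My strategy is instead to substitute the gap relations $a_\mp=\mp\i\,\overline{b_\pm(\overline k)}$ into a one-sided boundary value of \eqref{det1} and to use the symmetries of property~8 to collapse the result into a sign-definite identity of the schematic form $|a_+|^2=1+(\text{nonnegative term})$, which precludes $a_+$, and therefore $\overline{b_+(\overline k)}$, from vanishing. The care needed to track which boundary value each conjugated factor $\overline{a(\overline k)}$, $\overline{b(\overline k)}$ takes on the oriented gap is exactly where the argument is most easily derailed, and it is the step I would write out in full.

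Finally I would close property~6 using the exponential decay \eqref{exponential}: it makes the integral representation \eqref{Jostp} of the Jost functions converge for $k$ in a $\delta$-neighbourhood of $\Sigma=\mathbb{R}\cup[\i c_-,-\i c_-]$ with $\delta=\sqrt{\sigma^2+c_+^2}-c_-$, upgrading $a,b$ to analytic and $r$ to meromorphic functions there. Combined with the analyticity in $\mathbb{C}^+\setminus[\i c_-,0]$ and the normalisation $a\to1$ at infinity, this confines the zeros of $a$ to a compact set bounded away from $\Sigma$, so that by the identity theorem they are isolated and finite in number, completing property~6.
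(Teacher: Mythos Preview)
Your route is the paper's: transport the Jost-solution properties of Lemma~\ref{lem_prop_Jost} through the determinants \eqref{T_a_b}, with $\det T=1$ and the symmetry/jump relations doing the work. Two of your anticipated identities are off, though the strategy behind each is sound.

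For property~11 on $(\i c_+,0)$: the boundary values of $a$ are not real; rather $a_+(k)=\overline{a_-(k)}$ and $b_+(k)=\overline{b_-(k)}$, which is what property~8 gives across a purely imaginary cut. Your identity $a_-a_+-b_-b_+=1$ extracted from $f=\i(1-r_-r_+)$ then reads $|a_+|^2-|b_+|^2=1$ and yields $|a|\ge1$ as claimed. The paper reaches the same identity directly from $\det T=1$ plus property~9, without the detour through $f$.

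For property~12 on $(\i c_-,\i c_+)$: the schematic $|a_+|^2=1+(\text{nonnegative})$ does not emerge here, because on this sub-segment $\Phi^+$ is analytic while only $\Phi^-$ jumps, so the algebra genuinely differs from that on $(\i c_+,0)$. Carrying out exactly the substitution you describe---inserting $a_-=-\i\,\overline{b_+(\overline k)}$, $a_+=\i\,\overline{b_-(\overline k)}$ from property~9 and $a_+=\overline{a_-}$, $b_+=\overline{b_-}$ from property~8 into $\det T_+=1$---one obtains
\[
-\i\, a_+(k)\,\overline{b_+(k)}+\i\,\overline{a_+(k)}\,b_+(k)=1,\qquad\text{i.e.}\qquad \Im\bigl(a_+(k)\,\overline{b_+(k)}\bigr)=\tfrac12.
\]
This is the paper's sign-definite identity: it forces $a_+(k)\,b_+(k)\neq0$, so both factors are nonzero, and $\overline{b(\overline k)}\neq0$ then follows from $a_-=-\i\,\overline{b_+(\overline k)}$ exactly as you note. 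The bookkeeping hazard you flagged is precisely this---the target is an imaginary-part identity, not a modulus bound---and once corrected your argument goes through.
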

\begin{proof}\color{black}
\textbf{Property 6.}  The analytic properties of $a(k)$ and $b(k)$ follows from the analytic properties of the Jost solutions derived in Lemma~\ref{lem_prop_Jost}.  The spectral coefficients $a(k)$ and $b(k)$ are defined by the Jost solutions by formulas \eqref{T_a_b}, and in turn, Jost solutions have representation \eqref{Jostp}. Since $E^{\pm}(x;k)$ have fourth-root type singularity at the points $k=\i c_{\pm}$ (i.e. they are bounded after multiplication by $(k-\i c_{\pm})^{1/4}$), the functions $a(k)$ and $ b(k)$ have {\it at most} fourth-root type singularities at $\i c_{\pm}.$
The analytic extension of $a(k)$, $b(k)$ and $r(k)$ is discussed in the Appendix~\ref{sect_appendC}, Corollary~\ref{cor_analyticity_Jost}.\\
\textbf{Property 7.}  It can be obtained by writing   the definition \eqref{T_a_b}  of   the functions $a(k)$  and  $b(k)$ and then using   property 1 and  the 
 large $k$ behaviour of  the  Jost solutions in   property 4 of Lemma \ref{lem_prop_Jost}. \\
\textbf{Property 8.} It  follows from the corresponding symmetries of Jost solutions (property 3 of Lemma \ref{lem_prop_Jost}).\\
 \textbf{Property 9}. We use property 5 of Lemma \ref{lem_prop_Jost}, and this allows to interrelate the limiting values  $T_\pm(k)$ using the definition 
$\Phi_\pm^-(x;k) = \Phi^+_{\pm}(x;k)T_\pm(k).$

For $k\in(\i c_-, -\i c_-),$ we have $\Phi^-_-(x;k)=\Phi^-_+(x;k)\begin{pmatrix}0&\i\\\i&0\end{pmatrix}$ and 
for $k\in(\i c_+, -\i c_+),$   we have   $\Phi^+_-(x;k)=\Phi^+_+(x;k)\begin{pmatrix}0&\i\\\i&0\end{pmatrix}$. 
Hence 
 we obtain $$T_-(k)=\begin{pmatrix} 0 & -\i \\ -\i & 0 \end{pmatrix}T_+(k)\begin{pmatrix} 0 & \i \\ \i & 0 \end{pmatrix},\quad k\in(\i c_+,-\i c_+)$$
and 
$$T_-(k)=T_+(k)\begin{pmatrix} 0 & \i \\ \i & 0 \end{pmatrix},\quad k\in(\i c_-, -\i c_+)\cup (-\i c_+,-\i c_-).$$
Considering the matrix entries of the above relations  completes the proof of property 9.\\
{\bf Property 10}. It follows  in a straightforward way from property 9.\\
{\bf Property 11}. Note that the absolute values of $a(k)$ and $b(k)$ do not have jump across the interval $(\i c_-,0)$, i.e. $|a_{+}(k)|=|a_-(k)|$, $|b_{+}(k)|=|b_-(k)|$.
The relation \eqref{det1} implies $|a(k)|^2+|b(k)|^2=1$   for $k\in \R$ while from property 9 we have that $|a(k)|^2= |b(k)|^2+1$ for $k\in (\i c_+,-\i c_+)$.
Combining the two relations together,  we obtain that for $c_->c_+> 0$,  $|a(k)|<1 $ for $k\in\R\backslash\{0\}$  while $|a(k)|>1$ for $k\in (\i c_+,0)$.
For $c_->c_+>0,$ the functions $a(k)$ and $b(k)$ do not have root-type singularities at the origin, and thus by continuity we obtain $|a(0)|=1$.\\
\textbf{Property 12.} By property 8, $a(k)\ol{a(\ol k)}+b(k)\ol{b(\ol k)}=1.$ Taking here $k$ on the positive side  of the  oriented  segment $(\i c_-, \i c_+),$ we obtain $a_+(k)\ol{a_+(\ol k)}+b_+(k)\ol{b_+(\ol k)}=1.$ By property 9, on the interval $(\i c_-,\i c_+)$, we can express the limiting values of the spectral coefficient $b$ in terms of the limiting values of the spectral coefficient $a;$ thus $-\i a_+(k)b_-(k) + \i a_-(k) b_+(k) = 1.$ By property 8, we have that the limiting values of $a,b$ from different sides of the interval $(\i c_-,\i c_+)$ are complex conjugates of each other, i.e. $a_+(k)=\ol{a_-(k)},$ $b_+(k)=\ol{b_-(k)},$ and hence $-\i a_+(k)\ol{b_+(k)} + \i \ol{a_+(k)}b_+(k)=1$ and $\Im(a_+(k)\ol{b_+(k)}) = \frac12.$ Hence, neither  $a(k)$ nor $b(k)$ can vanish on the interval $(\i c_-,\i c_+).$\\
{\bf Property 13.} We first express $\Phi^+(x;k)$ as  $\Phi^+(x;k) = \Phi^-(x;k)T(k)^{-1},$  by  formula \eqref{T}, and then express $\Phi^+_2(x;k)$ from the latter,
\begin{equation*}\Phi^+_2(x;k) = \ol{b(\ol k)}\Phi^-_1(x;k) + a(k)\Phi^-_2(x;k).\end{equation*}
The expression on the left-hand-side does not have a jump on the contour $(\i c_-, \i c_+),$ but every term on the right-hand-side has. Let us take the limit of the r.h.s. from the positive side of the contour $(\i c_-, \i c_+),$ and then substitute $\ol{b_+(\ol k)} = \i a_-(k)$ using property 11, and $(\Phi_2^-(x,t;k))_+=-\i (\Phi_1^-(x,t;k))_-$ from property 5 of Lemma \ref{lem_prop_Jost}. This concludes the proof of the first relation. The second   relation is obtained in a similarl way.

\end{proof}
\color{black}

Let us denote the first and second columns of the Jost solution $\Phi^-$ and $\Phi^+$ in \eqref{Jostp} as
\begin{equation}
\label{psipm}
\Phi^{-}_{1}(x;k)=\begin{pmatrix}
\varphi^{\l}(x;k)\\
\psi^{\l}(x;k)
 \end{pmatrix},\quad \Phi^{+}_{2}(x;k)=\begin{pmatrix}
 \varphi^{\r}(x;k)\\
 \psi^{\r}(x;k)\end{pmatrix}\,,
 \end{equation}
 so that 
\begin{equation*}
\Phi^{-}(x;k)=\begin{pmatrix}
\varphi^{\l}(x;k) & -\ol{\psi^{\l}(x;\ol k)} \\
\psi^{\l}(x;k) & \ol{\varphi^{\l}(x;\ol k)}
 \end{pmatrix},
 \quad
 \Phi^{+}(x;k)=\begin{pmatrix}
\ol{\psi^{\r}(x;\ol k)} & \varphi^{\r}(x;k)\\
-\ol{ \varphi^{\r}(x;\ol k)} &  \psi^{\r}(x;k)\end{pmatrix}\,.
 \end{equation*}

\begin{lemma}\label{lem_zeros}
Denote the zeros of $a(k)$ in the \color{black} quarter-plane $\left\{k\in\mathbb{C}:\ \Im k> 0\\ \mbox{ and } \Re k\geq 0\right\}$ \color{black} by $\kappa_1, \dots,\kappa_N.$ We have $\kappa_j\notin(\i c_-,0]$ for $j=1,\dots,N$. \color{black} If the zeros of $a(k)$ are simple then the residues of $a^{-1}(k)$ are given by
\begin{equation}
\label{resa}
\mathrm{Res}_{k=\kappa_j}a^{-1}(k) = \i\frac{\nu_j}{\mu_{j}},\quad j=1,\dots,N
\end{equation}
where $$(2\nu_j)^{-1}:=\int\limits_{-\infty}^{+\infty}\varphi^{+}(x;\kappa_j)\psi^{+}(x;\kappa_j)\d x$$ and
$$\varphi^{\l}(x;\kappa_j)=\mu_j\varphi^{\r}(x;\kappa_j),\quad \psi^{\l}(x;\kappa_j)=\mu_j\psi^{\r}(x;\kappa_j),$$
with $\psi^{\pm}$ and $\varphi^\pm$ as in \eqref{psipm}.
%

\end{lemma}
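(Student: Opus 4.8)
To begin, the assertion $\kappa_j\notin(\i c_-,0]$ is immediate from Lemma~\ref{lem_abr}: property 12 gives that $a(k)$ does not vanish on $(\i c_-,\i c_+)$, while property 11 gives $|a(k)|\geq 1$ on $(\i c_+,0)$ together with $|a(0)|=1$, so no zero can lie on the whole half-open segment. For the residue formula itself, I would start from the observation that a zero of $a(k)=\det(\Phi^-_1,\Phi^+_2)$ (see \eqref{T_a_b}) is precisely the place where the columns $\Phi^-_1$ and $\Phi^+_2$ become linearly dependent; this dependence defines the constant $\mu_j$ via $\Phi^-_1(x;\kappa_j)=\mu_j\Phi^+_2(x;\kappa_j)$, i.e. $\varphi^-=\mu_j\varphi^+$ and $\psi^-=\mu_j\psi^+$ in the notation of \eqref{psipm}. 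Since the zero is simple, $\mathrm{Res}_{k=\kappa_j}a^{-1}(k)=1/a'(\kappa_j)$, so the entire task reduces to expressing $a'(\kappa_j)$ through the bound-state eigenfunction.

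The key step is the standard trace-free Wronskian computation. Writing $u=\Phi^-_1$, $v=\Phi^+_2$, both satisfy the $x$-equation \eqref{x-eq} at $t=0$, and differentiating that equation in $k$ introduces the inhomogeneous term $-\i\sigma_3 u$. I would then compute $\partial_x\det(\partial_k u,v)$: the two homogeneous contributions cancel because $-\i k\sigma_3+Q$ is trace-free and $\det(Mw,z)+\det(w,Mz)=\mathrm{tr}(M)\det(w,z)$, leaving only $\partial_x\det(\partial_k u,v)=-\i\det(\sigma_3 u,v)=-\i(u_1v_2+u_2v_1)$. Integrating this identity in $x$ over the whole line and substituting $u=\mu_j v$ with $v=(\varphi^+,\psi^+)^T$ at $k=\kappa_j$ turns the right-hand side into $-2\i\mu_j\int_{-\infty}^{+\infty}\varphi^+\psi^+\,\d x=-\i\mu_j/\nu_j$ by the normalization defining $\nu_j$.

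It then remains to identify the two boundary values of $\det(\partial_k u,v)$. Expanding $u$ in the $\Phi^+$ basis through $\Phi^-=\Phi^+T(k)$ and using $\det\Phi^+=1$ (property 1 of Lemma~\ref{lem_prop_Jost}), one finds $\det(\partial_k u,v)=a'(\kappa_j)+b(\kappa_j)\det(\partial_k\Phi^+_2,\Phi^+_2)$ for every $x$; since $\Phi^+_2$ decays as $x\to+\infty$ the second term vanishes there, giving the value $a'(\kappa_j)$ at $+\infty$. Dually, expanding $v$ in the $\Phi^-$ basis and using $a(\kappa_j)=0$ gives $\det(\partial_k u,v)=\ol{b(\ol\kappa_j)}\det(\partial_k\Phi^-_1,\Phi^-_1)$, which vanishes as $x\to-\infty$ because $\Phi^-_1$ decays there. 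Hence the boundary difference equals $a'(\kappa_j)$, and comparison with the integral above yields $a'(\kappa_j)=-\i\mu_j/\nu_j$, so that $\mathrm{Res}_{k=\kappa_j}a^{-1}(k)=1/a'(\kappa_j)=\i\nu_j/\mu_j$, as claimed in \eqref{resa}. The main obstacle is precisely the justification of these boundary limits: one must know that the bound-state columns $\Phi^+_2$ and $\Phi^-_1$ decay exponentially as $x\to+\infty$ and $x\to-\infty$ respectively — so that their $k$-derivatives, which only add polynomial-in-$x$ factors, still decay — and this rests on the sign of $\Im\sqrt{\kappa_j^2+c_\pm^2}$ fixed by the branch conventions of Section~\ref{sect_Lax} together with the decay of the Jost solutions established in Lemma~\ref{lem_prop_Jost}.
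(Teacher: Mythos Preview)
Your proof is correct and uses the same underlying tool as the paper—a Wronskian identity coming from the trace-free structure of the $x$-equation—but organized a bit differently. The paper differentiates $a(k)=\det(\Phi^-_1,\Phi^+_2)$ in $k$, substitutes the linear dependence at $\kappa_j$ to split $\dot a(\kappa_j)$ into two pieces $-\tfrac{1}{\mu}W^-(0)+\mu W^+(0)$ with the self-Wronskians $W^\pm=\det(\Phi^\pm_j,\partial_k\Phi^\pm_j)$, computes $\partial_x W^\pm = 2\i\varphi^\pm\psi^\pm$, and integrates each piece over a half-line using $W^-(-\infty)=W^+(+\infty)=0$. You instead work with the single cross-Wronskian $\det(\partial_k\Phi^-_1,\Phi^+_2)$, compute its $x$-derivative in one stroke, integrate over the whole line, and identify the boundary values at $\pm\infty$ via expansion in the $\Phi^\pm$ bases. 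Your route is a bit more compact (one Wronskian, no split at $x=0$); the paper's route makes the vanishing of the boundary terms more immediate, since $W^\pm(\pm\infty;\kappa_j)=0$ is read off directly from the explicit $E^\pm$ without any basis-change algebra. Both rest on exactly the same exponential decay of the bound-state columns that you correctly flag as the key analytical input.
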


\color{black}

\begin{proof}
Combining \eqref{T_a_b} and \eqref{psipm} we can write 
 the coefficient $a(k)$ as
\begin{equation}
\label{defap}
a(k) = \det \begin{pmatrix}\varphi^-(x;k) &\varphi^+(x;k)      \\ \psi^-(x;k) &\psi^+(x;k)  \end{pmatrix},
\end{equation}
 Let $a(\kappa_j)=0$ and $\dot{a}(\kappa_j)\neq0$, \color{black} where $\dot{a}(k):=\dfrac{d}{dk}a(k)$. Then, by \eqref{defap}
$$\begin{cases}
\varphi^{\l} = \mu \varphi^{\r},\\ \psi^{\l} = \mu\psi^{\r},\quad \mu\neq0\,.
\end{cases}
$$

If $\begin{pmatrix}\varphi^-\\\psi^-\end{pmatrix}$ is a solution of the Lax equation \eqref{x-eq}, then $\begin{pmatrix}-\ol{\psi^-(\ol{k})} \\ \ol{\varphi^-(\ol{k})}\end{pmatrix}$ is also a solution of the equation \eqref{x-eq}.
Due to above mentioned symmetry, all the zeros $\kappa_1,\dots\kappa_N$  of $a(k)$  either belong to the imaginary axis or go in pairs, $k$ and $-\ol{k}.$
 According to property 11 and 12 of Lemma \ref{lem_abr},  the zeros of $a(k)$ are outside the segment $(\i c_-, 0]$.

In order to prove the relation \eqref{resa}, we differentiate in $k$ the relation \eqref{defap}, thus obtaining 
\begin{equation*}
\dot{a}(\kappa_j)=\det\begin{pmatrix}
   \dot{\varphi}^- (\kappa_j) &   \frac{1}{\mu}\varphi^- (\kappa_j)\\ \dot\psi^-(\kappa_j)&\frac{1}{\mu}\psi^-(\kappa_j)
      \end{pmatrix}+\det\begin{pmatrix}
     \mu\varphi^+(\kappa_j) & \dot{\varphi}^+(\kappa_j)   \\\mu\psi^+(\kappa_j)  & \dot\psi^+(\kappa_j) 
      \end{pmatrix}.
      \end{equation*}
In order to evaluate the above two determinants we  define the quantity
$$W^{\pm}:=\det\begin{pmatrix}
     \varphi^{\pm} & \dot{\varphi}^{\pm}  \\\psi^{\pm} & \dot\psi^{\pm}
      \end{pmatrix},
$$
where the dot means differentiation with respect to $k$.
From the Lax equation \eqref{x-eq} we have that
$$
\begin{cases}
 \varphi^{\pm}_{x}+\i k\varphi^{\pm} = q\psi^{\pm},
\\
\psi^{\pm}_x-\i k\psi^{\pm}=-q\varphi^{\pm},
\end{cases}
\qquad 
\begin{cases}
 \dot\varphi^{\pm}_{x}+\i k\dot\varphi^{\pm} + \i\varphi^{\pm} = q\dot\psi^{\pm},
\\
\dot\psi^{\pm}_x-\i k\dot\psi^{\pm}-\i\psi^{\pm}=-q\dot\varphi^{\pm}\,.
\end{cases}
$$

Hence, differentiating $W^{\pm}$ in $x,$ we obtain 
$$W^{\pm}_x=2\i\varphi^{\pm}\psi^{\pm},\quad \mbox{ and hence }W^{\pm}(B;\kappa_j)-W^{\pm}(A;\kappa_j)=2\i\int\limits_{A}^B\varphi^{\pm}(x;\kappa_j)\psi^{\pm}(x;\kappa_j) \d x$$
for any $A<B.$ Choosing $A=-\infty, B=0$ for $W^-$ and $A=0,B=+\infty$ for $W^{+},$ from the above relations  and \eqref{Jost0} we obtain 
\begin{equation*}\begin{split}\dot{a}(\kappa_j) &=
\frac{-1}{\mu}W^-(0;\kappa_j) + \mu W^+(0;\kappa_j)\\
&= \frac{-1}{\mu}\int\limits_{-\infty}^02\i\varphi^-(x;\kappa_j)\psi^-(x;\kappa_j)\d x -\mu\int\limits_{0}^{+\infty}2\i \varphi^+(x;\kappa_j)\psi^+(x;\kappa_j)\d x 
\\
&= -2\i\mu \int\limits_{-\infty}^{+\infty}\varphi^+(x;\kappa_j)\psi^+(x;\kappa_j)\d x =\frac{-2\i}{\mu}\int\limits_{-\infty}^{+\infty}\varphi^-(x;\kappa_j)\psi^-(x;\kappa_j)\d x,
\end{split}\end{equation*}
which concludes the proof of Lemma~\ref{lem_zeros}.
\end{proof}

\color{black}
\begin{remark}
Note that the $x-$equation \eqref{x-eq} of the Lax pair is not self-adjoint, and hence  the  zeros of $a(k)$ are not necessarily simple, and may lie everywhere in the  region  $\left\{k:\ \Im k\geq 0\right\}$ as long as the symmetry relation $\ol{a(-\ol k)}=a(k)$ is satisfied.

However, considerations similar to that of \cite{BC} allow to conclude that there is a dense set in the space of the step-like potentials \eqref{ic0}  such that for any initial data  $\tilde{q}(x)$
satisfying \eqref{first_moment} there is an initial data $q(x)$ close to $\tilde{q}(x)$ 
in the topology induced by \eqref{first_moment}, for which the 
 generic Assumptions  \ref{Assump2} are satisfied.
 \color{black}

\end{remark}
Summarizing we arrive to the following set of scattering data for the initial data  $q(x,0)$ satisfying the Assumptions~\ref{Assump1}  and  \ref{Assump2}:
\begin{equation*}
{\mathcal S}=\{r(k), \{\kappa_j,\i \nu_j\}_{j=1}^N\}
\end{equation*}
with $r(k)$ meromorphic   in a $\delta-$neighbor\-hood of $\Sigma=\mathbb{R}\cup[\i c_{\l},-\i c_{\l}]$, where $\delta = \sqrt{\sigma^2+c_+^2}-c_-$ and  $\sigma>\sqrt{c_-^2-c_+^2}>0$
and where  $\{\kappa_j\}_{j=1}^N$  are  simple zeros of $a(k)$ with  $\kappa_j\in\C^+\backslash\{ \R\cup(\i c_-,0]\}$  and $\Re\kappa_j\geq 0$.
If $\kappa_j$ corresponds to a soliton, then $\bar{\kappa}_j$ is also a point of the discrete spectrum, while if $\kappa_j$ corresponds to a breather, then $-\kappa_j$, $\widebar{\kappa}_j$ and $-\widebar{\kappa}_j$ belong to the discrete spectrum.
%


\color{black}

\subsection{Riemann-Hilbert problem in the generic case}\label{sect_RHP}

\color{black}

In this section we introduce the RH  problem \ref{RH_problem_1}, which allows to reduce the (nonlinear) initial value problem \eqref{MKdV}, \eqref{ic0} into a (linear) matrix conjugation problem.

\color{black}


{\color{black}In order to set up  a  RH problem for  $t\geq0$  we proceed as follows.
We first assume that the solution $q(x,t)$ of the initial value problem \eqref{MKdV}, \eqref{ic0} exists, and, moreover, that 
 the Jost solutions $\Phi^{\pm}(x,t;k)$  with the defining property 
 \begin{equation}
 \label{Phi_t}
\Phi^{\pm}(x,t;k)=E^{\pm}(x,t;k)(1+o(1))\quad \mbox{as $x\to\pm \infty$},\;\; \color{black}{ k\in\mathbb{R}\cup[\i c_{\pm}, -\i c_{\pm}],}
\end{equation}
  exist  for $t\geq 0$ (here, the matrix $E^{\pm}(x,t;k)$ has been defined in \eqref{E_c}).  
Our assumption will be dropped off later, after deriving the RH problem \ref{RH_problem_1}.
We will show the solvability of the RH  problem ~\ref{RH_problem_1}  and thus we  justify a posteriori the assumption of existence of $q(x,t)$ and  the  Jost solutions $\Phi^{\pm}(x,t;k).$
\color{black}

 To start, \color{black}we observe   that since $\Phi^{\pm}(x,t;k)$ are solutions to the  first order equation \eqref{x-eq},
  these solutions are related by the  linear transformation
  \begin{equation*}
 \Phi^{{\l}}(x,t;k)=\Phi^{{\r}}(x,t;k)T(t;k),\quad
k\in\mathbb{R} \cup(\i c_{\r},-\i c_{\r}).
\end{equation*}
Using the evolution equation \ref{t-eq}, it follows that
\begin{equation*}
\dfrac{d}{dt}T(t;k)=0,
\end{equation*}
namely $\dfrac{d}{dt}a(k,t)=0$ and $\dfrac{d}{dt}b(k,t)=0$.  The fact that the  scattering data are constant in time  is due to our choice of normalization 
of  the  Jost  solutions in \eqref{Phi_t}.
}

The scattering relations (\ref{T_a_b}) between the matrix-valued
functions $\Phi^{\l}(x,0;k)$ and
$\Phi^{{\r}}(x,0;k)$,  and the jump conditions  of Lemma~\ref{lem_prop_Jost} and Lemma~\ref{lem_abr} can
be written as a matrix Riemann -- Hilbert problem (RHP). Namely, let us notice that the matrix-valued function
\begin{equation}\label{M}M(x,0;k)=
\begin{cases}\left(
\displaystyle\frac{\Phi^{{\l}}_{1} (x,0;k)}{a(k)} e^{\i
kx},\Phi^{\r}_{2}(x,0;k)e^{-\i kx}\right),
\  k\in \mathbb{C}_+ \backslash [\ii c_{{\l}},0],  \\\\
\left(\Phi^{{\r}}_{1}(x,0;k)e^{\i kx},\displaystyle
\frac{\Phi^{{\l}}_{2}(x,0;k)} {\overline{a(\overline{k})}}
e^{-\i kx}\right),\ k\in\mathbb{C}_-\backslash[0, -\ii
c_{{\l}}],
\end{cases}\end{equation} 
satisfies the jump conditions 
\begin{equation*}
M_-(x,0;k)=M_+(x,0;k)J(x,0;k),\quad k\in \Sigma,
\end{equation*}
where  the matrix $J(x,0;k)$ and the oriented contour  $\Sigma=\R\cup[\i c_-,-\i c_-]$
are specified in  Figure~\ref{Graphic1}. 
{\color{black} Here $M_{\pm}(x,0;k)$ are the limiting values of the matrix $M(x,0;k)$ as $k$ approaches the contour from the positive/negative direction (the positive direction is on the left, the negative is on the right as one traverses the contour in the direction of orientation). Using Lemmas~\ref{lem_prop_Jost} and \ref{lem_abr} the matrix $J(x,0;k)$ can be obtain in a straightforward way from the definition \eqref{M} \cite[section 3]{KM2}.
 Further, let us assume that $\kappa_j$ with $\Re\kappa_j\geq 0$ and $\Im\kappa_j>$ is a zero of $a(k)$. Then taking the residue at $k=\kappa_j$ of the matrix $M(x,0;k)$  and using  Lemma~\ref{lem_zeros}
 we obtain
 $$\mathrm{Res}_{\kappa_j}M(x,0;k)=\left(
\displaystyle\i\frac{\nu_j}{\mu_j}\Phi^{{\l}}_{1} (x;\kappa_j) e^{\i kx},0\right)=\lim\limits_{k\to \kappa_j}M(x,0;k)\begin{pmatrix}0&0\\ \i\nu_{j}\e^{2\i \theta(x,0;\kappa_j)} & 0\end{pmatrix},$$
and similarly for $\bar{\kappa}_j$, $-\kappa_j$ and $-\bar{\kappa}_j$.
}

\begin{figure}[h]
\begin{tikzpicture}
\draw[fill=black] (0,2.5) circle [radius=0.05];
\draw[fill=black] (0,0) circle [radius=0.05];
\draw[fill=black] (0,-2.5) circle [radius=0.05];
\draw[fill=black] (0,1.5) circle [radius=0.05];
\draw[fill=black] (0,-1.5) circle [radius=0.05];
\draw[->,thick] (0,-1.5)--(0,-2.0);
\draw[->,thick] (0,0.0)--(0,-0.7);
\draw[->,thick] (0,2.5)--(0,2.0);
\draw[->,thick] (0,1.2)--(0,0.6);
\draw[thick] (0,2.5) to (0,-2.5);
\node at (-0.4,2.5) {$\i c_{\l}$};\node at (-0.35,1.5) {$\i c_{\r}$};
\node at (-0.15,0.18) {$0$};
\node at (-0.4,-2.5) {$-\i c_{\l}$};\node at (-0.4,-1.5) {$-\i c_{\r}$};
\node[right] at (-6,.8) {\small{$ \begin{pmatrix}1 & -\overline{r(k)}\e^{-2i\theta(x,t;k)}\\-r(k)\e^{2\i \theta(x,t;k)} & 1+|r(k)|^2\end{pmatrix}$}};
\node[right] at (-5,-1.2) {\small{$\theta(x,t;k)=kx+4k^3t$}};
\node[right] at (.5,1.) {$\begin{pmatrix}\i r_{-}(k)&\i \e^{-2\i \theta(x,t;k)}\\f(k) \e^{2\i \theta(x,t; k)} & -\i r_+(k)\end{pmatrix}$};
\draw[->,thick] (0.6,1)--(0.1,0.8);
\node[right] at (0.5,-1.0) {$\begin{pmatrix}\i \ol{r_{+}(\ol{k})}& -\ol{f(\ol{k})} \e^{-2\i \theta(x,t;k)} \\ \i \e^{2\i \theta(x,t;k)} & -\i \ol{r_-(\ol{k})}\end{pmatrix}$};
\draw[->,thick] (0.6,-1)--(0.1,-0.8);
\node[right] at (0.5,-2.2) {$\begin{pmatrix}1&-\overline{f(\ol{k})}\e^{-2\i \theta(x,t;k)} \\0& 1\end{pmatrix}$};
\draw[->,thick] (0.6,-2.)--(0.1,-1.8);
\node[right] at (0.6,2.2) {$\begin{pmatrix}1&0\\f(k)\e^{2\i \theta(x,t;k)} & 1\end{pmatrix}$};
\draw[->,thick] (0.7,2.4)--(0.1,2.0);
\draw[thick,dashed](0,3)--(0,-3);
\draw[thick, postaction = decorate, decoration = {markings, mark = at position 0.25 with {\arrow{>}}}, decoration = {markings, mark = at position 0.75 with {\arrow{>}}}] (-6,0) to (7,0);
\end{tikzpicture}
\caption{The oriented contour $\Sigma=\mathbb{R}\cup[\i c_-,\i c_-]$ and the corresponding jump matrix $J(x,t;k)$.} 
\label{Graphic1}
\end{figure}
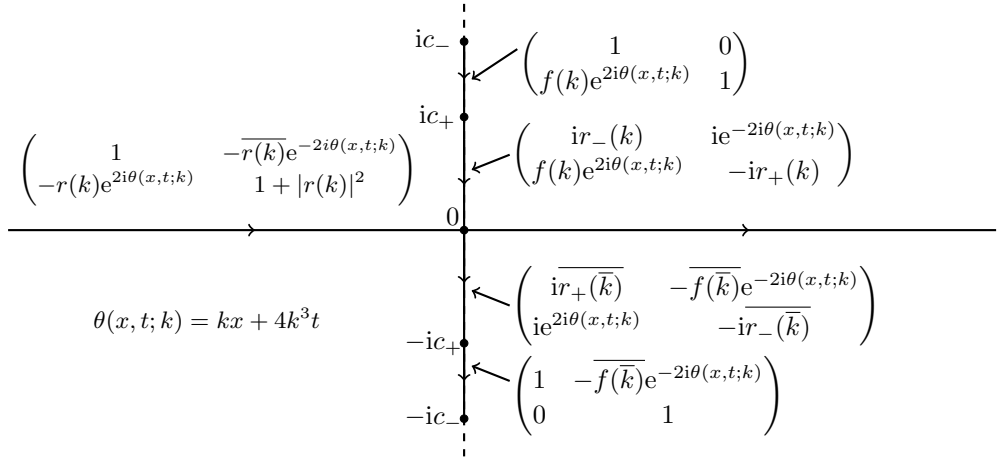

The jump properties of the matrix $M(x,0;k)$ prompts us to consider the following  matrix valued function
\begin{equation}\label{Mt}M(x,t;k)=
\begin{cases}\left(
\displaystyle\frac{\Phi^{{\l}}_{1} (x,t;k)}{a(k)} e^{\i \theta(x,t;k)},\Phi^{\r}_{2}(x,t;k)e^{-\i \theta(x,t;k)}\right),
\  k\in \mathbb{C}_+ \backslash [\ii c_{{\l}},0],  \\\\
\left(\Phi^{{\r}}_{1}(x,t;k)e^{\i \theta(x,t;k)},\displaystyle
\frac{\Phi^{{\l}}_{2}(x;k)} {\overline{a(\overline{k})}}
e^{-\i\theta(x,t;k)}\right),\ k\in\mathbb{C}_-\backslash[0, -\ii
c_{{\l}}],
\end{cases}\end{equation} 
where $$\theta(x,t;k)=xk+4k^3t.$$ 
Using  the properties of the Jost solutions described in section \ref{sect_scat}, we can check that the function $M(x,t;k)$, (assuming it  exists),  satisfies the jump, analyticity, and normalization conditions described below in the RH problem \ref{RH_problem_1}.

%
%
%
%
\begin{RHP}\label{RH_problem_1}
To find a $2\times 2$ matrix-valued function $M(x,t;k)$ with the following properties:
\begin{enumerate} 
 \item $M(x,t;k)$ is meromorphic  for  $k\in \mathbb{C}\setminus\Sigma,$  where  $\Sigma=\R\cup[\i c_-,-\i c_-]$   (see Figure~\ref{Graphic1}) and  it has at most fourth root singularities at the points $\pm\i c_{\pm}$. 
  \item The boundary values $M_{\pm}(x,t;k)$  on the oriented contour $\Sigma$ satisfy the jump condition
   $$M_-(x,t;k)=M_+(x,t;k)J(x,t;k), \quad k\in\Sigma$$
    and  
\begin{equation}
\label{J_M0}
J(x,t;k)=\left\{
\begin{array}{lll}
\begin{pmatrix}1 & -\overline{r(k)}\e^{-2i\theta(x,t;k)}\\-r(k)\e^{2\i \theta(x,t;k)} & 1+|r(k)|^2\end{pmatrix},& k\in\mathbb{R}\setminus\left\{0\right\},&
\\
\begin{pmatrix}1&0\\f(k)\e^{2\i \theta(x,t;k)} & 1\end{pmatrix},& k\in(\i c_{\l},\i c_{\r}),&\\
\\
\begin{pmatrix}\i r_{-}(k)&\i \e^{-2\i \theta(x,t;k)}\\f(k) \e^{2\i \theta(x,t;k)} & -\i r_+(k)\end{pmatrix},& k\in(\i c_{\r},0),&\\
\\
\begin{pmatrix}\i \ol{r_{+}(\ol{k})}& -\ol{f(\ol{k})} \e^{-2\i \theta(x,t;k)} \\ \i \e^{2\i \theta(x,t;k)} & -\i \ol{r_-(\ol{k})}\end{pmatrix},& k\in(0,-\i c_{\r}),&
\\
\begin{pmatrix}1&-\overline{f(\ol{k})}\e^{-2\i \theta(x,t;k)} \\0& 1\end{pmatrix},& k\in(-\i c_{\r},-\i c_{\l}),&
\end{array}
\right.
\end{equation}
where $r(k)$ is the reflection coefficient  of the  spectral problem of the  MKdV  equation, and
 \begin{equation}
 \label{f_funct} f(k)=r_-(k)-r_+(k)=\frac{\i}{a_-(k)a_+(k)},\quad k\in(\i c_{\l},\i c_{\r})\cup (-\i c_{\r},-\i c_{\l}),
 \end{equation} 
  $a^{-1}(k)$ is the transmission coefficient and 
 \begin{equation}
 \label{theta}
 \theta(x,t;k)=kx+4k^3t\,.
\end{equation}
\item Simple poles: residue condition at $\kappa_j$  and $-\ol{\kappa_j}$ for $j=1,...,N$ with $\Re\kappa_j\geq 0$ and $\Im\kappa_j>0:$
$$\mathrm{Res}_{\kappa_j}M(x,t;k)=\lim\limits_{k\to \kappa_j}M(x,t;k)\begin{pmatrix}0&0\\ \i\nu_{j}\e^{2\i \theta(x,t;\kappa_j)} & 0\end{pmatrix},$$
$$\mathrm{Res}_{-\ol{\kappa_j}}M(x,t;k)=\lim\limits_{k\to -\ol{\kappa_j}}M(x,t;k)\begin{pmatrix}0&0\\ \i\ol{\nu_{j}}\e^{2\i \theta\(x,t;-\ol{\kappa_j}\)} & 0\end{pmatrix};$$
 residue conditions  in the lower half-plane:
$$\mathrm{Res}_{\ol{\kappa_j}}M(x,t;k)=\lim\limits_{k\to \ol{\kappa_j}}M(x,t;k)\begin{pmatrix}0&\i\ol{\nu_{j}}\e^{-2\i t\theta\(x,t; \ol{\kappa_j}\)} \\ 0 & 0\end{pmatrix},$$
$$\mathrm{Res}_{-\kappa_j}M(x,t;k)=\lim\limits_{k\to-\kappa_j}M(x,t;k)\begin{pmatrix}0&\i\nu_{j}\e^{-2\i \theta(x,t;-\kappa_j)} \\ 0 & 0\end{pmatrix};$$

%

\item asymptotics: $M(x,t;k)\to I$ as $k\to\infty.$
\end{enumerate}
\end{RHP}

This is the point at which we drop off the assumption of existence of $q(x,t)$ and $\Phi^{\pm}(x,t;k),$
 we study the RH problem \ref{RH_problem_1} and its solvability. \color{black}
\color{black}
Note that  the  jump matrix $J(x,t;k)$ is such that the  matrix $M(x,t;k)$ in  \eqref{M} has a trivial monodromy   at the origin.
Further we observe that the jump matrix $J(x,t;k)$ satisfies the  Schwartz symmetry $J^{-1}(k)=\(\ol{J^{T}(\ol{k})}\)$ for $k\in\Sigma\setminus\mathbb{R}.$\footnote{We observe that the usual definition of Schwartz symmetry
refers to a contour $\Sigma$ such that $\overline{\Sigma}=\Sigma$ and the corresponding jump matrices satisfy the relation $\(\ol{J^{T}(\ol{k})}\)=J(k)$.
In our case, we orient the contour off the real axis in such a way that it is symmetric up to orientation, this  implies that $\(\ol{J^{T}(\ol{k})}\)=J^{-1}(k)$ for $k$ off the real axis.
}
\color{black}

The solution $M(x,t;k)=M(k)$ of the RH problem \ref{RH_problem_1} automatically satisfies the following symmetries:
\begin{equation}\label{Symmetries}
M(k) = \ol{M(-\ol k)} = \begin{pmatrix}0&-1\\1&0\end{pmatrix}
M(-k)
\begin{pmatrix}0&1\\-1&0\end{pmatrix}=\begin{pmatrix}0&-1\\1&0\end{pmatrix}
\ol{M(\ol k)}
\begin{pmatrix}0&1\\-1&0\end{pmatrix}\,.
\end{equation}

\color{black}The existence and smoothness of a solution to  RH problem \eqref{RH_problem_1} for the case $c_+=0$ was established in \cite[Theorems 2.1, 2.2]{KMshort}. The  case $c_+>0$ can be treated similarly; below we give the details.

The existence of solution of the RH problem \ref{RH_problem_1} is based on the vanishing lemma for Schwartz symmetric RH problems, \cite[Theorem 9.3]{Zhou}. In our case, because of the choice of orientation of the parts of the contour off the real axis, it reads as 
$J^{-1}(k)=\ol{J^{T}(\ol{k})}$ for $k\in\Sigma\setminus\mathbb{R},$ and $J(k)+\ol{J^T(\ol k)}$ is positive definite for $k\in\mathbb{R}.$
\color{black}

Further, from   our assumption on the initial data it follows the analyticity of the reflection coefficient  in a small neighbourhood of the contour $\Sigma$, (see Lemma~\ref{lem_abr}). Therefore  we can deform the contour $\Sigma$ into  a   new contour such that
the corresponding singular integral equation, which is equivalent to the RH problem \ref{RH_problem_1}, 
admits differentiation with respect to $x$ and $t$.
Then, in the spirit of the well-known result of Zakharov -- Shabat \cite{Zakharov-Shabat}, one can prove that 
the solution of the initial value problem (\ref{MKdV}), (\ref{ic0}) \color{black} exists and \color{black} can be reconstructed by the following formula (see \cite{M_disser}, chapter 2 for details):
$$ q(x,t)=\lim\limits_{k\to\infty}\(2\i k M(x,t;k)\)_{21}=\lim\limits_{k\to\infty}\(2\i k M(x,t;k)\)_{12}. $$

\color{black}
\begin{remark}

Note that in the framework of the RH problem with unimodular jump matrix and with locally $L_2$ integrable boundary values of solutions, \color{black} one fixes the value of  the  solution at a given point ($\infty$ in our case), and specifies all the possible poles of the solution. This fixes the solution uniquely. Indeed, first of all, the determinant of a solution does not have jump over the contour, tends to the identity at infinity, and has at most removable singularities at the points $\pm\i c_{\pm}$; hence it equals 1 identically. Second, \color{black} assuming that there are two solutions of the RH problem, their ratio would have trivial jumps and will tend to identity at infinity, and hence would be equal to identity (see, for instance \cite[Theorem 7.18]{Deift99}).

%

In the case when the contour has points of self-intersections, the condition of continuity of the jump matrix is replaced at the points of self-intersection by the condition that the product of jump matrices equals identity matrix at the points of self-intersection. This guarantees that the solution takes its limiting values continuously from within each of the domains separated by the contour. See \cite[Section 7.1]{Deift99} and \cite[Appendices A,B]{KMM} for more detail.
\end{remark}
\color{black}

In order to make the asymptotic analysis  of the above RH problem as $t\to\infty$ 
 it is more convenient to transform the residue conditions at the poles to a jump conditions as in \cite{GT09}.

Let $\kappa_j$ be a pole 
of $a(k).$  Let us encircle this pole with a  circle $C_j$  of radius $\varepsilon>0$.
There are several options:

\begin{enumerate}
 \item $\Re\kappa>0,\ \Im\kappa>0.$ In this case $C_{j}$ is a  circle  of radius $\varepsilon$ and center   $\kappa_j$  and oriented anticlockwise  with respect to the center. Let us also define the three other
circles $\ol{C}_j,$ $-C_j,$ $-\ol{C}_j$  with center  the points $\ol{\kappa_j},$ $-\kappa_j$, $-\ol{\kappa_j},$  and radius $\varepsilon$ and oriented anticlockwise.
\item $\Re\kappa_j=0, \ \Im\kappa_j>0.$ 
  In this case we denote $C_j$ a semicircle in the  plane $\Re k\geq0$ around the point $\kappa_j,$
 while $-\ol{C}_j$ is another semicircle in
$\Re k\leq0$ around the point $\kappa_j.$ Semicircles $\ol{C}_{j}$ and $-C_j$ are the corresponding semicircles in the lower half-plane, with the above agreement on the orientation.
\end{enumerate}
%
%
We replace the residue condition with a jump condition having only upper triangular matrices. For the purpose we    redefine the matrix $M(k)$ as  
$$M(k)\leadsto M(k)\displaystyle\begin{pmatrix}
            1 & 0\\\frac{-\color{black}{\i}\nu_j\e^{2\i\theta(x,t;\kappa_j)}}{(k-\kappa_j)}& 1
           \end{pmatrix}, \quad k \textrm{ is inside } C_j,$$
$$M(k)\leadsto M(k)\begin{pmatrix}
            1 & 0\\\frac{\color{black}{-\i}\,\ol{\nu_j} \e^{2\i\theta(x,t;-\ol{\kappa_j})}}{(k+\ol{\kappa_j})}& 1
           \end{pmatrix}, \quad k \textrm{ is inside } -\ol{C}_j,
$$
$$M(k)\leadsto M(k)\begin{pmatrix}
            1 & \frac{\color{black}{-\i}\,\ol{\nu}_j \e^{-2\i\theta(x,t;\ol{\kappa_j})}}{(k-\ol{\kappa_j})} \\ 0& 1
           \end{pmatrix}, \quad k \textrm{ is inside } \ol{C}_j,
           $$
$$M(k)\leadsto M(k)\begin{pmatrix}
            1 & \frac{-\color{black}{\i}\nu_j \e^{-2\i\theta(x,t;-\kappa_j)}}{(k+\kappa_j)} \\ 0 & 1
           \end{pmatrix},\quad  k \textrm{ is inside } -C_j.$$
Then the  jump matrix $J_M(k)$ for the RH  problem   for $M$ becomes
\begin{equation}
\label{J_M}
J_M(k)=\begin{cases}
\begin{pmatrix}
            1 & 0\\\frac{\color{black}{\i}\nu_j\e^{2\i\theta(x,t;\kappa_j)}}{(k-\kappa_j)}& 1
           \end{pmatrix}, \quad k \in C_j,\quad\\
\begin{pmatrix}
            1 & 0\\\frac{\color{black}{\i}\,\ol{\nu_j} \e^{2\i\theta(x,t;-\ol{\kappa_j})}}{(k+\ol{\kappa_j})}& 1
           \end{pmatrix}, \quad k \in-\ol{C}_j,\\
\begin{pmatrix}
            1 & \frac{\color{black}{\i}\,\ol{\nu}_j \e^{-2\i\theta(x,t;\ol{\kappa_j})}}{(k-\ol{\kappa_j})} \\ 0& 1
           \end{pmatrix}, \quad k \in \ol{C}_j,\\
\begin{pmatrix}
            1 & \frac{\color{black}{\i} \nu_j \e^{-2\i\theta(x,t;-\kappa_j)}}{(k+\kappa_j)} \\ 0 & 1
           \end{pmatrix}, \quad k \in -C_j,\\
           J(k),\quad \mbox{elsewhere,}
          \end{cases}
          \end{equation}
where $J(k)$ has been defined in \eqref{J_M0}.

\subsection{Solvability of RH problem and existence of solution for the MKdV}\label{sect_solv}
\color{black}{Denote by $C^{\infty}(\mathbb{R}_x\times\mathbb{R}_t^+)$ 
the set of functions, which are infinitely many times differentiable at any point $(x,t)\in\mathbb{R}_x\times\mathbb{R}_t^+,$
where $\mathbb{R}_t^+=\left\{t:\ t\in(0,+\infty)\right\}$.
}
\begin{theorem}\label{teor_exist_gener}
Let the  initial data $q_0(x)$ \eqref{ic0} satisfiy  Assumption \ref{Assump1}.
Then the initial value problem \eqref{MKdV}, \eqref{ic0} has a unique classical solution $q(x,t)$, which \color{black} belongs to the class
$C^{\infty}(\mathbb{R}_x\times\mathbb{R}^+_t).$ 

\color{black}
\end{theorem}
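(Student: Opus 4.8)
The plan is to reconstruct the solution from the scattering data through the Riemann--Hilbert problem \ref{RH_problem_1} and then to verify a posteriori that the reconstructed function solves the Cauchy problem. I would organize the argument in four stages: (i) solvability of the RH problem for each fixed $(x,t)$; (ii) smooth dependence of its solution on the parameters $x,t$; (iii) a dressing argument showing that the reconstructed potential satisfies the Lax pair and hence \eqref{MKdV}; and (iv) matching of the initial datum together with uniqueness. The scattering data $\{r(k),\{\kappa_j,\i\nu_j\}\}$ built from $q_0$ via the direct problem have, under Assumption~\ref{Assump1}, exactly the analyticity recorded in Lemma~\ref{lem_abr}, and this is the input for everything below.

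For stage (i) the residue conditions at the discrete spectrum are converted into jump conditions on small circles as in \eqref{J_M}, reducing RH problem \ref{RH_problem_1} to a problem with only jumps and no poles. The resulting jump matrix is Schwartz--symmetric in the sense $J_M^{-1}(k)=\overline{J_M^{T}(\bar k)}$ off the real axis (the content of the footnote on orientation), while $J_M(k)+\overline{J_M^{T}(\bar k)}$ is positive definite on $\mathbb{R}$ since its off-diagonal entries are complex conjugate and $|r(k)|<1$. This places us in the setting of Zhou's vanishing lemma \cite[Theorem 9.3]{Zhou}: the homogeneous problem (with $I$ replaced by $0$ at infinity) admits only the trivial solution, so by the Fredholm alternative the equivalent singular integral equation is solvable and $M(x,t;k)$ exists and is unique. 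The quarter-root singularities of $a(k)$ and $r(k)$ at $\pm\i c_{\pm}$ are integrable and produce at most removable singularities of $M$, hence do not obstruct the argument; the case $c_+=0$ is \cite[Theorems 2.1, 2.2]{KMshort}, and the present one is analogous.

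Stage (ii) is where the \emph{exponential} decay \eqref{exponential} is essential. By property 6 of Lemma~\ref{lem_abr}, $r(k)$ extends analytically, with poles only at the $\kappa_j$, to a $\delta$-neighbourhood of $\Sigma$. I would use this to deform $\Sigma$ off itself into a region where the oscillatory factors $\e^{\pm2\i\theta(x,t;k)}$ stay bounded, so that the kernel of the singular integral operator and all of its $x,t$-derivatives remain continuous with adequate decay. One then differentiates the integral equation in $x$ and $t$ under the integral sign; since each differentiation reproduces analytic data multiplied by the same type of oscillatory factor, an induction yields $M(x,t;k)\in C^{\infty}(\mathbb{R}_x\times\mathbb{R}^+_t)$, its derivatives solving inhomogeneous equations of the same class. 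For stages (iii)--(iv) I set $\Psi(x,t;k):=M(x,t;k)\e^{-\i\theta(x,t;k)\sigma_3}$, which conjugates the jump into one independent of $x,t$; consequently $\Psi_x\Psi^{-1}$ and $\Psi_t\Psi^{-1}$ have no jump across $\Sigma$, are bounded at the branch points, and grow at most linearly, respectively cubically, in $k$, hence are polynomials in $k$. Reading off the coefficients from the expansion $M=I+M_1/k+\cdots$ gives $\Psi_x\Psi^{-1}=-\i k\sigma_3+Q$ and $\Psi_t\Psi^{-1}=-4\i k^3\sigma_3+\widehat Q$ with $Q,\widehat Q$ as in \eqref{Q_Qhat} and $q(x,t)=\lim_{k\to\infty}(2\i k\,M(x,t;k))_{21}$; thus $\Psi$ solves \eqref{x-eq}, \eqref{t-eq}, and compatibility $\Psi_{xt}=\Psi_{tx}$, in the spirit of Zakharov--Shabat \cite{Zakharov-Shabat}, forces $q$ to satisfy \eqref{MKdV}. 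Evaluating at $t=0$ and comparing with the RH problem built directly from the Jost solutions of $q_0$ gives $q(x,0)=q_0(x)$, and uniqueness follows from the uniqueness of the RH solution (its determinant is entire, tends to $1$, hence is $\equiv1$, so any two solutions have trivial ratio). The main obstacle I anticipate is stage (ii): justifying differentiation in $x,t$ from the exponential-decay hypothesis rather than from smoothness of $q_0$, while simultaneously controlling the quarter-root branch points at $\pm\i c_{\pm}$ as the contour is deformed near them.
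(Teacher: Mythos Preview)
Your four-stage plan is the same architecture as the paper's proof (Zhou's vanishing lemma for solvability, analytic deformation of the contour for smooth $x,t$-dependence, then the Zakharov--Shabat dressing argument), but two points in stage~(i) are not correct as written.

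First, the assertion $|r(k)|<1$ on $\mathbb{R}$ is false: this is the \emph{focusing} problem, and property~11 of Lemma~\ref{lem_abr} gives $|a(k)|\le 1$, so $|r|$ can be large. Positive definiteness of $J+\overline{J^{T}}$ on $\mathbb{R}$ still holds, but for a different reason: the jump $J(k)$ in \eqref{J_M0} is already Hermitian with $\det J=1$ and $\operatorname{tr}J=2+|r|^2>0$, hence both eigenvalues are positive.

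Second, and more seriously, the branch points $\pm\i c_{\pm}$ cannot be dismissed as producing ``removable'' singularities: the jump matrix is genuinely discontinuous there (the matrices on $(\i c_-,\i c_+)$, $(\i c_+,0)$, etc.\ do not match), and the solution $M$ carries genuine fourth-root singularities, so one is outside the standard $L_2$ singular-integral framework. The paper devotes its Step~1 to an explicit regularization: one first opens lenses using the analyticity of $r$ (guaranteed by \eqref{exponential}) to move the jump off the whole segment $[\i c_-,-\i c_-]$ onto a smooth loop $C_0$ around it, obtaining a function $M^{(1)}$ whose only remaining jump on the segment is the constant matrix $\begin{pmatrix}0&\i\\\i&0\end{pmatrix}$ on $(\i c_+,-\i c_+)$; one then right-multiplies by the explicit matrix $M_C^{-1}$ (the exact solution for the constant background $c_+$) inside $C_0$ to remove this last jump as well. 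The resulting $M^{(2)}$ has no jump on the segment, so its at-most-$(k\mp\i c_\pm)^{-1/2}$ growth forces the singularities to be removable, and Zhou's theory applies cleanly. This regularization is precisely the obstacle you flagged at the end, and it is the one missing ingredient in your outline. For stage~(ii) the paper's concrete deformation is to split $(K,\infty)$ and $(-\infty,-K)$ into rays in $\mathbb{C}_\pm$, so that $I-\widetilde J$ decays exponentially and the singular integral equation can be differentiated in $x,t$ under the integral sign.
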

\begin{proof}

\color{black} The proof is similar to the one in \cite[section 2]{KMshort}, \cite{M_disser}, except that we need to make an extra step in order to treat the discontinuity of the jump matrices at the points $\pm\i c_{\pm}.$ \color{black} For the convenience of the reader, we will sketch the main steps.
 
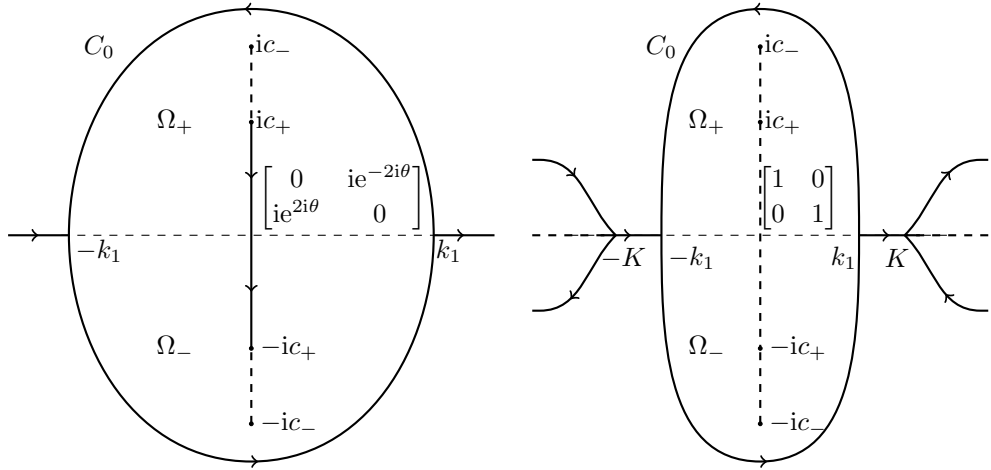
\begin{figure}[ht!]
\begin{tikzpicture}
\draw[thick, decoration={markings, mark=at position 0.5 with {\arrow{>}}}, postaction={decorate}] (-3.2,0) to (-2.4,0);
\draw[thick, decoration={markings, mark=at position 0.5 with {\arrow{>}}}, postaction={decorate}] (2.4,0) to (3.2,0);
\draw[thick,
decoration={markings, mark=at position 0.25 with {\arrow{<}}}, 
decoration={markings, mark=at position 0.75 with {\arrow{<}}}, 
postaction={decorate}
] (-2.4,0) [out=90,in=180] to (0,3)
[out=0,in=90] to (2.4,0)
[out=-90,in=0] to (0,-3)
[out=180,in=-90] to (-2.4,0);
\filldraw (0,2.5) circle (0.7pt);
\filldraw (0,-2.5) circle (0.7pt);
\filldraw (0,1.5) circle (0.7pt);
\filldraw (0,-1.5) circle (0.7pt);
\draw[thick, dashed] (0,2.5) to (0,1.5);
\draw[thick,decoration={markings, mark=at position 0.25 with {\arrow{>}}}, 
decoration={markings, mark=at position 0.75 with {\arrow{>}}}, 
postaction={decorate}](0,1.5) to (0,-1.5);
\draw[thick, dashed] (0,-2.5) to (0,-1.5);
\node at (1.2, 0.5) {$\begin{bmatrix}0 & \i \e^{-2\i\theta} \\ \i \e^{2\i\theta} & 0\end{bmatrix}$};
\draw[dashed](-2.5,0) to (2.5,0);
\node at (0.3, 2.5) {$\i c_-$};
\node at (0.3, 1.5) {$\i c_+$};
\node at (0.5, -1.5) {$-\i c_+$};
\node at (0.5, -2.5) {$-\i c_-$};
\node at (-2., 2.5) {$C_0$};
\node at (-1, 1.5) {$\Omega_+$};
\node at (-1, -1.5) {$\Omega_-$};
\node at (-2.0, -0.2){$-k_1$};
\node at (2.6, -0.2){$k_1$};
\end{tikzpicture}
\quad
\begin{tikzpicture}
\draw[thick, decoration={markings, mark=at position 0.5 with {\arrow{>}}}, postaction={decorate}] (-2.1, 0) to (-1.3, 0);
\draw[thick, decoration={markings, mark=at position 0.5 with {\arrow{>}}}, postaction={decorate}] (1.3,0) to (2.1,0);
\draw[thick,
decoration={markings, mark=at position 0.25 with {\arrow{<}}}, 
decoration={markings, mark=at position 0.75 with {\arrow{<}}}, 
postaction={decorate}
] (-1.3, 0) [out=90,in=180] to (0,3)
[out=0, in=90] to (1.3, 0)
[out=-90,in=0] to (0, -3)
[out=180,in=-90] to (-1.3, 0);
\filldraw (0, 2.5) circle (0.7pt);
\filldraw (0, -2.5) circle (0.7pt);
\filldraw (0, 1.5) circle (0.7pt);
\filldraw (0, -1.5) circle (0.7pt);
\draw[thick, dashed] (0, 2.5) to (0, 1.5);
\draw[thick,dashed](0, 1.5) to (0,-1.5);
\draw[thick, dashed] (0,-2.5) to (0,-1.5);
\node at (0.5, 0.5) {$\begin{bmatrix}1 & 0 \\ 0 & 1\end{bmatrix}$};
\draw[dashed](-2.5,0) to (2.5,0);
\node at (0.3, 2.5) {$\i c_-$};
\node at (0.3, 1.5) {$\i c_+$};
\node at (0.5, -1.5) {$-\i c_+$};
\node at (0.5, -2.5) {$-\i c_-$};
\draw[thick, decoration = {markings, mark = at position 0.4 with {\arrow{>}}},postaction = {decorate}] (-3.0, 1) to (-2.9, 1) [out=0, in =135] to (-1.9, 0);
\draw[thick, decoration = {markings, mark = at position 0.4 with {\arrow{<}}},postaction = {decorate}] (-3.0,-1) to (-2.9, -1) [out=0, in =-135] to (-1.9, 0);
\draw[thick, decoration = {markings, mark = at position 0.4 with {\arrow{<}}},postaction = {decorate}] (3.0, 1) to (2.9, 1) [out=180, in =45] to (1.9,0);
\draw[thick, decoration = {markings, mark = at position 0.4 with {\arrow{>}}},postaction = {decorate}] (3.0,-1) to (2.9,-1) [out=180, in =-45] to (1.9, 0);
\draw[thick, dashed](-3.0, 0) to (-1.9, 0);
\draw[thick, dashed](3.0, 0) to (1.9, 0);
\node at (-1.3, 2.5) {$C_0$};
\node at (-0.7, 1.5) {$\Omega_+$};
\node at (-0.7, -1.5) {$\Omega_-$};
\node at (-1.8, -0.3){$-K$};
\node at (1.8, -0.3){$K$};
\node at (-0.9, -0.3){$-k_1$};
\node at (1.1, -0.3){$k_1$};
\end{tikzpicture}

\caption{On the left, the contour of the  RH problem for the  function $M^{(1)}$.  We observe that the contours off the real axis are symmetric with respect to the transformation $k\to\bar{k}$  up to orientation.  For this reason the  corresponding jump matrix $J^{(1)}(k)$  satisfies the condition $\(\ol{J^{(1)}(\ol{k})}\)^T=(J^{(1)}(k))^{-1}$ for $k$ off the real  axis. On the right, the contour of the RH problem for $\widetilde{M}(x,t;k)$.}
\label{Fig_RHcircle}
\end{figure}
 
\color{black}
\noindent {\textbf{Step 1. Transforming the RH problem to another one without points of discontinuities.}}

Let $C_0$ be a contour which encloses the segment $[\i c_-,-\i c_-],$ do not pass through any poles of $M,$ and lies in the domain of analyticity of the reflection coefficient $r(k)$ (see the left figure  in
 Figure \ref{Fig_RHcircle}). Let $C_0\cap\mathbb{R}=\left\{-k_1, k_1\right\}.$
Let $\Omega_+, \Omega_-$ be the domains inside $C_0$ which lie inside $\mathbb{C}_{\pm}=\left\{k:\pm\Im k>0\right\},$ respectively.
We transform the RH problem \ref{RH_problem_1} for the  function $M(x,t;k)$ to an equivalent RH problem for a matrix function $M^{(1)}(x,t;k)$, where the functions $M$ and $M^{(1)}$ are related as follows: $M^{(1)}(x,t;k)=M(x,t;k)\begin{bmatrix}1 & 0 \\ -r(k)\e^{2\i\theta(x,t;k)} & 1\end{bmatrix}$
for $k\in \Omega_+,$ 
$M^{(1)}(x,t;k)=M(x,t;k)\begin{bmatrix}1 & \ol{r(\ol k)}\e^{-2\i\theta(x,t;k)} \\ 0 & 1\end{bmatrix}$ for $k\in\Omega_-,$
and $M^{(1)}(x,t;k)=M(x,t;k)$ elsewhere.
Property \ref{prop10} of Lemma \ref{lem_abr} allows to verify that the jump for $M^{(1)}$ on the intervals $(\i c_-,\i c_+)$ and $(-\i c_+, -\i c_-)$ becomes identity, and on the interval $(\i c_+,-\i c_+)$ it becomes $\begin{bmatrix}0 & \i \e^{-2\i\theta(x,t;k)} \\ \i\e^{2\i\theta(x,t;k)} & 0\end{bmatrix}.$
The next transformation removes also the jump on $[\i c_+,-\i c_+];$ this is done by passing to a function $M^{(2)}(x,t;k)$, which is related to $M^{(1)}$ in the following way:
$M^{(2)}(x,t;k) = M^{(1)}(x,t;k)M_C(x,t;k)^{-1}$ for $k\in\Omega_+\cup\Omega_-,$
and 
$M^{(2)}(x,t;k)=M^{(1)}(x,t;k)$ elsewhere.
Here $$M_C(x,t;k) = \frac12\begin{bmatrix}
\gamma+\gamma^{-1}
&
(\gamma-\gamma^{-1})\e^{-2\i\theta(x,t;k)}
\\
(\gamma-\gamma^{-1})\e^{2\i\theta(x,t;k)}
&
\gamma+\gamma^{-1}
\end{bmatrix}$$ with $\gamma=\gamma(k)=\sqrt[4]{\frac{k-\i c_+}{k+\i c_+}}.$
The matrix $M^{(2)}(x,t;k)$ is a solution of a RH problem with the jump matrix $J^{(2)}(x,t;k),$ $M^{(2)}_-(x,t;k)=M^{(2)}_+(x,t;k)J^{(2)}(x,t;k),$  for 
$k\in \Sigma_2 = \mathbb{R}\cup C_0\backslash(-k_1,k_1).$
 Note that since the singularities of $M$ and $M_C$ at the points $\pm\i c_{\pm}$ are bounded by $(k\mp \i c_{\pm})^{-1/4},$ and since $\det M\equiv \det M_C \equiv 1,$ the  matrix  $M^{(2)}(k)$ has singularities at these points which are bounded by 
$(k\mp \i c_{\pm})^{-1/2}$. Since  $M^{(2)}(k)$  does not have a jump on  $[\pm\i c_-,\pm\i c_+] $ it follows  that the points $\pm\i c_{\pm}$  are removable singularities.

\color{black}
\noindent {\textbf{Step 2. Solvability of RH problem for $M^{(2)}.$}}
The solvability of the RH problem \ref{RH_problem_1} is thus reduced to solvability of the RH problem for the function $M^{(2)}.$ 
We introduce the operator 
\begin{equation}
\label{FK}
\mathcal{K}[f](k)= \frac{1}{2\pi\i}\int\limits_{\Sigma_{tot}}
 \frac{f(s)(I-J^{(2)}(x,t;s))\d s}{(s-k)_{+}},\quad f\in L_2(\Sigma_{tot};\mathbb{C}^{2\times 2})
 \end{equation}
 where the symbol $(s-k)_{+}$  stands for the  limiting value of the integral on the positive side of the oriented contour 
 $$\Sigma_{tot}=\mathbb{R}\cup C_0 \cup_{j} C_j\cup_j\ol{C}_j\cup_j- \ol{C}_j\cup_{j}-C_j\backslash(-k_1,k_1).$$
 The operator $\mathcal{K}$ is a  bounded linear   operator from $ L_2(\Sigma_{tot};\mathbb{C}^{2\times 2})$ to itself.
\color{black}
 It is standard \cite{Deift99}, that the RH problem 
 for $M^{(2)}$ is equivalent to the following singular integral equation for the  matrix  function $\mu=M^{(2)}_+-I:$
 \begin{equation}\label{SIE}
\mu(x,t;k)=\mathcal{K}[\mu](x,t;k)+\mathcal{K}[I](x,t;k).\end{equation} 
 
\noindent Further, the \color{black}{jump matrix $J^{(2)}(x,t;k)$ off the real axis is Schwartz reflection invariant} \cite[Theorem 9.3]{Zhou}
\color{black}{(note the inverse power of the jump matrix in the second property below, which is due to  the fact that the orientation of the contour off the real axis changes to the opposite one under the map $k\mapsto\ol{k}$, see Figure \ref{Fig_RHcircle}}):
\color{black}
 \begin{itemize}
  \item the contour $\Sigma_{tot}$ is symmetric with respect to the real axis $\mathbb{R}$ \color{black}{(up to the orientation)},
  \item $J^{(2)}(x,t;k)^{-1}=\ol{J^{(2)}(x,t;\ol{k})}^T$ for $k\in \Sigma_{tot}\setminus\mathbb{R}$
  \item $J^{(2)}(x,t;k)$ has a positive definite real part for $k\in\mathbb{R}.$
 \end{itemize}
 {\color{black}
Then Theorem 9.3 from \cite{Zhou}(p.984) guarantees that the operator 
$Id-\mathcal{K}$, where $Id $ is the identity operator,   is invertible as an operator acting from  $L_2(\Sigma_{tot};\mathbb{C}^{2\times 2})$ to itself.  \color{black}
  Invertibility is guaranteed from the fact that $Id-\mathcal{K}$  is a Fredholm integral operator with zero index and the kernel
of $Id-\mathcal{K}$ is  the zero $2\times 2$ matrix. \color{black} In particular this last point is obtained by applying the vanishing lemma. Indeed suppose that exists $\mu\in L_2(\Sigma_{tot};\mathbb{C}^{2\times 2})$ such that
 $(Id-\mathcal{K})\mu=0$. Then the quantity
 $$M_0(x,t;k)=\frac{1}{2\pi\i}\int\limits_{\Sigma_{tot}}\frac{\mu(s)(I-J^{(2)}(x,t;s))\d s}
{s-k}$$
solves the  following  RH problem:\begin{itemize}
\item[(1)] $M_0(x,t;k)$  is analytic in $\C\backslash \Sigma_{tot}$,  
\item[(2)] $M_{0-}(k)=M_{0_+}(k)J^{(2)}(k)$ for $k\in\Sigma_{tot}$,
\item[(3)] $M_0(k)={\mathcal O}(k^{-1})$ as $k\to\infty$.  
\end{itemize}
We define  $H(k)=M_0(k)\ol{M_0(\ol{k})}^T$, then clearly $\ol{H(\ol{k})}^T=H(k)$, $H(k)={\mathcal O}(k^{-2})$  as $k\to\infty$  and $H(k)$ is analytic in $\C^+\backslash\{C_0\cup_jC_j\cup_j-\ol{C}_j\}$.
 Following \cite{Zhou}, we integrate the function $H(k)$ over the boundary of every closed component in $\mathbb{C}_+\setminus\Sigma_{tot}$ (those components are separated by contours $C_0$ and $C_j$; each component is integrated in  the  positive direction), and then add them to each other. Integrals over each part of the boundary except for the real axis is taken twice, and 
we thus obtain that by analyticity 
\begin{align*}
&  \int\limits_{-\infty}^{+\infty}H_+(k)dk
-
\int\limits_{C_0\cap\mathbb{C}^+}(H_-(k)-H_+(k))dk
-
\sum_{j}\left(\int_{C_j}
+
\int\limits_{-\ol{C}_j}\right)(H_-(k)-H_+(k))dk
\\
&=\int\limits_{-\infty}^\infty M_{0+}(k)\ol{J^{(2)}(k)}^T\ol{M_{0+}(k)}^Tdk
-
\int\limits_{C_0\cap\mathbb{C}^+}M_{0+}(k)(J^{(2)}(k)\ol{J^{(2)}(\ol{k})}^T-I)\ol{M_{0+}(\ol{k})}^Tdk\\
&\qquad-\sum_{j}\left(\int_{C_j}+\int_{-\ol{C}_j}\right)M_{0+}(k)\(J^{(2)}(k)\ol{J^{(2)}(\ol{k})}^T-I\)\ol{M_{0+}(\ol{k})}^Tdk=0\,.
\end{align*}
Here we defined $J^{(2)}=I$ for $k\in (-k_1, k_1)$, the part of the real axis not in $\Sigma_{tot}.$
\color{black}
To obtain the second identity we use the jumps $J^{(2)}(x,t;k)$ of the function $M_0$.
The symmetry properties of the jump matrix $J(x,t;k)$ imply that $J^{(2)}(k) = \ol{J^{(2)}(\ol k)}^{-1}$ for $k\in\Sigma_{tot}\backslash\mathbb{R}$, and hence the integrals over the  contour $C_0\cap\mathbb{C}^+$ and the circles $C_j$ and $-\ol{C}_j$ do not give any contribution, so that one obtains
 \begin{equation*}
\int_{-\infty}^\infty M_{0+}(k)(J^{(2)}(k)+\ol{J^{(2)}(k)}^T)\ol{M_{0+}(k)}^Tdk=0\,.
\end{equation*}
Since $J^{(2)}(k)+\ol{J^{(2)}(k)}^T$ is positive definite, it follows that $M_{0+}(k)$ is identically zero.  This shows that the operator  $Id-\mathcal{K}$  has a trivial kernel.   \color{black}
}
%
Therefore, the singular integral equation \eqref{SIE} has a unique solution $M^{(2)}_+(x,t;k)-I\in L_2(\Sigma_{tot};\mathbb{C}^{2\times 2})$ for any fixed
$x,t\in\mathbb{R},$ and the matrix $M^{(2)}(x,t;k)$
can be obtained by the formula $$M^{(2)}(x,t;k)=I+\frac{1}{2\pi\i}\int\limits_{\Sigma_{tot}}\frac{M^{(2)}_+(x,t;s)(I-J(x,t;s))\d s}
{s-k}.$$
\color{black}
Finally, inverting the transformations that led from the matrix $M(x,t;k)$ to the matrix $M^{(2)}(x,t;k)$ in Step 1, we obtain the 
solution of the  RH problem~\ref{RH_problem_1}. 
\color{black}

{\textbf{Step 3.} Differentiability of $M(x,t;k)$ with respect to $x,t.$}
First of all we notice that it is impossible to differentiate the equation \eqref{SIE} with respect to $x,t,$ since the 
function $r(k),$ as well as the matrix $I-J(x,t;k),$ vanishes as $k^{-1}$ as $k\to\infty$ along the real axis. 
To avoid a weak rate of decreasing of the matrix $I-J(x,t;k)$ for large real $k,$ we use an equivalent RH problem where the jump matrix $J(x,t;k)$ for large $k$ becomes exponentially close to $I.$

Let take a positive $K>0.$ Then for $k>K$  we use the following factorization 
of the jump matrix:
\color{black}{$$J(x,t;k)=\begin{pmatrix}1 & 0 \\ -r(k)\e^{2\i t \theta(k,\xi)} & 1\end{pmatrix}
\begin{pmatrix}1 & -\ol{r(\ol{k})}\e^{-2\i t \theta(k,\xi)}\\0 & 1\end{pmatrix},$$}
\noindent which allows to transform the RH problem \ref{RH_problem_1}  for the matrix $M(x,t,;k)$  into the  one for the  matrix $\widetilde{M}(x,t;k)$, 
where the parts of the contour $(-\infty, -K),$ $(K,+\infty)$ are splitted into two lines in the upper and lower complex planes (see the right figure of Figure \ref{Fig_RHcircle}).
As a result, the jump $\widetilde{J}(x,t;k)$ of the transformed RH problem is exponentially close to $I$ for large $k$ on the transformed contour 
$\widetilde{\Sigma}_{tot}.$
The corresponding singular integral equation is as follows for the matrix $\widetilde{\mu}=\widetilde{M}_+-I$:
\begin{equation}\label{SIE2}\widetilde{\mu}(x,t;k)=\widetilde{\mathcal{K}}[\widetilde{\mu}](x,t;k)+\widetilde{\mathcal{K}}[I](x,t;k),
\end{equation}
where $\widetilde{\mathcal{K}}$ is as $\mathcal{K}$ in \eqref{FK} with  $J$  replaced by $\widetilde{J}$,
Following the  reasoning  as for \eqref{SIE}, the above integral equation has a unique solution in  $L_2(\widetilde{\Sigma}_{tot};\mathbb{C}^{2\times 2}).$  
\color{black}
Equation 
\eqref{SIE2} has the advantage, that we can differentiate it with respect to $x,t$ as many times as we wish.
Indeed, while the function $I-J(x,t;k)$ vanishes as $\frac{1}{k}$ on $\mathbb{R}$, the  function $I-\widetilde{J}(x,t;k)$ decays exponentially  fast on the infinite parts of the contour 
$\widetilde\Sigma.$ The singular integral equation, obtained by differentiating with respect to $t$ \eqref{SIE2}, has the same form as 
\eqref{SIE2} (only the right-hand-side of this equation changes). It provides unique solvability of the partial derivatives of 
$\widetilde{\mu}(x,t;k)$ 	with respect to $x,t.$ Hence, the same is true for $M(x,t;k).$

\noindent {\textbf{Step 4. Zakharov-Shabat scheme.}}
It is a standard Lax-pair argument (see, for instance, \cite{M_disser}) to show, that the function 
$$\widetilde{\Phi}(x,t;k)= M(x,t;k) \e^{(\i k x+4\i k^3 t)\sigma_3}$$
satisfies the Ablowitz-Kaup-Newell-Segur system of equations  (see e.g. \cite{Clarkson})
\begin{equation*}
\begin{split}
\widetilde{\Phi}_x(x,t;k) +\i k\sigma_3\widetilde{\Phi}(x,t;k)=Q(x,t)\widetilde{\Phi}(x,t;k),
\\
\widetilde{\Phi}_t(x,t;k) +4\i k^3\sigma_3\widetilde{\Phi}(x,t;k)=\widehat{Q}(x,t;\color{black}{k})\widetilde{\Phi}(x,t;k),
\end{split}\qquad
x\in\mathbb{R}, t>0,
\end{equation*}
where $$Q(x,t)=\begin{pmatrix}0&q(x,t)\\-q(x,t) & 0\end{pmatrix},$$
$$\widehat{Q}(x,t;k)=4k^2Q(x,t)-2\i k\(Q^2(x,t)+Q_x(x,t)\)\sigma_3+2Q^3(x,t)-Q_{xx}(x,t)$$
with the function $q(x,t)$ given by 
$$q(x,t)=2\i\lim\limits_{k\to\infty}k[M(x,t;k)]_{21}=\frac{-1}{\pi}\int\limits_{\Sigma}(I+\mu(x,t;k))\(I-J(x,t;k)\)\d k.$$
\end{proof}

\section{Model problems}\label{sect:model}
The asymptotic analysis of the Cauchy problem as $t\to \infty$ consists of several steps:
\begin{itemize}
\item the first step is to change the original phase function, which is present in the exponents of  the RH problem~\ref{RH_problem_1}, with an appropriate  function $g(k,\xi)$ that will be determined later; 
\item the second step consists in performing a chain of` ``exact" transformations of the RH problem;
\item the third step consists in approximating  the new RH problem  to some  model problem;
\item the fourth step consists in solving the model problems.
\end{itemize}
Before starting our asymptotic  analysis, we  introduce the RH  model problems that will   be obtained  from such analysis.
Namely the  RH problem for the soliton and breather solution on a constant background and the RH problem for the travelling wave solution \eqref{periodic_intro}. In particular for the travelling wave solution, we show
that  it can be obtained from a model problem solvable in terms of elliptic theta functions but also from a model problem that is solvable via hyperelliptic theta functions
that are defined on a genus $2$ hyperelliptic Riemann surface with symmetries. The first case arises when the step-like initial data is such that $c_{\r}=0$, while the second case occurs when $c_{\r}>0$.
 Then  we show that the genus $ 2$  solutions can nevertheless be written in a genus $1$ form.

\subsection{One-soliton solution on a constant background $c>0$}
\label{sect_sol_background}
Here we derive a one-soliton solution on a constant background. We use the notation $[\i c, -\i c]$ to denote the interval oriented dowward.

\noindent
\begin{RHP}\label{RHP_sol_const}
 Find a $2\times 2$ matrix $M(k)=M(x,t;k)$  meromorphic for $k\in\mathbb{C}\setminus[\i c, -\i c]$ with simple poles at $k=\pm\i\kappa_0, \quad \kappa_0>c>0,$  with the following properties:
\begin{enumerate}
\item the boundary values $M_{\pm}(k)$ for $k\in (\i c, -\i c)$ satisfy the  jump relations
 \begin{equation}
\label{RHs21}
M_-(k)=M_+(k)\begin{pmatrix}0&\i\\\i&0\end{pmatrix},\quad k\in(\i c, -\i c);
\end{equation}
 \item pole conditions: 
 \begin{equation}
\label{RHs22}
\begin{split}
&\mathrm{Res}_{\i\kappa_0}M(k)=\lim\limits_{k\to \i\kappa_0}M(k)\begin{pmatrix}0&0\\\i \nu\e^{2\i  g_c(x,t;k)} & 0\end{pmatrix},\\
&\mathrm{Res}_{-\i\kappa_0}M(k)=\lim\limits_{k\to -\i\kappa_0}M\begin{pmatrix}0&\i \nu\e^{-2\i  g_c(x,t;k)}\\0&0\end{pmatrix},
\end{split}
\end{equation}
where  $\nu$ is a non zero real constant and  
\begin{equation}
\label{gc}
g_c(k; x,t)=(2(2k^2-c^2)t+x)\sqrt{k^2+c^2};
\end{equation}
\item asymptotics: 
\begin{equation}
\label{RHs23}
M(k)= I+O\left(\frac{1}{k}\right),\quad \mbox{as $k\to\infty.$}
\end{equation}
\end{enumerate}
\end{RHP}
\noindent The solution of the MKdV equation    is obtained from   the matrix $M(k;x,t)$ by the relation 
\begin{equation}
\label{q_sol_0}
q(x,t)=2\i\lim\limits_{k\to\infty}k M_{21}(x,t;k)=2\i\lim\limits_{k\to\infty}k M_{12}(x,t;k).
\end{equation}
\begin{lemma}
The solution of   the MKdV $q_t+6q^2q_x+q_{xxx}=0$ obtained from the solution of the  RH problem ~\ref{RHP_sol_const}  is equal to  a soliton on a constant background $c$, namely
\begin{equation}
\label{q_sol_const}
q_{soliton}(x,t;\ c,\kappa_0,\nu)=c-\frac{2\,\mathrm{sgn}(\nu)(\kappa_0^2-c^2)}{\kappa_0\cosh\left[2\sqrt{\kappa_0^2-c^2}(x-(2c^2+4\kappa_0^2)t)+x_0\right]-\mathrm{sgn}(\nu)c},
\end{equation}
where
\begin{equation*}
x_0=\ln\frac{2(\kappa_0^2-c^2)}{|\nu|\kappa_0}\,.
\end{equation*}
\end{lemma}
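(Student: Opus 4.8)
The plan is to solve RH problem~\ref{RHP_sol_const} explicitly, by first trivialising the jump on the cut $[\i c,-\i c]$ and then solving the resulting rational, reflectionless problem by linear algebra. I would first record uniqueness: the jump matrix in \eqref{RHs21} has determinant $1$, so $\det M$ has no jump across $(\i c,-\i c)$; it is bounded at the fourth-root points $\pm\i c$ and tends to $1$ at infinity, whence $\det M\equiv 1$ by Liouville, and the usual ratio argument (as in the vanishing-lemma reasoning of Section~\ref{sect_solv}) shows the solution, if it exists, is unique.

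First I would remove the cut using the prefactor of $E_c$ in \eqref{E_c}. Let
$$B(k)=\frac12\begin{pmatrix}\gamma+\gamma^{-1}&\gamma-\gamma^{-1}\\\gamma-\gamma^{-1}&\gamma+\gamma^{-1}\end{pmatrix},\qquad \gamma=\gamma(k)=\sqrt[4]{\frac{k-\i c}{k+\i c}},$$
which satisfies $\det B\equiv 1$, $B(\infty)=I$, and carries exactly the jump $B_-(k)=B_+(k)\left(\begin{smallmatrix}0&\i\\\i&0\end{smallmatrix}\right)$ on $(\i c,-\i c)$. Setting $N(k)=M(k)B(k)^{-1}$ cancels the jump \eqref{RHs21}, so $N$ continues analytically across $(\i c,-\i c)$; since $M$ and $B^{-1}$ have at most fourth-root singularities at $\pm\i c$ while $\det N\equiv 1$ and $N$ has no jump there, these points are removable. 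Hence $N$ is meromorphic on $\mathbb{C}$ with only simple poles at $\pm\i\kappa_0$ and $N(\infty)=I$, so it is rational,
$$N(k)=I+\frac{R_1}{k-\i\kappa_0}+\frac{R_2}{k+\i\kappa_0}.$$

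Next I would transport the pole data \eqref{RHs22} through the conjugation. Since $B$ is analytic and invertible at $\pm\i\kappa_0$ (there $\gamma(\pm\i\kappa_0)=\left(\frac{\kappa_0-c}{\kappa_0+c}\right)^{\pm1/4}$ is real positive), the simple poles and their rank-one residue structure are preserved, so $R_1=\mathrm{Res}_{\i\kappa_0}N$ and $R_2=\mathrm{Res}_{-\i\kappa_0}N$ are rank one, determined by the residue relations that \eqref{RHs22} induces after conjugating the nilpotent factors by $B(\pm\i\kappa_0)$ and inserting the scalar phases $\i\nu\,\e^{\pm2\i g_c(\pm\i\kappa_0)}$. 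A direct evaluation of \eqref{gc} gives
$$\e^{2\i g_c(\i\kappa_0)}=\e^{-2\sqrt{\kappa_0^2-c^2}\left(x-(2c^2+4\kappa_0^2)t\right)},$$
which is already the argument of the $\cosh$ in \eqref{q_sol_const} and accounts for the speed $2c^2+4\kappa_0^2$. Evaluating the regular parts of $N$, namely $I+\frac{R_2}{2\i\kappa_0}$ at $\i\kappa_0$ and $I-\frac{R_1}{2\i\kappa_0}$ at $-\i\kappa_0$, and using the symmetries of the problem (as in \eqref{Symmetries}, which tie $R_2$ to $R_1$), closes a small linear system for the two vectors spanning $R_1$ and $R_2$, solved by elementary elimination.

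Finally I would recover $q$ from \eqref{q_sol_0}, i.e. $q=2\i\lim_{k\to\infty}kM_{21}$. Writing $M=NB$ and expanding $B(k)=I+\frac1k\left(\begin{smallmatrix}0&-\i c/2\\-\i c/2&0\end{smallmatrix}\right)+\mathrm{O}(k^{-2})$ together with $N(k)=I+\frac{R_1+R_2}{k}+\mathrm{O}(k^{-2})$ yields $q=c+2\i(R_1+R_2)_{21}$; substituting the solved residues and combining the exponentials produces the $\cosh$ expression \eqref{q_sol_const}, with the phase $x_0=\ln\frac{2(\kappa_0^2-c^2)}{|\nu|\kappa_0}$ emerging from the factors $\gamma(\i\kappa_0)^{\pm2}=\left(\frac{\kappa_0-c}{\kappa_0+c}\right)^{\pm1/2}$ and from $|\nu|$, and the $\mathrm{sgn}(\nu)$ arising from the sign of $\nu$ in the nilpotent factors. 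I expect the only real obstacle to be computational bookkeeping: carrying the branch values $\gamma(\pm\i\kappa_0)$ and all signs through the conjugation so that the background value $c$, the amplitude, and the soliton/antisoliton $\mathrm{sgn}(\nu)$-dependence assemble exactly into \eqref{q_sol_const} rather than into a spuriously singular or sign-flipped formula; once the cut is trivialised the construction itself is routine.
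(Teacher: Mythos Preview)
Your proposal is correct and follows essentially the same route as the paper: factor out the explicit cut-removing matrix $B(k)=M_{reg}(k)$, reduce to a rational (reflectionless) problem with simple poles at $\pm\i\kappa_0$, solve the residue conditions by linear algebra, and read off $q$ from the $k^{-1}$ coefficient. The paper writes the rational factor as a concrete ansatz with four real scalars $\alpha,\beta,\gamma,\delta$ (already encoding the symmetries you invoke) and solves the same linear system, arriving at $q=c-2\beta-2\delta$; your $N(k)=I+R_1/(k-\i\kappa_0)+R_2/(k+\i\kappa_0)$ is exactly that ansatz in matrix form, so the two arguments coincide up to notation.
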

\begin{proof}
We first obtain the solution of  the RH problem  described by equations \eqref{RHs21} and \eqref{RHs23} in the form
\begin{equation}
\label{Mreg}
M_{reg}(k)=\begin{pmatrix} \psi_2(k)& \psi_1(k)\\
\psi_1(k)& \psi_2(k)
\end{pmatrix},
\end{equation}
where 
 $$ \psi_2=\dfrac{1}{2}\left(\sqrt[4]{\frac{k-\i c}{k+\i c}}+\sqrt[4]{\frac{k+\i c}{k-\i c}}\right),\quad 
 \psi_1=\dfrac{1}{2}\left(\sqrt[4]{\frac{k-\i c}{k+\i c}}-\sqrt[4]{\frac{k+\i c}{k-\i c}}\right).$$
Since $M(k) M_{reg}(k)^{-1}$ does not have jumps on $\C$ but only poles, 
 the solution  $M(k)$  can be found in the form
$$M(k;x,t)=
\begin{pmatrix}
 1+\frac{\i\alpha(x,t)}{k-\i\kappa_0}+\frac{\i\gamma(x,t)}{k+\i\kappa_0} & \frac{\i\beta(x,t)}{k+\i\kappa_0}+
\frac{\i\delta(x,t)}{k-\i\kappa_0}
\\
\frac{\i\beta(x,t)}{k-\i\kappa_0}+\frac{\i\delta(x,t)}{k+\i\kappa_0} & 1-\frac{\i\alpha(x,t)}{k+\i\kappa_0}-\frac{\i\gamma(x,t)}{k-\i\kappa_0}
\end{pmatrix}
M_{reg}(k),
$$
where $\alpha, \beta, \gamma, \delta$ are real parameters to be determined.
The solution of the MKdV equation    is obtained from   the matrix $M(k;x,t)$ through the formula
\begin{equation}
\label{q_sol_01}
q(x,t)=2\i\lim\limits_{k\to\infty}k M_{21}(x,t;k)=2\i\lim\limits_{k\to\infty}k M_{12}(x,t;k)=c-2\beta-2\delta.
\end{equation}
We observe that   due the symmetry of the problem, it is enough to consider the residue condition only at one of the poles $k=\pm\i\kappa_0$.   For example the condition \ref{RHs22} at   $k=\i\kappa_0$  gives 
\begin{equation}
\begin{split}
&\alpha\psi_1(\i\kappa_0)+\delta\psi_2(\i\kappa_0)=0,\\
&\beta \psi_1(\i\kappa_0)-\gamma\psi_2(\i\kappa_0)=0,\\
&\dfrac{1}{\nu}(\alpha\psi_2(\i\kappa_0)+\delta\psi_1(\i\kappa_0))\e^{-2\i g_c(\i\kappa_0)}=\psi_1(\i\kappa_0)+\dfrac{\gamma}{2\kappa_0}\psi_1(\i\kappa_0)+\\
&+
\dfrac{\beta}{2\kappa_0}\psi_2(\i\kappa_0)+\i\alpha\psi_1'(\i \kappa_0)+\i\delta\psi_2'(\i\kappa_0),\\
&\dfrac{1}{\nu}(\beta\psi_2(\i\kappa_0)-\gamma\psi_1(\i\kappa_0))e^{-2\i g_c(\i\kappa_0)}=\psi_2(\i\kappa_0)+\dfrac{\delta}{2\kappa_0}\psi_1(\i\kappa_0)-\\
&-\dfrac{\alpha}{2\kappa_0}\psi_2(\i\kappa_0)+\i\beta\psi_1'(\i \kappa_0)-\i\gamma\psi_2'(\i\kappa_0),
\end{split}
\end{equation}
where $'$ stands for derivative with respect to $k$.  Solving the above system of equations we obtain 
\begin{equation}
\alpha=-\delta\dfrac{\psi_2(\i\kappa_0)}{\psi_1(\i\kappa_0)},\quad \gamma=\beta\dfrac{\psi_1(\i\kappa_0)}{\psi_2(\i\kappa_0)}
\end{equation}
and 


$$\beta(x,t)=\sqrt{\kappa_0^2-c^2}\frac{\nu \e^{2 i g_c}\left\{-c\nu \kappa_0 \e^{2\i g_c}+2(\kappa_0^2-c^2)(\kappa_0+\sqrt{\kappa_0^2-c^2} ) \right\}}
{\left\{4(\kappa_0^2-c^2)^2-4(\kappa_0^2-c^2)\nu c\ \e^{2\i g_c}+\kappa_0^2\nu^2\e^{4\i g_c}\right\}},$$

$$\delta(x,t)=\sqrt{\kappa_0^2-c^2}\frac{\nu \e^{2 i g_c}\left\{c\nu \kappa_0 \e^{2\i g_c}-2(\kappa_0^2-c^2)(\kappa_0-\sqrt{\kappa_0^2-c^2} ) \right\}}
{\left\{4(\kappa_0^2-c^2)^2-4(\kappa_0^2-c^2)\nu c\ \e^{2\i g_c}+\kappa_0^2\nu^2\e^{4\i g_c}\right\}},$$
where $g_c=g_c(i\kappa_0)$.



Plugging the above expressions for $\beta$ and $\delta$ into \eqref{q_sol_01} we obtain the statement of the lemma.
\end{proof}
%
We observe that when $c=0$, the formula \eqref{q_sol_const} coincides with the one-soliton solution \eqref{soliton_intro}.

Let us notice that the denominator in the  formula \eqref{q_sol_const} is always nonzero.
%
%
Since $\kappa_0>c$, in the case $\nu>0$ we have antisoliton, oriented downward, and for $\nu<0$ we have a  soliton, oriented upward.

We see that for $\kappa_0-c\to+0$ and $\nu<0$ the amplitude of the soliton tends to 0, while for $\nu>0$ the amplitude oantisolitonntisoliton tends to a constant $2\kappa.$

\textbf{Degenerate case of antisoliton.}\\
When $\nu>0$ and
 we let $k_0\to c+0,$ we obtain the special case of an antisoliton,  namely a  rational solution.
In this case 
%
%

$$q(x,t;\ c,c,\nu)=c-\frac{4c}{1+(2c(x-x_0)-12c^3t)^2}.$$


\begin{remark}
Let us observe that the MKdV admits a RH problem with poles of higher order \cite{WO82}. This case  is  non-generic and it is considered in  Appendix~\ref{sect_nongen}.
 \end{remark}

\subsection{Simple breathers on a constant  background}
\label{sect_breath}
In this section we consider a breather on a constant  background.
\begin{RHP}\label{RHP_breather}
\noindent
 Find a $2\times 2 $ matrix  $M(x,t;k)$  meromorphic in $k\in\mathbb{C}\backslash[-\i c,\i c]$ with simple poles at $\kappa\equiv\kappa_1+i\kappa_2,$ $\ol{\kappa}\equiv\kappa_1-i\kappa_2,$ $-\kappa\equiv-\kappa_1-i\kappa_2,$
$-\ol{\kappa}\equiv-\kappa_1+i\kappa_2,$ and such that 
\begin{enumerate}
\item  $M_-(k)=M_+(k) J(k),\quad k\in(\i c,-\i c),$ where $$J(k)=\begin{pmatrix}0&\i \\\i & 0\end{pmatrix},\quad k\in(\i c,-\i c);$$
\item pole conditions: \color{black}{$$\mathrm{Res}_{\kappa}M(k)=\lim\limits_{k\to \kappa}M(k)\begin{pmatrix}0&0\\ \i\nu \e^{2\i   g_c(x,t;k)} & 0\end{pmatrix},$$
$$\mathrm{Res}_{-\ol{\kappa}}M(k)=\lim\limits_{k\to -\ol{\kappa}}M(k)\begin{pmatrix}0&0\\ \i\,\widebar{\nu} \e^{2\i   g_c(x,t;k)} & 0\end{pmatrix},$$
$$\mathrm{Res}_{\ol{\kappa}}M(k)=\lim\limits_{k\to \ol{\kappa}}M(k)\begin{pmatrix}0&\i\,\widebar{\nu}\e^{-2\i  g_c(x,t;k)}\\0&0\end{pmatrix},$$
$$\mathrm{Res}_{-\kappa}M(k)=\lim\limits_{k\to -\kappa}M(k)\begin{pmatrix}0&\i\nu\e^{-2\i g_c(x,t;k)}\\0&0\end{pmatrix},$$}
where $\nu$ is a non zero complex number and 
$g_c(x,t;k)$ as in \eqref{gc};\item asymptotics: $M(k)=I+O(\frac{1}{k})$ as $k\to\infty.$
\end{enumerate}
\end{RHP}
%

The solution of MKdV $q_t+6q^2q_x+q_{xxx}=0$ is obtained from the solution of this RH problem by one of the following formulas:
\begin{equation}
\label{q_breath1}
q_{breather}(x,t)=2\i\lim\limits_{k\to\infty}k M_{21}(x,t;k)=2\i\lim\limits_{k\to\infty}k M_{12}(x,t;k),
\end{equation}
or
$$q_{breather}^2(x,t)=c^2+2\i \partial_x \(\lim\limits_{k\to\infty} k(M-I)_{11}\).$$
\begin{theorem}
\label{theo_breather}
The solution \eqref{q_breath1} of the MKdV equation  obtained from the solution of the RH problem ~\ref{RHP_breather} corresponds to a breather 
 on a constant background $c$   with discrete spectrum $\kappa=\kappa_1+\i \kappa_2$ and complex parameter $\nu$  and  takes the form  \begin{equation}
  \label{q_breath}
\begin{split}
&q_{breather}(x,t;c,\kappa,\nu) =c+\\
&2\partial_x\arctan\left[\dfrac{{|\chi|}\cos\left( 2\Re g_{c}(x,t)+\theta_1-\theta_2\right)+\frac{c|\nu| \chi_1^2}{2|\chi|^2 \chi_2}\e^{- 2\Im g_{c}(x,t)}}{\frac{|\chi|^2}{|\nu|}\e^{2\Im g_{c}(x,t)}+\frac{\chi_1^2(|\chi|^2-c^2)}{4|\chi|^2\chi_2^2}|\nu|\e^{ -2\Im g_{c}(x,t)}+c \sin\left( 2\Re g_{c}+\theta_1-2\theta_2\right)}\right]
\end{split}
\end{equation}

\noindent 
where  $\chi=\chi_1+\i \chi_2:==\sqrt{\kappa^2+c^2}$,  with  $\chi_1>0$, $\chi_2>0$ and  real and 
 \begin{equation*}
 \begin{split}
& \Re g_{c}(x,t)=\chi_1\left(4\left(\chi_1^2-3\chi_2^2-\frac{3}{2}c^2\right)t+x\right),\\
& \Im g_{c}(x,t)=\chi_2\left(4\left(3\chi_1^2-\chi_2^2-\frac{3}{2}c^2\right)t+x\right),
\end{split}
 \end{equation*}
 and  phases \color{black}{$\theta_1=\mbox{arccos}\dfrac{-\nu_2}{|\nu|}$} and $\theta_2=\mbox{arccos}\dfrac{\chi_1}{|\chi|}$.
 Formula \eqref{q_breath} coincides with formula     \eqref{q_breath_intro}   for the breather provided in the introduction. 
 \end{theorem}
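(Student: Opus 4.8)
The plan is to mirror the strategy of the preceding soliton lemma, replacing its two--pole rational dressing by a four--pole one. First I would observe that the jump relation in RH problem~\ref{RHP_breather} is identical to \eqref{RHs21} of RH problem~\ref{RHP_sol_const}, so the regular (pole--free) solution is again $M_{reg}(k)$ of \eqref{Mreg}, built from the quartic roots $\psi_1,\psi_2$. Since $M(k)M_{reg}(k)^{-1}$ then has no jump across $[\i c,-\i c]$ and tends to $I$ at infinity, while carrying exactly the four simple poles at $\kappa,\ol\kappa,-\kappa,-\ol\kappa$, it must be a rational matrix
$$M(k)=\left(I+\frac{A}{k-\kappa}+\frac{B}{k-\ol\kappa}+\frac{C}{k+\kappa}+\frac{D}{k+\ol\kappa}\right)M_{reg}(k),$$
with $A,B,C,D$ unknown $2\times2$ matrices.

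Second, I would invoke the symmetries \eqref{Symmetries}: the maps $k\mapsto-\ol k$ and $k\mapsto-k$ permute the four poles, so they force $B,C,D$ to be explicit transforms of $A$, reducing the unknowns to the entries of a single residue matrix. The residue conditions are each rank one, so $A$ is rank one and its remaining degrees of freedom are pinned by the values of $M_{reg}$ at $\kappa$ and by the exponential $\e^{2\i g_c(x,t;\kappa)}$ with $g_c$ as in \eqref{gc} (all other poles following by symmetry). Imposing these conditions turns the problem into a closed linear algebraic system for the residue coefficients, exactly the analogue of the four scalar equations in the soliton proof but now of doubled size.

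Third, having solved that system, I would extract the potential from the large--$k$ expansion via \eqref{q_breath1}, so that $q_{breather}=c-2\i\,(\text{sum of the relevant off--diagonal residue coefficients})$, in direct analogy with $q=c-2\beta-2\delta$ in the soliton case. The outcome is a ratio of combinations of $\e^{\pm2\i g_c(x,t;\kappa)}$; writing $g_c(x,t;\kappa)=\Re g_c+\i\Im g_c$ and using $\chi=\chi_1+\i\chi_2=\sqrt{\kappa^2+c^2}$ together with the phases $\theta_1=\mathrm{arccos}\frac{-\Im\nu}{|\nu|}$ and $\theta_2=\mathrm{arccos}\frac{\chi_1}{|\chi|}$ repackages it into the stated $\arctan$ form \eqref{q_breath}, whose equivalence with the introductory formula \eqref{q_breath_intro} then follows by the substitutions $Z,\varphi$ recorded there.

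The main obstacle I anticipate is purely computational: solving the enlarged linear system (four poles, rank--one residues coupled by the three symmetries) and then simplifying the resulting rational function of exponentials into the compact trigonometric/hyperbolic $\arctan$ normal form. Tracking the phases correctly---so that the real and imaginary parts of $g_c(x,t;\kappa)$ recombine into the $\cos(2\Re g_c+\theta_1-\theta_2)$ numerator and the $\sinh/\cosh$--type denominator---is the delicate step, and it is where the precise branch choice for $\chi$ and the definitions of $\theta_1,\theta_2$ become essential. The symmetry reduction is what makes this tractable, cutting the effective system in half and guaranteeing the reality of $q_{breather}$.
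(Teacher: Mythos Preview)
Your approach is essentially the paper's: it too writes $M=M_{pol}(k)\,M_{reg}(k)$ with a symmetry-adapted rational $M_{pol}$ depending on four scalar unknowns $A,B,C,D$, imposes the residue condition at $\kappa$ alone to obtain a linear system (which it decouples via $Z=A+\i B$, $W=A-\i B$), and reads off $q=c+4\,\Im\!\big(A\,\mathcal R(\kappa)-B\big)$ from the large-$k$ expansion. The one step you leave vague---the final ``repackaging'' into the $\arctan$ form---is carried out in the paper not by brute simplification but by checking two differential identities $h_x-4\chi_2 h=2(\cdots)$ and $f_x-4\chi_2 f=-2(\cdots)$, which recast $q-c$ as $2(fh_x-hf_x)/(f^2+h^2)=2\,\partial_x\arctan(h/f)$; this is the mechanism you will want for the ``delicate step'' you flag.
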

 \begin{remark}
 \label{Remark_b}
Let us introduce the quantities
 \begin{equation*}
 Z=x+4t(3\chi_1^2-\chi_2^2-\frac32c^2),\,\quad  \varphi=2(Z-8t|\chi|^2)\chi_1+\theta_1-\theta_2.
 \end{equation*}
Then  the breather \eqref{q_breath} on the constant background $c$ can be written in the form
 \begin{equation}\nonumber
  \begin{split}
&q_{breather}(x,t;c,\kappa,\nu)=c+\\
&+2\begin{cases}
\partial_x\arctan\left[\dfrac{|\chi|\cos\left(\varphi\right)+\frac{c|\nu| \chi_1^2}{2|\chi|^2 \chi_2}\e^{- 2 Z \chi_2}}
{\sqrt{|\chi|^2-c^2}\dfrac{\chi_1}{\chi_2}\cosh ( 2 Z \chi_2+\theta_3) +c\sin\left(\varphi-\theta_2\right)}\right],\;\;
\hfill{ if \,} |\chi|>c,\\
\partial_x\arctan\left[\dfrac{|\chi|\cos\left(\varphi\right)+\frac{c|\nu| \chi_1^2}{2|\chi|^2 \chi_2}\e^{- 2 Z \chi_2}}
{\sqrt{c^2-|\chi|^2}\dfrac{\chi_1}{\chi_2}\sinh ( 2 Z \chi_2+\theta_3) +c\sin\left(\varphi-\theta_2\right)}\right],\;\;
\hfill{ if \,\,} |\chi|<c,\\
\partial_x\arctan\left[\dfrac{\cos\left(\varphi\right)+\frac{|\nu| \chi_1^2}{2|\chi|^2 \chi_2}\e^{- 2 Z \chi_2}}
{\frac{c}{|\nu|}\e^{2Z\chi_2}+\sin\left(\varphi-\theta_2\right)}\right],\hfill{ if \,\,} |\chi|=c,
\end{cases}
\end{split}
\end{equation}
where $\theta_3=\log\frac{2\chi_2|\chi|^2}{|\nu|\chi_1\sqrt{|\chi|^2-c^2}}$.
 \end{remark}
 \color{black}
 \begin{proof}[Proof of Theorem~\ref{theo_breather}]
The solution  of the RH problem~\ref{RHP_breather}
 can be obtained  in the form
$$M(k)=M_{pol}(k)M_{reg}(k),$$ where 
$M_{reg}(k)$   has been defined in \eqref{Mreg} and $M_{pol}(k)=M_{pol}(x,t;k)$ admit an ansatz of the form 
\begin{equation}\label{Mpol}M_{pol}(k)=I+
\begin{pmatrix}
 \frac{A(x,t)}{k-\kappa} - \frac{\ol{A(x,t)}}{k+\ol{\kappa}} + \frac{C(x,t)}{k+\kappa} - \frac{\ol{C(x,t)}}{k-\ol{\kappa}}&  \frac{B(x,t)}{k+\kappa}-\frac{\ol{B(x,t)}}{k-\ol{\kappa}} -\frac{\ol{D(x,t)}}{k+\ol{\kappa}} + \frac{D(x,t)}{k-\kappa}
\\
\frac{B(x,t)}{k-\kappa} - \frac{\ol{B(x,t)}}{k+\ol{\kappa}} + \frac{D(x,t)}{k+\kappa} - \frac{\ol{D(x,t)}}{k-\ol{\kappa}} & \frac{\ol{A(x,t)}}{k-\ol{\kappa}} - \frac{A(x,t)}{k+\kappa} + \frac{\ol{C(x,t)}}{k+\ol{\kappa}} - \frac{C(x,t)}{k-\kappa}
\end{pmatrix},
\end{equation}
where $A=A(x,t)$, $B=B(x,t)$, $C=C(x,t)$ and $D=D(x,t)$ are unknown functions to be determined.

Writing down the pole conditions at the point $\kappa,$ $\Re\kappa>0,$ $\Im\kappa>0,$ which is enough due to symmetries of the problem, we obtain 
 the following system of equations:
\begin{equation}
\label{sys_breath1}
\begin{cases}
 A\psi_1+D\psi_2=0,\\
B\psi_1-C\psi_2=0, \qquad (\textrm{pole free condition for the 2nd row at }\kappa)
\\
\frac{A\psi_2+D\psi_1}{{\i}\nu\e^{2\i g_c}}=\psi_1(1-\frac{\ol{A}}{\kappa+\ol{\kappa}} +\frac{C}{2\kappa} -\frac{\ol{C}}{\kappa-\ol{\kappa}} ) + 
\psi_2(\frac{-\ol{B}}{\kappa-\ol{\kappa}}+\frac{B}{2\kappa}-\frac{\ol{D}}{\kappa+\ol{\kappa}})+\psi_1'A+\psi_2'D,
\\
\frac{B\psi_2-C\psi_1}{{\i}\nu\e^{2\i g_c}}=\psi_1(\frac{-\ol{B}}{\kappa+\ol{\kappa}}+\frac{D}{2\kappa}-\frac{\ol{D}}{\kappa-\ol{\kappa}})+\psi_2(1+\frac{\ol{A}}{\kappa-\ol{\kappa}}-\frac{A}{2\kappa}+
\frac{\ol{C}}{\kappa+\ol{\kappa}})+\psi_1' B-\psi_2' C,
\end{cases}
\end{equation}
where   we use the compact notation $\psi_j=\psi_j(\kappa)$  and  $\psi_j'=\psi_j'(\kappa),$  where $\psi_j'(\kappa)=\dfrac{d}{dk}\psi_j(k)|_{k=\kappa}$ 
and
 $$ \psi_1(k)=\dfrac{1}{2}\left(\sqrt[4]{\frac{k-\i c}{k+\i c}}-\sqrt[4]{\frac{k+\i c}{k-\i c}}\right),\quad \psi_2(k)=\dfrac{1}{2}\left(\sqrt[4]{\frac{k-\i c}{k+\i c}}+\sqrt[4]{\frac{k+\i c}{k-\i c}}\right).$$
Let us introduce 
\begin{equation*}
\mathcal{R}(k)=\frac{\psi_1(k)}{\psi_2(k)}=\dfrac{\i}{c}(k-\sqrt{k^2+c^2}).
\end{equation*}
With the above notation it is straighforwad to obtain the solution  \eqref{q_breath1} of the MKdV equation 
 in the form
 \begin{equation}
\label{q_breath2}
 q_{breath}(x,t)=c+4\Im(A(x,t)\mathcal{R}(\kappa)-B(x,t)).
 \end{equation}
 We need to determine the quantities $A(x,t)$ and $B(x,t)$ in \eqref{q_breath2}. These constants are obtained by solving the system of 
equations \eqref{sys_breath1} that can be written in the form 
\color{black}
(we use here the determinantal property $\psi_2^2-\psi_1^2=1$)
\color{black}
\color{black}{
\begin{equation}\label{eq_sys_AB}
\begin{cases}
\(\dfrac{\e^{-2\i g_c(\kappa)}}{\i\nu\psi_2^2(\kappa)}-\mathcal{R}'(\kappa)\)A+\frac{\mathcal{R}(\kappa)-\ol{\mathcal{R}(\kappa)}}{\kappa+\ol{\kappa}}\ol{A}
-\frac{\mathcal{R}^2(\kappa)+1}{2\kappa}B+\frac{1+|\mathcal{R}(\kappa)|^2}{\kappa-\ol\kappa}\ol{B}=\mathcal{R}(\kappa),
\\
\(\dfrac{\e^{-2\i g_c(\kappa)}}{\i\nu\psi_2^2(\kappa)}-\mathcal{R}'(\kappa)\)B+\frac{\mathcal{R}(\kappa)-\ol{\mathcal{R}(\kappa)}}{\kappa+\ol{\kappa}}\ol{B}
+\frac{\mathcal{R}^2(\kappa)+1}{2\kappa}A-\frac{1+|\mathcal{R}(\kappa)|^2}{\kappa-\ol\kappa}\ol{A}=1.
\end{cases}
\end{equation}}
The system  of equations \eqref{eq_sys_AB} for $A,B$ is clearly a system of four  linear equations for the four real variables $A=A_1+\i A_2, B=B_1+\i B_2.$
Introducing \color{black} the variables $Z=A+\i B$ and $W=A-\i B$ the system of equations \eqref{eq_sys_AB} can be recast in the form

\color{black}{\begin{equation}\label{eq_sys_AB1}
\begin{cases}
\(\dfrac{\e^{-2\i g_c(\kappa)}}{\i\nu\psi_2^2(\kappa)}-\mathcal{R}'(\kappa)+\i \frac{\mathcal{R}^2(\kappa)+1}{2\kappa}\)Z+
\left(\frac{\mathcal{R}(\kappa)-\ol{\mathcal{R}(\kappa)}}{\kappa+\ol{\kappa}}
-\i \frac{1+|\mathcal{R}(\kappa)|^2}{\kappa-\ol\kappa}\right)\widebar{W}=\mathcal{R}(\kappa)+\i ,\\
\\
\left(\i\frac{1+|\mathcal{R}(\kappa)|^2}{\kappa-\ol\kappa}-\frac{\mathcal{R}(\kappa)-\ol{\mathcal{R}(\kappa)}}{\kappa+\ol{\kappa}}\right)Z
+\overline{\(\dfrac{\e^{-2\i g_c(\kappa)}}{\i\nu\psi_2^2(\kappa)}-\mathcal{R}'(\kappa)\)}{  \widebar W}
+\i\frac{\overline{\mathcal{R}^2(\kappa)}+1}{2\widebar{\kappa}}\widebar{W}=\widebar{R(\kappa)}+\i.
\end{cases}
\end{equation}}
Defining the quantities 
\begin{equation}
\label{E-G}
\begin{split}
&\color{black}{E=\dfrac{\e^{-2\i g_c(\kappa)}}{\i\nu\psi_2^2(\kappa)}-\mathcal{R}'(\kappa),}\\
&F=\frac{\mathcal{R}^2(\kappa)+1}{2\kappa},\\
&H=\frac{\mathcal{R}(\kappa)-\ol{\mathcal{R}(\kappa)}}{\kappa+\ol{\kappa}},\\
&G=\frac{1+|\mathcal{R}(\kappa)|^2}{\kappa-\ol\kappa},
\end{split}
\end{equation}
the solution of \eqref{eq_sys_AB1} is obtained as
\begin{equation*}Z=A+\i B=\dfrac{\left|\begin{array}{ccc}\mathcal{R}+\i  & H-\i G\\\widebar{\mathcal{R}}+\i&\ol{E}+\i \ol{F}\end{array}\right|}
{\left|\begin{array}{ccc}E+\i F & H-\i G\\-H+\i G\;&\ol{E}+\i\;\; \ol{F}\end{array}\right|}=:\frac{r+\i s}{f+\i h}=\frac{f r+ s h }{f^2+h^2}+\i \frac{s f-r h}{f^2+h^2},\end{equation*}
and  similarly $W=A-\i B=\frac{f r+s h }{f^2+h^2}-\i \frac{s f-r h}{f^2+h^2},$
where 
\begin{equation}
\label{rs}
\begin{split}
&r=\mathcal{R}\ol{E}-\ol{\mathcal{R}}H-\ol{F}-G,\\
&s=\ol{E}-H+\mathcal{R}\ol{F}+\ol{\mathcal{R}}G,\\
& f=|E|^2-|F|^2-G^2+H^2,\\
&h=2\Re\(E\ol{F}-HG\).
\end{split}
\end{equation}
Let us notice  that $f,h$ are real, while $r,s$ are complex-valued functions.

\noindent The solution to MKdV is given by the formula
\begin{equation}
\label{q_b_part}
q_{breather}=c+4\Im(A\mathcal{R}-B)=c+4\frac{f\(\mathcal{R}_2 r_1+\mathcal{R}_1 r_2-s_2\)+h(\mathcal{R}_2 s_1+\mathcal{R}_1 s_2+
r_2)}{f^2+h^2},
\end{equation}
where we denote $r=r_1+\i r_2$ with $r_1,r_2\in\R$ and similarly for the other quantities.
To obtain the expression for $f$  defined  in \eqref{rs} we first observe that
\begin{equation}
\label{H2}
\begin{split}
&H=\frac{\mathcal{R}(\kappa)-\ol{\mathcal{R}(\kappa)}}{\kappa+\ol{\kappa}}=\i\dfrac{\kappa_1-\chi_1}{c\kappa_1}=\i\dfrac{\kappa_1\kappa_2-\chi_1\kappa_2}{c\kappa_1\kappa_2}=
\i\dfrac{\chi_2-\kappa_2}{c\chi_2},
\end{split}
\end{equation}
where we use the identity $\kappa_1\kappa_2=\chi_1\chi_2$,
and
\begin{equation}
\label{G2}
\begin{split}
&G=\frac{1+|\mathcal{R}(\kappa)|^2}{\kappa-\ol\kappa}=\dfrac{c^2+(\kappa_1-\chi_1)^2+(\kappa_2-\chi_2)^2}{2\i\kappa_2c^2}=
\dfrac{\color{black}{2}\kappa_1(\kappa_2^2+\chi_1^2-\kappa_1\chi_1-\kappa_2\chi_2)}{2\i\kappa_1\kappa_2c^2}\\
&\quad=\dfrac{\color{black}{2}\(\kappa_2\chi_2+\chi_1\kappa_1-\kappa_1^2-\chi_2^2\)}{2\i\chi_2c^2}=\dfrac{c^2-|\kappa-\chi|^2}{\color{black}{2}\i\chi_2 c^2},
\end{split}
\end{equation}
where we use the identity $\chi_1^2+\kappa_2^2-c^2=\chi_2^2+\kappa_1^2$  repeatedly.
Plugging into \eqref{rs} the  expressions for $E,F$ from \eqref{E-G},  and $G$ and $H$ from \eqref{H2} and \eqref{G2} respectively,  we obtain
\color{black}{\begin{equation}
\label{def_f}
\begin{split}
& f|\psi_2|^4= f\dfrac{|\kappa+\chi|^2}{4|\chi|^2}={\left|\dfrac{\e^{-2\i g_c(\kappa)}}{\i\nu}-\color{black}\frac{\i c}{2\chi^2}\right|}^2-\dfrac{1}{c^4}\dfrac{|\kappa+\chi|^2}{4|\chi|^2}|\kappa-\chi|^2|+\\
&\quad\quad\quad+\dfrac{1}{c^4\chi_2^2}\left(\frac{1}{4}(|\kappa-\chi|^2-c^2)^2-c^2(\kappa_2-\chi_2)^2\right)\dfrac{|\kappa+\chi|^2}{4|\chi|^2}\\
&=\frac{\e^{4 \Im g_{c}(\kappa))}}{|\nu|^2}-\i\frac{c}{2}\e^{2 \Im g_{c}(\kappa)}\left(\dfrac{1}{-\i\,\bar{\nu}\chi^2}\e^{2 \i \Re g_{c}(\kappa)}-\dfrac{1}{\i\,\nu\widebar\chi^2}\e^{-2 \i \Re g_{c}(\kappa)}
\right)+\dfrac{\chi_1^2(|\chi|^2-c^2)}{4\chi^4\chi_2^2},
\end{split}
\end{equation}}
where we use the relation $(\kappa+\chi)(\kappa-\chi)=-c^2$ and 
\begin{equation*}
\begin{split}
&\Re g_{c}(\kappa)=\chi_1\left(4\left(\chi_1^2-3\chi_2^2-\frac{3}{2}c^2\right)t+x\right),\\
& \Im g_{c}(\kappa)=\chi_2\left(4\left(3\chi_1^2-\chi_2^2-\frac{3}{2}c^2\right)t+x\right).
\end{split}
\end{equation*}
In a similar way,
\color{black}{\begin{equation}
\label{def_h}
\begin{split}
 h|\psi_2|^4&=\e^{2\Im g_{c}(\kappa)}\left( \dfrac{ \e^{-2 \i \Re g_{c}(\kappa)}}{ 2\i\nu\widebar\chi}+\dfrac{ \e^{2 \i \Re g_{c}(\kappa)} }{-2\i\,\widebar{\nu}\chi}
\right)-\dfrac{c \chi_2}{2|\chi|^4}+\dfrac{c}{2|\chi|^2\chi_2}\\
&=\e^{2\Im g_{c}(\kappa)}\left( \dfrac{ \e^{-2 \i \Re g_{c}(\kappa)}}{ 2\i\nu\widebar\chi}+\dfrac{ \e^{2 \i \Re g_{c}(\kappa)} }{-2\i\,\widebar{\nu}\chi}
\right)+\dfrac{2c \chi_1^2}{2|\chi|^4\chi_2}.
\end{split}
\end{equation}}
Long and involved algebraic manipulations  give the identities
\begin{equation*}
( h_x-4\chi_2h)|\psi_2|^4=2(\mathcal{R}_2 r_1+\mathcal{R}_1 r_2-s_2)|\psi_2|^4
\end{equation*}
and 
\begin{equation*}
( f_x-4\chi_2 f)|\psi_2|^4=-2(\mathcal{R}_2 s_1+\mathcal{R}_1 s_2+r_2)|\psi_2|^4.
\end{equation*}
Combining the above two expressions with \eqref{q_b_part}, \eqref{def_f} and \eqref{def_h} we can write the breather solution on a constant background in the form
\begin{align*}
q_{breather}(x,t)&=c+2\dfrac{fh_x-f_x f}{f^2+h^2}=c+2\partial_x\arctan\frac{h}{f},\\
\end{align*}
with $f$ and $h$ as in \eqref{def_f} and \eqref{def_h},  which coincides with the statement of the theorem.
\end{proof}
\subsection{Model problem for the periodic travelling wave solution: the elliptic case}
\noindent 
We consider a model problem that can be solved via elliptic functions. This model problem was first solved in \cite{DKMVZ} in the context of asymptotic analysis  for orthogonal polynomials related to Hermitian matrix models. The same problem appeared  in the long-time asymptotic analysis of the MKdV solution 
with step initial data when $c_+=0$  \cite[formula (4.18)]{KM}.
We introduce two real constant parameters $\widetilde{c}>\widetilde{d}>0$. The RH problem is as follows.
\begin{RHP}\label{RHP_elliptic}
To find a $2\times 2$ matrix  $M=M(x,t,;k)$ such that
\begin{enumerate}
 \item $M(x,t;k)$ is analytic in $k\in\mathbb{C}\setminus[\i \widetilde c,-\i \widetilde c]$,
 \item $M_-(k)=M_+(k)J(k)$,
$J(k)=\begin{cases}
    \begin{pmatrix}
     0&\i\\\i&0
    \end{pmatrix},\quad k\in(\i\widetilde  c,\i \widetilde d)\cup(-\i\widetilde  d,-\i \widetilde c),\\\\
    \e^{\i( x U+t V+ \Delta)\sigma_3},\quad k\in(\i \widetilde d,-\i \widetilde d),
   \end{cases}
$

where  $\Delta$ is a real constant and 
\begin{equation}
\label{UW}
U=-\dfrac{\pi\widetilde  c}{K(m)},\quad V={\color{black}-}2(\widetilde c^2+\widetilde d^2)U,\quad m=\dfrac{\widetilde d}{\widetilde c}\,,
\end{equation}
where $K(m)$ is the complete elliptic integral of the first kind;
\item $M(k)\to I$  as $k\to\infty.$
\end{enumerate}
\end{RHP}

It follows from the standard scheme introduced by Zakharov, Shabat \cite{Zakharov-Shabat}, that 
\begin{equation}\label{q_fromRHP}
 q(x,t) = 2\i\lim\limits_{k\to\infty}kM_{21}(x,t;k) = 2\i\lim\limits_{k\to\infty}kM_{12}(x,t;k)
\end{equation}
satisfies MKdV equation \eqref{MKdV}. 
The  explicit formula for $M(k)$ from \cite[pages 24-26]{KM}, \cite{DKMVZ} is   constructed as follows.
We introduce the normalized holomorphic differential
\begin{equation}
\label{omega}
\omega=\dfrac{-\widetilde c}{4\i K(m)}\dfrac{d z}{\sqrt{(z^2+ \widetilde d^2)(z^2+\widetilde c^2)}},\quad 2\int_{\i \tilde{d}}^{-\i \tilde{d}}\omega=1\,,
\end{equation}
with $K(m)$ as in \eqref{UW}. 
For our purpose we fix the function $\sqrt{(z^2+ \widetilde d^2)(z^2+\widetilde c^2)}$ by   the condition that it is analytic off the intervals $(\i \widetilde c,\i \widetilde{d})\cup(-\i \widetilde d,-\i \widetilde c)$ and positive at $z=0$. The intervals $(\i \widetilde c,\i \widetilde{d})$ and $(-\i \widetilde d,-\i \widetilde c)$  are oriented downwards.
We define the  $\beta$-cycle the 
counterclockwise loop encircling $(\i \widetilde c,\i \widetilde d)$ and 
the $\alpha$-cycle the path starting on the cut $(\i \widetilde c,\i \widetilde d)$ on the left, going to the cut $(-\i\widetilde  d,-\i \widetilde c)$  on the left and passing to the second sheet and reaching the cut $(\i \widetilde c,\i \widetilde d)$ from the right on the second sheet.
We define the quantity
\begin{equation*}
\tau=\int_{\beta}\omega.
\end{equation*}
It follows that
\begin{equation}
\label{tau_ell}
 \tau=\dfrac{\i }{2}\dfrac{K'(m)}{K(m)},\quad m=\dfrac{\widetilde d}{\widetilde c},
\end{equation}
where  $K'(m)=K(\sqrt{1-m^2})$ and $K(m)=\int\limits_0^{\frac{\pi}{2}}\frac{ds}{\sqrt{1-m^2\sin^2s}}$.
Using  the relations \cite[165.05, 162.01]{BF},  the quantity $\tau$ can also be written in the form
 \begin{equation}\label{K_m_mtilde}
\tau=\i\dfrac{ K'(\widetilde m)}{ K(\widetilde m)},\quad  K(\widetilde m)=(1+m)K(m),\quad \widetilde{m}^2=\dfrac{4m}{(1+m)^2}.
\end{equation}
 Let us introduce the Jacobi $theta$-function with modulus $\tau$
\begin{equation*}
\theta(\zeta)\equiv\theta(\zeta,\tau)=\sum\limits_{m=-\infty}^{\infty}\exp\left\{\pi \i \tau m^2+2\pi \i \zeta m\right\}.
\end{equation*}
It is an even function of $\zeta$ and it has the following periodicity properties
\begin{equation*}
\theta(\zeta+1)=\theta(\zeta),\quad \theta(\zeta+\tau)=\e^{-\i\pi\tau-2\pi \i  \zeta}\theta(\zeta).
\end{equation*} 
Next we introduce the Abel map with base point $\i \widetilde c$
\begin{equation}
\label{Abel}
A(k):=\int_{\i \widetilde c}^{k}\omega,
\end{equation}
where $\omega$ is the holomorphic differential \eqref{omega}
and we observe that $A(\i \infty)=\frac{1}{4}$.
Let us also introduce  the  function
 $\gamma(k)=\sqrt[4]{\frac{k-\i \widetilde c}{k-\i \widetilde d}}\sqrt[4]{\frac{k+\i \widetilde d}{k+\i \widetilde c}}$, \color{black}{which is analytic off the intervals $[\i  \widetilde c,\i  \widetilde d]\cup[-\i \widetilde  d, -\i  \widetilde c],$ and tends to $1$ at infinity.}

Then the solution for the RH  problem~\ref{RHP_elliptic}  is given by \cite{KM}

\begin{equation}
\label{M_sol}
\begin{split}
&M(k)=\dfrac{\theta(0)}{2\theta(\Omega)}\times\\
&\times
\begin{pmatrix}
(\gamma(k)+\gamma^{-1}(k))\dfrac{\theta(A(k)-\Omega-\frac{1}{4})}{\theta(A(k)-\frac{1}{4})}
&
(\gamma(k)-\gamma^{-1}(k))\dfrac{\theta(-A(k)-\frac{1}{4}-\Omega)}{\theta(-A(k)-\frac{1}{4})}
\\\\
(\gamma(k)-\gamma^{-1}(k))\dfrac{\theta(A(k)+\frac{1}{4}-\Omega)}{\theta(A(k)+\frac{1}{4})}
&
(\gamma(k)+\gamma^{-1}(k))\dfrac{\theta(-A(k)+\frac{1}{4}-\Omega)}{\theta(-A(k)+\frac{1}{4})}
\end{pmatrix},
\end{split}
\end{equation}
where 
\begin{equation*}
\Omega=\dfrac{ xU+tV+\Delta}{2\pi}.
\end{equation*}
Using the expression for $M(k)$ above, it  follows from \eqref{q_fromRHP} that  the MKdV solution $q(x,t)$ is given by
\begin{equation}\label{MKdV solution theta}
 q(x,t) = (\widetilde c-\widetilde d)\ \frac{\theta(\Omega+\frac{1}{2}; \tau)}{\theta( \Omega; \tau)}\dfrac{\theta(0;\tau)}{\theta(\frac{1}{2};\tau)}.
\end{equation}
We recall  the relation between the Jacobi $\theta$-function and the elliptic function $\mathrm{dn}$, \cite{Lawden},
namely
\begin{equation}
\label{dn}
\mathrm{dn}(2 K(\widetilde m)z|\widetilde m)=\dfrac{\theta(\frac{1}{2};\tau)}{\theta(0;\tau)}\dfrac{\theta(z;\tau)}{\theta(z+\frac{1}{2};\tau)}\quad 
\end{equation}
with $ \widetilde m$ as in \eqref{K_m_mtilde} and $\tau$ as in \eqref{tau_ell}.

Using the  above identities, the solution \eqref{MKdV solution theta} can be written in the form
\begin{equation}\label{MKdV solution dn theta}
 q(x,t) = (\widetilde c+\widetilde d)\mathrm{dn}\(2K(\widetilde m)(\Omega+\frac{1}{2}) \big| \ \widetilde m\).
\end{equation}
%
\noindent
According to \cite{BF}, formulas (162.01), p.38, and (165.05), p.41,
\begin{equation}\label{identity_sn2_dn_1}
\mathrm{dn}\(u(1+m)| \widetilde m\)
=\frac{1-m\ \mathrm{sn}^2(u|m)}{1+m\ \mathrm{sn}^2(u|m)},\quad K(\widetilde m)=K(m)(1+m),
\end{equation}
with $m$  as in \eqref{tau_ell}.
Hence,  the expression  \eqref{MKdV solution dn theta} can be rewritten as 
%

\begin{equation}\label{eqn_eta_4} 
\begin{split}
q(x,t)&=(\widetilde c+\widetilde d)\frac{1-\frac{\widetilde d}{\widetilde c}\ \mathrm{sn}^2\(2K( m) \Omega+K(m)|m\)}{1+\frac{\widetilde d}{\widetilde c}\ \mathrm{sn}^2\(2K( m)\Omega+K(m)|m\)}\\
&=-\widetilde c-\widetilde d+\dfrac{2\widetilde c(\widetilde c+\widetilde d)}{\widetilde c+\widetilde d-\widetilde d\, \mathrm{cn}^2\(\widetilde c(x-2(\widetilde c^2+\widetilde d^2)t)-\dfrac{\Delta}{\pi}K(m)+K(m) |m\)},
\end{split}
\end{equation}
where to obtain the second expression we use the relation $ \mathrm{cn}^2(u|m)+ \mathrm{sn}^2(u|m)=1$  and the explicit expressions of $U$ and $V$ as in \eqref{UW}.
The second expression in \eqref{eqn_eta_4}   coincides with the travelling wave solution \eqref{periodic_intro} identifying $\beta_3=\widetilde c$, $\beta_2=\widetilde d$ and $\beta_1=0$
and $x_0=-\dfrac{\Delta}{\pi}K(m)+K(m)$ (see also Appendix~\ref{Appendix_travelling}).

\subsection{Model problem for the periodic  travelling wave solution: the hyperelliptic case}\label{sect_hyperModel}
The model problem we are considering below   is obtained from the longtime asymptotic analysis of the MKdV RH problem in the oscillatory region when the step $c_+>0$.
It can be solved using hyperelliptic theta-functions.  The goal of this section is to show that such model problem still gives the periodic travelling 
wave solution  \eqref{periodic_intro} of the MKdV equation.  To reach our goal we introduce a conformal transformation  of the complex plane 
 and an auxiliary RH problem that is  going to reduce the hyperelliptic RH to an elliptic RH problem.  
\begin{RHP}\label{RH_problem_W}
 Find a $2\times 2$ matrix-valued  function $W(k)$ analytic in $\C\backslash [\i c_-,-\i c_-]$ such that
 \begin{enumerate} 
 
\item $W_-(k)=W_+(k)J_W(k),\quad k\in (\i c_-,-\i c_-)$
with 
\begin{equation}
\label{RH_W}
\begin{split}
&J_W(x,t;k)=\begin{pmatrix}0&\i\\\i&0\end{pmatrix}
\,\quad k\in(\i c_{-},\i d)\cup (\i c_+,-\i c_+)\cup(-\i d, -\i c_-)\\
&J_W(x,t;k)=\begin{pmatrix} \e^{\i (xU+t V + \Delta)}  & 0 \\ 0 &  \e^{-\i (xU+t V + \Delta)}  \end{pmatrix},\quad k\in(\i d,\i c_+)\cup (-\i c_+,-\i d);
\end{split}
\end{equation}
where $c_->d>c_+$,   and  
\begin{equation}
\label{UVh}
U=-\pi \dfrac{\sqrt{c_{\l}^2-c_{\r}^2}}{K(m)},\qquad V=-2(c_{\l}^2+c_{\r}^2+d^2)U,
\end{equation}
with  $m^2=\dfrac{d^2-c_+^2}{c_-^2-c_+^2}$   and $\Delta$  real constant.
\item $W(k)= I+O\left(\frac{1}{k}\right)\quad \mbox{as}\;k\to\infty$;
\item $W(k)$  has at most fourth root singularities at the  points $\pm \i c_-$, $\pm \i c_+$ and $\pm \i d$.
\end{enumerate}
\end{RHP}
Then the quantity
\begin{equation}
\label{q_W}
q_{hel}(x,t)=\lim_{k\to\infty}2ikW_{12}(k;x,t)=\lim_{k\to\infty}2ikW_{21}(k;x,t)
\end{equation}
is a solution of the  MKdV  equation.
\begin{theorem}
\label{theorem_periodic}
The   solution of the MKdV equation \eqref{q_W}  obtained from the RH problem  \ref{RH_problem_W}   is the travelling wave solution \eqref{periodic_intro},  namely 
\begin{equation}
\begin{split}
\label{q_periodic1}
q_{hel}(x,t)&=q_{per}(x,t, c_+,d,c_-,x_0)\\
&=-c_--d-c_++2\frac{(c_-+d)( c_++ c_-)}{d+c_--(d-c_+)\mathrm{cn}^2\(\sqrt{c_-^2-c_+^2} (x-{\mathcal V}t)+x_0|m\)},
\end{split}
\end{equation}
with ${\mathcal V}=2(c_-^2+c_+^2+d^2)$,
and the phase $x_0$ takes the form
\begin{equation}
\label{phase_x0}
x_0=-\dfrac{K(m)\Delta}{\pi}+K(m).
\end{equation}
Here   $\mathrm{cn}(u |m)$ is the Jacobi elliptic function of modulus $m^2=\dfrac{d^2-c_+^2}{c_-^2-c_+^2}$ and $K(m)$ is the complete elliptic integral of the first kind of modulus $m$.

\end{theorem}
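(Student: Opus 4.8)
The plan is to exploit the fact that, although the six branch points $\pm\i c_-,\pm\i d,\pm\i c_+$ make the spectral curve underlying RH problem~\ref{RH_problem_W} a genus-two hyperelliptic curve, this curve carries the extra involution $k\mapsto -k$, which the genuinely elliptic RH problem~\ref{RHP_elliptic} does not see. Using this involution I would fold the problem onto the quotient and recognise the folded problem as a M\"obius-equivalent copy of RH problem~\ref{RHP_elliptic}, whose solution is already known to be the travelling wave \eqref{eqn_eta_4}; this avoids constructing genus-two theta functions altogether. First I would check that the jump data in \eqref{RH_W} are invariant under $k\mapsto -k$: the off-diagonal blocks on $(\i c_-,\i d)$, $(\i c_+,-\i c_+)$ and $(-\i d,-\i c_-)$ map to one another, while the two oscillatory diagonal blocks on $(\i d,\i c_+)$ and $(-\i c_+,-\i d)$ map to one another with the same phase $xU+tV+\Delta$ (constant in $k$). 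Together with the normalisation $W\to I$ this forces a symmetry $W(-k)=\sigma W(k)\sigma^{-1}$ with a constant $\sigma$, exactly of the type recorded in \eqref{Symmetries}.

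Next I would introduce the conformal map $\lambda=-k^2$, under which the three pairs of branch points collapse, $\pm\i c_\pm\mapsto c_\pm^2$ and $\pm\i d\mapsto d^2$, while the two fixed points $k=0,\infty$ of the involution become the ramification points $\lambda=0,\infty$ of the double cover. Using the symmetry of the previous step, I would factor $W(k)$ as an explicit algebraic gauge times a matrix $\widehat W(\lambda)$ that is single-valued on the $\lambda$-plane; this algebraic gauge is the role of the auxiliary RH problem, namely to absorb the $k\mapsto -k$ structure and the $\sqrt{\lambda}$ behaviour at the ramification points. The contour $[\i c_-,-\i c_-]$ folds onto $[0,c_-^2]$, on which $\widehat W$ inherits the off-diagonal jump on $(d^2,c_-^2)$ and on $(0,c_+^2)$ and the oscillatory diagonal jump on $(c_+^2,d^2)$; since $k=0$ is a regular point of $W$ and $W\to I$ at $k=\infty$, the matrix $\widehat W$ is bounded at $\lambda=0$ and normalised at $\lambda=\infty$. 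This is precisely the gap/band structure of RH problem~\ref{RHP_elliptic}, now with the four branch points $\{0,c_+^2,d^2,c_-^2\}$ lying on a line rather than symmetric about the origin. A single M\"obius transformation carries $\{0,c_+^2,d^2,c_-^2\}$ to the symmetric configuration $\{-\i\widetilde c,-\i\widetilde d,\i\widetilde d,\i\widetilde c\}$, and the resulting problem is of the form of RH problem~\ref{RHP_elliptic}, whose solution is \eqref{eqn_eta_4}.

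The final step is the identification of parameters and the extraction of $q_{hel}$. Matching the cross-ratio of $\{0,c_+^2,d^2,c_-^2\}$ with that of the symmetric configuration fixes the relation between the modulus $m=\widetilde d/\widetilde c$ of RH problem~\ref{RHP_elliptic} and the modulus $m^2=(d^2-c_+^2)/(c_-^2-c_+^2)$ appearing in the theorem; this relation is exactly the quadratic (Landen) transformation of the modulus built into the identities \eqref{K_m_mtilde}, and it is the algebraic shadow of the quadratic map $\lambda=-k^2$. To obtain $q_{hel}$ I would translate \eqref{q_W}: since $\lambda=\infty$ is a ramification point, $\widehat W(\lambda)=I+B_1/k+O(k^{-2})$ when expanded in $k=\sqrt{-\lambda}$, so that $q_{hel}=2\i\lim_{k\to\infty}kW_{12}=2\i(B_1)_{12}$ selects the same leading coefficient that produces \eqref{eqn_eta_4}; tracking the constant gauge and the M\"obius map then yields $q_{per}(x,t;c_+,d,c_-,x_0)$ with phase $x_0$ as in \eqref{phase_x0}. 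I expect the main obstacle to be exactly this bookkeeping of the fold: controlling the cuts and orientations, verifying that the ramification points $\lambda=0,\infty$ produce no spurious singularities in $\widehat W$, and checking that the gauge and Jacobian factors combine with the Landen transformation to reproduce \emph{both} the correct amplitude and the correct phase of \eqref{periodic_intro}.
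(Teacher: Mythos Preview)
Your high-level idea---exploit the involution $k\mapsto-k$ to reduce the genus-two problem to a genus-one problem---is exactly the mechanism behind the theorem, but your concrete route differs from the paper's, and there is a real gap in your final step.

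\medskip

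\textbf{How the paper proceeds (and how it differs from your sketch).}
Rather than quotienting by $\lambda=-k^2$ and then M\"obius-mapping four real branch points to the symmetric configuration, the paper runs the construction in the opposite direction: it starts from the already-solved elliptic model (RH problem~\ref{RHP_elliptic}) in a variable $\lambda$ with branch points $\pm\i\widetilde c,\pm\i\widetilde d$, where $\widetilde c^2=c_-^2-c_+^2$ and $\widetilde d^2=d^2-c_+^2$, and then substitutes $\lambda=\sqrt{k^2+c_+^2}$ (not $-k^2$). Under this map the pair $\pm\i c_+$ collapses to $\lambda=0$ and the remaining four branch points land directly at $\pm\i\widetilde c,\pm\i\widetilde d$; no extra M\"obius step is needed and the elliptic modulus $m=\widetilde d/\widetilde c$ already equals the $m$ in the theorem. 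The Landen transformation you invoke is not what relates the two moduli---it enters later, purely as a computational device to pass from the $\theta$-quotient formula to $\mathrm{dn}$ and then to $\mathrm{cn}$. The paper in fact remarks that it first tried the direct quotient/theta-factorisation strategy you outline and ``did not see a simple way to arrive at the travelling wave solution'' by that route, which is why it switched to the substitution-and-correct approach.

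\medskip

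\textbf{The genuine gap.}
Your claim that ``$q_{hel}=2\i\lim kW_{12}$ selects the same leading coefficient that produces \eqref{eqn_eta_4}'' is where the argument breaks. The lifted matrix $\Lambda(k)=A(k)M_{el}(\lambda(k))$ has the wrong jumps on $(\i c_+,-\i c_+)$; fixing them requires a scalar function $F(k)$ satisfying a nontrivial scalar RH problem, and $F$ is forced to carry an extra real phase $\Delta_4$ (see \eqref{Delta4}) so that the elliptic argument becomes $\Omega-\Delta_4/(2\pi\i)$ rather than $\Omega$. Worse, this $F$ vanishes/blows up like $\sqrt{k}$ at $k=0$, so $\Lambda F^{-\sigma_3}$ has a pole there and one must left-multiply by a rational matrix to remove it; this contributes an additive term $-2\alpha$ to $q_{hel}$ with $\alpha=\alpha(x,t)$ given by \eqref{alpha}. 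Thus $q_{hel}$ is the elliptic $q$ from \eqref{MKdV solution theta} (with shifted argument) \emph{plus} a second term built from $M_{el}(c_+)$, and the heart of the proof is collapsing this two-term expression to the single $\mathrm{cn}^2$ formula via the addition formula \eqref{sncndn_summation}, the modulus transformation \eqref{sncndn_transform}, and the explicit evaluations \eqref{prom_ident_2}. Your proposal has no counterpart to any of this: the ``algebraic gauge'' you mention will generate analogous corrections at the ramification points of $\lambda=-k^2$, and until those are computed and shown to combine with the elliptic term into \eqref{q_periodic1}, the argument is incomplete. The bookkeeping you flag at the end is not merely bookkeeping---it is where all the content lies.
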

The RH  problem \ref{RH_problem_W}    has been considered in \cite{KM2} where it was solved in terms of hyperelliptic theta-function defined on the Jacobi variety of the surface 
$\Gamma:=\{(k,y)\in\C^2\;|\; y^2=(k^2+d^2)(k^2+c_-^2)(k^2+c_+^2)\}$. Such a surface has two automorphisms $\tau_1:(y,k)\to (y,-k)$  and $\tau_2:(y,k)\to (-y,k)$.
Therefore the curve $\Gamma$ covers two elliptic curves $\Gamma_+:=\Gamma/\tau_1$ and $\Gamma_-=\Gamma/(\tau_1\tau_2)$.
The corresponding genus 2 theta-function can be factorized as a product of Jacobi theta-function. However, pursuing this strategy, we did not see a simple way at arrive 
to the travelling wave solution \eqref{q_periodic1}.
For this reason, we change  our  strategy and we formulate an auxiliary RH-problem that produces the desired solution and we connect such a  problem to our RH problem \ref{RH_problem_W}.
\subsubsection{Auxiliary RH  problem}

We consider the  two real numbers $\widetilde c>\widetilde d>0$  with $\widetilde c^2=c_-^2-c_+^2$ and $\widetilde d^2=d^2-c_+^2$ and construct the following RH  problem for a $2\times 2$ matrix $M_{el}=M_{el}(\lambda)$: 
\begin{enumerate}
 \item $M_{el}(\lambda)$ is analytic in $\lambda\in\mathbb{C}\setminus[\i \widetilde c,-\i \widetilde c],$
 \item $M_{el,-}(\lambda)=M_{el,+}(\lambda)J_{el}(\lambda),$ where
$$J_{el}(\lambda)=\begin{pmatrix}
          0&\i\\\i&0
         \end{pmatrix},\quad \lambda\in(\i\widetilde c,\i\widetilde d)\cup(-\i\widetilde d,-\i\widetilde c),
$$
$$J_{el}(\lambda)=\e^{(\i xU+\i tV +\i\Delta-\Delta_4)\sigma_3},\quad \lambda\in(\i \widetilde d,-\i\widetilde d),
$$
where $\Delta_4$ is a constant to be determined and $U$, $V$ and $\Delta$, as in the RH  problem  \ref{RH_problem_W}.
\item $M_{el}(\lambda)= I+O\left(\frac{1}{\lambda}\right)$ as $\lambda \to\infty.$
\end{enumerate}
The  explicit formula for $M_{el}(\lambda)$   can be obtained from the solution of the RH problem \ref{RHP_elliptic} as in \eqref{M_sol}  with $\Omega= \frac{1}{2\pi}(xU+ tV +\Delta+\i \Delta_4).$%
\noindent 


To proceed further, we make a transformation of the complex plane to reduce  the  RH  problem  for $M_{el}(\lambda)$  to the one in RH problem \ref{RH_problem_W}. 
We introduce a change of variable $\lambda\to k$ defined as 
$$\lambda=\sqrt{k^2+c_{\r}^2},\qquad \textrm { and denote }\quad
c_{\l}=\sqrt{\widetilde c^2+c_{\r}^2},\quad d=\sqrt{\widetilde
d^2+c_{\r}^2}.$$  
The function $\lambda=\lambda(k)$ is analytic for $k\in \mathbb{C}\backslash [\i c_+,-\i c_+]$.
Next we introduce the matrix 
$$\Lambda(k)=\dfrac{1}{2}\begin{pmatrix}
                     a(k)+\dfrac{1}{a(k)} & -\i (a(k)-\dfrac{1}{a(k)})\\ \i( a(k)-\dfrac{1}{a(k)} )& a(k)+\dfrac{1}{a(k)}
                    \end{pmatrix} M_{el}(\lambda(k)),
$$ where
$$\qquad a(k)=\sqrt[4]{\frac{k^2+c_{\r}^2}{k^2}}.
$$
Then the matrix $\Lambda(k)$ is analytic for $k\in \mathbb{C}\backslash [\i c_-,-\i c_- ]$ and satisfies the following conditions:
  $$\quad\Lambda_-(k)=\Lambda_+(k)J_{\Lambda}(k),\quad k\in[\i c_{\l},-\i c_{\l}]$$
  with 
$$J_{\Lambda}(k)=\left\{
\begin{array}{lll}
\begin{pmatrix}
   0 & 1\\-1 & 0
  \end{pmatrix}, &k\in(\i c_{\r},0),&\\
  \begin{pmatrix}
   0 & -1\\1 & 0
  \end{pmatrix},&k\in(0,-\i c_{\r}),&\\
\begin{pmatrix}
   0 & \i\\\i & 0
  \end{pmatrix}, &k\in(\i c_{\l},\i d)\cup(-\i d,-\i c_{\l}),&\\
\e^{\i (xU+tV+\Delta+\i\Delta_4)\sigma_3}, &
k\in(\i d,\i c_{\r})\cup(-\i c_{\r},-\i d).&
\end{array}
\right.
$$
This is not exactly the RH problem \ref{RH_problem_W}  with jumps as in  \ref{RH_W}.  We need to do some extra work.
For the purpose  we introduce  a scalar function $F=F(k)$ analytic in $k\in\mathbb{C}\setminus[\i c_{\l},-\i c_{\l}],$ which satisfies the following conditions:
$$F_+(k)F_-(k)=1,\quad k\in(\i c_{\l},\i d)\cup (-\i d, -\i c_{\l}),$$
$$F_+(k)F_-(k)=\i,\quad k\in(\i c_{\r},0),\qquad F_+(k)F_-(k)=-\i,\quad k\in(0,-\i c_{\r}),$$
$$\frac{F_+(k)}{F_-(k)}=\e^{\Delta_4},\quad k\in(\i d,\i c_{\r})\cup(-\i c_{\r},-\i d).$$
The quantity $\Delta_4$   is independent from $k$ and it has to be chosen in such a way that  $F(k)$ is  bounded as $k\to\infty.$
The function $F(k)$ can be represented in the following way:
\begin{equation*}
\begin{split}
F(k)=\exp&\left\{\frac{R(k)}{2\pi\i}\(\int\limits_{\i c_{\r}}^0\frac{\frac{\pi\i}{2}\d s}{(s-k)R_+(s)}
- \int\limits^{-\i c_{\r}}_0\frac{\frac{\pi\i}{2}\d s}{(s-k)R_+(s)} +\right.\right.\\
&+\left. \left.\int\limits_{\i d}^{\i c_{\r}}\frac{\Delta_4\d s}{(s-k)R(s)} + \int\limits^{-\i d}_{-\i c_{\r}}\frac{\Delta_4\d s}{(s-k)R(s)} \) \right\},
\end{split}
\end{equation*}
where $R(k)=\sqrt{(k^2+c_+^2)(k^2+c_-^2)(k^2+d^2)}$. The function $R(k)$ is analytic for $k\in \C\backslash \{[\i c_-,\i d]\cup[\i c_+,-\i c_+]\cup[-\i d,-\i c_-]\}$ and positive and real at $k=+0$.  
The function $F(k)$ is bounded at infinity provided that 
 \begin{equation}
 \label{Delta4}
 \Delta_4 =
\frac{-\pi\i}{2}\frac{\int\limits_{\i c_{\r}}^0\frac{s\d
s}{R_+(s)}} {\int\limits^{\i c_{\r}}_{\i d}\frac{s\d
s}{R(s)}}=-2\pi\i \int\limits^{c_+ }_0\omega,
\end{equation}
where the one-form $\omega$ has been defined in \eqref{omega}.
Hence $$F(k)=1+\mathcal{O}(k^{-1})\quad
\textrm{ as }\quad k\to\infty.$$

%
%
\noindent
 Denote $\Lambda^{(1)}(k)=\Lambda(k)
F^{-\sigma_3}(k).$  The  jump conditions for the matrix  $\Lambda^{(1)}$ are as follows:  
 $\Lambda^{(1)}_-(k)=\Lambda^{(1)}_+(k)J^{(1)}(k)$  with 
$$J^{(1)}(k)=\left\{
\begin{array}{lll}
\begin{pmatrix}
           0 & \i \\\i & 0
          \end{pmatrix},& k\in(\i c_{\l},\i d)\cup(\i c_{\r},-\i c_{\r})\cup(-\i d, -\i c_{\l}),&\\
\e^{\i (xU+tV+\Delta)\sigma_3},& k\in(\i d,\i c_{\r})\cup(-\i c_{\r},-\i d).&
\end{array}
\right.$$

The matrix $\Lambda^{(1)}(k)$ is not exactly  the solution of the hyperelliptic model problem \ref{RH_W}, since it has poles at the point  $k=0.$
This is because the function $F(k)$ is vanishing as $\sqrt{k}$ as $k\to0$ with $\Re k>0,$ and is growing as $\frac{1}{\sqrt{k}}$ as $k\to0$ with $\Re k<0.$
Hence, the second column of $\Lambda^{(1)}$ has a pole of the  first order when $k\to 0$ with $\Re k<0,$ and the first column of $\Lambda^{(1)}$ has a pole when $k\to0$ with $\Re k>0.$

Direct analysis of  the  behavior of $\Lambda^{(1)}(k) $ at $k\to0$ shows
that the matrix  function
\begin{equation}
\label{Lambda2}
\Lambda^{(2)}(k):=\begin{pmatrix}1+\frac{\alpha}{k} & \frac{\i\alpha}{k}\\\frac{\i\alpha}{k} & 1-\frac{\alpha}{k}\end{pmatrix}\Lambda^{(1)}(k)
\end{equation}
does not have pole at $k=0$ provided that
\begin{equation}
\label{alpha}
\alpha=\frac{-c_{\r}}{2}\frac{M_{el,11}(c_{\r})-\i M_{el,21}(c_{\r})}{{M_{el,11}(c_{\r})+\i M_{el,21}(c_{\r})}},
\end{equation}
 where $M_{el}(k)$ is as in \eqref{M_sol} by taking care of replacing $\Delta$ by $\Delta+\i \Delta_4$. 
We arrive at  the following lemma.
\begin{lemma}
The $2\times2 $  matrix  $\Lambda^{(2)}(k)$  defined in \eqref{Lambda2}  with $\alpha$ as in \eqref{alpha}  is the unique solution to the
hyperelliptic model problem in \eqref{RH_W}.
\end{lemma}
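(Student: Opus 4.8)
The plan is to verify directly that $\Lambda^{(2)}(k)$ satisfies each of the three defining conditions of the hyperelliptic model problem~\ref{RH_problem_W}: the jump relation \eqref{RH_W}, the normalization $\Lambda^{(2)}(k)\to I$ as $k\to\infty$, and analyticity in $\mathbb{C}\setminus[\i c_-,-\i c_-]$ with at most fourth-root singularities at the branch points; uniqueness will then follow from a standard vanishing argument. The only genuinely delicate point is analyticity at the origin, so I would dispose of the other conditions first.

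For the jump relation, observe that the left multiplier $P(k):=\begin{pmatrix}1+\frac{\alpha}{k}&\frac{\i\alpha}{k}\\\frac{\i\alpha}{k}&1-\frac{\alpha}{k}\end{pmatrix}$ in \eqref{Lambda2} is rational and single-valued on $\mathbb{C}\setminus\{0\}$, hence takes the same boundary value from both sides of the contour at every point other than $k=0$. Consequently $\Lambda^{(2)}_-=P\Lambda^{(1)}_-=P\Lambda^{(1)}_+J^{(1)}=\Lambda^{(2)}_+J^{(1)}$, and one checks that the jump $J^{(1)}$ already computed for $\Lambda^{(1)}$ coincides with $J_W$ in \eqref{RH_W}. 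For the normalization, each factor tends to the identity: $M_{el}(\lambda(k))\to I$, the algebraic prefactor $\to I$ because $a(k)\to1$, $F(k)\to 1$ by the choice \eqref{Delta4} of $\Delta_4$, and $P(k)\to I$. The fourth-root bounds at $\pm\i c_-,\pm\i c_+,\pm\i d$ are inherited from $M_{el}$ and from $F$, while $P$ is bounded there.

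The heart of the proof, and the main obstacle, is to show that the choice \eqref{alpha} of $\alpha$ turns the origin into a removable singularity. Here I would expand $\Lambda^{(1)}=S\,M_{el}(\lambda(k))\,F^{-\sigma_3}(k)$ locally near $k=0$ on the side $\Re k>0$, where $\lambda(k)\to c_+$, $a(k)^2=\lambda/k\to\infty$ and $F(k)\sim f_0\sqrt{k}$. The key algebraic observation is that the prefactor can be written as $S=\tfrac{a}{2}\,u\bar u^{T}+\tfrac{a^{-1}}{2}\,\bar u\,u^{T}$ with $u=\begin{pmatrix}1\\\i\end{pmatrix}$, while the singular part of $P$ is $\tfrac{\alpha}{k}N$ with $N=u\,(1,\i)$, so that $Nu=0$. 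Writing the local expansion $\Lambda^{(1)}=\tfrac1k C_{-1}+C_0+O(k)$, the singular column of $C_{-1}$ is proportional to $u$, hence $N C_{-1}=0$ and the would-be $k^{-2}$ term of $P\Lambda^{(1)}$ disappears automatically. Because the row $(1,\i)$ annihilates every vector proportional to $u$, the residual coefficient $C_{-1}+\alpha N C_0$ collapses to a single scalar equation whose only surviving contribution is the $\bar u$-component produced by the $a^{-1}$ term of $S$; solving it yields precisely $\alpha=-\tfrac{c_+}{2}\,\frac{M_{el,11}(c_+)-\i M_{el,21}(c_+)}{M_{el,11}(c_+)+\i M_{el,21}(c_+)}$, i.e. \eqref{alpha}. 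Finally I would invoke the symmetry $P(k)=\begin{pmatrix}0&-1\\1&0\end{pmatrix}P(-k)\begin{pmatrix}0&1\\-1&0\end{pmatrix}$ shared by $M_{el}$ (cf. \eqref{Symmetries}) to conclude that the same $\alpha$ simultaneously removes the pole carried by the second column on the side $\Re k<0$, so that a single parameter suffices and $\Lambda^{(2)}$ is analytic across the origin.

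For uniqueness I would argue in the usual way: since $\det S=\det M_{el}=\det F^{-\sigma_3}=\det P=1$ one has $\det\Lambda^{(2)}\equiv1$, so any two solutions $\Lambda^{(2)},\widehat\Lambda^{(2)}$ give $\Lambda^{(2)}(\widehat\Lambda^{(2)})^{-1}$ with no jump across $[\i c_-,-\i c_-]$, at most fourth-root (hence removable) singularities at the six branch points and at the origin, and limit $I$ at infinity; by Liouville it is identically $I$. I expect the bookkeeping of the local expansion in the third paragraph — in particular tracking which contributions are proportional to $u$ and therefore invisible to the row $(1,\i)$ — to be the only substantial work, the remaining verifications being essentially formal.
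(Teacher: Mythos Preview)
Your proposal is correct and follows the same route the paper takes: the paper itself gives no separate proof of the lemma, only the sentence ``Direct analysis of the behavior of $\Lambda^{(1)}(k)$ at $k\to0$ shows that \eqref{Lambda2} does not have a pole at $k=0$ provided that $\alpha$ is given by \eqref{alpha}''. Your rank--one decomposition $S=\tfrac{a}{2}u\bar u^{T}+\tfrac{a^{-1}}{2}\bar u u^{T}$ with $u=(1,\i)^{T}$ and the observation $Nu=0$ is exactly the right bookkeeping device to make that ``direct analysis'' explicit, and it recovers \eqref{alpha} cleanly.

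Two small remarks on the execution. First, on the side $\Re k>0$ you should also note that the \emph{second} column acquires no new pole from the left factor $P(k)$: since $a^{-1}F\to 0$ there, one has $\Lambda^{(1)}_{2}(0^{+})\propto u$, hence $N\Lambda^{(1)}_{2}(0^{+})=0$ automatically. Second, instead of invoking the $k\mapsto-k$ symmetry to handle $\Re k<0$, it is quicker to observe that the jump $J_W=\begin{pmatrix}0&\i\\\i&0\end{pmatrix}$ is constant and invertible on $(\i c_{+},-\i c_{+})$, so boundedness of $\Lambda^{(2)}_{+}$ near $0$ immediately gives boundedness of $\Lambda^{(2)}_{-}=\Lambda^{(2)}_{+}J_W$. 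This avoids tracking how the branch of $\lambda(k)=\sqrt{k^{2}+c_{+}^{2}}$ behaves under $k\mapsto-k$ across the cut.
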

\noindent
Further, the solution of the MKdV equation is given by the
 formula:
$$q_{hel}(x,t)=\lim\limits_{k\to\infty}2\i k\Lambda^{(2)}_{21}(k)=2\i\lim\limits_{k\to\infty}kM_{el21}(k)-2\alpha=
2\i\lim\limits_{k\to\infty}kM_{el12}(k)-2\alpha,$$
so that, plugging into the above expression the explicit expression of $M_{ell}(k)$  and $\alpha$ we obtain 
\begin{equation}
\label{qhyper0}
q_{hel}(x,t)=(\widetilde
c-\widetilde d)\frac{
\theta(\frac{1}{2}+\Omega-\frac{\Delta_4}{2\pi \i};\tau)}{\theta(\Omega-\frac{\Delta_4}{2\pi \i};\tau)}\dfrac{\theta(0;\tau)}{\theta(\frac{1}{2};\tau)}+c_{\r}\frac{M_{el,11}(c_{\r})-\i M_{el,21}(c_{\r})}{{M_{el,11}(c_{\r})+\i M_{el,21}(c_{\r})}},
\end{equation}
with $\tau=\dfrac{\i }{2}\dfrac{K'(m)}{K(m)},$ $m=\dfrac{\widetilde d}{\widetilde c}$, and
\begin{equation}
\label{Omega_ell}
\Omega=\dfrac{xU+tV+\Delta}{2\pi},
\end{equation}
 with $U$, $V$ and $\Delta$ as in the  RH problem with jumps as in \eqref{RH_W}.
Summarizing we have obtained the solution of the hyperelliptic RH problem  with jumps as in \eqref{RH_W}   and therefore of the MKdV equation
in terms of elliptic functions.  We need to do some extra work to show that the expression \eqref{qhyper0}  coincides with the travelling wave solution of the MKdV equation.
\begin{proof} [Proof  of  Theorem~\ref{theorem_periodic}]
In order to prove Theorem~\ref{theorem_periodic}, we need to show that the quantity $q_{hel}(x,t)$  in \eqref{qhyper0} is equal to   the  travelling wave solution of the MKdV equation defined in \eqref{periodic_intro}, namely we have to prove the relation
\begin{equation}
\label{qhyper1}
q_{hel}(x,t)=-c_--d-c_++2\frac{(c_-+d)( c_++ c_-)}{d+c_--(d-c_+)\mathrm{cn}^2\(\sqrt{c_-^2-c_+^2} (x-{\mathcal V}t)+x_0|m\)},
\end{equation}
where the phase $x_0$ takes the form
\begin{equation}
\label{phase_x01}
x_0=\dfrac{K(m)\Delta}{\pi}+K(m),\quad m^2=\dfrac{d^2-c_+^2}{c_-^2-c_+^2}.
\end{equation}
For the purpose we need a series of identities among elliptic functions.
We first consider the term $\alpha$ in \eqref{alpha}. We  observe   from the relation \eqref{omega} and \eqref{Delta4} that 
$$A(c_{\r})=\int_{i\widetilde{c}}^{c_{\r}}\omega=-\frac{\tau}{2}+\frac{1}{4}-\dfrac{\Delta_4}{2\pi i }$$
so that the quantity $\alpha$ in \eqref{alpha} takes the form
 \begin{equation}
 \label{alpha0}
 \begin{split}
 \alpha&=\frac{-c_{\r}}{2}\frac{(\gamma(c_{\r})+\gamma^{-1}(c_{\r}))
 \frac{\theta(\frac{-\tau}{2}-\Omega;\tau)}{\theta(\frac{-\tau}{2}-\frac{\Delta_4}{2\pi \i};\tau)}-
 \i(\gamma(c_{\r})-\gamma^{-1}(c_{\r}))\frac{\theta(\frac{-\tau}{2}+\frac{1}{2}-\Omega;\tau)}{\theta(\frac{-\tau}{2}+\frac{1}{2}-\frac{\Delta_4}{2\pi \i};\tau)}}
 {(\gamma(c_{\r})+\gamma^{-1}(c_{\r}))\frac{\theta(\frac{-\tau}{2}\Omega;\tau)}{\theta(\frac{-\tau}{2}-\frac{\Delta_4}{2\pi \i};\tau)}+
 \i (\gamma(c_{\r})-\gamma^{-1}(c_{\r}))\frac{\theta(\frac{-\tau}{2}+\frac{1}{2}-\Omega;\tau)}{\theta(\frac{-\tau}{2}+\frac{1}{2}-\frac{\Delta_4}{2\pi \i};\tau)}}\\
&=-\frac{c_{\r}}{2}\frac{  (\gamma(c_{\r})+\gamma^{-1}(c_{\r}))\frac{\theta(\frac{-\tau}{2}-\Omega;\tau)}{\theta(\frac{-\tau}{2}+\frac{1}{2}-\Omega;\tau)}
- \i (\gamma(c_{\r})-\gamma^{-1}(c_{\r}))\ \frac{\theta(\frac{-\tau}{2}-\frac{\Delta_4}{2\pi \i};\tau)}{\theta(\frac{-\tau}{2}+\frac{1}{2}-\frac{\Delta_4}{2\pi \i};\tau)}}
 {(\gamma(c_{\r})+\gamma^{-1}(c_{\r}))\ \frac{\theta(\frac{-\tau}{2}-\Omega;\tau)}{\theta(\frac{-\tau}{2}+\frac{1}{2}-\Omega;\tau)}+
 \i (\gamma(c_{\r})-\gamma^{-1}(c_{\r}))\ \frac{\theta(\frac{-\tau}{2}-\frac{\Delta_4}{2\pi \i};\tau)}{\theta(\frac{-\tau}{2}+\frac{1}{2}-\frac{\Delta_4}{2\pi \i};\tau)}}.
 \end{split}
 \end{equation}
 In order to simplify the above expression we use the  identities  \cite[page 25, unnumbered formula before (4.34)]{KM}
$$\frac{\theta(A(k)-\frac{1}{4};\tau)}
{\theta(A(k)+\frac{1}{4};\tau)}
=
\sqrt{\frac{\widetilde c+\widetilde d}{\widetilde c-\widetilde
d}}\cdot \frac{\gamma(k)+\gamma^{-1}(k)}{\eta(k)+\eta(k)^{-1}},
$$ 
where
$\eta(k)=\sqrt[4]{\frac{k-\i \widetilde c}{k+\i\widetilde
c}}\sqrt[4]{\frac{k-\i \widetilde d}{k+\i\widetilde d}}$ and  use  the following periodicity property of elliptic functions:
\begin{equation*}
\mathrm{dn}(u+iK'(\widetilde{m})\,|\widetilde m)=-\i\dfrac{\mathrm{cn}(u|\widetilde m)}{\mathrm{sn}(u|\widetilde m)},\quad K'=K(\sqrt{1-\widetilde m^2}),
\end{equation*}
\begin{equation*}
\mathrm{dn}^2(u\,|\widetilde m)=1-\widetilde{m}^2\mathrm{sn}^2(u\,|\widetilde m),\quad \mathrm{sn}^2(u\,|\widetilde m)+\mathrm{cn}^2(u\,|\widetilde m)=1,
\end{equation*}
where $\widetilde{m}$ is defined in \eqref{K_m_mtilde}.
Using the above  three identities  and \eqref{dn} we arrive at the following form for $q_{hel}(x,t)$ in \eqref{qhyper0}:
\begin{equation}
\label{q_par}
\begin{split}
q_{hel}(x,t)&=(\widetilde d+\widetilde c) \mathrm{dn}\(2 K(\widetilde m)(\frac{1}{2}+\Omega-\frac{\Delta_4}{2\pi \i})|\widetilde m\)+\\
&+c_{\r}\dfrac{\i\sqrt{\dfrac{1-m}{1+m}}  \dfrac{\mathrm{sn}(2 K(\widetilde m)\Omega+K(\widetilde m)|\widetilde m)}{\mathrm{cn}(2 K(\widetilde m)\Omega+K(\widetilde m)|\widetilde m)} -\sqrt{\dfrac{c_--d}{c_-+d}}}{ 
\i\sqrt{\dfrac{1-m}{1+m}}  \dfrac{\mathrm{sn}(2 K(\widetilde m)\Omega+K(\widetilde m)|\widetilde m)}{\mathrm{cn}(2 K(\widetilde m)\Omega+K(\widetilde m)|\widetilde m)} +\sqrt{\dfrac{c_--d}{c_-+d}}}\,.
\end{split}
\end{equation}
Next we use the addition formula for Jacobi elliptic function
 $\mathrm{dn}$ \cite[123.01, p.23]{BF}

\begin{equation}\label{sncndn_summation}\begin{split}
&\mathrm{dn}(u+v|\widetilde m)=\frac{\mathrm{dn}(u|\widetilde m) \mathrm{dn}(v|\widetilde m)-\widetilde m^2\ \mathrm{sn}(u|\widetilde m)\mathrm{cn}(u|\widetilde m)\mathrm{sn}(v|\widetilde m)\mathrm{cn}(v|\widetilde m)}
{1-\widetilde m^2\ \mathrm{sn}^2(u|\widetilde m)\mathrm{sn}^2(v|\widetilde m)},
\end{split}\end{equation}
and the following relations  \cite[162.01, 165.05]{BF}
\begin{equation}\label{sncndn_transform}
\begin{cases}
\mathrm{sn}(u|\widetilde m) = (1+m)\frac{\mathrm{sn}\(\frac{u}{1+m}|m\)}{1+m\ \mathrm{sn}^2\(\frac{u}{1+m}|m\)},\\
 \mathrm{cn}(u|\widetilde m) = \frac{\mathrm{cn}\(\frac{u}{1+m}|m\)\ \mathrm{dn}\(\frac{u}{1+m}|m\)}{1+m\ \mathrm{sn}^2\(\frac{u}{1+m}|m\)},\\
 \mathrm{dn}(u|\widetilde m) = \frac{1-m\ \mathrm{sn}^2\(\frac{u}{1+m}|m\)}{1+m\ \mathrm{sn}^2\(\frac{u}{1+m}|m\)},
 \end{cases}\widetilde m^2=\frac{4m}{(1+m)^2},\;\; K(\widetilde m)=K(m)(1+m).
\end{equation}
 We also obtain the  following relations 
\begin{equation}\label{prom_ident_2}
\begin{cases}
&\mathrm{sn}(\frac{\i K[\widetilde m]\Delta_4}{\pi}|\widetilde m)=\frac{\i\ (\widetilde c+\widetilde d)\sqrt{(d-\widetilde d)(c_{\l}+\widetilde c)}}{(c_{\l}+\widetilde c-d+\widetilde d)\sqrt{\widetilde c\widetilde d\ }},\\
&\mathrm{cn}(\frac{\i K[\widetilde m]\Delta_4}{\pi}|\widetilde m)=
\frac{(c_{\l}\widetilde d+d \widetilde c)\sqrt{c_{\l}+\widetilde c}}{\sqrt{\widetilde c\widetilde d(d+\widetilde d)\ }\ (c_{\l}+\widetilde c-d+\widetilde d)},
\\
&
\mathrm{dn}(\frac{\i K[\widetilde m]\Delta_4}{\pi}|\widetilde m)=
\frac{c_{\l}+\widetilde c+d-\widetilde d}{c_{\l}+\widetilde c-d+\widetilde d}=\frac{\widetilde c-\widetilde d}{c_l-d}=\frac{c_l+d}{\widetilde c+\widetilde d},
\end{cases}
\end{equation}
and the further identity
\begin{equation*}
\frac{2(c_l+\widetilde c)(c_l\widetilde d+d\widetilde c)}{(c_l+\widetilde c-d+\widetilde d)(c_l+\widetilde c+d-
\widetilde d)(d+\widetilde d)}=1.
\end{equation*}
Substituting the above relations in \eqref{q_par}    and defining  $\widehat{\Omega}=2K(m)\Omega+K(m)$, we obtain
\begin{equation}
\begin{split}
q_{hel}(x,t)&=\dfrac{1-m^2 \mathrm{sn}^4(\widehat{\Omega}| m)+\dfrac{c_+}{\widetilde c^2}(d+c_-) \mathrm{sn}^2(\widehat{\Omega}| m)-
\frac{c_+ \mathrm{cn}^2(\widehat{\Omega}| m) \mathrm{dn}^2(\widehat{\Omega}| m)}{d+c_-}}
{\dfrac{(d+c_-) }{\widetilde c^2}\mathrm{sn}^2(\widehat{\Omega}| m)+\frac{\mathrm{cn}^2(\widehat{\Omega}| m) \mathrm{dn}^2(\widehat{\Omega}| m)}{d+c_-}}\\
&=-c_++(d+c_-)\dfrac{1-\frac{d-c_+}{c_-+c_+} \mathrm{sn}^2(\widehat{\Omega}| m)}{1+\frac{d-c_+}{c_-+c_+} \mathrm{sn}^2(\widehat{\Omega}| m)}\\
&=-c_+-c_--d+2\dfrac{(c_-+d)(c_++c_-)}{c_-+d-(d-c_+)\mathrm{cn}^2(\widehat{\Omega}| m)},
\end{split}
\end{equation}
where 
\begin{equation*}
\widehat{\Omega}=\sqrt{c_-^2-c_+^2}(x-2(c_-^2+c_+^2+d^2)t)+x_0,\;\;\;
x_0=-\dfrac{K(m)\Delta}{\pi}+K(m).
\end{equation*}
which concludes the proof of Theorem~\ref{theorem_periodic}.
\end{proof}


\section{Large time asymptotics: proof of Theorem \ref{thrm:asymp:rl} part (b)}\label{sect_asymp}
We  study the long-time asymptotics of the RH  problem~\ref{RH_problem_1} by  applying   the  Deift-Zhou steepest descend method \cite{DZ93} for oscillatory RH problems. The high oscillatory terms of  the matrix entries of $J_M(k)$  defined in \eqref{J_M} come from the exponential  factors 
 $\e^{\pm\i \theta(x,t;k)}$.
Since the stationary point of $\theta(x,t;k)=kx+4k^3t$ is $k=\sqrt{-x/12t},$ we introduce  a new independent variable
\begin{equation*}
\xi=\dfrac{x}{12t},
\end{equation*}
and the function $\widehat{\theta}(k,\xi)$ with  $t\widehat{\theta}(k,\xi)=\theta(x,t;k)$, namely
\begin{equation}
\label{widetheta}
\widehat{\theta}(k,\xi)=12k\xi+4k^3.
\end{equation}
The sign of the $\Im\widehat{\theta}(k,\xi)$ are plot in  Figure~\ref{Fig_theta}.
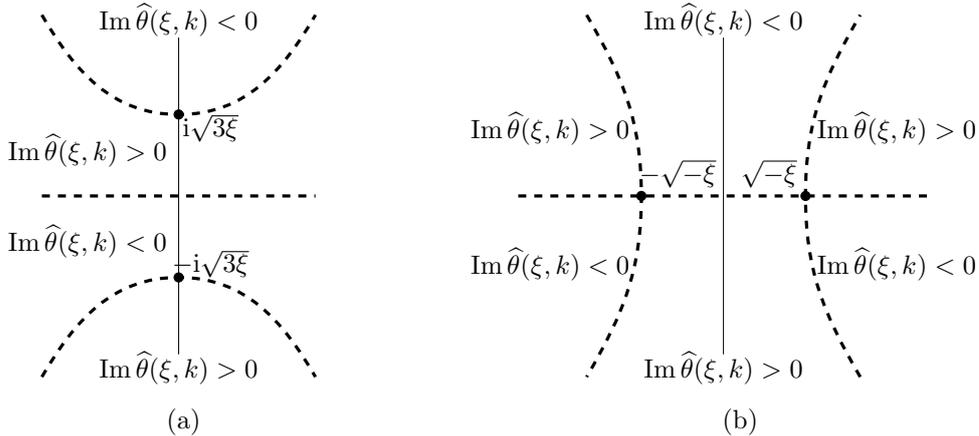
\begin{figure}[ht]
\begin{minipage}{0.4\linewidth}
\begin{tikzpicture}[scale=0.6]
\draw[fill=black] (0,1.8) circle [radius=0.1];
\draw[fill=black] (0,-1.8) circle [radius=0.1];
\draw[very thin] (0,-3.5) to (0,3.5);
\node at (0.7,1.5) {$\mathrm{i} \sqrt{3\xi} $};
\node at (0.7,-1.5) {$-\mathrm{i}\sqrt{3\xi} $};
\draw[very thick,dashed] (-3,4) [out=-60, in=-180] to (0,1.8) [out=0,in=-120] to (3,4);
\draw[very thick,dashed] (-3,-4) [out=60, in=180] to (0,-1.8) [out=0,in=120] to (3,-4);
\node at (0,3.9) {$\Im \widehat\theta(\xi,k)<0$};
\node at (-2,1.) {$\Im\widehat\theta(\xi,k)>0$};
\node at (-2,-1.)  {$\Im\widehat\theta(\xi,k)<0$};
\node at (0,-3.8){$\Im \widehat\theta(\xi,k)>0$};
\draw[very thick, dashed] (-3,0) to (3,0);
\end{tikzpicture}\\
\centering{(a)}
\end{minipage}
\begin{minipage}{0.53\linewidth}
\begin{tikzpicture}[scale=0.6]
\draw[fill=black] (1.8,0) circle [radius=0.1];
\draw[fill=black] (-1.8,0) circle [radius=0.1];
\draw[very thin] (0,-3.5) to (0,3.5);
\node at (1.,0.5) {$ \sqrt{-\xi} $};
\node at (-1.,0.5) {$-\sqrt{-\xi} $};
\draw[very thick,dashed] (-3,4) [out=-60, in=90] to (-1.8,0) [out=-90,in=60] to (-3,-4);
\draw[very thick,dashed] (3,-4) [out=120, in=-90] to (1.8,0) [out=90,in=-120] to (3.,4);
\node at (0,3.9) {$\Im \widehat\theta(\xi,k)<0$};
\node at (-3.8,1.5) {$\Im\widehat\theta(\xi,k)>0$};
\node at (3.8,1.5) {$\Im\widehat\theta(\xi,k)>0$};
\node at (-3.8,-1.5)  {$\Im\widehat\theta(\xi,k)<0$};
\node at (3.8,-1.5)  {$\Im\widehat\theta(\xi,k)<0$};
\node at (0,-3.8){$\Im \widehat\theta(\xi,k)>0$};
\draw[very thick, dashed] (-4.5,0) to (4.5,0);
\end{tikzpicture}\\
\centering{(b)}
\end{minipage}
 \caption{ Signs of   $\Im \widehat\theta(\xi,k)$ for $\xi>0$ in figure $(a)$ and for $\xi<0$ in figure $(b)$.}
\label{Fig_theta}
\end{figure}

To perform the asymptotic analysis  of the Riemann Hilbert problem \ref{RH_problem_1}, our first step is the introduction of a scalar  function $g=g(k,\xi)$  
which is asymptotic to $\widehat{\theta}(k,\xi)$, namely
\begin{equation}\label{gthetaas}
g(k,\xi)=\widehat{\theta}(k,\xi)+O(\frac{1}{k}),\quad |k|\to\infty.
\end{equation}

%
The function $g(k,\xi)$  takes different forms for different regions of the parameter $\xi$ (\cite[page 9]{KM2}):
\begin{equation}\label{g_def}
g(k,\xi)=\begin{cases}
\hat{g}_{c_+}(k,\xi)
\quad \mbox{ if }\xi\geq \frac{c_-^2}{3}+\frac{c_+^2}{6},
\\\\
\int\limits_{\i c_-}^{k}\frac{12 s(s^2+\xi+\frac{c_-^2+c_+^2-d^2}{2})\sqrt{s^2+d(\xi)^2}\, \d s}{\sqrt{(s^2+c_-^2)(s^2+c_+^2)}},
\quad \mbox{ if } 
\frac{-c_-^2}{2}+c_+^2 \leq \xi \leq \frac{c_-^2}{3}+\frac{c_+^2}{6},
\\\\
\hat{g}_{c_-}(k,\xi),\quad \mbox{ if } \xi\leq\frac{-c_-^2}{2}+c_+^2.
\end{cases}
\end{equation}
where the function  $\hat{g}_{c}(k,\xi)=t^{-1} g_{c}(x,t;k)=(4k^2-2c^2 + 12\xi)\sqrt{k^2+c^2},$ and  $g_{c}(x,t;k)$ has been defined in \eqref{E_c}. 
The quantity  $d(\xi)$  appearing in the function $g(k,\xi)$ in the middle region,  is a function of the parameter $\xi,$  and it is obtained from \eqref{d_mu_hyper} below and satisfies $c_+<d(\xi)<c_-$.
{\color{black}
We observe that the function $g(k,\xi)$ is  Schwartz symmetric. Indeed  $\overline{\hat{g}_{c}(\bar{k},\xi)}=\hat{g}_{c}(k,\xi)$, while in the middle region, we have  
\begin{align*}
g(k,\xi)&=\left(\int\limits_{\i c_-}^{k}+\frac{1}{2}\int\limits_{-\i c_-}^{\i c_-}\right)\frac{12 s(s^2+\xi+\frac{c_-^2+c_+^2-d^2}{2})\sqrt{s^2+d(\xi)^2}\, \d s}{\sqrt{(s^2+c_-^2)(s^2+c_+^2)}}\\
&=\frac{1}{2}\left(\int\limits_{\i c_-}^{k}+\int\limits_{-\i c_-}^{k}\right)\frac{12 s(s^2+\xi+\frac{c_-^2+c_+^2-d^2}{2})\sqrt{s^2+d(\xi)^2}\, \d s}{\sqrt{(s^2+c_-^2)(s^2+c_+^2)}}\\
\end{align*}
where the second integral on the first line is equal to zero because of   \eqref{d_mu_hyper} and   the residue theorem. Clearly the second line of the above expression is  Schwartz symmetric.
 }
\noindent Then we define the first transformation of the RH problem~\ref{RH_problem_1} \color{black}{(note that since the large time asymptotics is studied along the rays $x/t=const,$  the parameters $x,t$ in $M$ are changed to parameters $\xi=x/(12t),t$ in $Y$ )}
$$Y(\xi,t;k)=M(x,t;k)\e^{\i\(tg(k,\xi)-\theta(x,t;k)\)\sigma_3},$$
so that
\begin{equation*}
Y_{-}(\xi,t;k)=Y_{+}(\xi,t;k)J_Y(\xi,t;k), \quad k\in \Sigma,
\end{equation*}
where
 $$J_{Y}(\xi,t;k)=\e^{-\i\(tg_+(k,\xi)-\theta(x,t;k)\)\sigma_3}J_{M}(x,t;k)\e^{\i\(tg_-(k,\xi)-\theta(x,t;k)\)\sigma_3}$$
 with $J_M$ as in \eqref{J_M}.
\color{black}{The three different regions in the definition of $g$ will be referred to as }
\color{black}
\begin{itemize}  
\item  a dispersive shock wave region $\frac{-c_-^2}{2}+c_+^2 < \xi < \frac{c_-^2}{3}+\frac{c_+^2}{6}$, that can contain breathers on an elliptic background;

\item  right constant region  $\xi \geq \frac{c_-^2}{3}+\frac{c_+^2}{6}$, with possible solitons and breathers on the constant background $c_+$, travelling in positive direction;

\item left constant region $\frac{-c_-^2}{2}+c_+^2 \leq \xi \leq \frac{c_-^2}{3}+\frac{c_+^2}{6}$,  where possible breathers on the constant background $c_-$ travel in either positive or negative direction.
 \end{itemize}
 Furthermore, the left constant region is subdivided into two regions:
 \begin{itemize}
 \item utmost left constant region $\xi<\frac{-c_-^2}{2};$
 \item middle left constant region $\frac{-c_-^2}{2}<\xi<\frac{-c_-^2}{2}+c_+^2.$
 \end{itemize}
\color{black}

We start by performing the asymptotic analysis of the dispersive shock wave region.

\subsection{Proof of theorem~\ref{thrm:asymp:rl}  part (b): dispersive shock wave region with  $ \frac{-c_{\l}^2}{2}+c_{\r}^2<\frac{x}{12 t}<\frac{c_{\l}^2}{3}+\frac{c_{\r}^2}{6}$}
\label{sect_dispersive}
In this region we will verify a-posteriori that the $g$-function  is analytic in $\mathbb{C}\backslash [\i c_-, -\i c_-]$ and  takes the form
 \begin{equation}
 \label{g_ell0}
 g(k,\xi)=12\displaystyle\int\limits_{\i c_{-}}^{k}\frac{s\left(s^2+ \xi+\frac{c_{-}^2+c_{+}^2}{2}-\frac{d^2}{2}\right)\sqrt{s^2+d^2(\xi)}\d s}{\sqrt{(s^2+c_{-}^2)(s^2+c_{+}^2)}}\,,
 \end{equation}
 for
  $\quad \frac{-c_{-}^2}{2}+c_{+}^2<\xi<\frac{c_{-}^2}{3}+\frac{c_{+}^2}{6},$  where the quantity $d=d(\xi)$  is determined by 
  \begin{equation}
  \label{d_mu_hyper}
 \int\limits_{\i d}^{\i c_{+}}\frac{s\left(s^2+ \xi+\frac{c_{-}^2+c_{+}^2}{2}-\frac{d^2}{2}\right)\sqrt{s^2+d^2}\d s}{\sqrt{(s^2+c_{-}^2)(s^2+c_{+}^2)}}=0.
 \end{equation}
\begin{wrapfigure}{r}{4cm}
\begin{tikzpicture}[scale=0.4]
\draw[fill=black] (0,1) circle [radius=0.05];
\draw[fill=black] (0,-1) circle [radius=0.05];
\draw[fill=black] (0,2.5) circle [radius=0.05];
\draw[fill=black] (0,-2.5) circle [radius=0.05];
\draw[fill=black] (-1,1.5) circle [radius=0.1];
\draw[fill=black] (-1,-1.5) circle [radius=0.1];
\draw[fill=black] (1,-1.5) circle [radius=0.1];
\draw[fill=black] (1,1.5) circle [radius=0.1];
\draw[fill=red] (-2,2.5) circle [radius=0.1];
\draw[fill=red] (-2,-2.5) circle [radius=0.1];
\draw[fill=red] (2,-2.5) circle [radius=0.1];
\draw[fill=red] (2,2.5) circle [radius=0.1];
\draw[very thin] (0,-3.5) to (0,3.5);
\draw[very thick] (-3,5) [out=-60, in=-180] to (0,2.5) [out=0,in=-120] to (3,5);
\draw[very thick] (-3,-5) [out=60, in=180] to (0,-2.5) [out=0,in=120] to (3,-5);
\draw[very thick] (-3,3) [out=-60, in=-180] to (0,1.) [out=0,in=-120] to (3,3);
\draw[very thick] (-3,-3) [out=60, in=180] to (0,-1.) [out=0,in=120] to (3,-3);
\node at (0.7,.5) {$\i c_+ $};
\node at (0.7,3.1) {$\i c_- $};
\node at (0.85,-.7) {$-\i c_+ $};
\node at (0.85,-2.3) {$-\i c_- $};
\draw[very thick, dashed] (-3,0) to (3,0);
\end{tikzpicture}
\caption{The  zero set described by  \eqref{boundary} and the  spectra  of two  distinct  breathers  trapped inside the dispersive shock wave region.}
\label{fig_Bre}
\end{wrapfigure}
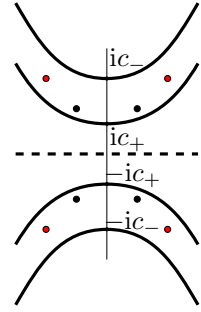
The solvability of this   equation  for $d=d(\xi)$  was established in \cite{KM2}.  Below we give a different derivation of \eqref{g_ell0} and we show the solvability  of $d=d(\xi)$ using the hyperbolic
nature of the Whitham modulation equations.
The dispersive shock wave region contains a trapped breather if  the discrete spectrum of  the breather is contained within the region bounded by the curves
\begin{equation}
\label{boundary}
\Im g_{c_+}(k, \frac{c_-^2}{3}+\frac{c_+^2}{6})=0,\quad \Im g_{c_-}(k,\frac{-c_-^2}{2}+c_+^2),
\end{equation}
where  $12(\frac{c_-^2}{3}+\frac{c_+^2}{6})$ and $12(\frac{-c_-^2}{2}+c_+^2)$ are the speeds   respectively of the leading edge and trailing edge of the dispersive shock wave (see Figure~\ref{fig_Bre}  for the level set of \eqref{boundary}).
A plot of the signs of the $\Im g(k,\xi)$ is shown in Figure~\ref{Fig_Elliptic}, which describes  the regions of the $k$-plane  where the quantity
$\e^{\i t g(k,\xi)}$ is exponentially small.
\newpage
\begin{figure}[ht]
\begin{minipage}{0.47\linewidth}
\begin{tikzpicture}
\draw[fill=black] (0,3.5) circle [radius=0.05];
\draw[fill=black] (0,-3.5) circle [radius=0.05];
\draw[fill=black] (0,1.5) circle [radius=0.05];
\draw[fill=black] (0,-1.5) circle [radius=0.05];
\draw[] (0,2.2) circle [radius=0.05];
\draw[] (0,-2.2) circle [radius=0.05];
\draw[very thick,dashed] (0,1.5) to (0,-1.5);\draw[very thick,dashed] (0,2.2) to (0,3.5);
\draw[very thick,dashed] (0,-2.4) to (0,-3.5);
\draw[dashed,gray, very thin] (0,-2.2) to (0,-1.5);
\draw[dashed, gray, very thin] (0,2.2) to (0,1.5);
\node at (-0.4,3.4) {$\mathrm{i} c_-$};\node at (-0.35,1.23) {$\mathrm{i} c_+$};
\node at (-0.4,-3.4) {$-\mathrm{i} c_-$};\node at (-0.4,-1.23) {$-\mathrm{i} c_+$};
\node at (0.3,2.5) {$\mathrm{i} d $};
\node at (0.4,-2.5) {$-\mathrm{i} d$};
\draw[very thick,dashed] (-3,4) [out=-60, in=-120] to (0,2.2) [out=-60,in=-120] to (3,4);
\draw[very thick,dashed] (-3,-4) [out=60, in=120] to (0,-2.2) [out=60,in=120] to (3,-4);
\node at (0,3.9) {$\Im g(k)<0$};
\node at (2,1.5) {$\Im g(k)>0$};\node at (-2,1.5) {$\Im g(k)>0$};
\node at (2,-1.5)  {$\Im g(k)<0$};\node at (-2,-1.5)  {$\Im g(k)<0$};
\node at (0,-3.8){$\Im g(k)>0$};
\draw[very thick, dashed] (-3,0) to (3,0);
\end{tikzpicture}\\
\centering{(a)}
\end{minipage}
\hfill
\begin{minipage}{0.47\linewidth}
\begin{tikzpicture}
\draw[fill=black] (0,3.5) circle [radius=0.05];
\draw[fill=black] (0,-3.5) circle [radius=0.05];
\draw[fill=black] (0,1.5) circle [radius=0.05];
\draw[fill=black] (0,-1.5) circle [radius=0.05];
\draw[very thick,dashed] (-3,4) [out=-60, in=-120] to (0,2.2) [out=-60,in=-120] to (3,4);
\draw[very thick,dashed] (-3,-4) [out=60, in=120] to (0,-2.2) [out=60,in=120] to (3,-4);

\draw[thick,postaction = decorate, decoration = {markings, mark = at position 0.4 with {\arrow{>}}},
decoration = {markings, mark = at position 0.6 with {\arrow{>}}},
decoration = {markings, mark = at position 0.1 with {\arrow{>}}},
decoration = {markings, mark = at position 0.9 with {\arrow{>}}}] (0,3.5) to (0,-3.5);
\draw[thick, postaction = decorate, decoration = {markings, mark = at position 0.5 with {\arrow{>}}}](0,3.5) to [out=0, in=90] (1,3) [out=-90,in=30] to (0,2.2);
\draw[thick, postaction = decorate, decoration = {markings, mark = at position 0.5 with {\arrow{>}}}](0,3.5) to [out=180, in=90] (-1,3) [out=-90,in=150] to (0,2.2);
\draw[thick, postaction = decorate, decoration = {markings, mark = at position 0.5 with {\arrow{<}}}](0,-3.5) to [out=0, in=-90] (1,-3) to[out=90,in=-30](0,-2.2);
\draw[thick, postaction = decorate, decoration = {markings, mark = at position 0.5 with {\arrow{<}}}](0,-3.5) to [out=180, in=-90] (-1,-3)to[out=90, in=-150] (0,-2.2);
\node at (0.4,3.3) {$\i c_{\l}$};\node at (0.35,1.4) {$\i c_{\r}$};\node at (0.35,2.1) {$\i d$};
\node at (0.4,-3.2) {$-\i c_{\l}$};\node at (0.4,-1.2) {$-\i c_{\r}$};\node at (0.4,-2.1) {$-\i d$};
\node at (1.3,3.) {$L_7$}; \node at (-1.3, 3.) {$L_5$};\node at (1.3,-3.) {$L_8$}; \node at (-1.3, -3.) {$L_6$};
\draw[thick, postaction = decorate, decoration = {markings, mark = at position 0.5 with {\arrow{>}}}](-3,1) to [out = 0, in =-110] (0,2.2);
\draw[thick, postaction = decorate, decoration = {markings, mark = at position 0.5 with {\arrow{<}}}](3,1) to[out = 180, in =-70] (0,2.2);
\draw[thick, postaction = decorate, decoration = {markings, mark = at position 0.5 with {\arrow{>}}}](-3,-1) to[out = 0, in =110] (0,-2.2);
\draw[thick, postaction = decorate, decoration = {markings, mark = at position 0.5 with {\arrow{<}}}](3,-1) to[out = 180, in =70] (0,-2.2);
\node at (2,0.7) {$L_1$};\node at (-2,0.7) {$L_1$};\node at (2,-0.7) {$L_2$};\node at (-2,-0.7) {$L_2$};
\node at (-1,0.3) {$\Omega_1$};\node at (1,0.3) {$\Omega_1$};
\node at (-1,-0.3) {$\Omega_2$};\node at (1,-0.3) {$\Omega_2$};
\node at (0.0,3.9) {$\Omega_3$};\node at (0.0,-3.9) {$\Omega_4$};
\node at (0.5, 2.9){$\Omega_7$};\node at (-0.5, 2.9){$\Omega_5$};
\node at (0.5, -2.8){$\Omega_8$};\node at (-0.5, -2.8){$\Omega_6$};
\draw[thick, dashed] (-3,0) to (3,0);
\end{tikzpicture}
\centering{(b)}
\end{minipage}
 \caption{(a) Distribution of signs of  $\Im g =0$ for $c_{\r}^2-\frac{c_{\l}^2}{2}<\xi<\frac{c_{\l}^2}{3}+\frac{c_{\r}^2}{6}.$
 (b) Contour   deformation of  the  RH  problem   for $X$.}
\label{Fig_Elliptic}
\end{figure}
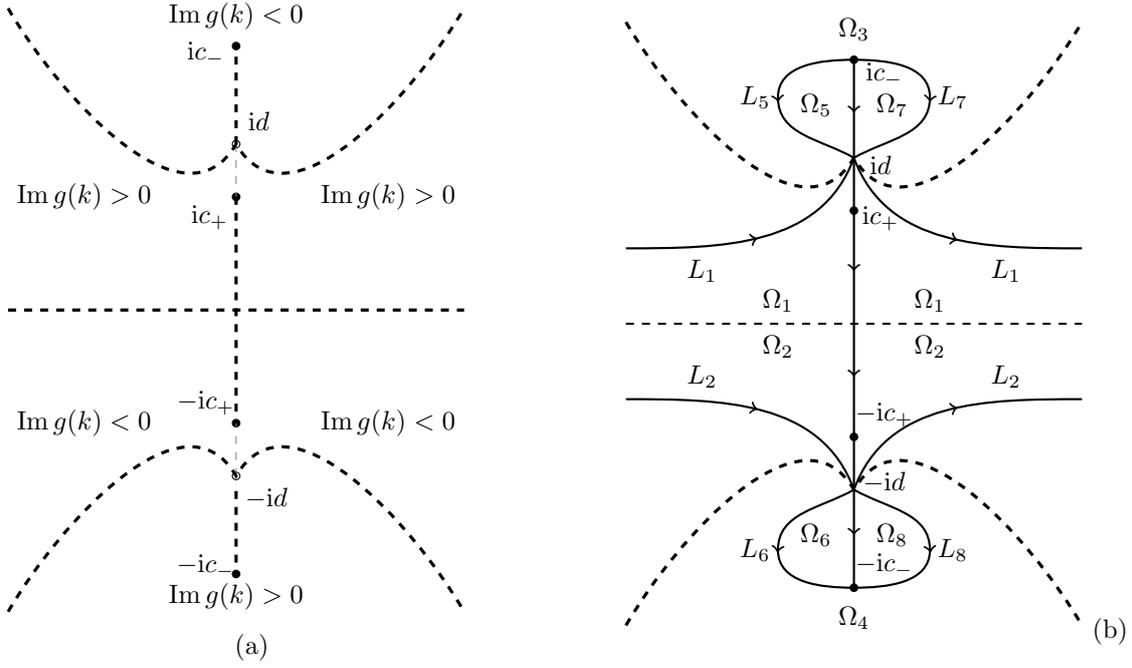
Our first step in the asymptotic analysis is to take care of the discrete spectrum.
We introduce the function $T(k,\xi)$ defined as
\begin{equation}
\label{T}
\begin{split}
&T(k,\xi)=\widetilde{T}(k)H(k),\\
&  \widetilde{T}(k)=\prod\limits_{\substack{\kappa_j\,|{\Im g(\kappa_j,\xi)<0} \\ \Re \kappa_j>0, \Im \kappa_j>0}}\left(\dfrac{k-\ol{\kappa_j}}{k-\kappa_j}\dfrac{k+\kappa_j}{k+\ol\kappa_j}\right)
\prod\limits_{\substack{\kappa_j \,|{\Im g(\kappa_j,\xi)<0}\\\Im \kappa_j>0\\ \Re \kappa_j=0}}\left(\dfrac{k-\ol{\kappa_j}}{k-\kappa_j}\right)
\end{split}
\end{equation}
where  $H(k)$ is analytic in $\C\backslash [ic_{\l},-ic_{\l}]$  and $H(k)= 1+O(k^{-1})$ as $|k|\to\infty$. Further properties  of $T(k)$   and $H(k)$  will be determined later.

 Then we define   the first transformation 

\begin{equation}
\label{Y_tilde}
\begin{split}
&\widetilde{Y}(\xi,t;k)=Y(\xi,t;k)T^{-\sigma_3}(\xi,t;k)\times\\
&\times\begin{cases}
\begin{pmatrix}1 & \frac{-\e^{-2\i t g(k,\xi)}}{\left[\frac{\i\nu_j}{T^2(k)(k-\kappa_j)}\right]} \\
\left[\frac{\i\nu_j}{T^2(k)(k-\kappa_j)}\right]\e^{2\i t g(k,\xi)} & 0 \end{pmatrix},\;
|k-\kappa_j|<\varepsilon,\quad \Im  g(\kappa_j,\xi)\leq-\delta\\
\begin{pmatrix}1 & \frac{-\e^{-2\i t g(k,\xi)}}{\left[\frac{\i\,\ol{\nu_j}}{T^2(k)(k+\ol\kappa_j)}\right]} \\
\left[\frac{\i\,\ol{\nu_j}}{T^2(k)(k+\ol\kappa_j)}\right]\e^{2\i t g(k,\xi)} & 0 \end{pmatrix},\;
|k+\ol\kappa_j|<\varepsilon,\; \Im g(-\ol\kappa_j,\xi)\leq-\delta\\
\begin{pmatrix} 0 & \left[\frac{\i\,\ol{\nu_j}\,T^2(k)}{(k-\ol\kappa_j)}\right]\e^{-2\i t g(k,\xi)}\\
\e^{2\i t g(k,\xi)}{\left[\frac{(k-\ol{\kappa_j})}{-\i\,\ol{\nu_j}\,T^2(k)}\right]} & 1
\end{pmatrix},\;  |k-\ol\kappa_j|<\varepsilon,\ \Im g(\ol{\kappa_j},\xi)\geq\delta,
\\
\begin{pmatrix} 0 & \left[\frac{\i\nu_jT^2(k)}{(k+\kappa_j)}\right]\e^{-2\i t g(k,\xi)}\\
\e^{2\i t g(k,\xi)}{\left[\frac{(k+\kappa_j)}{-\i\nu_j\,T^2(k)}\right]} & 1
\end{pmatrix},\;  |k+\kappa_j|<\varepsilon,\ \Im g(-\kappa_j,\xi)\geq\delta,
\\
I, \textrm{ elsewhere.}
\end{cases}
\end{split}
\end{equation}
We assume $\varepsilon$ sufficiently small so that $ \Im  g(\kappa,\xi)<-\frac{\delta}{2}$ for $|k-\kappa_j|\leq\varepsilon$ and similarly for the other cases.
In this way we are taking control of the exponentially big terms in the jump matrix related to all points  of the discrete  spectrum except for those  $\kappa_j$  for which 
{\color{black}    $\Im g(\kappa_j,\xi)=0,$   for some $\xi$ with $\frac{-c_-^2}{2}+c_+^2 < \xi<\frac{c_-^2}{3}+\frac{c_+^2}{6}$.  Following the   discussion in the introduction, this is possible only for points of the  spectrum corresponding to breathers.

 The jump matrix $J_{\widetilde{Y}}$ (  associated to the RH problem $\widetilde{Y}_- = \widetilde{Y}_+ J_{\widetilde{Y}}$), is given by

 \begin{equation}
 \label{J_tilde}
J_{\widetilde{Y}}(\xi,t;k)=\begin{cases}
\begin{cases}
\begin{pmatrix} 1 & \frac{T^2(k,\xi)(k-\kappa_j)}{{\i\nu_j\, \e^{2\i t g(k,\xi)}} } \\ 0 & 1  \end{pmatrix}\quad\mbox{ if  $\Im g(k)<-\frac{\delta}{2}$  and}\\
\begin{pmatrix} 1 & 0 \\ \frac{\i\nu_j\, \e^{2\i t g(k,\xi)}}{(k-\kappa_j)\,T^2(k,\xi)}&1\end{pmatrix}
\quad\mbox{ if  $\Im g(k)>\frac{\delta}{2}$}, \;  k\in C_j;
\end{cases}\\
\begin{cases}
\begin{pmatrix} 1 & \frac{T^2(k,\xi)(k+\ol{\kappa_j})}{{\i\,\ol{\nu_j}\, \e^{2\i t g(k,\xi)}} } \\ 0 & 1  \end{pmatrix}\quad
\mbox{ if  $\Im g(k)<-\frac{\delta}{2}$  and}\\
\begin{pmatrix} 1 & 0 \\ \frac{\i\,\ol{\nu_j}\, \e^{2\i t g(k,\xi)}}{(k+\ol{\kappa_j})\,T^2(k,\xi)}&1\end{pmatrix}
\quad \mbox{ if }\Im g(k)>\frac{\delta}{2},\;\;k\in -\ol{C_j};
\end{cases}\\
\begin{cases}
\begin{pmatrix} 1 & 0\\
\frac{(k-\ol{\kappa_j})\e^{2\i t g(k,\xi)}}{\i\,\ol{\nu_j}\,T^2(k,\xi)}&1\end{pmatrix}\quad \mbox{ if  $\Im g(k)>\frac{\delta}{2}$  and }\\
\begin{pmatrix}1 &  
\frac{\i\,\ol{\nu_j}\,T^2(k,\xi)}{(k-\ol{\kappa_j})\e^{2\i t g(k,\xi)}}
\\0&1\end{pmatrix}\quad\mbox{ if  $\Im g(k)<-\frac{\delta}{2}$,}
\;\;\;k\in\ol{C}_j;
\end{cases}\\
\begin{cases}
\begin{pmatrix}1&0\\\frac{(k+{\kappa_j})\e^{2\i t g(k,\xi)}}{\i {\nu_j}T^2(k,\xi)}&1\end{pmatrix}\quad\mbox{ if $\Im g(k)>\frac{\delta}{2}$ and}\\
\begin{pmatrix}1 &  
\frac{\i\,{\nu_j}\,T^2(k,\xi)}{(k+{\kappa_j})\,\e^{2\i t g(k,\xi)}}
\\0&1\end{pmatrix}
\quad \mbox{ if $\Im g\(k\)<-\frac{\delta}{2}$}, \;\; k\in-{C}_j;
\end{cases}\\
\\
T_+^{\sigma_3}(k,\xi)J_Y(k)T_-^{-\sigma_3}(k,\xi),\quad\mbox{elsewhere.}
\end{cases}
\end{equation}

 It is clear from the form of the above jumps, that the matrix $J_{\widetilde{Y}}$ will be exponentially close to the identity  as $t\to\infty$ on the circles $\pm C_j$ and $\pm \ol{C}_j$ for all those   points  $\kappa_j$  of the discrete spectrum 
 for which $\Im g(\kappa_j,\xi)\neq 0$ when $\xi $ is in the region $\frac{-c_-^2}{2}+c_+^2 < \xi<\frac{c_-^2}{3}+\frac{c_+^2}{6}$. 

The next  step is to  take care of the continuous spectrum  on the real axis.  As a first step, we   reduce the jump $J_{\widetilde{Y}}(t,\xi,k)$  for $k\in\mathbb{R}\backslash\{0\}$ to a matrix exponentially close to the identity.
   For the purpose it  is  sufficient to  factorise  the matrix $J_{\widetilde{Y}}(t,\xi,k)$ to the form
  \begin{align*}
  J_{\widetilde{Y}}(\xi,t;k)&=\begin{pmatrix}1 & -\overline{r(k)}T^2(k)\e^{-2itg(k,\xi)}\\-\dfrac{r(k)}{T^2(k)}\e^{2\i tg(k,\xi)} & 1+|r(k)|^2\end{pmatrix},\quad k\in\mathbb{R}\setminus\left\{0\right\}\\
  &=\begin{pmatrix}1 & 0\\-\dfrac{r(k)}{T^2(k)}\e^{2\i tg(k,\xi)} & 1\end{pmatrix}\begin{pmatrix}1 & -\overline{r(k)}T^2(k)\e^{-2itg(k,\xi)}\\0 & 1\end{pmatrix}.
  &\end{align*}
Then  using Deift-Zhou contour deformation  method  we introduce the new matrix $X,$ 
\begin{equation*}
X(\xi,t;k)=\left\{
\begin{array}{ll}
&\widetilde{Y}(\xi,t;k)\begin{pmatrix}1 & 0\\-\dfrac{r(k)}{T^2(k)}\e^{2\i tg(k,\xi)}&1\end{pmatrix},\quad  k\in\Omega_1,\\
\\
&\widetilde{Y}(\xi,t;k)
\begin{pmatrix}1 & \overline{r(\ol{k})}T^2(k)\e^{-2itg(k,\xi)}\\0 & 1\end{pmatrix},\quad k\in\Omega_2,\\
&\widetilde{Y}(\xi,t;k),\quad\mbox{elsewhere,}
\end{array}
\right.
\end{equation*}
where the regions $\Omega_1$ and $\Omega_{2}$ are specified   in Figure~\ref{Fig_Elliptic} and do not contain points of the discrete spectrum.
In this way  we can reduce the jump of $X(k)$  on  $\mathbb{R}\backslash\{0\}$  to identity, while the jumps of $X(k)$ on the lines $L_{1,2}\backslash U_{\pm \i d}$
 are exponentially close to the identity, where $U_{\pm \i d}$ is a small neighbourhood of the point $\pm \i d$.  The jumps $J_X(k)$ for  $k\in L_{1,2}\cap U_{\pm \i d}$ give the subleading  contribution to the asymptotics, analysis.  \\
 We still need to determine the functions $g(k,\xi)$ and $F(k)$.
  The remaining jumps of the matrix $X(k)$ are  obtained using the identities  from Lemma \ref{lem_abr}, and take the following  form:
\begin{equation}
\label{J_X0}
J_X(\xi,t;k)=\begin{cases}
T^{\sigma_3}_+(k)\begin{pmatrix}\e^{-\i t(g_+-g_-)}&0\\f(k)\e^{\i t(g_++g_-)} & \e^{\i t(g_+-g_-)}\end{pmatrix}T^{-\sigma_3}_-(k)\,\quad k\in(\i c_{\l},\i d),\\
T^{\sigma_3}_+(k) \e^{-\i t (g_+-g_-)\sigma_3} T^{-\sigma_3}_-(k),\;\; k\in(id,ic_{\r})\cup (-ic_{\r},-id),\\
T^{\sigma_3}_+(k)\begin{pmatrix} 0 &\i  \e^{-\i t(g_++g_-)} \\ \i \e^{\i t(g_++g_-)}&0\end{pmatrix}T^{-\sigma_3}_-(k),\quad k\in(\i c_{\r},-ic_{\r}),\\
T^{\sigma_3}_+(k)\begin{pmatrix}\e^{-\i t(g_+-g_-)}&-\overline{f(\ol{k})}\e^{-\i t(g_++g_-)}\\0& \e^{\i t(g_+-g_-)}\end{pmatrix}T^{-\sigma_3}_-(k),k\in(-\i d,-\i c_{\l}).
 \end{cases}
 \end{equation}
  
We require that the above matrix $J_X$  has oscillatory   diagonal terms and non oscillatory  off-diagonal terms as $t\to\infty$.
Therefore we need to  require  that 
\begin{equation}
\begin{split}
\label{eq_g1}
&g_+(k)+g_-(k)=0,\quad k\in[\i c_-,\i d]\cup[\i c_+,-\i c_+]\cup [-\i d,-i c_-], \\
&g_+(k)-g_-(k)\in\R,\quad k\in[\i c_-,-\i c_-],\\
& g_+(k)-g_-(k)=-B(\xi),\quad k\in (\i d,\i c_+)\cup (-\i c_+,-\i d), 
\end{split}
\end{equation}
where $B=B(\xi)$ is independent from $k$.  Furthermore, for reasons that will become clear later,  we chose the scalar function $T(k)$ in such a way that
\begin{equation}
\label{RH_T1}
T_+(k)T_-(k)=1,\quad k\in [\i  c_+,-\i c_+].
\end{equation}
Then the above jump matrices are reduced to the form
\begin{equation}
\label{J_X}
J_X(\xi,t;k)=\left\{
\begin{array}{ll}
&\begin{pmatrix} \frac{T_+(k)\e^{-\i t (g_+-g_-)}}{T_-(k)}  & 0 \\ \dfrac{f(k)}{T_-(k)T_+(k)} & \frac{T_-(k)\e^{\i t(g_+-g_-)}}{T_+(k)} \end{pmatrix},\,\quad k\in(\i c_{-},\i d),\\
&\begin{pmatrix} \frac{T_+(k)\e^{\i t B(\xi)}}{T_-(k)}  & 0 \\ 0 & \frac{T_-(k)\e^{-\i t B(\xi)}}{T_+(k)} \end{pmatrix},\quad k\in(\i d,\i c_+)\cup (-\i c_{\r},-\i d)\\
&\begin{pmatrix} 0 &\i   \\ \i &0\end{pmatrix},\quad k\in(\i c_{\r},-ic_{\r}),\\
&\begin{pmatrix} \frac{T_+(k)\e^{-\i t (g_+-g_-)}}{T_-(k)}  & -\overline{f(\ol{k})}T_-(k)T_+(k)\\0& \frac{T_-(k)\e^{\i t(g_+-g_-)}}{T_+(k)} \end{pmatrix},\quad k\in(-\i d,-\i c_{\l}).\\
  \end{array}\right.
  \end{equation}

\subsubsection{Determination of   the scalar functions $T(k)$  and $g(k)$.}
In this subsection we determine the scalar function $T(k)$ and we derive the expression for the function $g(k)$  that satisfies \eqref{eq_g1} and \eqref{gthetaas}.

The function $T(k)$  satisfies the relation (\ref{RH_T1}) and  \eqref{RH_T2}. We still need to add some assumptions on the boundary values of $T(k)$ in the interval
$(\i d,\i c_+)$ and $(-\i c_+,-\i d)$. In order to obtain a constant jump  matrix $J_X$  in (\ref{J_X})  we assume that $T_+(k)=T_-(k)\e^{\i \Delta}$ for $k\in(\i d,\i c_+)\cup (-\i c_+,-\i d)$, where the function $\Delta=\Delta(\xi)$   will be independent from $k$ and needs to be determined.
Using the relations (\ref{RH_T2})  we   finally have  the following RH problem for the function $T(k):$
 \begin{equation}
\nonumber
 \begin{split}
& T_+(k)T_-(k)=\dfrac{1}{|a(k)|^2}\quad k\in (\i c_{\l},\i d),\\
& T_+(k)T_-(k)=|a(k)|^2\quad k\in \cup (-\i d,-\i c_{\l}),\\
&T_+(k)=T_-(k)\e^{\i \Delta(k)},\quad k\in(\i d,\i c_{\r})\cup (-\i c_{\r},-\i d),\\
&T_+(k)T_-(k)=1,\quad k\in (\i  c_{\r},-\i c_{\r}),\\
&T(k)=1+O(k^{-1}),\quad \mbox{as}\;|k|\to \infty.
\end{split}
\end{equation}
The solution is obtained passing to the logarithm and using the Plemelj formula. Let us introduce
\begin{equation}
\label{R}
R(k)=\sqrt{(k^2+c_{\l}^2)(k^2+d^2)(k^2+c_{\r}^2)},
\end{equation}
where $R(k)$ is analytic in $\C\backslash [\i c_{\l},\i d]\cup[\i c_{\r},-\i c_{\r}]\cup [-\i d,-i c_{\l}]$   and real positive for $k=+0$, where $+0$ means the limit to $0$ from the right. Then the expression
\begin{equation}\label{T_ell}
\begin{split}
&T(k)=\widetilde{T}(k)
\exp\left[\frac{R(k)}{2\pi\i}\left\{\left( \int\limits_{\i c_{\l}}^{\i d}+\int\limits^{-\i c_{\l}}_{-\i d}+\int\limits^{-\i c_{\r}}_{\i c_{\r}}\right)\dfrac{-\log\widetilde{T}^2(s)}{(s-k)R_+(s)}ds\right.\right.\\
&+\left.\left.\left( \int\limits^{-\i c_{\l}}_{-\i d}-\int\limits_{\i c_{\l}}^{\i d}\right)\frac{\ln|a(s)|^2}{(s-k)R_+(s)}ds+
\left(\int\limits_{\i d}^{\i c_{\r}}+\int\limits_{-\i c_{\r}}^{-\i d}\right)\frac{\i\Delta\d s}{(s-k)R(s)}\right\}\right]
\end{split}
\end{equation}
solves the scalar RH problem for $T(k)$ with the quantity $\widetilde{T}(k)$ defined in \eqref{T}.
The requirement that $T(k)=1+O(k^{-1})$ as $|k|\to\infty$ determines $\Delta(\xi)$. Indeed we have, using the symmetry of the problem, that
\begin{equation}
\begin{split}
\label{Delta}
\Delta &= \frac{\int\limits^{\i c_{\l}}_{\i d}\frac{\ln(|a(s)|^2\widetilde{T}^2(s))s\d s}{R_+(s)}+\int\limits^{\i c_{\r}}_{0}\frac{\(\ln \widetilde{T}^2(s)\)s\d s}{R_+(s)}}{-\i\int\limits_{\i d}^{\i c_{\r}}\frac{s\d s}{R(s)}}\\
&=-\dfrac{\sqrt{c_-^2-c_+^2}}{K(m)}\left[\int\limits^{\i c_{\l}}_{\i d}\frac{\ln(|a(s)|^2\widetilde{T}^2(s))s\d s}{R_+(s)}+\int\limits^{\i c_{\r}}_{0}\frac{\(\ln \widetilde{T}^2(s)\)s\d s}{R_+(s)}
\right].
\end{split}
\end{equation}

The scalar function $g(k)$ satisfies the conditions (\ref{eq_g1}) and (\ref{gthetaas}). This implies that the function $g'(k)$  is analytic in $\C\backslash [\i c_{\l},-\i c_{\l}]$ and on the interval $[\i c_{\l},-\i c_{\l}]$ satisfies the conditions
\begin{equation}
\begin{split}
\label{eq_g2}
&g'_+(k)+g'_-(k)=0,\quad k\in(\i c_{\l},\i d)\cup(\i c_{\r},-\i c_{\r})\cup (-\i d,-i c_{\l})\\
&g'_+(k)-g'_-(k)=0,\quad k\in (\i d,\i c_{\r})\cup (-\i c_{\r},-\i d)\\
&g'(k)=\hat{\theta}'(k)+O(k^{-2})\quad \mbox{as $ |k|\to\infty$}.
\end{split}
\end{equation}
From the above conditions it follows that 
\begin{equation}
\label{eq_gp}
g'(k)=12\dfrac{P(k) }{R(k)},\quad P(k)=k^5+k^3(\xi+\frac{1}{2}(d^2+c_{\l}^2+c_{\r}^2))+bk
\end{equation}
where $R(k)$ is  defined in \eqref{R}. The constant $b$  in \eqref{eq_gp}  is determined by  requiring that the integral
\begin{equation*}
g(k)=\int_{\i c_{\l}}^kg'(s)ds
\end{equation*}
satisfies the third relation in (\ref{eq_g1}). This   immediately  implies that 
\begin{equation}
\label{g_zero1}
\int^{\i c_{\r}}_{\i d}g'(k)dk=0,\quad \int^{-\i c_{\r}}_{-\i d}g'(k)dk=0
\end{equation}
which gives $$b=\dfrac{1}{3}(c_{\l}^2c_{\r}^2+c_{\l}^2d^2+c_{\r}^2d^2)+(\xi-\frac{1}{6}(c_{\l}^2+c_{\r}^2+d^2))\left[c_{\l}^2-(c_{\l}^2-c_{\r}^2)\dfrac{E(m)}{K(m)}\right],$$
where $E(m)=\int_0^{\pi/2}\sqrt{1-m^2\sin^2\theta}d\theta$ and $K(m)=\int_0^{\pi/2}\frac{d\theta}{\sqrt{1-m^2\sin^2\theta}}$ are the complete elliptic integrals of the second and first kind respectively with modulus $m^2=\dfrac{d^2-c_{\r}^2}{c_{\l}^2-c_{\r}^2}$.
 We also have
\begin{equation}
\label{B}
\begin{split}
B(\xi)&=2\int\limits^{\i c_{\l}}_{\i d}g_+'(k)dk=12\int\limits_{-d^2}^{-c^2_-}\dfrac{s^2+s(\xi+\frac{1}{2}(d^2+c_{\l}^2+c_{\r}^2))+b}{\sqrt{(s+d^2)(s+c_{\l}^2)(s+c_{\r}^2)}}ds\\
&=-12\pi \dfrac{\sqrt{c_{\l}^2-c_{\r}^2}}{K(m)}(\xi-\dfrac{1}{6}(c_{\l}^2+c_{\r}^2+d^2))\in\R.
\end{split}
\end{equation}
Let us observe that
\begin{equation}
\label{BUV}
tB(\xi)=xU+tV
\end{equation}
where $U$ and $V$ have been defined in \eqref{UVh}.
 We still need to determine the  quantity $d$. This is obtained by requiring that $g'(k)|_{k=\pm \i d}=0$ that implies that the polynomial $P(k)$ in \eqref{eq_gp}  has a zero at $k=\pm \i d$, namely
 \begin{equation}\begin{split}
 \label{Whitham_d}
&P(\pm\i d)=\pm\i[d^5-d^3(\xi+\frac{1}{2}(d^2+c_{\l}^2+c_{\r}^2))+bd]=0\\
&\mbox{or    }12 \xi=W_2(c_{\r},d,c_{\l})\,,
 \end{split}
 \end{equation}
 where $W_2(\beta_1,\beta_2,\beta_3)$, $\beta_1\leq\beta_2\leq\beta_3$ has been defined in (\ref{Whitham_d0}).
 
 We observe that $W_2(\beta_1,\beta_2,\beta_3)$ is the speed of the Whitham modulation equations for MKdV derived in \cite{Driscol}.
 The relation with the speed  $V_2(r_1,r_2,r_3)$  of the Whitham modulation equations for KdV  \cite{W} is as follows:
 \begin{equation*}
 W_2(\beta_1,\beta_2,\beta_3)=V_2(\beta^2_1,\beta^2_2,\beta^2_3)\,.
 \end{equation*}
 In particular it was shown in \cite{Levermore} that the Whitham modulation equations for KdV are strictly hyperbolic and  satisfy the relation $\partial_{r_2}V_2(r_1,r_2,r_3)>0$ for $r_1<r_2<r_3$ which 
 implies that
 \begin{equation*}
 \dfrac{\partial}{\partial \beta_2}W_2(\beta_1,\beta_2,\beta_3)=2\beta_2 \dfrac{\partial}{\partial r_2}V_2(\beta_1^2,r_2,\beta_3^2)|_{r_2=\beta_2^2}>0,\quad \beta_2>0.
 \end{equation*}
 The above relation shows that the equation \eqref{Whitham_d} is invertible for $d$ as a function of $\xi$ only when $d>0$ or equivalently when $c_+>0$.
  Further comments about the case $c_->-c_+>0$ are given in Appendix~\ref{WhithamApp}.
 
 Using the properties of the elliptic functions \cite{Lawden}, we get that as $m\to 0$ 
 \begin{equation}
\label{elliptictrail}
K(m)=\dfrac{\pi}{2}\left(1+\dfrac{m^2}{4}+\dfrac{9}{64}m^4+O(m^6)\right),\;
E(m)=\dfrac{\pi}{2}\left(1-\dfrac{m^2}{4}-\dfrac{3}{64}m^4+O(m^6)\right),
\end{equation}
and as $m\to 1$
\begin{equation}
\label{ellipticlead}
E(m)= 1+\dfrac{1}{2}(1-m)\left[\log\dfrac{16}{1-m^2}-1\right](1+o(1)),
\quad K(m)= \dfrac{1}{2}\log\dfrac{16}{1-m^2}(1+o(1)).
\end{equation}
Using the above expansions  we have that
 \begin{itemize}
 \item as $m\to 0$ or $d\to c_+,$ 
 \begin{equation*}
 W_2(c_{\r},c_{\r},c_{\l})=-6c_{\l}^2+12c_{\r}^2 \end{equation*}
 and 
 \item  as $m\to 1$ or $d\to c_{\l},$ 
 \begin{equation*}
 W_2(c_{\r},c_{\l},c_{\l})=4c_{\l}^2+2c_{\r}^2
 \end{equation*}
 \end{itemize}
 which implies that 
 \begin{equation*}
 \frac{-c_{\l}^2}{2}+c_{\r}^2<\xi<\frac{c_{\l}^2}{3}+\frac{c_{\r}^2}{6}.
 \end{equation*}
 
 Summarizing,  the function $g(k)=g(k,\xi)$ takes the form
 \begin{equation}
 \label{g_final}
 \begin{split}
 g(k)&=12\int_{\i c_-}^k\dfrac{s^5+s^3(\xi+\frac{1}{2}(d^2+c_{\l}^2+c_{\r}^2))+bs }{\sqrt{(s^2+c_-^2)(s^2+c_+^2)(s^2+d^2)}}ds,\\
  b&=\dfrac{1}{3}(c_{\l}^2c_{\r}^2+c_{\l}^2d^2+c_{\r}^2d^2)+(\xi-\frac{1}{6}(c_{\l}^2+c_{\r}^2+d^2))\left[c_{\l}^2-(c_{\l}^2-c_{\r}^2)\dfrac{E(m)}{K(m)}\right]\,.
 \end{split}
 \end{equation}
 The parameter $d=d(\xi)$ is determined by \eqref{Whitham_d}. The above 
 derivation  for the function $g(k)$  is equivalent to the one obtained in \cite[page 11]{KM2} and written in \eqref{g_ell0}  with $d=d(\xi)$ as in \eqref{d_mu_hyper}.
 The signature of $\Im g(k)$ is given in Figure~\ref{Fig_Elliptic}.

 \begin{remark}
 Signature of $\Im g(k)$ can be constructed from the following considerations.
 Let us look at a level line $\Im g(k) = const.$ It can be parametrised  as  $k = k(s),$ where $s\in\mathbb{R},$ with $|k'(s)| = 1.$  The function $g(k(s))$ equals
 \begin{equation*}
 g(k(s)) = g(k(s_0))+\int^{s}_{s_0} g'(k(\sigma)) k'(\sigma) \d \sigma.
 \end{equation*} 
 Since $\Im g(k(s)) = const,$ hence 
 \begin{equation*}k'(s) = \dfrac{\ol{ g'(k(s))}}{|g'(k(s))|}.\end{equation*}
 This immediately gives us the direction of the level line for every point $k,$ and there is  exactly one line passing through every such a point  as long as  $g'(k)\neq 0$.
 Except for those regular points, we have several singular points, where $g'(k)$ is either 0 or infinite.
 {\color{black} For example, near the point $k=\pm \i d$ one has  $g(k)=\int ^{\i d}_{\i c_-}g'(s)ds+\int_{\i d}^kg'(s)ds$  so that 
 \begin{equation*}
\Im  g(k)\sim \Im \frac{2}{3}(k-\i d)^{\frac{3}{2}}e^{\frac{\i \pi}{4}}\times \mbox{const},\mbox{ as $k\to \i d$}
 \end{equation*}
 where the constant is real. So the lines where $\Im g(k)=0$ coming out of the point $k=\i d$ have argument $\varphi=-\frac{\pi}{6}+\frac{2}{3}n\pi$ with $n\in\mathbb{Z}$.  This corresponds to three different lines  with angles $\frac{\pi}{2},$ $-\frac{\pi}{6},$ $\frac{7\pi}{6}.$
 In a similar way there are three lines emerging from the point $k=-\i d$.}
%

Also,  there are 6 rays $\Im g = const,$  converging to the point $k\to\infty,$  along directions $\pi k/3,$ $k =0,1,2,3,4,5.$
These rays consist of the real axis and the two ray emerging from the points $\pm \i d$ (see  Figure~\ref{Fig_Elliptic}).

%
 \end{remark}

\subsubsection{Opening of the lenses}\label{lenses}
With our choice of the functions $T(k)$ and $g(k,\xi)$, the jump $J_X$ in \eqref{J_X}  reduces to the form
\begin{equation}
\label{J_X1}
J_X(\xi,t;k)=\left\{
\begin{array}{ll}
&\begin{pmatrix} \frac{T_+(k)\e^{-\i t (g_+(k)-g_-(k))}}{T_-(k)}  & 0 \\ \dfrac{f(k)}{T_-(k)T_+(k)} & \frac{T_-(k)\e^{\i t(g_+(k)-g_-(k))}}{T_+(k)} \end{pmatrix},\,\quad k\in(\i c_{-},\i d),\\
&\\
&e^{\i (t B(\xi)+\Delta(\xi))\sigma_3},\quad k\in(\i d,\i c_+)\cup (-\i c_{\r},-\i d),\\
&\\
&\begin{pmatrix} 0 &\i   \\ \i &0\end{pmatrix},\quad k\in(\i c_{\r},-ic_{\r}),\\
&\\
&\begin{pmatrix} \frac{T_+(k)\e^{-\i t (g_+(k)-g_-(k))}}{T_-(k)}  & -\overline{f(\ol{k})}T_-(k)T_+(k)\\0& \frac{T_-(k)\e^{\i t(g_+(k)-g_-(k))}}{T_+(k)} \end{pmatrix},\quad k\in(-\i d,-\i c_{\l}).\\
  \end{array}\right.
\end{equation}
Finally  we  apply Deift-Zhou steepest descend method   to get rid of the highly oscillatory terms in $k$ in the  diagonal  exponents of the above matrix.
 For the purpose we open lenses in the intervals $(\i c_{\l},\i d)$ and $(-\i d,-\i c_{\l})$.
We first  need to define the analytic extension of the function $f(k)=\frac{\i}{a_+(k)a_-(k)}$  to a neighbourhood of the interval $[\i c_{\l},-\i c_{\l}]$.  Recalling that $a_-(k)=-\i \ol{b(\ol{k})}$ for $k\in (\i c_-,\i c_+)\cup (-\i c_+,-\i c_-)$ we define
\begin{equation}
\label{hatf}\widehat f(k) = \frac{-1}{a(k)\ol{b\(\ol{k}\)}}=\frac{1+r(k)\ol{r\(\ol{k}\)}}{-\ol{r\(\ol{k}\)}},\qquad k\in U_{\sigma}([\i c_-,-\i c_-]).
\end{equation}
Then  it is immediate to verify that 
\begin{equation*}
\widehat f_+(k)=f(k),\quad \widehat f_-(k)=-f(k),\quad k\in [\i c_{\l},\i c_{\r}]\cup [-\i c_{\r},-\i c_{\l}].
\end{equation*}
 In order to get a factorization of the matrices $J_X(x,t;  k)$ for $k \in (\i c_{\l},\i d)\cup (-\i d,-\i c_{\l})$ we assume that
 \begin{equation}
 \label{RH_T2}
 \begin{split}
& T_+(k)T_-(k)=-\i \widehat f_+(k)=-\i f(k),\quad k\in (\i c_{\l},\i d),\\
& T_+(k)T_-(k)=-\dfrac{\i}{\ol {f(\ol k)} },\quad k\in \cup (-\i d,-\i c_{\l}).
\end{split}
\end{equation}
 In this way we can factorize
 \begin{equation}\label{fac_1}J_{X}(\xi,t;k)=\begin{pmatrix}1 & \frac{T_+^2(k)\e^{-2\i t g_+(k)}}{\widehat{f}_+(k)} \\ 0 & 1\end{pmatrix}
\begin{pmatrix}0&\i\\\i&0\end{pmatrix}
\begin{pmatrix}1 & \frac{-T_-^2(k)\e^{-2\i t g_-(k)}}{\widehat{f}_-(k)} \\ 0 & 1\end{pmatrix},
\quad k\in\(\i c_{\l},\i d\),
\end{equation}
\begin{equation}\label{fac_2}J_{X}(\xi,t;k) = \begin{pmatrix}1&0\\\frac{-\e^{2\i t g_+(k)}}{\ol{\widehat{f}_+\(\ol{k}\)}T_+^2(k)} & 1\end{pmatrix}\begin{pmatrix}0&\i\\\i&0\end{pmatrix}
\begin{pmatrix}1&0\\\frac{\e^{2\i t g_-(k)}}{\ol{\widehat{f}_-\(\ol{k}\)}T_-^2(k)} & 1\end{pmatrix},
\quad k\in(-\i d,-\i c_{\l}).\end{equation}
Then we proceed with contour deformation  and we define a new matrix $W(k)$  as
\begin{equation*}
W(k)=X(k)\begin{cases}
\begin{pmatrix}1 & \frac{T^2(k)\e^{-2\i t g(k)}}{\widehat{f}(k)} \\ 0 & 1\end{pmatrix},\quad k\in\Omega_5\cup\Omega_7,\\
\begin{pmatrix}1&0\\\frac{-\e^{2\i t g(k)}}{\ol{\widehat{f}\(\ol{k}\)}T^2(k)} & 1\end{pmatrix},\quad k\in\Omega_6\cup\Omega_8,\\
I\quad \mbox{elsewhere.}
\end{cases}
\end{equation*}
We obtain that the jumps for $W(\xi,t;k)$ are
\color{black}{\begin{equation}
\label{J_W}
J_W(\xi,t;k)=
\begin{cases}
\begin{pmatrix}1 & \frac{T^2(k)\e^{-2\i t g(k)}}{\widehat{f}(k)} \\ 0 & 1\end{pmatrix},\quad k\in L_7,
\qquad
\begin{pmatrix}1 & \frac{T^2(k)\e^{-2\i t g(k)}}{-\widehat{f}(k)} \\ 0 & 1\end{pmatrix},\quad k\in L_5,\\
\begin{pmatrix}1&0\\\frac{-\e^{2\i t g(k)}}{\ol{\widehat{f}(k)\(\ol{k}\)}T^2(k)} & 1\end{pmatrix},\quad k\in L_8,
\qquad
\begin{pmatrix}1&0\\\frac{\e^{2\i t g(k)}}{\ol{\widehat{f}(k)\(\ol{k}\)}T^2(k)} & 1\end{pmatrix},\quad k\in L_6,\\
\begin{pmatrix}0&\i\\\i&0\end{pmatrix}
\,\quad k\in(\i c_{\l},\i d)\cup (\i c_{\r},-\i c_{\r})\cup(-\i d, -\i c_{\l}),\\
e^{\i (t B(\xi)+\Delta(\xi))\sigma_3},\quad k\in(\i d,\i c_+)\cup (-\i c_{\r},-\i d), 
\end{cases}
\end{equation}
}
and for those values of   $\xi$   for which  $-\delta<\Im g(\kappa_{\ell},\xi)<\delta$  for some point $\kappa_\ell$ of the discrete spectrum
$$\mathrm{Res}_{\kappa_{\ell}}W(\xi,t;k)=\lim\limits_{k\to \kappa_{\ell}}W(\xi,t;k)\begin{pmatrix}0&0\\ \frac{ \i\nu_\ell}{T^2(k)} \e^{2\i t  g(k,\xi)} & 0\end{pmatrix},$$
$$\mathrm{Res}_{-\ol{\kappa}_{\ell}}W(\xi,t;k)=\lim\limits_{k\to -\ol{\kappa}_{\ell}}W(\xi,t;k)\begin{pmatrix}0&0\\\frac{ \i\,\widebar{\nu}_\ell }{T^2(k)} \e^{2\i t  g(k,\xi)} & 0\end{pmatrix},$$
$$\mathrm{Res}_{\ol{\kappa}_{\ell}}W(\xi,t;k)=\lim\limits_{k\to \ol{\kappa}_{\ell}}W(\xi,t;k)\begin{pmatrix}0&\i\,\widebar{\nu_\ell}T^2(k)\e^{-2\i t  g(k,\xi)}\\0&0\end{pmatrix},$$
$$\mathrm{Res}_{-\kappa_{\ell}}W(\xi,t;k)=\lim\limits_{k\to -\kappa_{\ell}}W(\xi,t;k)\begin{pmatrix}0&\i\nu_\ell T^2(k)\e^{-2\i t g(k,\xi)}\\0&0\end{pmatrix}.$$

Because of the signature of $\Im g(k)$ given in Figure~\ref{Fig_Elliptic}, we have that the  matrix $J_W$ in \eqref{J_W} is exponentially close to the identity on $L_5\cup L_7\backslash\{ U_{\i d}\cup U_{\i c_{\l}}\}$ and  on $L_6\cup L_8\backslash \{U_{-\i d}\cup U_{-\i c_{\l}}\}$,
while on the  contours $L_5\cup L_7\cap \{U_{\i d}\cup U_{\i c_{\l}}\}$ and  on $L_6\cup L_8\cap\{U_{-\i d}\cup U_{-\i c_{\l}}\}$ the matrix $J_W$ is close to the identity  but not uniformly. 
A detailed analysis of the error term arising in this case has been obtained in a similar setting for the MKdV equation with $c_+=0$ \cite[Theorem 2]{BM} and also for the KdV equation in \cite{GGM} where it was shown that the error is of order $\mathcal{O}(t^{-1})$.

We arrive at the model problem for  the matrix $P^{\infty}(k)$. 
{\color{black}
\begin{RHP}\label{RH_problem_W2}
 Find a $2\times 2$ matrix-valued  function $P^{\infty}(\xi,t;k)$ analytic in $k\in\C\backslash [\i c_-,-\i c_-]$ such that
\begin{enumerate}
\item $P^{\infty}_-(\xi,t;k)=P^{\infty}_+(\xi,t;k)J_{P^{\infty}}(\xi,t;k),$   for $ k\in (\i c_-,-\i c_-)$
with 
\begin{equation}
\label{RH_W2}
J_{P^{\infty}}(\xi,t;k)=\left\{
\begin{array}{ll}
\begin{pmatrix}0&\i\\\i&0\end{pmatrix},
\,\quad k\in(\i c_{-},\i d)\cup (\i c_+,-\i c_+)\cup(-\i d, -\i c_-),\\
 \e^{\i( t B(\xi)+ \Delta(\xi))\sigma_3},\quad k\in(\i d,\i c_+)\cup (-\i c_+,-\i d);
\end{array}\right.
\end{equation}
with $B(\xi)$ and $\Delta(\xi)$ defined in \eqref{B}   and \eqref{Delta} respectively;
\item  for those values of   $\xi$   for which  $-\delta<Im\, g(\kappa_\ell,\xi)<\delta$, $0<\delta\ll1$, 
 $$\mathrm{Res}_{\kappa_{\ell}}P^{\infty}(\xi,t;k)=\lim\limits_{k\to \kappa_{\ell}}P^{\infty}(\xi,t;k)\begin{pmatrix}0&0\\ \frac{ \i\nu_\ell}{T^2(k)} \e^{2\i t  g(k,\xi)} & 0\end{pmatrix},$$
$$\mathrm{Res}_{-\ol{\kappa}_{\ell}}P^{\infty}(\xi,t;k)=\lim\limits_{k\to -\ol{\kappa}_{\ell}}P^{\infty}(\xi,t;k)\begin{pmatrix}0&0\\\frac{ \i\,\widebar{\nu_\ell} }{T^2(k)} \e^{2\i t  g(k,\xi)} & 0\end{pmatrix},$$
$$\mathrm{Res}_{\ol{\kappa}_{\ell}}P^{\infty}(\xi,t;k)=\lim\limits_{k\to \ol{\kappa}_{\ell}}P^{\infty}(\xi,t;k)\begin{pmatrix}0&\i\,\widebar{\nu_\ell}T^2(k)\e^{-2\i t  g(k,\xi)}\\0&0\end{pmatrix},$$
$$\mathrm{Res}_{-\kappa_{\ell}}P^{\infty}(\xi,t;k)=\lim\limits_{k\to -\kappa_{\ell}}P^{\infty}(\xi,t;k)\begin{pmatrix}0&\i\nu_\ell T^2(k)\e^{-2\i t g(k,\xi)}\\0&0\end{pmatrix};$$
\item $P^{\infty}(\xi,t;k)\to I\quad \mbox{as}\;k\to\infty$;
\item $P^{\infty}(\xi,t;k)$  has at most fourth root singularities at the  points $\pm \i c_-$, $\pm \i c_+$ and $\pm \i d$.
\end{enumerate}
\end{RHP}
Then the quantity
\begin{equation}
\label{breather_ell}
q(x,t)=\lim_{k\to\infty}2ik P^{\infty}_{12}(\xi,t;k)=\lim_{k\to\infty}2ikP^{\infty}_{21}(\xi,t;k),\quad \mbox{ where } \xi=\frac{x}{12t},
\end{equation}
approximates the MKdV solution for large times.   The   solution of the RH  problem \ref{RH_problem_W2},  excluding the pole condition 2,
  has been considered in \cite{KM2} where it was solved in terms of hyperelliptic theta-functions defined on the Jacobi variety of the surface 
$\Gamma:=\{(k,y)\in\C^2\;|\; y^2=R^2(k)\}$.  In Theorem~\ref{theorem_periodic}  we showed that the MKdV solution  obtained from   the RH problem \ref{RH_problem_W2} without the pole condition 2  corresponds to the travelling wave solution 
\eqref{periodic_intro}. In particular we obtain  that for $V_{\ell}+\tilde{\delta}<\xi< \frac{c_{\l}^2}{3}+\frac{c_{\r}^2}{6}$ and $\frac{-c_{\l}^2}{2}+c_{\r}^2<\xi<V_{\ell}-\tilde{\delta}$ for some $\tilde{\delta}>0$,   namely for $\xi$ distinct from the 
breather speed $V_\ell$ we have 
\begin{equation}
\label{q_per_f}
q(x,t)=q_{per}(x,t;c_-,d,c_+,x_0)+O(t^{-1}),
\end{equation}
where   $q_{per}(x,t;\beta_1,\beta_2,\beta_3,x_0)$  is the travelling wave defined in \eqref{q_periodic1} and 
\begin{equation*}
x_0=-\dfrac{K(m)\Delta}{\pi}+K(m)
\end{equation*}
with $K(m)$ the complete elliptic integral   with modulus $m^2=\dfrac{d^2-c_+^2}{c_-^2-c_+^2}$ and $\Delta$ defined in \eqref{Delta}.
The speed  $V_\ell$  of the breather is obtained by solving the equation $\Im g(\kappa_\ell,V_{\ell})=0$ and  by \eqref{eq_gp} and \eqref{g_zero1} is 
\begin{equation}
\label{ellbreather_speed}
V_{\ell}=-\dfrac{\Im\int\limits_{-c_-^2}^{\kappa_\ell^2}\frac{s^2+\frac{s}{2}(c_-^2+c_+^2+d^2)+b_1}{\sqrt{s+c_-^2)(s+c_+^2)(s+d^2)}} ds} {\Im\int\limits_{-c_-^2}^{\kappa_\ell^2}\frac{s+b_0}{\sqrt{s+c_-^2)(s+c_+^2)(s+d^2)}} ds},
\end{equation}
where $b_0=c_{\l}^2-(c_{\l}^2-c_{\r}^2)\dfrac{E(m)}{K(m)}$ and 
\begin{equation*}
b_1=\dfrac{1}{3}(c_{\l}^2c_{\r}^2+c_{\l}^2d^2+c_{\r}^2d^2)-\frac{1}{6}(c_{\l}^2+c_{\r}^2+d^2)\left[c_{\l}^2-(c_{\l}^2-c_{\r}^2)\dfrac{E(m)}{K(m)}\right].
\end{equation*}

 When  $|V_{\ell}-\xi|<\tilde{\delta}$  the solution  of the RH problem~\ref{RH_problem_W2}  with the pole condition 2 describes a breather on the elliptic background that we indicate as $q_{be}(x,t)$.
The determination of the explicit expression of the breather on the elliptic background  is  similar to the derivation obtained in Section~\ref{sect_breath} for the breather on the constant background, but the computation is  algebraically involved and it is omitted.
}
The error term in \eqref{q_per_f} is  similar  to the  case  computed in \cite[Theorem 2]{BM} where all the local parametrices are the same. We add the derivation in the section below.
The error term is valid strictly inside the dispersive shock wave region. On the boundary 
of the dispersive shock wave region a different error term is present (see \cite[Section 3]{BM}) for details).
We have thus concluded the proof of  the Theorem~\ref{thrm:asymp:rl}  part (b)  leading order term.

\color{black}
\subsubsection{Estimate of the error term  in the dispersive shock wave region.}
This section is fully analogous to Section 3 in \cite{BM}.
We calculate the error term in the dispersive shock wave region away from the breathers.
The jump matrix $J_W$ has the main jump over the segment $[\i c_-,-\i c_-],$ and on the rest of the contour is exponentially close to the identity matrix except for small neighbourhoods of the points $\pm\i c_-$ and $\pm\i d.$ This means that we need to construct local solutions to the RH problem for the function $W$ near the points $\pm\i c_-, \pm\i d.$

To deal with the point $\i c_-,$ we `push up' the contours $L_7$ and $L_5$, so that they join at a point $\i (c_-+ \varepsilon)$ (with some small $\varepsilon>0$) instead of the point $\i c_-.$ In this way we completely remove the problem at the point $\i c_-$ (this corresponds to the trivial case of the `Bessel' parametrix.)

The point $\i d$ requires more careful considerations. First, we deform  the contour $L_1$  downward, so that it  intersects the segment $[\i c_-,-\i c_-]$ at a point $\i d-\i\varepsilon$, $\varepsilon>0$ instead of the point $\i d.$ 
Then, we construct an explicit exact solution of the $W$-RH problem inside the disk $|k-\i d|<\varepsilon$ in terms of the Airy parametrix. This will give us an error term of the order $t^{-1}.$ 
Below are the details.\\
%
%
\noindent \textbf{Point $\boldsymbol{k=\i d(\xi)}.$} Due to the symmetry $k\to\ol k,$ it is enough to consider the point $\i d(\xi)$ only, and then the construction near the point $-\i d(\xi)$ will follow by symmetry.
Note that the phase function $g(k,\xi)=\int_{\i c_-}^k g'(s,\xi)ds$   with $g'(k,\xi)$ as in \eqref{Whitham_d}.
Since $g'(k,\xi)$ vanishes as $\sqrt{k-\i d}$ when  $k\to \i d$ it follows that near the point $k=\i d$    we have  $g(k,\xi)- \int_{\i c_-}^{\i d} g'(s,\xi)ds\sim(k-\i d)^{\frac32}$.
{\color{black}
For $\delta>0$ we define the neighbourhood
\begin{equation*}
{\mathcal U}_{\i d}=\{k\in\C, s.t.\; |k-\i d|<\delta\}
\end{equation*}
and ${\mathcal U}_{\i d}^{\pm}$ are the right and left neighbourhood of $\i d$ with respect to the imaginary axis.
Next we define the conformal map
\begin{equation*}
\zeta=\left(-\frac{3}{2}\i [tg(k)- t\int_{\i c_-}^{\i d} g'_{\pm}(s,\xi)ds]\right)^{\frac{2}{3}},\;\;\; s,k\in {\mathcal U}_{\i d}^{\pm}.
\end{equation*}
 }
Furthermore, we introduce the function
\begin{equation*}
\phi(k, \xi)=\begin{cases}
\frac{T(k, \xi)\e^{\frac{\i t B(\xi)}{2}}}{\sqrt{-\i \widehat f(k)}},\quad  k\in {\mathcal U}_{\i d}^{+},
\\\\
\frac{T(k, \xi)\e^{\frac{-\i t B(\xi)}{2}}}{\sqrt{\i \widehat f(k)}},\quad  k\in {\mathcal U}_{\i d}^{-},
\end{cases}
\end{equation*}
which has a jump across $k\in(\i (d+\delta),-\i (d-\delta)).$   We recall that $\widehat f(k)$ is defined in \eqref{hatf}, $B(\xi)$ is defined in \eqref{B}  and $T(k,\xi)$ is defined in \eqref{T_ell}. We also observe   that $-\i \widehat f_+(k)>0$ and $\i\widehat f_-(k)>0$ for $k\in(\i c_-, \i c_+).$
Then the jump matrix $J_{W}(k)=J_{W}(\xi,t;k)$ in \eqref{J_W} in $ {\mathcal U}_{\i d}$  takes the following form:
\begin{equation*}
\begin{split}
&
J_{W}(k)=\phi^{\sigma_3}
\begin{pmatrix}
1 & -\i\e^{\frac23\zeta^{\frac32}} \\ 0 & 1
\end{pmatrix}
\phi^{-\sigma_3}, \, k\in (L_7\cup L_5)\cap {\mathcal U}_{\i d},
\\
&
J_{W}(k)=\phi_+^{\sigma_3}
\begin{pmatrix}
1 & 0 \\ \i \e^{-\frac23\zeta^{\frac32}} & 1
\end{pmatrix}
\phi_-^{-\sigma_3}, \, k\in (\i d, \i (d-\delta)),\\
&
J_{W}(k)=\phi_+^{\sigma_3}
\begin{pmatrix}
0 & \i \\ \i & 0
\end{pmatrix}
\phi_-^{-\sigma_3},\quad k\in(\i (d+\delta),\i d).
\end{split}
\end{equation*}
\textbf{Local parametrix at } $\boldsymbol{\i d(\xi)}.$  In order to mimic the above jumps matrix, we introduce the matrix function $P_{Ai}(k)$ obtained via the Airy function.
\begin{equation}\label{PAi}
P_{Ai}(k):=\Psi_{Ai}(\zeta)\phi(k,\xi)^{-\sigma_3}\e^{-\frac23\zeta^{\frac32}\sigma_3},
\end{equation}
where $\Psi_{Ai}(\zeta)$ is defined as follows:
\begin{equation*}\Psi_{\Ai}(\zeta)
=\begin{cases}
\begin{pmatrix}
v_1(\zeta) & v_0(\zeta)
\\
v'_1(\zeta) & v'_0(\zeta)
\end{pmatrix},\arg\zeta\in(0,\frac23\pi),
\\
\begin{pmatrix}
v_1(\zeta) & -\i v_{-1}(\zeta)
\\
v'_1(\zeta) & -\i v'_{-1}(\zeta)
\end{pmatrix},\arg\zeta\in(\frac23\pi,\pi),
\qquad\ \
\\
\begin{pmatrix}
v_{-1}(\zeta) & \i v_{1}(\zeta)
\\
v'_{-1}(\zeta) & \i v'_{1}(\zeta)
\end{pmatrix},\arg\zeta\in(-\pi, -\frac23\pi),
\\
\begin{pmatrix}
v_{-1}(\zeta) & v_0(\zeta)
\\
v'_{-1}(\zeta) & v'_0(\zeta)
\end{pmatrix},\arg\zeta\in(-\frac23\pi, 0),
\end{cases}
\end{equation*}
where the functions
$v_0, v_1, v_{-1}$ are defined in terms of the Airy function $\Ai$ as 
\begin{equation*}
v_0(\zeta)=\sqrt{2\pi}\Ai(\zeta),\quad 
v_1(\zeta)=\e^{-\frac{\pi\i}{6}}\sqrt{2\pi}\Ai(\zeta\e^{-\frac{2\pi\i}{3}}),
\quad
v_{-1}(\zeta)=\e^{\frac{\pi\i}{6}}\sqrt{2\pi}\Ai(\zeta\e^{\frac{2\pi\i}{3}}),
\end{equation*}
and $'$ denotes the derivative in $\zeta.$ The well-known relation $\Ai(\zeta)+\e^{\frac{2\pi\i}{3}}\Ai(\zeta\e^{\frac{2\pi\i}{3}})+\e^{\frac{-2\pi\i}{3}}\Ai(\zeta\e^{\frac{-2\pi\i}{3}})=0$ reads as
\begin{equation*}v_0(\zeta)-\i v_1(\zeta) + \i v_{-1}(\zeta)=0.\end{equation*}
The latter allows us to verify that the function $\Psi_{\Ai}$ has the following jumps $\Psi_{\Ai,-}(\zeta)=\Psi_{\Ai,+}(\zeta)J_{\Ai}(\zeta)$ across the contour $\Sigma_{\Ai}=\mathbb{R}\cup(\e^{2\pi\i/3}\infty, 0)\cup(\e^{-2\pi\i/3}\infty, 0),$ which is oriented according to the order in which the points are mentioned, that is from $\e^{\pm2\pi\i/3}\infty$ to $0$ and from $-\infty$ to $+\infty,$
\begin{equation*}
\begin{split}
&J_{\Ai}(\zeta) = \begin{pmatrix}1 & 0 \\ \i & 1\end{pmatrix},\;\arg\zeta=0,\;\;
J_{\Ai}(\zeta) = \begin{pmatrix}1 & -\i \\ 0 & 1\end{pmatrix},\;\arg\zeta=\pm\frac{2\pi}{3},\;\;\\
&J_{\Ai}(\zeta) = \begin{pmatrix}0 & \i \\ \i & 0\end{pmatrix},\arg\zeta=\pi.
\end{split}
\end{equation*}
Besides, the function $\Psi_{\Ai}$ admits the following asymptotics for large $\zeta$, which is uniform with respect to $\arg\zeta\in[-\pi,\pi]$:
\begin{equation}\label{PsiAiAsymp}
\Psi_{\Ai}(\zeta)=\frac{1}{\sqrt{2}}\zeta^{-\sigma_3/4}\begin{pmatrix}1 & 1 \\ 1 & -1\end{pmatrix}
\(I+\begin{pmatrix}
\frac{-1}{48} & \frac{-1}{8} \\ \frac{1}{8} & \frac{1}{48}
\end{pmatrix}\zeta^{-\frac32} + \mathcal{O}(\zeta^{-3})
\)\e^{\frac23\zeta^{\frac32}\sigma_3},\quad \zeta\to\infty.
\end{equation}

\medskip\noindent
\textbf{Approximation of $\boldsymbol{W.}$}
Now we define a function $W_{appr}(\xi,t;k),$ which will be shown to be a good approximation of the function $W(\xi,t;k),$
\begin{equation*}
W_{appr}(\xi,t;k)
=\begin{cases}
P^{(\infty)}(k),\quad |k\mp \i d(\xi)|>\delta,
\\
\mathcal{B}(k)\ \Psi_{\Ai}(\zeta(k))\phi(k,\xi)^{-\sigma_3}\e^{-\frac23(\zeta(k))^{\frac32}\sigma_3},
\quad |k-\i d(\xi)|<\delta,
\\\\
\ol{W_{appr}(\xi, t; \ol k)},\quad |k+\i d(\xi)|<\delta.
\end{cases}
\end{equation*}
Note that here we defined $W_{appr}$ for $|k+\i d|<\delta$ in terms of $W_{appr}$ for $|k-\i d|<\delta.$
Furthermore, here $\mathcal{B}$ is an unknown matrix  function  analytic inside the disk $|k-\i d|<\delta$. It is obtained by requesting that 
the error $$E(k) = W(k) W_{appr}(k)^{-1}$$ has jump $J_E(k)=I+o(1)$ as $t\to\infty$ and $ k\in \partial{\mathcal U}_{\i d}$.
Hence, using \eqref{PsiAiAsymp}, we obtain  
\begin{equation*}\mathcal{B}(k)
=
P^{(\infty)}(k)\phi(k,\xi)^{\sigma_3}\frac{1}{\sqrt{2}}\begin{pmatrix}
1 & 1 \\ 1 & -1
\end{pmatrix}\zeta^{\sigma_3/4}.
\end{equation*}
Note that $\mathcal{B}(k)$ is indeed holomorphic in the disk ${\mathcal U}_{\i d}$. This can be verified using the relations $\phi_+\phi_-=1$ for $k\in(\i (d+\delta), \i d),$ and $\phi_+=\phi_-\e^{\i t B+\i \Delta}$ for $k\in(\i d, \i (d-\delta)).$

Using the asymptotics   \eqref{PsiAiAsymp}, we see that on the circle  $\partial {\mathcal U}_{\i d}$ the jump matrix for the error function $E_-(k)=E_+(k)J_E(k)$   is given by
\begin{equation*}J_{E}(k)=
P^{(\infty)}(k)\phi(k,\xi)^{\sigma_3}\frac{1}{\sqrt{2}}\begin{pmatrix}
1 & 1 \\ 1 & -1
\end{pmatrix}\zeta^{\sigma_3/4}
\Psi_{\Ai}(\zeta)
\phi(k)^{-\sigma_3}\e^{-\frac23\zeta^{\frac32}\sigma_3}
P^{(\infty)}(k)^{-1},
\end{equation*}
is indeed of the order $I+\mathcal{O}(\zeta^{-3/2})=I+\mathcal{O}(t^{-1})$  as $t\to \infty$, uniformly on the contour.
 Hence the error matrix $E(k)$  is also of the same order, namely $E(k)=I+\mathcal{O}(t^{-1}).$
 Therefore for large $|k|$ we have $W(k)= P^{(\infty)}(k)+W_{err}(k)$ where $W_{err}(k)$ is of the order $\mathcal{O}(t^{-1}).$
The solution $q(x,t)$ of the MKdV equation equals
\begin{equation*}q(x,t)=\lim\limits_{k\to\infty}(2\i kW(k))_{21}
=
\lim\limits_{k\to\infty}(2\i kP^{(\infty)}(k))_{21}+
\lim\limits_{k\to\infty}(2\i kW_{err}(k))_{21},\end{equation*}
where, in the r.h.s.,  the second term is of the order $\mathcal{O}(t^{-1}),$ while the first term is $q_{ell}(x,t).$
We thus obtain
\begin{equation*}q(x,t)=q_{ell}(x,t)+\mathcal{O}(t^{-1})\end{equation*}
for $\xi\equiv\frac{x}{12t}$ strictly inside the elliptic region, $-\frac{c_-^2}{2}+c_+^2+\delta\leq\xi<\frac{c_-^2}{3}+\frac{c_+^2}{6}-\delta.$

\color{black}

\section{Large time asymptotics: proof of Theorem~\ref{thrm:asymp:rl},  parts (a) and (c)}

Here we continue the proof of Theorem \ref{thrm:asymp:rl} in the soliton-breather region and in the breather region.

\subsection{Breather region: Theorem \ref{thrm:asymp:rl}  part (c)}
In the breather region the function  $g(k,\xi)$ takes the form  \cite[page 46]{KM2}
\begin{equation}
\label{g_breather}
g(k,\xi)=2\(2k^2-c_{-}^2+6\xi\right)\sqrt{k^2+c_{-}^2}\,,\quad \quad \xi<\frac{-c_{-}^2}{2}+c_{+}^2,
\end{equation}
namely $g(k,\xi)$ is analytic in $\mathbb{C}\backslash [\i c_-,-\i c_-]$  and
\begin{equation}
\label{g_breather1}
g_+(k,\xi)+g_-(k,\xi)=0,\quad k\in(\i c_-,-\i c_-),\quad g(k,\xi)=\hat{\theta}(k,\xi)+O(k^{-1})\quad \mbox{as $ |k|\to\infty$}.
\end{equation}
The above two conditions defined $g(k,\xi)$ uniquely.  We chose $\sqrt{k^2+c_{-}^2}$ to be real on $(\i c_-,-\i c_-)$ and positive for $k=+0$.

We observe that $\Im g(k,\xi)=0$ for $k$ on the segment $[\i c_-,-\i c_-]$ and on the real line. 
For a given $\xi$, there is an extra   couple of  symmetric curves such that   $\Im g(k,\xi)=0$   (see  Figure~\ref{Fig_11}).  For  $\xi<-\frac{c_-^2}{2}$ this   couple of  curves  
 crosses the real  line  at the points $k=\pm k_0=\pm \sqrt{-\frac{c_-^2}{2}-\xi}$   (\color{black}{utmost} left constant region). For   $-\frac{c_-^2}{2}<\xi<-\frac{c_{-}^2}{2}+c_{+}^2$  the level lines
 $\Im g(k,\xi)=0$ 
are  a couple of  symmetric curves that  cross  the imaginary axis at the points $k=\pm\i d_0=\pm \i\sqrt{\xi+\frac{c_-^2}{2}}$   (\color{black}{middle} left  constant region).
The points $\pm k_0$ and $\pm \i d_0$  are the zeros of   the equation
\begin{equation*}
g'(k,\xi)=6\left(2k^2+c_{-}^2+2\xi\right)\frac{k}{\sqrt{k^2+c_{-}^2}}=0\,.
\end{equation*}
\begin{figure}[ht]
\begin{minipage}{0.48\linewidth}
\begin{tikzpicture}
\draw[fill=black] (0,3) circle [radius=0.05];
\draw[fill=black] (0,-3) circle [radius=0.05];
\draw[fill=black] (0,1.5) circle [radius=0.05];
\draw[fill=black] (0,-1.5) circle [radius=0.05];
\draw[] (0,1) circle [radius=0.05];
\draw[] (0,-1) circle [radius=0.05];
\draw[thick,dashed] (0,3) to (0,-3);
\node at (-0.4,2.9) {$\i c_{\l}$};\node at (-0.35,1.5) {$\i c_{\r}$};
\node at (-0.4,-3.1) {$-\i c_{\l}$};\node at (-0.35,-1.5) {$-\i c_{\r}$};
\draw[thick,dashed] (-3,3.2) [out=-60, in=180] to (0,1) [out=0,in=-120] to (3,3.2);
\draw[thick,dashed] (-3,-3.2) [out=60, in=180] to (0,-1) [out=0,in=120] to (3,-3.2);
\node at (1.3,0.7) {$\i d_0=\i\sqrt{\frac{c_-^2}{2}+\xi}$};
\node at (0.4,-0.7) {$-\i d_0$};
\node at (2.2,1.4) {\color{black}$+$};\node at (-2,1) {\color{black}$+$};
\node at (2,-1) {\color{black}$-$};\node at (-2,-1) {\color{black}$-$};
\node at (1.2,2.7) {\color{black}$-$};\node at (-1.2,2.7) {\color{black}$-$};
\node at (1.2,-2.7) {\color{black}$+$};\node at (-1.2,-2.7) {\color{black}$+$};
\draw[thick, dashed] (-3,0) to (3,0);
\end{tikzpicture}\\
(a) $\frac{-c_{\l}^2}{2}<\xi<\frac{-c_{\l}^2}{2}+c_{\r}^2$ (\color{black}{Middle} left constant region)
\end{minipage}
\hfill
\begin{minipage}{0.49\linewidth}
\begin{tikzpicture}
\draw[fill=black] (0,3) circle [radius=0.05];
\draw[fill=black] (0,-3) circle [radius=0.05];
\draw[fill=black] (0,1.5) circle [radius=0.05];
\draw[fill=black] (0,-1.5) circle [radius=0.05];
\draw[] (1,0) circle [radius=0.05];
\draw[] (-1,0) circle [radius=0.05];
\draw[thick,dashed] (0,3) to (0,-3);
\node at (-0.4,2.9) {$\i c_{\l}$};\node at (-0.35,1.23) {$\i c_{\r}$};
\node at (-0.4,-3.1) {$-\i c_{\l}$};\node at (-0.35,-1.23) {$-\i c_{\r}$};
\draw[thick,dashed] (-3,3) [out=-60, in=90] to (-1,0) [out=-90,in=60] to (-3,-3);
\draw[thick,dashed] (3,3) [out=-120, in=90] to (1,0) [out=-90,in=120] to (3,-3);
\node at (1.9,-0.35) {$k_0=\sqrt{-\xi-\frac{c_{\l}^2}{2}}$};
\node at (-1.4,-0.3) {$-k_0$};
\node at (2,1) {\color{black}$+$};\node at (-2,1) {\color{black}$+$};
\node at (2,-1) {\color{black}$-$};\node at (-2,-1) {\color{black}$-$};
\node at (1.2,2.7) {\color{black}$-$};\node at (-1.2,2.7) {\color{black}$-$};
\node at (1.2,-2.7) {\color{black}$+$};\node at (-1.2,-2.7) {\color{black}$+$};
\draw[thick, dashed] (-3,0) to (3,0);
\end{tikzpicture}\\
(b) $\xi<\frac{-c_{\l}^2}{2}$ (\color{black}{Utmost} left constant region)
\end{minipage}
\label{Fig_11}
\caption{Distribution of signs of $\Im g(k,\xi)$ in the breather region.}
\end{figure}
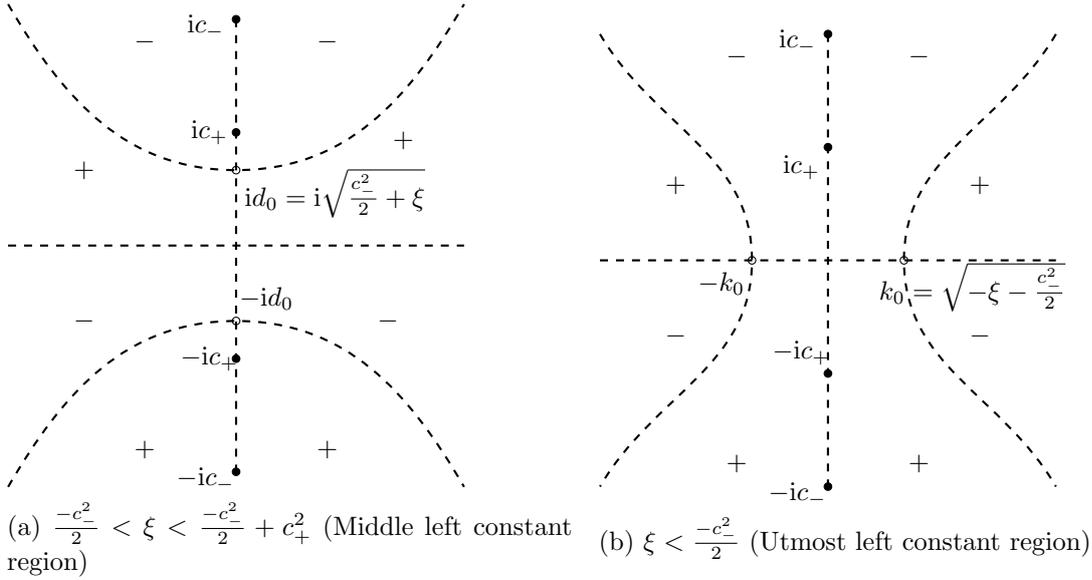

The asymptotic analysis in these two cases is slightly different. We start our analysis  with the middle left constant region.
\subsubsection{\color{black}{Middle} left constant region $\frac{-c_{\l}^2}{2}<\frac{x}{12t}\equiv\xi<\frac{-c_{\l}^2}{2}+c_{\r}^2$}
\label{middle_lcr}

We define  the first transformation of the RH problem  \ref{RH_problem_1}, $M(k)\to Y(k)$,  defined as 

$$Y (k)=M(k)\e^{\i\(tg(k,\xi)-\theta(x,t;k)\)\sigma_3}T^{-\sigma_3}(k,\xi),
$$
where now $g(k,\xi)$ is as in \eqref{g_breather}  and $ T(k)=T(k,\xi)$ is defined  for a given $\xi$ as follows:
\begin{equation}\label{Ttilde} 
\begin{split}
T(k,\xi)&=\widetilde T(k,\xi)H(k,\xi),\\
  \widetilde T(k,\xi) &= \prod\limits_{\begin{array}{ccc}\Re \kappa_j>0,\Im \kappa_j>0\\\Im g(\kappa_j,\xi)<0\end{array}} \frac{k-\ol{\kappa_j}}{k-\kappa_j}\frac{k+{\kappa_j}}{k+\ol{\kappa_j}} \cdot
\prod\limits_{\begin{array}{ccc}\Re \kappa_j=0,\Im\kappa_j>0\\\Im g(\kappa_j,\xi)<0\end{array}} \frac{k-\ol{\kappa_j}}{k-\kappa_j},
\end{split}
\end{equation}
and $H(k,\xi)$  is supposed to be analytic in $\mathbb{C}\backslash [\i c_-,-\i c_-]$ and it will be determined later. 

%
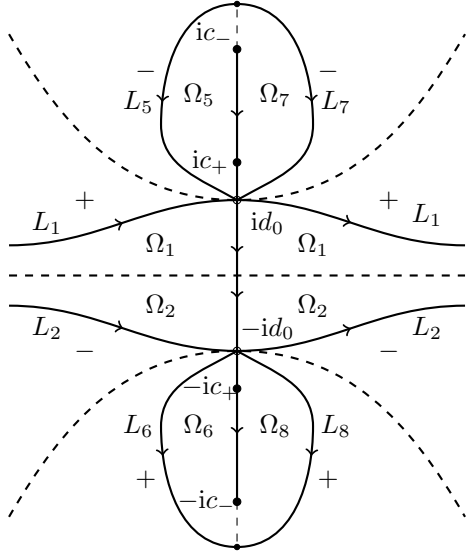
\begin{figure}
\centering
\begin{tikzpicture}
\draw[fill=black] (0,3) circle [radius=0.05];
\draw[fill=black] (0,-3) circle [radius=0.05];
\draw[fill=black] (0,1.5) circle [radius=0.05];
\draw[fill=black] (0,-1.5) circle [radius=0.05];
\draw[] (0,1) circle [radius=0.05];
\draw[] (0,-1) circle [radius=0.05];
\draw[thick, postaction = decorate, decoration = {markings, mark = at position 0.5 with {\arrow{>}}}](-3,0.4) to [out = 0, in =180] (0,1);
\draw[thick, postaction = decorate, decoration = {markings, mark = at position 0.5 with {\arrow{>}}}](0,1) to [out = 0, in =180] (3,0.4);
\draw[thick, postaction = decorate, decoration = {markings, mark = at position 0.5 with {\arrow{>}}}](-3,-0.4) to [out = 0, in =180] (0,-1);
\draw[thick, postaction = decorate, decoration = {markings, mark = at position 0.5 with {\arrow{>}}}](0,-1) to [out = 0, in =180] (3,-0.4);
\draw[thick, postaction = decorate, decoration = {markings, mark = at position 0.5 with {\arrow{>}}}](0,3.6) to [out=0, in=90] (1,2) [out=-90,in=30] to (0,1);
\draw[thick, postaction = decorate, decoration = {markings, mark = at position 0.5 with {\arrow{>}}}](0,3.6) to [out=180, in=90] (-1,2) [out=-90,in=150] to (0,1);
\draw[thick, postaction = decorate, decoration = {markings, mark = at position 0.5 with {\arrow{<}}}](0,-3.6) to [out=0, in=-90] (1,-2) to[out=90,in=-30](0,-1);
\draw[thick, postaction = decorate, decoration = {markings, mark = at position 0.5 with {\arrow{<}}}](0,-3.6) to [out=180, in=-90] (-1,-2)to[out=90, in=-150] (0,-1);
\draw[dashed](0,3.) to (0,3.6);
\draw[dashed](0,-3.) to (0,-3.6);
\node at (2.5,0.8) {$L_1$};\node at (-2.5,0.7) {$L_1$};\node at (2.5,-0.7) {$L_2$};\node at (-2.5,-0.7) {$L_2$};
\node at (1.3,2.3) {$L_7$}; \node at (-1.3, 2.3) {$L_5$};\node at (1.3,-2.0) {$L_8$}; \node at (-1.3, -2.0) {$L_6$};
\node at (1,0.4) {$\Omega_1$};\node at (-1,0.4) {$\Omega_1$};
\node at (1,-0.4) {$\Omega_2$};\node at (-1,-0.4) {$\Omega_2$};
\node at (0.5,2.4) {$\Omega_7$}; \node at (-0.5, 2.4) {$\Omega_5$};\node at (0.5,-2.0) {$\Omega_8$}; \node at (-0.5, -2.0) {$\Omega_6$};
\draw[thick, 
decoration={markings, mark = at position 0.15 with \arrow{>}},
decoration={markings, mark = at position 0.85 with \arrow{>}},
decoration={markings, mark = at position 0.45 with \arrow{>}},
decoration={markings, mark = at position 0.55 with \arrow{>}}, postaction={decorate}] (0,3) to (0,-3);
\node at (-0.3,3.2) {$\i c_{\l}$};\node at (-0.35,1.5) {$\i c_{\r}$};
\node at (-0.4,-3.) {$-\i c_{\l}$};\node at (-0.35,-1.5) {$-\i c_{\r}$};
\draw[thick,dashed] (-3,3.2) [out=-60, in=180] to (0,1) [out=0,in=-120] to (3,3.2);
\draw[thick,dashed] (-3,-3.2) [out=60, in=180] to (0,-1) [out=0,in=120] to (3,-3.2);
\node at (0.4,0.7) {$\i d_0$};
\node at (0.4,-0.7) {$-\i d_0$};
\node at (2,1) {\color{black}$+$};\node at (-2,1) {\color{black}$+$};
\node at (2,-1) {\color{black}$-$};\node at (-2,-1) {\color{black}$-$};
\node at (1.2,2.7) {\color{black}$-$};\node at (-1.2,2.7) {\color{black}$-$};
\node at (1.2,-2.7) {\color{black}$+$};\node at (-1.2,-2.7) {\color{black}$+$};
\draw[thick, dashed] (-3,0) to (3,0);
\draw[fill] (0,3.6) circle (1pt);
\draw[fill] (0,-3.6) circle (1pt);
\end{tikzpicture}
\caption{The opening of the lenses in the \color{black}{middle} left constant region. \color{black}{The contours $L_5, L_7$ are separated by their highest point, and the contours $L_6, L_8$ are separated by their lowest point.}}
\label{breather_lens1}
\end{figure}

In order to perform our analysis on the continuum spectrum we assume
 the condition \eqref{g_breather1}  for $g(k)$ and 
\begin{equation}
\label{T_conj_2nd_constant}T_+(k)T_-(k)=
\begin{cases}\frac{1}{|a(k)|^2},\quad k\in(\i c_{\l}, \i d_0), \\ |a(k)|^2,\quad k\in(-\i d_0,-\i c_{\l}), \\ 1,\quad k\in(\i d_0,-\i d_0)\,. \end{cases}
\end{equation}
Then the  function $T(k)$ can be found  taking the logarithm  of the above expression  and applying Plemelj formula so that we obtain
\begin{equation}\label{T_cl2}\begin{split}
T(k,\xi)=\widetilde T(k,\xi) 
\exp&\left[\frac{\sqrt{k^2+c_{\l}^2}}{2\pi\i}\left\{\int\limits^{-\i c_{\l}}_{-\i d_0}-\int\limits_{\i c_{\l}}^{\i d_0}\frac{\ln|a(s)|^2\d s}{(s-k)\left(\sqrt{s^2+c_{\l}^2}\right)_+}-\right.\right.\\
&\left.\left.-
\int\limits_{\i c_-}^{-\i c_-}\frac{\(\ln\widetilde T^2(s,\xi)\)\d s}{(s-k)\left(\sqrt{s^2+c_{\l}^2}\right)_+}
\right\}\right]\,.
\end{split}\end{equation}

 Let us notice, that $T(k,\xi)=1+\mathcal{O}\(\frac{1}{k}\),$ as $ k\to\infty.$

With our choice of $g(k)$ and $T(k)$  as in \eqref{g_breather} and \eqref{T_cl2},  respectively,  we consider the matrix $\widetilde{Y}(k)$ defined in \eqref{Y_tilde}. Then it is clear that the jumps  $J_{\tilde{Y}}$  as in 
\eqref{J_tilde} on the circles 
$C_j$, $\overline{C}_j$, $-C_j$ and $-\overline{C}_j$ are exponentially small
 for those values of $\kappa_j$ for which  $ \Im g(\kappa_j,\xi)>\delta$ or  $\Im g(\kappa_j)<-\delta$, $\Re\kappa_j\geq 0$ and $\Im \kappa_j>0$.
     The values of $\kappa_j$   and $\xi$  for which $-\delta<\Im g(\kappa_j,\xi)<\delta$  will be considered later.  Now we take care of the continuous spectrum.
      The jump matrix $J_{\tilde{Y}}(\xi,t;k)$   relative to the RH problem for $\widetilde{Y} (\xi,t;k)$, $\widetilde{Y}_-(\xi,t;k)=\tilde{Y}_+(\xi,t;k)J_{\tilde{Y}}(\xi,t;k)$  admits, for $k\neq \pm\i c_+$,  a  factorization   of the form 
\begin{equation}\label{fac_12}
J_{\tilde{Y}}(\xi,t;k)=
\begin{cases}
 \begin{pmatrix}1 & \frac{T_+^2(k)\e^{-2\i t g_+(k)}}{\widehat{f}_+(k)} \\ 0 & 1\end{pmatrix}
\begin{pmatrix}0&\i\\\i&0\end{pmatrix}
\begin{pmatrix}1 & \frac{-T_-^2(k)\e^{-2\i t g_-(k)}}{\widehat{f}_-(k)} \\ 0 & 1\end{pmatrix}, k\in(\i c_{\l},\i d_0)
\\
 \begin{pmatrix}1&0\\\frac{-\e^{2\i t g_+(k)}}{\ol{\widehat{f}_+\(\ol{k}\)}T_+^2(k)} & 1\end{pmatrix}\begin{pmatrix}0&\i\\\i&0\end{pmatrix}
\begin{pmatrix}1&0\\\frac{\e^{2\i t g_-(k)}}{\ol{\widehat{f}_-\(\ol{k}\)}T_-^2(k)} & 1\end{pmatrix},
\quad k\in(-\i d_0,-\i c_{\l})
\\
\begin{pmatrix}1 & 0 \\ \frac{-r(k)\e^{2\i t g(k,\xi)}}{T^2(k)} & 1\end{pmatrix}
\begin{pmatrix}1 & -\ol{r\(\ol{k}\)}T^2(k)\e^{-2\i t g(k,\xi)} \\ 0 & 1\end{pmatrix},\quad k\in\mathbb{R}\backslash \{0\}
\\
\begin{pmatrix}\frac{\i r_-(k)T_+(k)\e^{\i t (g_--g_+)}}{T_-(k)} & \i\, T_+(k)T_-(k)\e^{-\i t (g_-+g_+)}
\\
\frac{f\e^{\i t(g_-+g_+)}}{T_+(k)T_-(k)} & \frac{-\i r_+(k)T_-(k)\e^{\i t(g_+-g_-)}}{T_+(k)} \end{pmatrix},\quad k\in (\i d_0,0)
\\
\begin{pmatrix}\frac{\i \ol{r_+\(\ol k\)}T_+(k)\e^{\i t (g_--g_+)}}{T_-(k)} & -\ol{f\(\ol{k}\)} T_+T_-\e^{-\i t(g_-+g_+)}
\\
\frac{\i \e^{\i t (g_-+g_+)}}{T_+(k)T_-(k)} & \frac{-\i\ \ol{r_-\(\ol k\)}T_-(k)\e^{\i t(g_+-g_-)}}{T_+(k)}\end{pmatrix},\;
 k\in(0,-i d_0)
\end{cases}
\end{equation}
Here $$\widehat f(k) = \frac{-1}{a(k)\ol{b\(\ol{k}\)}}=\frac{1+r(k)\ol{r\(\ol{k}\)}}{-\ol{r\(\ol{k}\)}}.$$

Now we open the lenses as in Figure~\ref{breather_lens1} where we assume that we can deform all the curves $L_j$ so that there are no poles within the regions
$\Omega_j$.
We define a new matrix $X$ as
\begin{equation}
X(\xi,t;k)=\begin{cases}
\tilde{Y}(\xi,t;k)\begin{pmatrix}1 & 0 \\ \frac{-r(k)\e^{2\i t g(k,\xi)}}{T^2(k)} & 1\end{pmatrix},\quad k\in\Omega_1,\\
\tilde{Y}(\xi,t;k)\begin{pmatrix}1 & -\ol{r\(\ol{k}\)}T^2(k)\e^{-2\i t g(k,\xi)} \\ 0 & 1\end{pmatrix}^{-1},\quad k\in\Omega_2,\\
\tilde{Y}(\xi,t;k) \begin{pmatrix}1&0\\\frac{-\e^{2\i t g_+(k)}}{\ol{\widehat{f}_+\(\ol{k}\)}T_+^2(k)} & 1\end{pmatrix},\quad k\in\Omega_8,\\
\tilde{Y}(\xi,t;k)\begin{pmatrix}1&0\\\frac{\e^{2\i t g_-(k)}}{\ol{\widehat{f}_-\(\ol{k}\)}T_-^2(k)} & 1\end{pmatrix}^{-1},\quad k\in\Omega_6,\\
\tilde{Y}(\xi,t;k) \begin{pmatrix}1 & \frac{T_+^2(k)\e^{-2\i t g_+(k)}}{\widehat{f}_+(k)} \\ 0 & 1\end{pmatrix},\quad k\in\Omega_7,\\
\tilde{Y}(\xi,t;k)
\begin{pmatrix}1 & \frac{-T_-^2(k)\e^{-2\i t g_-(k)}}{\widehat{f}_-(k)} \\ 0 & 1\end{pmatrix}^{-1},\quad k\in\Omega_5.\\
\end{cases}
\end{equation}
With the above transformation the jump matrix on $(\i d_0, 0)$ transforms to 
\begin{equation}\label{fac_3}
\begin{split}
&
\begin{pmatrix}1 & 0 \\ \frac{r_+\e^{2\i t g_+}}{T_+^2} & 1\end{pmatrix}
\begin{pmatrix}\frac{\i r_-T_+\e^{\i t (g_--g_+)}}{T_-} & \i\, T_+T_-\e^{-\i t (g_-+g_+)}
\\
\frac{f\e^{\i t(g_-+g_+)}}{T_+T_-} & \frac{-\i r_+T_-(k)\e^{\i t(g_+-g_-)}}{T_+} \end{pmatrix}
\begin{pmatrix}1 & 0 \\ \frac{-r_-\e^{2\i t g_-}}{T_-^2} & 1\end{pmatrix}
=
\\
&
=\begin{pmatrix}0 & \i \\\i & 0\end{pmatrix}, k\in(\i d_0,0),
\end{split}
\end{equation}
and  similarly on $(0,-\i d_0)$.
%
%
%
We obtain that the matrix $X(k)$ solves the following RH problem.
\begin{RHP}\label{RHP_final_constant2}
\textbf{Final RH problem for the \color{black}{middle} left constant region $\frac{-c_{\l}^2}{2}<\xi<~\frac{-c_{\l}^2}{2}+~c_{\r}^2.$ }
\begin{enumerate}
\item Find a $2\times 2$ matrix $X(k)$ meromorphic in $\mathbb{C}\backslash  \Sigma$ with \\  $\Sigma= \{\cup_{j=1}^7 L_j\cup(\i c_-,-\i c_-)\}$  (see Figure \ref{breather_lens1})   such that 
\item Jumps:
$$
X_-(\xi,t;k)=X_+(\xi,t;k)J_X(\xi,t;k),\quad k\in\Sigma
$$
and 

$$J_{X}(\xi,t;k)=\begin{pmatrix}0&\i\\\i&0\end{pmatrix},\quad k\in(\i c_{\l}, -\i c_{\l}),$$

$$J_{X}(\xi,t;k)=\begin{cases}
\begin{pmatrix}1 & 0 \\ \frac{-r(k)\ \e^{2\i t g(k)}}{T(k)^2}&1\end{pmatrix}, k\in L_1,\\
\begin{pmatrix}1 & -\ol{r\(\ol k\)} T(k)^2 \e^{-2\i t g(k)}\\0&1\end{pmatrix}, k\in L_2, \\
\begin{pmatrix}1 & \frac{T(k)^2\e^{-2\i t g(k)}}{\widehat f(k)} \\ 0 & 1 \end{pmatrix}, k\in L_7,\quad
\begin{pmatrix}1 & \frac{-T^2(k)\e^{-2\i t g}}{\widehat f(k)} \\ 0 & 1 \end{pmatrix}, k\in L_5,\\
\begin{pmatrix}1 & 0 \\\frac{-\e^{2\i t g(k)}}{\ol{\widehat f\(\ol{k}\)}\ T(k)^2} & 1 \end{pmatrix}, k\in L_8,
\quad
 \begin{pmatrix}1 & 0 \\\frac{\e^{2\i t g(k)}}{\ol{\widehat f\(\ol{k}\)}\ T(k)^2} & 1 \end{pmatrix}, k\in L_6.
\end{cases}$$
\item Poles: for those $\kappa_j$ with $-\delta<\Im g(\kappa_j,\xi)<\delta,\;\Im \kappa_j>0,\,\Re\kappa_j\geq 0:$
{\color{black}
$$\mathrm{Res}_{\kappa_j}X(\xi,t;k)=\lim\limits_{k\to \kappa_j}X(\xi,t;k)\begin{pmatrix}0&0\\ \frac{ \i\nu_j}{T^2(k)} \e^{2\i t  g(k,\xi)} & 0\end{pmatrix},$$
$$\mathrm{Res}_{-\ol{\kappa}_j}X(\xi,t;k)=\lim\limits_{k\to -\ol{\kappa}_j}X(\xi,t;k)\begin{pmatrix}0&0\\\frac{ \i\,\widebar{\nu}_j }{T^2(k)} \e^{2\i t  g(k,\xi)} & 0\end{pmatrix},$$
$$\mathrm{Res}_{\ol{\kappa}_j}X(\xi,t;k)=\lim\limits_{k\to \ol{\kappa}_j}X(\xi,t;k)\begin{pmatrix}0&\i\,\widebar{\nu}_jT^2(k)\e^{-2\i t  g(k,\xi)}\\0&0\end{pmatrix},$$
$$\mathrm{Res}_{-\kappa_j}X(\xi,t;k)=\lim\limits_{k\to -\kappa_j}X(\xi,t;k)\begin{pmatrix}0&\i\nu_j T^2(k)\e^{-2\i t g(k,\xi)}\\0&0\end{pmatrix}.$$
}

\item Asymptotics: $X(\xi,t;k)\to I$ as $k\to\infty.$
\end{enumerate}
\end{RHP}
%

%

\subsubsection{\color{black}{Utmost} left constant region $\xi<\frac{-c_{\l}^2}{2}$}
As in the other case we define
$$Y (\xi,t;k)=M(x,t;k)\e^{\i\(tg(k,\xi)-\theta(x,t;k)\)\sigma_3}T^{-\sigma_3}(k,\xi),
\quad \xi=\frac{x}{12t},$$
where the fuction  $g(k) $ is as  in  \eqref{g_breather}  while for the function  $T(k)$   defined in \eqref{Ttilde}  we require  the following conditions that are needed in order to 
apply the Deift-Zhou steepest descent method to deform the contours:
\begin{equation}
\label{T_condtion1}
T_+(k)T_-(k)=\begin{cases}-\i f(k)=\frac{1}{|a(k)|^2},\ k\in(\i c_{\l}, 0), \\ \frac{-\i}{\ol{f(k)}}=|a(k)|^2,\ k\in(0, -\i c_{\l}),\end{cases}
\end{equation}
and 
\begin{equation}
\label{T_condtion2}
\color{black}{\frac{T_+(k)}{T_-(k)}=1+|r(k)|^2,\quad k\in(-k_0, k_0),\quad k_0=\sqrt{-\xi-\frac{c_-^2}{2}},}
\end{equation}
with  $T(k)\to 1$ as $|k|\to\infty$.  Observe that in this case $T(k)$ \color{black}{ has a cut across } $[\i c_-,-\i c_-]$ and also on \color{black}{a finite interval $[-k_0,k_0]$ of} the real axis.
\noindent This function can be found in the form 
\begin{equation}\label{T_cl1}
\begin{split}&T(k,\xi)=\widetilde T(k,\xi) 
\exp\left[\frac{\sqrt{k^2+c_{\l}^2}}{2\pi\i}\int\limits_{\i c_{\l}}^{0}\frac{\(-\ln|a(s)|^2-\ln \widetilde T(s,\xi)^2
\)\d s}{(s-k)\left( \sqrt{s^2+c_{\l}^2}\right)_+}\right]\cdot
\\
&
\cdot\exp\left[\frac{\sqrt{k^2+c_{\l}^2}}{2\pi\i}\left\{\hskip-1mm\int\limits_{0}^{-\i c_{\l}}\frac{\(\ln|a(s)|^2-\ln\widetilde T(s,\xi)^2 \)\d s}{(s-k)\left(\sqrt{s^2+c_{\l}^2}\right)_+}
+
\hskip-1mm
\int\limits_{\color{black}{-k_0}}^{\color{black}{k_0}}\frac{\ln\(1+|r(s)|^2\)\d s}{(s-k)\left(\sqrt{s^2+c_{\l}^2}\right)_+}
\right\}\right]
\end{split}\end{equation}
with  $\widetilde T(k,\xi) $ as  in formula \eqref{Ttilde}.

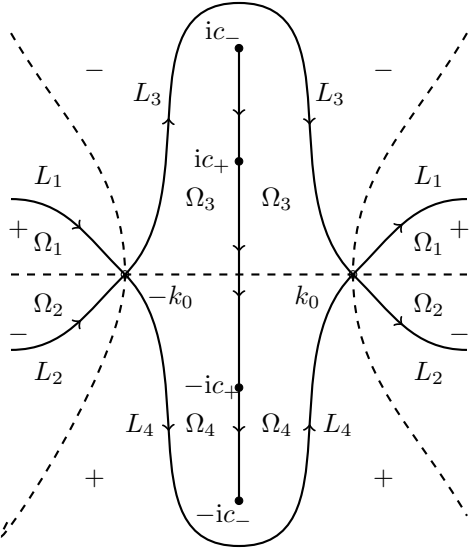
\begin{figure}[ht!]
\centering
\begin{tikzpicture}
\draw[fill=black] (0,3) circle [radius=0.05];
\draw[fill=black] (0,-3) circle [radius=0.05];
\draw[fill=black] (0,1.5) circle [radius=0.05];
\draw[fill=black] (0,-1.5) circle [radius=0.05];
\draw[] (1.5,0) circle [radius=0.05];
\draw[] (-1.5,0) circle [radius=0.05];

\draw[thick, postaction = decorate, decoration = {markings, mark = at position 0.5 with {\arrow{>}}}](0,3.6) to [out=0,in=135] (1.5,0);
\draw[thick, postaction = decorate, decoration = {markings, mark = at position 0.5 with {\arrow{>}}}] (1.5,0) to [out=45, in=180] (3,1);
\draw[thick, postaction = decorate, decoration = {markings, mark = at position 0.5 with {\arrow{<}}}](0,3.6) to [out=180,in=45] (-1.5,0);
\draw[thick, postaction = decorate, decoration = {markings, mark = at position 0.5 with {\arrow{<}}}] (-1.5,0) to [out=135, in=0] (-3,1);
\draw[thick, postaction = decorate, decoration = {markings, mark = at position 0.5 with {\arrow{>}}}](0,-3.6) to [out=0,in=-135] (1.5,0);
\draw[thick, postaction = decorate, decoration = {markings, mark = at position 0.5 with {\arrow{>}}}] (1.5,0) to [out=-45, in=180] (3,-1);
\draw[thick, postaction = decorate, decoration = {markings, mark = at position 0.5 with {\arrow{<}}}](0,-3.6) to [out=-180,in=-45] (-1.5,0);
\draw[thick, postaction = decorate, decoration = {markings, mark = at position 0.5 with {\arrow{<}}}] (-1.5,0) to [out=-135, in=0] (-3,-1);

\node at (2.5,1.3) {$L_1$};\node at (-2.5,1.3) {$L_1$};\node at (2.5,-1.3) {$L_2$};\node at (-2.5,-1.3) {$L_2$};
\node at (1.2,2.4) {$L_3$}; \node at (-1.2, 2.4) {$L_3$};\node at (1.3,-2.0) {$L_4$}; \node at (-1.3, -2.0) {$L_4$};
\node at (2.5,0.4) {$\Omega_1$};\node at (-2.5,0.4) {$\Omega_1$};
\node at (2.5,-0.4) {$\Omega_2$};\node at (-2.5,-0.4) {$\Omega_2$};
\node at (0.5,1.0) {$\Omega_3$}; \node at (-0.5, 1.0) {$\Omega_3$};\node at (0.5,-2.0) {$\Omega_4$}; \node at (-0.5, -2.0) {$\Omega_4$};
\draw[thick,
decoration={markings, mark = at position 0.15 with {\arrow{>}}},
decoration={markings, mark = at position 0.85 with {\arrow{>}}},
decoration={markings, mark = at position 0.45 with {\arrow{>}}},
decoration={markings, mark = at position 0.55 with {\arrow{>}}}, postaction={decorate}] (0,3) to (0,-3);
\node at (-0.2,3.2) {$\i c_{\l}$};\node at (-0.35,1.5) {$\i c_{\r}$};
\node at (-0.2,-3.2) {$-\i c_{\l}$};\node at (-0.35,-1.5) {$-\i c_{\r}$};
\draw[thick,dashed] (-3,3.2) [out=-60, in=90] to (-1.5,0) [out=-90,in=-120] to (-3,-3.2);
\draw[thick,dashed] (3,3.2) [out=-120, in=90] to (1.5,0) [out=-90,in=120] to (3,-3.2);
\node at (0.9, -0.3) {$k_0$};
\node at (-0.9,-0.3) {$-k_0$};
\node at (2.9,0.6) {\color{black}$+$};\node at (-2.9, 0.6) {\color{black}$+$};
\node at (2.9,-0.8) {\color{black}$-$};\node at (-2.9,-0.8) {\color{black}$-$};
\node at (1.9, 2.7) {\color{black}$-$};\node at (-1.9, 2.7) {\color{black}$-$};
\node at (1.9,-2.7) {\color{black}$+$};\node at (-1.9, -2.7) {\color{black}$+$};
\draw[thick, dashed] (-3,0) to (3,0);
\end{tikzpicture}
\caption{The opening of the lenses in the \color{black}{utmost} left constant region.}
\label{breather_lens}
\end{figure}

\noindent 
Because of \eqref{T_condtion1} and \eqref{T_condtion2}, the jump matrix $J_Y$ of the RH problem for $Y(k)$
has a factorization on the real axis of the form
\begin{equation}\label{fac_122}
J_{Y}(\xi,t;k)=
\begin{cases}
\begin{pmatrix}1 & 0 \\ \frac{-r(k)\e^{2\i t g(k,\xi)}}{T^2(k)} & 1\end{pmatrix}
\begin{pmatrix}1 & -\ol{r\(\ol{k}\)}T^2(k)\e^{-2\i t g(k,\xi)} \\ 0 & 1\end{pmatrix},\; k\in\mathbb{R}\backslash(-k_0, k_0)\\
\begin{pmatrix}1& \frac{-\ol{r\(\ol{k}\)}\ T_+^2\e^{-2\i t g}}{1+|r|^2} \\ 0 & 1\end{pmatrix}
\begin{pmatrix}1 & 0 \\ \frac{-r(k)\ \e^{2\i t g}}{(1+|r|^2)T_-^2} & 1
\end{pmatrix}, \quad k\in(-\color{black}{k_0, k_0})\,.
\end{cases}
\end{equation}
We use the above factorization  to  open the lenses and  define the matrix $X(\xi,t;k)$ as
\begin{equation}
X(\xi,t;k)=
\begin{cases}
&Y(\xi,t;k)\begin{pmatrix}1 & 0 \\ \frac{-r(k)\e^{2\i t g(k,\xi)}}{T^2(k)} & 1\end{pmatrix},\quad k\in\Omega_1, \\
&Y(\xi,t;k)\begin{pmatrix}1 & -\ol{r\(\ol{k}\)}T^2(k)\e^{-2\i t g(k,\xi)} \\ 0 & 1\end{pmatrix}^{-1},\quad k\in\Omega_2, \\
&Y(\xi,t;k)\begin{pmatrix}1& \frac{-\ol{r\(\ol{k}\)}\ T_+^2\e^{-2\i t g}}{1+|r|^2} \\ 0 & 1\end{pmatrix},\quad k\in\Omega_3, \\
&Y(\xi,t;k)\begin{pmatrix}1 & 0 \\ \frac{-r(k)\ \e^{2\i t g}}{(1+|r|^2)T_-^2(k)} & 1
\end{pmatrix}^{-1}, \quad k\in\Omega_4, \\
&Y(\xi,t;k),\quad \mbox{elsewhere}\,.
\end{cases}
\end{equation}
We arrive at  the following RH problem for the function $X(\xi,t;k)$.
\noindent
\begin{RHP}\label{RHP_final_constant1}\textbf{Final RH problem for the utmost left  constant region $\xi<\frac{-c_{\l}^2}{2}.$ }
\begin{itemize}
\item Find a $2\times 2$ matrix function $X(\xi,t;k)$ meromorphic  for $k\in\mathbb{C}\backslash \Sigma$ with $\Sigma=\cup_{j=1}^4L_j\cup(\i c_-,-\i c_-)$  (see Figure \ref{breather_lens}) 
and   such that
\item $ X_-(\xi,t;k)=X_+(\xi,t;k)J_X(\xi,t;k),\quad k\in \Sigma$
with 
$$J_X(\xi,t;k)=\begin{pmatrix}0&\i\\\i&0\end{pmatrix},\ k\in(\i c_{\l}, -\i c_{\l}),$$
$$J_X(\xi,t;k)=\begin{cases}
\begin{pmatrix}1 & 0 \\ \frac{-r(k)\ \e^{2\i t g(k)}}{T^2(k)} & 1\end{pmatrix}, k\in L_1,\\
\begin{pmatrix}1 & -\ol{r\(\ol k\)} T^2(k) \e^{-2\i t g(k)} \\ 0 & 1\end{pmatrix}, k\in L_2, \\
\begin{pmatrix}
1& \frac{-\ol{r\(\ol{k}\)}\ T^2(k)\e^{-2\i t g(k)}}{1+r(k)\ol{r\(\ol{k}\)}} \\ 0 & 1
\end{pmatrix}, k\in L_3,\\\begin{pmatrix}1 & 0 \\ \frac{-r(k)\ \e^{2\i t g(k)}}{\(1+r(k)\ol{r\(\ol{k}\)}\)T^2(k)} & 1
\end{pmatrix}, k\in L_4,
\end{cases}$$
and the pole conditions are the same as in the RH problem~\ref{RHP_final_constant2}.
%
\item Asymptotics: $X(k)\to I$ as $k\to\infty.$
\end{itemize}
\end{RHP}

\subsubsection{Model problems for the regions $\xi<\frac{-c_{\l}^2}{2} $ and $\frac{-c_{\l}^2}{2}<\xi<\frac{-c_{\l}^2}{2}+c_{\r}^2$}
Looking at the RH problems \ref{RHP_final_constant2} and  \ref{RHP_final_constant1}, we observe that the jump matrix is exponentially close to $I$ everywhere, except for the jump on $(\i c_{\l}, -\i c_{\l}),$ and the parts of the curves $L_j,$ $j=1,\ldots,8,$ intersecting either the real line (i.e. the vicinities of the points $\color{black}{\pm k_0(\xi)}$) , or the interval $(\i c_{\l}, -\i c_{\l})$ (i.e. the vicinities of the points $\pm\i d_0(\xi)$). A careful analysis of the contribution of the points $\pm \color{black}{k_0(\xi)},$ to the asymptotics for $q(x,t)$ involves the construction  of a local (approximate) solution to the RH problem in the vicinities of those points, and is usually called \textit{ parametrix} analysis. The parametrices in the vicinities of the points $\color{black}{\pm k_0}$ can be constructed in terms of parabolic cylinder function, in a similar way as  in \cite[section 5]{M11}, and they give the contribution of the order $\mathcal{O}(t^{-1/2}).$ The parametrices in the vicinities of the points $\pm \i d_0$ can be constructed in terms of the Airy function and its derivative, in a similar way  as was done in \cite[section 3]{BM}, and they give a contribution of the order $\mathcal{O}(t^{-1}).$ We thus reduce the problem to analysis of the corresponding model problems.
The regular RH model problem in these regions gives the constant $c_{\l}$ as the main term in the asymptotic expansion  for $q(x,t).$
The meromorphic RH model  problems give us the breathers  on a constant background.
The model problem for this case is 
\begin{RHP}\label{RHP_model_const}\textbf{Model RH problem for the 
\color{black}{middle and utmost} left
constant regions $\boldsymbol{\frac{-c_{\l}^2}{2}<\xi<\frac{-c_{\l}^2}{2}+c_{\r}^2}$   and $\boldsymbol{\frac{-c_{\l}^2}{2}<\xi}$. }

Find a $2\times 2$ matrix $M^{mod}(k,\xi)$ meromorphic for  $k\in \mathbb{C}\backslash[\i c_-,-\i c_-]$ and such that
\begin{enumerate}
\item Jump: $$M^{mod}_-(k)=M^{mod}_+(k)\begin{pmatrix}0&\i\\\i&0\end{pmatrix},\quad k\in(\i c_{\l}, -\i c_{\l}),$$
%

\item Pole: if there are points $\kappa_j$ with $\Im \kappa_j>0,$ $\Re\kappa_j>0$,  $-\delta<\Im g(\kappa_j,\xi)<\delta,$ we have the pole condition
{\color{black}
$$\mathrm{Res}_{\kappa_j}M^{(mod)}(k)=\lim\limits_{k\to \kappa_j}M^{(mod)}(k)\begin{pmatrix}0&0\\ \frac{ \i\nu_j}{T^2(k,\xi)} \e^{2\i t  g(k,\xi)} & 0\end{pmatrix},$$
$$\mathrm{Res}_{-\ol{\kappa}_j}M^{(mod)}(k)=\lim\limits_{k\to -\ol{\kappa}_j}M^{(mod)}(k)\begin{pmatrix}0&0\\\frac{ \i\,\widebar{\nu}_j }{T^2(k,\xi)} \e^{2\i t  g(k,\xi)} & 0\end{pmatrix},$$
$$\mathrm{Res}_{\ol{\kappa}_j}M^{(mod)}(k)=\lim\limits_{k\to \ol{\kappa}_j}M^{(mod)}(k)\begin{pmatrix}0&\i\,\widebar{\nu}_jT^2(k,\xi)\e^{-2\i t  g(k,\xi)}\\0&0\end{pmatrix},$$
$$\mathrm{Res}_{-\kappa_j}M^{(mod)}(k)=\lim\limits_{k\to -\kappa_j}M^{(mod)}(k)\begin{pmatrix}0&\i\nu_j T^2(k,\xi)\e^{-2\i t g(k,\xi)}\\0&0\end{pmatrix},$$
}
%
\item $M^{(mod)}(k)=1+O(k^{-1})$ as $|k|\to\infty$.
\end{enumerate}
\end{RHP}
We observe that  $\Im g(\kappa_j,\xi)=\Im\left( \sqrt{\kappa_j^2+c_-^2}\right)(12\xi-V_j)$ where $$V_j=4\Im\left( \sqrt{\kappa_j^2+c_-^2}\right)^2+6c_-^2-12\Re\left( \sqrt{\kappa_j^2+c_-^2}\right)^2$$   is the speed of the breather corresponding to the spectrum $\kappa_j$ on the constant background $c_-$.

Therefore the condition  $-\delta<\Im g(\kappa_j,\xi)<\delta,$  is equivalent to requiring that 
\begin{equation*}
|V_j-12\xi|<\tilde{\delta},\quad \xi=\dfrac{x}{12 t}, \quad  \tilde{\delta}=\delta/\Im\left( \sqrt{\kappa_j^2+c_-^2}\right).
\end{equation*} 
Namely we can see the  breather if  we observe in the direction of the $(x,t)$   plane such that $|V_j-\frac{x}{t}|<\tilde{\delta}$.   
We number the  points of the discrete spectrum $\kappa_j$, according to their speed as specified in  the introduction.
We further observe that if $\xi$ is such that  $\Im g(\kappa_j,\xi)=0$ then   $\xi=\frac{V_j}{12}$ and  with the above ordering we have that 
$$\Im g\left(\kappa_{l},\frac{V_j}{12}\right)<0,\;\mbox{for $l<j$ and } \Im g\left(\kappa_{l},\frac{V_j}{12}\right)>0, \;\; \mbox{for $l>j$}.$$
With this ordering the function $\widetilde T(k,\frac{V_j}{12})$ defined in \eqref{Ttilde} takes the form \eqref{T_tilde}.
The RH problem  \ref{RHP_model_const}  corresponds to the case of a single breather on a constant background  that   we considered in Section \ref{sect_breath}.
 This problem can be solved explicitly in terms of elementary functions, and 
$$2\i \lim\limits_{k\to\infty}M_{12}^{(mod)}(k,\xi) = q_{breath}(x,t; c_{\l},\kappa_j, \widehat{\nu_j}) ,$$
where
\begin{equation}\label{chi_j_hat}\widehat \nu_j = \dfrac {\nu_j\ 
}{T^2(\kappa_j,\xi)},
\end{equation}
 with $T(k,\xi)$ as in \eqref{T_cl2} 
   for $\frac{-c_{\l}^2}{2}<\xi<\frac{-c_{\l}^2}{2}+c_{\r}^2$     and $d_0=\i\sqrt{\xi+\frac{c_-^2}{2}}$, and $T(k,\xi)$ as in \eqref{T_condtion2}  for 
 $\xi>\frac{-c_{\l}^2}{2}$, $\color{black}{k_0}=\sqrt{-\xi-\frac{c_-^2}{2}}$.
Summing up, we have shown that for  large values of time the solution of the MKdV equation 
in the domains $\xi<\frac{-c_{\l}^2}{2}$ and $\frac{-c_{\l}^2}{2}<\xi<\frac{-c_{\l}^2}{2}+c_{\r}^2$   is as follows:
\begin{itemize}
\item[(1)] for those $\xi, t$ such that  $|\Im g(\kappa_j,\xi)|>\delta$ for all $\kappa_j$ we have
$$q(x,t)=c_{\l}+{\mathcal O}(t^{-\frac{1}{2}}),$$
\item [(2)] For those $\xi, t$ such that there exists $\kappa_j$, $\Re\kappa_j>0,$ $\Im\kappa_j>0,$  with $|\Im g(\kappa_j, \xi)|<\delta,$
we have 
$$q(x,t)=q_{breath}(x,t; c_{\l}, \kappa_j, \widehat\nu_j)+{\mathcal O}(t^{-\frac{1}{2}}), 
$$
where $ q_{breath}$ has been defined in \eqref{q_breath} and the phase   $\widehat \nu_j$ in \eqref{chi_j_hat}.
\end{itemize}
Thus we have concluded the proof of  Theorem \ref{thrm:asymp:rl}   part (c).


\subsection{Proof of Theorem \ref{thrm:asymp:rl}   part (a):   soliton and breather region}
In this case we consider the right constant region $\xi>\frac{c_{\l}^2}{3}+\frac{c_{\r}^2}{6}$  where solitons and breathers are moving in the positive $x$ direction with a speed greater then the dispersive shock wave.
As in the  previous case we introduce the matrix function
$$Y (\xi,t;k)=M(x,t;k)\e^{\i\(tg(k,\xi)-\theta(x,t;k)\)\sigma_3}T^{-\sigma_3}(k,\xi),
\quad \xi=\frac{x}{12t},$$
where now the function  $g(k,\xi)$ takes the form  \cite[page 11]{KM2}
\begin{equation}
\label{g_breather2}
g(k,\xi)=2\(2k^2-c_{+}^2+6\xi\right)\sqrt{k^2+c_{+}^2}\,,\quad \quad \end{equation}
namely $g(k,\xi)$ is analytic in $k\in\mathbb{C}\backslash [\i c_+,-\i c_+]$  and
\begin{equation}
\label{g_soliton}
g_+(k,\xi)+g_-(k,\xi)=0,\; k\in(\i c_+,-\i c_+),\quad g(k,\xi)=4k^3+12\xi k
+O(k^{-1}),\;\mbox{as $ |k|\to\infty$}.
\end{equation}
 We chose $\sqrt{k^2+c_{+}^2}$ to be real on $(\i c_+,-\i c_+)$ and positive for $k=+0$.
The sign of $\Im g(k,\xi)$ are plotted in Figure \ref{soliton_region}.
\begin{figure}[ht]

\begin{minipage}{0.5\linewidth}

\begin{tikzpicture}
\draw[fill=black] (0,1.3) circle [radius=0.05];
\draw[fill=black] (0,-1.3) circle [radius=0.05];
\draw[fill=black] (0,0.5) circle [radius=0.05];
\draw[fill=black] (0,-0.5) circle [radius=0.05];
\draw[] (0,2.) circle [radius=0.05];
\draw[] (0,-2) circle [radius=0.05];
\draw[thick,dashed] (0,0.5) to (0,-.5);
\node at (-0.4,1.3) {$\i c_{\l}$};\node at (0.4,0.5) {$\i c_{\r}$};
\node at (-0.4,-1.3) {$-\i c_{\l}$};\node at (0.4,-0.5) {$-\i c_{\r}$};
\draw[thick,dashed] (-2,3) [out=-60, in=180] to (0,2.) [out=0,in=-120] to (2,3);
\draw[thick,dashed] (-2,-3) [out=60, in=180] to (0,-2.) [out=0,in=120] to (2,-3);
\node at (1.7,1.8) {$\i\kappa_0=\i\sqrt{3\xi-c_{\r}^2/2}$};
\node at (0.3,-1.8) {$-\i\kappa_0$};
\node at (0.3,-1.8) {$-\i\kappa_0$};
\node at (2,1.3) {\color{black}$+$};\node at (-2,1.3) {\color{black}$+$};
\node at (2,-1.5) {\color{black}$-$};\node at (-2,-1.5) {\color{black}$-$};
\node at (0,-2.8) {\color{black}$+$};\node at (0,2.8) {\color{black}$-$};
\draw[thick, dashed] (-3,0) to (3,0);
\end{tikzpicture}
\\
\centering{(a)} 
\end{minipage}
\begin{minipage}{0.49\linewidth}
\begin{tikzpicture}
\draw[fill=black] (0,1.3) circle [radius=0.05];
\draw[fill=black] (0,-1.3) circle [radius=0.05];
\draw[fill=black] (0,0.5) circle [radius=0.05];
\draw[fill=black] (0,-0.5) circle [radius=0.05];
\draw[] (0,2.) circle [radius=0.05];
\draw[] (0,-2) circle [radius=0.05];
\draw[thick,postaction = decorate, decoration = {markings, mark = at position 0.25 with {\arrow{>}}},
decoration = {markings, mark = at position 0.75 with {\arrow{>}}}] (0,.5) to (0,-.5);
\node at (-0.4,1.3) {$\i c_{\l}$};\node at (0.4,0.5) {$\i c_{\r}$};
\node at (-1.,-.5) {$\Omega_2$};\node at (1.4,0.5) {$\Omega_1$};
\node at (-0.4,-1.3) {$-\i c_{\l}$};\node at (0.4,-0.5) {$-\i c_{\r}$};
\node at (0.3,1.8) {$\i\kappa_0$};
\node at (0.3,-1.8) {$-\i\kappa_0$};
\node at (2.2,1.3) {$L_1$};
\node at (2.2,-1.2) {$L_2$};
\draw[thick, postaction = decorate, decoration = {markings, mark = at position 0.5 with {\arrow{>}}}](-3,0.7) to [out = 0, in =-180] (0,1.5) to [out=0, in = -180] (3,0.7);
\draw[thick, postaction = decorate, decoration = {markings, mark = at position 0.5 with {\arrow{>}}}](-3,-0.7) to [out = 0, in =-180] (0,-1.5) to [out=0, in = -180] (3,-0.7);
\draw[thick, dashed] (-3,0) to (3,0);
\end{tikzpicture}
\\
\\
\\
\\
\centering{ (b)} 
\end{minipage}
\caption{Right constant region. (a) Distribution of signs for $\Im g =0$ for 
$\xi>\frac{c_{\l}^2}{3}+\frac{c_{\r}^2}{6}.$  (b) Contours of the RH problem.}
\label{soliton_region}
\end{figure}
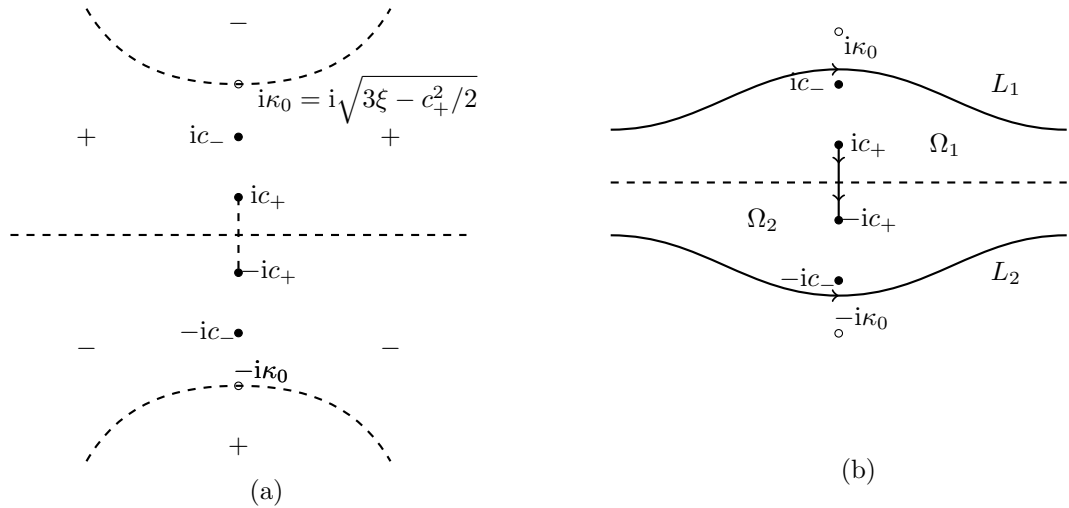
The function $T(k,\xi)$ is as in \eqref{Ttilde}  with $c_-$ replaced by $c_+$ and it satisfies the condition 
$$
T_-(k)T_+(k)=1,\quad k\in(\i c_{\l}, -\i c_{\l})\,.
$$
The solution is  explicitly given as follows:
\begin{equation}\label{T_cr} T(k,\xi)=\widetilde T(k,\xi)
\ \exp\left[\frac{\sqrt{k^2+c_{\r}^2}}{2\pi\i}\int\limits_{\i c_{\r}}^{-\i c_{\r}}\frac{\(-\ln\widetilde T^2(s,\xi) \)\d s}{(s-k)\left( \sqrt{k^2+c_{\r}^2}\right)_+}\right], 
\end{equation}
where $\widetilde T(k,\xi)$ is of the form \eqref{Ttilde}. 
This choice of the function $T(k,\xi)$ permits the factorization of the jump matrix $J_Y$ on the real axis in the form 
\begin{equation*}
J_Y(\xi,t;k)=\begin{pmatrix}1 & 0 \\ \frac{-r(k)\e^{2\i t g(k,\xi)}}{T^2(k)} & 1\end{pmatrix}
\begin{pmatrix}1 & -\ol{r\(\ol{k}\)}T^2(k)\e^{-2\i t g(k,\xi)} \\ 0 & 1\end{pmatrix},\quad k\in\mathbb{R}.
\end{equation*}
We use the above factorization  to  open the lenses and  define the matrix $X(\xi,t;k)$ as
\begin{equation}
X(\xi,t;k)=
\begin{cases}
&Y(\xi,t;k)\begin{pmatrix}1 & 0 \\ \frac{-r(k)\e^{2\i t g(k,\xi)}}{T^2(k)} & 1\end{pmatrix},\quad k\in\Omega_1, \\
&Y(\xi,t;k)\begin{pmatrix}1 & -\ol{r\(\ol{k}\)}T^2(k)\e^{-2\i t g(k,\xi)} \\ 0 & 1\end{pmatrix}^{-1},\quad k\in\Omega_2, \\
&Y(\xi,t;k),\quad \mbox{elsewhere}\,,
\end{cases}
\end{equation}
where the regions $\Omega_1$ and $\Omega_2$ are specified Figure~\ref{soliton_region}.

\noindent
\begin{RHP}\label{RHP_final_solitons}\textbf{Final RH problem for the soliton and breather region $\boldsymbol{\xi>\frac{c_{\l}^2}{3}+\frac{c_{\l}^2}{6}.}$ }
Find a $2\times 2$ matrix $X(\xi,t;k)$ meromorphic in $k\in\mathbb{C}\backslash \Sigma$ where the contour $\Sigma$ is specified below
and 
\begin{equation*}
X_-(\xi,t;k)=X_+(\xi,t;k)J_X(\xi,t;k),\quad k\in\Sigma,
\end{equation*}
and 
\begin{equation}
J_X(\xi,t;k)=
\begin{cases}
\begin{pmatrix}0&\i\\\i&0\end{pmatrix},\ k\in(\i c_{\r}, -\i c_{\r}),\\
 J^{(2)}=I, \ k\in(\i c_{\l}, \i c_{\r}) \cup (-\i c_{\r}, -\i c_{\l}),\\
\begin{pmatrix}1 & 0 \\ \frac{-r(k)\ \e^{2\i t g(k,\xi)}}{T^2(k,\xi)} & 1\end{pmatrix}, \quad k\in L_1,\\
\begin{pmatrix}1 & -\ol{r\(\ol k\)} T^2(k,\xi) \e^{-2\i t g(k,\xi)} \\ 0 & 1\end{pmatrix}, \quad k\in L_2,
\end{cases}
\end{equation}
with $L_1$ and $L_2$ as in figure~\ref{soliton_region}  and 
at the points $\kappa_j$ with $\Im \kappa_j>0,$  we keep the pole condition  as in the RH problem \ref{RHP_final_constant2}.
Finally the normalizing condition at infinity is  $X(k)=1+O(k^{-1})$ as $|k|\to\infty$.
\end{RHP}
We observe that the jump matrices  approach the identity exponentially fast on $L_1$ and $L_2$  and also for those values of $\xi$ and $\kappa_j$ such that $ \Im\kappa_{j}>0$  and 
$\Im g(\kappa_j)>\varepsilon$ or $ \Im g(\kappa_j)<-\varepsilon$.
 We finally arrive to the following model problem.
\begin{RHP}\label{RHP_model_soliton}\textbf{Model RH problem for the right region  $\boldsymbol{\xi>\frac{c_{\l}^2}{3}+\frac{c_{\r}^2}{6}.}$ }
Find a $2\times 2$ matrix $M^{mod}(\xi,t;k)$ meromorphic for  $k\in \mathbb{C}\backslash[\i c_+,-\i c_+]$ and such that
\begin{itemize}
\item Jump: $$M^{(mod)}_-(\xi,t;k)=M^{(mod)}_+(\xi,t;k)\begin{pmatrix}0&\i\\\i&0\end{pmatrix},\ k\in(\i c_{\r}, -\i c_{\r}).$$
\item Poles: if there are points $\kappa_j$ with $\Im \kappa_j>0,$ $\Re\kappa_j>0$,  $-\delta<\Im g(\kappa_j,\xi)<\delta,$ we have the pole condition
{\color{black}
$$\mathrm{Res}_{\kappa_j}M^{(mod)}(\xi,t;k)=\lim\limits_{k\to \kappa_j}M^{(mod)}(\xi,t;k)\begin{pmatrix}0&0\\ \frac{ \i\nu_j}{T^2(k)} \e^{2\i t  g(k,\xi)} & 0\end{pmatrix},$$
$$\mathrm{Res}_{\ol{\kappa}_j}M^{(mod)}(\xi,t;k)=\lim\limits_{k\to \ol{\kappa}_j}M^{(mod)}(\xi,t;k)\begin{pmatrix}0&\i\,\widebar{\nu}_jT^2(k)\e^{-2\i t  g(k,\xi)}\\0&0\end{pmatrix},$$
}
and the  symmetric conditions at the points $-\kappa_j$ and  $-\ol{\kappa}$  when we are considering a breather.
\item $M^{(mod)}(\xi,t;k)=1+O(k^{-1})$ as $k\to\infty$.
\end{itemize}
\end{RHP}
This model problem is exactly the model problem, which gives solitons or breathers on the constant background $c_{\r}.$
Thus,
\begin{equation*}\begin{split}&\lim\limits_{k\to\infty}2\i k(M^{mod}(\xi,t;k))_{12} = q_{sol}(x,t; c_{\r}, \kappa_j, x_j),\quad \mbox{ if }\Re\kappa_j=0,
\\&
\lim\limits_{k\to\infty}2\i k(M^{mod}(\xi,t;k))_{12} = q_{breath}(x,t; c_{\r}, \kappa_j, \hat\nu_j),\quad \mbox{ if }\Re\kappa_j>0,
\end{split}\end{equation*}
where $\xi=x/(12t)$ and where  we use the  numbering of solitons and breathers  according to their velocities  so that the phase shift can be written in the form
\begin{equation}
\label{chihat_j}
\begin{split}
&\hat\nu_j = \dfrac{\nu_j }{T_j^2(\kappa_j)},\quad x_j=\log\dfrac{2(\kappa_j^2-c_+^2)T_j^2(\i|\kappa_j|)}{|\nu_j|\kappa_j}, 
\\
&T_j(k):=T(k,\frac{V_j}{12})=\widetilde T(k,\frac{V_j}{12})
\ \exp\left[\frac{\sqrt{k^2+c_{\r}^2}}{2\pi\i}\int\limits_{\i c_{\r}}^{-\i c_{\r}}\frac{\(-\ln\widetilde T^2(s,\xi) \)\d s}{(s-k)\left( \sqrt{k^2+c_{\r}^2}\right)_+}\right], 
\end{split}
\end{equation}
where   $\widetilde T(k,\xi)$  as in \eqref{Ttilde}  so that $\widetilde T(k,\frac{V_j}{12})$ coincides with $\widetilde T_j(k)$ defined in \eqref{T_tilde}.

%

%
Since the jump matrices are exponentially close to the identity matrix uniformly, the contribution of the poles give the leading order asymptotic expansion, and the whole contours give exponentially small contribution. We can conclude that the solution of the Cauchy problem  \eqref{MKdV}, \eqref{ic0} in the domain $\xi>\frac{c_{\l}^2}{3} + \frac{c_{\r}^2}{6}$
has the following asympotics as $t\to\infty:$
\begin{itemize}
\item[(1)] For those $\xi, t$ such that  $|\Im g(\kappa_j,\xi)|>\delta$ for all $\kappa_j$ we have
$$q(x,t)=c_{\r}+\mathcal{O}(\e^{-C t}),$$
for some constant $C>0$.
\item [(2)] For those $\xi, t$ such that there exists $\kappa_j$,  $\Im \kappa_j>0$, with $|\Im g(\kappa_j, \xi)|<\delta,$ 
we have 
$$q(x,t)=
\begin{cases}
&q_{breath}(x,t; c_{\l}, \kappa_j, \widehat\nu_j)+\mathcal{O}(\e^{-C t})\quad \textrm{ if } \Re\kappa_j>0, \\
&q_{sol}(x,t; c_{\l}, \kappa_j, x_j)+\mathcal{O}(\e^{-C t})\quad \textrm{ if } \Re\kappa_j=0,
\end{cases}
$$
\noindent for some  constant  $C>0$. Here $\widehat\nu_j$  and $x_j$ are  defined in \eqref{chihat_j}.
\end{itemize}
We thus have finished the proof of Theorem \ref{thrm:asymp:rl}   part (a).
\color{black}

\section{Subleading  term of asymptotic expansion  in the left regions $\xi<-\frac{c_-^2}{2}$ and $\frac{-c_-^2}{2}<\xi<-\frac{c_-^2}{2}+c_+^2$}

\subsection{Second term of asymptotics in the utmost left constant region $\xi<-\frac{c_-^2}{2}$ or $x<-6c_-^2t$}
\label{sect_SecondTermUtmost}

\begin{figure}[ht!]
\centering
\begin{tikzpicture}
\draw[->,
decoration={markings, mark=at position 0.1 with {\arrow{>}}},
decoration={markings, mark=at position 0.465 with {\arrow{>}}}, postaction = {decorate},
very thick]
(-5,2) to (-4,2) [out = 0 ] to [in = 135] (-2,0)
to [out=-45, in=180] (0,-1.7);
\draw[very thick, 
decoration={markings, mark=at position 0.5 with {\arrow{>}}},
decoration={markings, mark=at position 0.9 with {\arrow{>}}},
postaction={decorate}
] (0,-1.7)
to 
[out=0, in=-135] (2,0)
[out=45, in=180] to (4,2) to[out=0] (5,2);

\draw[->,
decoration={markings, mark=at position 0.1 with {\arrow{>}}},
decoration={markings, mark=at position 0.465 with {\arrow{>}}}, postaction = {decorate},
very thick]
(-5,-2) to (-4,-2) [out = 0 ] to [in = -135] (-2,0)
to [out=45, in=180] (0,1.7);
\draw[very thick, 
decoration={markings, mark=at position 0.5 with {\arrow{>}}},
decoration={markings, mark=at position 0.9 with {\arrow{>}}},
postaction={decorate}
] (0,1.7)
to 
[out=0, in=135] (2,0)
[out=-45, in=180] to (4,-2) to[out=0] (5,-2);

%
%

\draw[very thick, postaction = decorate, decoration={markings, mark= at position 0.25 with {\arrow{>}}},
decoration={markings, mark=at position 0.75 with {\arrow{>}}}]
(0,1.5) to (0,-1.5);

\scriptsize{
\node at (0.5, 0.5){$\begin{pmatrix}0&\i\\\i&0\end{pmatrix}$};

\node at (4,1.3){$\begin{pmatrix}1&0\\\frac{-r\e^{2\i t g}}{T^2}&1\end{pmatrix}$};
\node at (-4,1.3){$\begin{pmatrix}1&0\\\frac{-r\e^{2\i t g}}{T^2}&1\end{pmatrix}$};

\node at (4.2,-1.3){$\begin{pmatrix}1 & -\ol{r}T^2\e^{-2\i t g}\\0&1\end{pmatrix}$};
\node at (-4.2,-1.3){$\begin{pmatrix}1 & -\ol{r}T^2\e^{-2\i t g}\\0&1\end{pmatrix}$};
\draw[ dashed] (-5,0) to (5,0);
\node at (-2,0.4){$-k_0$};
\node at (2,0.4){$k_0$};
\node at (3,2){$L_1$};\node at (-3.3,2){$L_1$}; \node at (2.6,-1.6){$L_2$};
\node at (-3,-1.6){$L_2$};
\node at (1.2,1.3){$L_3$};\node at (1.3,-1.3){$L_4$};

\node at (0,2){$\begin{pmatrix}1 &\frac{-\ol{r}T^2\e^{-2\i t g}}{1+r\ol{r}}\\0&1\end{pmatrix}$};
\node at (0,-2.1){$\begin{pmatrix}1 & 0\\\frac{-r\e^{2\i t g}}{(1+r\ol{r})T^2}&1\end{pmatrix}$};
}
\end{tikzpicture}
\caption{The final RH problem in the utmost left constant region $\xi<-\frac{c_-^2}{2}.$}
\label{Fig_ParCyl_ut}
\end{figure}
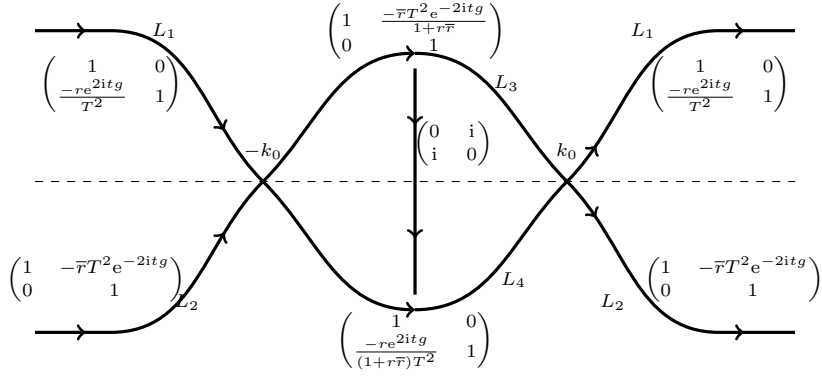

The final RH problem in the utmost left constant region $\xi<-\frac{c_-^2}{2}$ is depicted in  Figure \ref{Fig_ParCyl_ut}.
Note that the jump matrices are exponentially close to the identity matrix everywhere except for the stationary points $\pm k_0,$ where $k_0=\sqrt{-\xi-\frac{c_-^2}{2}}$ and $g'(k_0,\xi)=0.$
%
%
We will now construct a piecewise matrix-valued function, which satisfies approximately the jumps in the vicinities of the points $k=\pm k_0.$
%
%

\medskip\noindent {\bf Function $\boldsymbol{g(k,\xi)}.$}
We start by inspecting the behavior of the phase function $g(k,\xi)$ in the neighborhoods of the points $\pm k_0.$ Note that $g$ is an odd function, $g(k)=-g(-k).$ We have
\begin{equation*}
g(k,\xi)=g(k_0,\xi)+\frac12g''(k_0,\xi)(k-k_0)^2+\mathcal{O}\((k-k_0)^3\),\quad k\to k_0,
\end{equation*}
where $g''(k_0)=\frac{-24(\xi+\frac{c_-^2}{2})}
{\sqrt{\frac{c_-^2}{2}-\xi}}>0,$ and 
\begin{equation*}
g(k,\xi)=-g(k_0,\xi)-\frac12g''(k_0,\xi)(k-k_0)^2+\mathcal{O}\((k-k_0)^3\),\quad k\to -k_0. \end{equation*}
This prompts us to introduce the local conformal changes of variables,
\begin{equation*}
z^2 =: g(k,\xi)-g(k_0,\xi),\;\mbox{as $ k\to k_0$},
\qquad
z_l^2 =: -g(k,\xi)-g(k_0,\xi),\;\mbox{as $ k\to -k_0$},
\end{equation*}
where $g(k_0,\xi)=-8\(-\xi+\frac{c_-^2}{2}\)^{\frac32}.$
We also introduce the rescaled variables 
\begin{equation*}
\zeta:= \sqrt{t\,}\,z,\quad \zeta_l:=\sqrt{t\,}\, z_l.
\end{equation*}
Then, using the expansion of the function $g(k)$ near $\pm k_0$ we have 
\begin{multline*}
z=\sqrt{\frac{g''(k_0,\xi)}{2}}(k-k_0)\(1+\mathcal{O}(k-k_0)\),\; k\to k_0,
\\
z_l = -\sqrt{\frac{g''(k_0,\xi)}{2}}(k+k_0)\(1+\mathcal{O}(k+k_0)\), k\to-k_0.
\end{multline*}
Note that the {\bf l}eft variable $z_l$ is rotated $180$ degrees compared to the orientation of $k+k_0,$ while the right variable $z$ preserves the orientation of $k-k_0$.

\medskip\noindent {\bf Function $\boldsymbol{T(k).}$}
Next, we need to understand the behavior of the function $T(k)$  defined in \eqref{T_cl1} as $k$ approaches the points $\pm k_0.$ Note that the singular behavior of $T(k)$ is described 
 by the function $\(\frac{k-k_0}{k+k_0}\)^{-\i\nu},$ where $\nu=\frac{1}{2\pi}\ln(1+|r(k_0)|^2).$  We define the function $\chi(k)$ in such a way that 
\begin{equation*}
T(k)=:\(\frac{k-k_0}{k+k_0}\)^{-\i\nu(\xi)}\chi(k,\xi),
\quad \mbox{ where }
\nu(\xi)=\frac{1}{2\pi}\ln\(1+|r(k_0)|^2\),
\end{equation*}
and the function $\chi$ is meromorphic in $\mathbb{C}\setminus[-k_0,k_0]\cup[\i c_-,-\i c_-],$ and has a non-zero limit as $k\to\pm k_0$ (we thus have that the function $\chi$ is continuous at the points $\pm k_0,$ though it has discontinuity across the segment $[-k_0,k_0]$).

Note that the function $T(k)$ possesses the following symmetries:
\begin{equation*}\ol{T(-\ol k)} = T(k),\quad \ol{T(\ol k)} T(k)=1,
\quad
\mbox{ and thus     $|\chi(k_0)|=1$  and
$\chi(-k_0)=\frac{1}{\chi(k_0)}$.}
\end{equation*}
Singling out the main components of the function $T(k),$ we can write it as 
\begin{equation*}
T(k)=\zeta^{-\i\nu}\cdot\(\frac{k-k_0}{ \, (k+k_0) \sqrt{t\,}z}\)^{-\i\nu}\cdot\chi(k,\xi),\;\; k\to k_0,
\end{equation*}
\begin{equation*}
T(k)=\zeta_l^{\i\nu}\cdot\(\frac{(k-k_0)z_l\sqrt{t}}{k+k_0}\)^{-\i\nu}\cdot\chi(k,\xi),\;\;
k\to-k_0.
\end{equation*}

\subsubsection{Parabolic cylinder functions}
\label{sect_ParCyl}
The local solution  near the points $\pm k_0$  of the RH problem in Figure~\ref{Fig_ParCyl}  is obtained though the Parabolic Cylinder functions.
%
%
%
%
%
%
%
The goal of this section  is to construct a matrix function that reproduces in an approximate way the jumps of $X(k)$ in a neighbourhood of $\pm k_0$.
We follow \cite{Its81} to obtain the local parametrices.
The Parabolic Cylinder function $D_a(z)$, $a\in\mathbb{C},$ is  an entire function of $z,$  defined as the solution of the equation
\begin{equation*}
D_a''(z)+(a+\frac12-\frac{z^2}{4})D_a(z)=0,
\end{equation*}
with the asymptotic behaviour 
\begin{equation*}
D_a(z)=z^{a}\e^{-\frac14z^2}(1+\mathcal{O}(z^{-1}))\quad
\mbox{ as }z\to\infty \mbox{ inside the cone }
\arg z\in(-\frac{3\pi}{4},\frac{3\pi}{4}).
\end{equation*}
Furthermore, the parabolic cylinder function satisfies the  relations
\begin{equation}
\label{PC_rel}
\begin{split}
&
D_a(z)=D_a(-z)\e^{-\pi\i a}+\frac{\sqrt{2\pi}}{\Gamma(-a)}\e^{-\frac{\pi\i}{2}(a+1)}D_{-a-1}(\i z),
\\
&
D_a(z)=D_a(-z)\e^{\pi\i a}+\frac{\sqrt{2\pi}}{\Gamma(-a)}\e^{\frac{\pi\i}{2}(a+1)}D_{-a-1}(-\i z).
\end{split}
\end{equation}

Let $r_*$ and $ \rho_*$ be two nonzero  complex numbers such that $r_*\rho_*\neq -1$ and  denote  by $\nu=\frac{1}{2\pi}\ln\(1+r_*\rho_*\).$
Let us define the function
\begin{equation*}
\Psi(\zeta) = 
\begin{bmatrix}
2^{\i\nu}\e^{\frac{3\pi\nu}{4}}D_{-\i\nu}(2\e^{-\frac{3\pi\i}{4}}\zeta)
&
\beta_1 2^{-\i\nu}\e^{-\frac{\pi\nu}{4}}D_{\i\nu-1}(2\e^{-\frac{\pi\i}{4}}\zeta)
\\
\beta_2 2^{\i\nu}\e^{\frac{3\pi\nu}{4}}D_{-\i\nu-1}(2\e^{-\frac{3\pi\i}{4}}\zeta)
&
2^{-\i\nu}\e^{-\frac{\pi\nu}{4}}D_{\i\nu}(2\e^{-\frac{\pi\i}{4}}\zeta)
\end{bmatrix},
\quad \Im\zeta>0,
\end{equation*}
\begin{equation*}
\Psi(\zeta)=
\begin{bmatrix}
2^{\i\nu}\e^{-\frac{\pi\nu}{4}}D_{-\i\nu}(2\e^{\frac{\pi\i}{4}}\zeta)
&
-\beta_1 2^{-\i\nu}\e^{\frac{3\pi\nu}{4}}D_{\i\nu-1}(2\e^{\frac{3\pi\i}{4}}\zeta)
\\
-\beta_2 2^{\i\nu}\e^{-\frac{\pi\nu}{4}}D_{-\i\nu-1}(2\e^{\frac{\pi\i}{4}}\zeta)
&
2^{-\i\nu}\e^{\frac{3\pi\nu}{4}}D_{\i\nu}(2\e^{\frac{3\pi\i}{4}}\zeta)
\end{bmatrix},
\quad \Im\zeta<0,
\end{equation*}
where 
\begin{equation*}
\beta_1=\frac{-\i\sqrt{2\pi}\,2^{2\i\nu}\e^{\frac{\pi\nu}{2}}}{r_* \Gamma(\i\nu)}
=
\frac{\rho_* \Gamma(-\i\nu+1)2^{2\i\nu}}{\sqrt{2\pi}\, \e^{\frac{\pi\nu}{2}}},
\quad
\beta_2=
\frac{r_* \Gamma(1+\i\nu)}{\sqrt{2\pi}\, 2^{2\i\nu} \e^{\frac{\pi\nu}{2}}}
=
\frac{\i\sqrt{2\pi}\,\e^{\frac{\pi\nu}{2}}}{\rho_*2^{2\i\nu}\Gamma(-\i\nu)},
\end{equation*}
with $ \beta_1\beta_2=\nu.$
The function $\Psi(\zeta)$ is a piecewise analytic function   and using the relations \eqref{PC_rel}, it has the following jump over the real axis:
\begin{equation*}
\Psi_-(\zeta)=\Psi_+(\zeta)\begin{pmatrix}1 & -\rho_* \\ -r_* & 1+r_*\rho_*\end{pmatrix},\quad \zeta\in\mathbb{R},
\end{equation*}
where $\Psi_{\pm}(\zeta)$ are the boundary values of $\Psi(\zeta)$ as $\zeta$ approaches the oriented real line.

%
%
Next we introduce the function $\Phi(\zeta)$ defined as 
\begin{equation}
\label{Phi}
\Phi(\zeta)=
\begin{cases}
&\Psi(\zeta)\zeta^{\i\nu\sigma_3}\e^{\i\zeta^2\sigma_3},
\quad
\arg\zeta\in(\frac{\pi}{4},\frac{3\pi}{4})\cup(-\frac{3\pi}{4},-\frac{\pi}{4}),
\\
&
\Psi(\zeta)\zeta^{\i\nu\sigma_3}\e^{\i\zeta^2\sigma_3}
\begin{pmatrix}
1&0\\-r_*\zeta^{2\i\nu}\e^{2\i\zeta^2}
&1
\end{pmatrix},\quad\arg\zeta\in(0,\frac{\pi}{4}),
\\
&
\Psi(\zeta)\zeta^{\i\nu\sigma_3}\e^{\i\zeta^2\sigma_3}
\begin{pmatrix}
1& \rho_*\zeta^{-2\i\nu}\e^{-2\i\zeta^2} \\
0&1
\end{pmatrix},\quad\arg\zeta\in(-\frac{\pi}{4},0),
\\
&
\Psi(\zeta)\zeta^{\i\nu\sigma_3}\e^{\i\zeta^2\sigma_3}
\begin{pmatrix}
1& \frac{-\rho_*}{1+r_*\rho_*}\zeta^{-2\i\nu}\e^{-2\i\zeta^2} \\
0&1
\end{pmatrix},\quad\arg\zeta\in(\frac{3\pi}{4},\pi),
\\
&
\Psi(\zeta)\zeta^{\i\nu\sigma_3}\e^{\i\zeta^2\sigma_3}
\begin{pmatrix}
1 & 0 \\
 \frac{r_*}{1+r_*\rho_*}\zeta^{2\i\nu}\e^{2\i\zeta^2} & 1
\end{pmatrix},\quad\arg\zeta\in(-\pi, -\frac{3\pi}{4}).
\end{cases}
\end{equation}

The function $\Phi(\zeta)$ satisfies the RH problem  $\Phi_-(\zeta)=\Phi_+(\zeta)J_{\Phi}(\zeta),$
where

\begin{equation}
\label{J_Phi}\begin{split}
&J_{\Phi}(\zeta)
=\begin{pmatrix}
1&0\\-r_*\zeta^{2\i\nu}\e^{2\i\zeta^2}
&1
\end{pmatrix}, \;\;\zeta\in(0,\e^{\frac{\pi\i}{4}}\infty),\\
&J_{\Phi}(\zeta)=\begin{pmatrix}
1& -\rho_*\zeta^{-2\i\nu}\e^{-2\i\zeta^2} \\
0&1
\end{pmatrix},\;\; \zeta\in(0,\e^{-\frac{\pi\i}{4}}\infty),
\\
&J_{\Phi}(\zeta)=
\begin{pmatrix}
1& \frac{-\rho_*}{1+r_*\rho_*}\zeta^{-2\i\nu}\e^{-2\i\zeta^2} \\
0&1
\end{pmatrix},
\;\;\zeta\in(\e^{\frac{3\pi\i}{4}}\infty,0),
\\
&J_{\Phi}(\zeta)=
\begin{pmatrix}
1& 0 \\ \frac{-r_*}{1+r_*\rho_*}\zeta^{2\i\nu}\e^{2\i\zeta^2} & 1
\end{pmatrix},\;\;
\zeta\in(\e^{-\frac{3\pi\i}{4}}\infty,0).
\end{split}
\end{equation}
Besides, the function $\Phi(\zeta)$ has the following uniform asymptotics as $\zeta\to\infty,$
\begin{equation*}
\Phi(\zeta)=
\begin{bmatrix}
1 -\frac{\nu(1+\i\nu)}{8\zeta^2}(1 +\i\frac{(2+\i\nu)(3+\i\nu)}{16\zeta^2})
&
\frac{\e^{\frac{\pi\i}{4}}\beta_1}{2\zeta}
+\frac{\e^{-\frac{\pi\i}{4}}\beta_1(1-\i\nu)(2-\i\nu)}{16\zeta^3}
\\
\frac{\e^{\frac{3\pi\i}{4}}\beta_2}{2\zeta}
+\frac{\e^{-\frac{3\pi\i}{4}}\beta_2(1+\i\nu)(2+\i\nu)}{16\zeta^3}
&
1-\frac{\nu(1-\i\nu)}{8\zeta^2}(1-\i\frac{(2-\i\nu)(3-\i\nu)}{16\zeta^2})
\end{bmatrix}+\mathcal{O}(\frac{1}{\zeta^{5}}).
\end{equation*}
\begin{figure}[ht!]
\centering
\begin{tikzpicture}
\draw
[very thick,
decoration={markings, mark=at position 0.5 with {\arrow{>}}},
postaction = {decorate}]
(-1.5,1.5) to (0,0);
\draw
[very thick,
decoration={markings, mark=at position 0.5 with {\arrow{>}}},
postaction = {decorate}]
(-1.5, -1.5) to (0,0);
\draw
[very thick,
decoration={markings, mark=at position 0.5 with {\arrow{>}}},
postaction = {decorate}]
(0,0) to (1.5, 1.5);
\draw
[very thick,
decoration={markings, mark=at position 0.5 with {\arrow{>}}},
postaction = {decorate}]
(0,0) to (1.5, -1.5);
%
\scriptsize{
\node at (-3.4,1.2){$\begin{pmatrix}
1& \frac{-\rho_*}{1+r_*\rho_*}\zeta^{-2\i\nu}\e^{-2\i\zeta^2}\\
0&1
\end{pmatrix}$};
\node at (-3.6,-1){$\begin{pmatrix}
1& 0 \\ \frac{-r_*}{1+r_*\rho_*}\zeta^{2\i\nu}\e^{2\i\zeta^2}& 1
\end{pmatrix}$};
\node at (2.8,1){$\begin{pmatrix}
1&0\\-r_*\zeta^{2\i\nu}\e^{2\i\zeta^2} 
&1
\end{pmatrix}$};
\node at (2.8,-1){$\begin{pmatrix}
1&  -\rho_*\zeta^{-2\i\nu}\e^{-2\i\zeta^2}  \\
0&1
\end{pmatrix}$};
\node at (0,-0.3){$0$};
\node at (0,0.7){$\Large{\Sigma}$};
}
\end{tikzpicture}
\caption{The contour $\Sigma$ and jumps for the $\Phi(\zeta),$ $\Phi_{-}=\Phi_{+}J_{\Phi}.$}
\label{Fig_PC}
\end{figure}
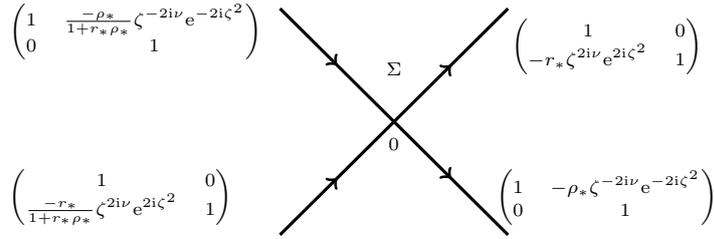

%
%

\subsubsection{Approximation, far away from breathers}
\label{sect_ApproximationParCylUtmost}
Now we are almost ready to define an approximation for the function $X(k)$. For this  purpose  we define  
 {\bf r}ight and {\bf l}eft parametrices
\begin{equation*}
\begin{split}
P_{r}(\zeta) = \Phi(\zeta)\Big|_{r_*=r(k_0),\rho_*=\ol{r(k_0)}}\ ,
\qquad
P_{l}(\zeta_l)={\scriptsize\begin{bmatrix}0&-1\\1&0\end{bmatrix}}
\Phi(\zeta_l)\Big|_{r_*=r(k_0),\rho_*=\ol{r(k_0)}}
{\scriptsize\begin{bmatrix}0&1\\-1&0\end{bmatrix}}.
\end{split}
\end{equation*}
where $\Phi$ has been defined in the previous Section~\ref{sect_ParCyl}.
Note that the parameters $r_*, \rho_*$ are the same for both $P_r$ and $P_l$, and that $P_l$ is additionally shuffled by  off-diagonal matrices.

Let  $\delta>0,$  such that the conformal changes of coordinates $z, z_l$ exist inside the disks  $${\mathcal U}_{\pm k_0}=\{k\in\C\, \,s.t.\;|k\pm k_0|\leq \delta\}.$$
We define the boundaries of the disks ${\mathcal U}_{\pm k_0}$ as $\partial {\mathcal U}_{\pm k_0}$ and we assume that they are oriented anticlockwise.
Furthermore we define 
\begin{align}
&P^{(k_0)}(k)=\mathcal{B}_r(k)P_{r}(\zeta(k))
\cdot\e^{-\i t g(k_0)\sigma_3}\chi(k_0)^{\sigma_3}
\(\frac{(k-k_0)}{(k+k_0)z\sqrt{t}}\)^{\i\nu\sigma_3},\quad k\in {\mathcal U}_{k_0}, \\
& P^{(-k_0)}(k)=\mathcal{B}_l(k)P_{l}(\zeta_l(k))
\cdot\e^{-\i t g(k_0)\sigma_3}\chi(k_0)^{\sigma_3}
\(\frac{(k-k_0)z_l\sqrt{t}}{k+k_0}\)^{\i\nu\sigma_3},\quad k\in {\mathcal U}_{-k_0},
\end{align}
where $\mathcal{B}_r(k)$ and $\mathcal{B}_l(k)$ are matrices that will be determine below.
We are now ready to define the approximate solution of $X(k)$  as 
\begin{equation*}
\begin{split}
X_{appr}(k)=\begin{cases}
P^{(k_0)}(k)\;
\mbox{   for $k\in  {\mathcal U}_{k_0},$}
\\
 P^{(-k_0)}(k)\;  \mbox{    for $k\in {\mathcal U}_{-k_0},$}
\\
M^{(mod)}(k)\; \mbox{for $k\in \C\backslash\{ {\mathcal U}_{-k_0}\cup  {\mathcal U}_{k_0}\}$},
\end{cases}
\end{split}
\end{equation*}
where $M^{(mod)}(k)=\begin{bmatrix}a_{\gamma}(k) & b_{\gamma}(k) \\b_{\gamma}(k) & a_{\gamma}(k)\end{bmatrix},$ $a_{\gamma}(k)=\frac12(\gamma(k)+\frac{1}{\gamma(k)}),$
$b_{\gamma}(k)=\frac12(\gamma(k)-\frac{1}{\gamma(k)}),$ $\gamma(k)=\sqrt[4]{\frac{k-\i c_-}{k+\i c_-}}.$

The matrices $\mathcal{B}_{r,l}(k)$ are obtained by requiring that 
\begin{equation*}
 P^{(\pm k_0)}(k)=M^{(mod)}(k)(I+o(1)) \quad  \mbox{as $t\to \infty$} \mbox{ and $k\in \partial{\mathcal U}_{\pm k_0}\backslash \Sigma^{\pm}$},  
 \end{equation*}
 where $\Sigma^{\pm }=\cup_{j=1}^4L_j\cap {\mathcal U}_{\pm k_0}$.
 This gives 
 \begin{equation}
 \label{Brl}
\begin{split}
\mathcal{B}_r(k)=M^{(mod)}(k)\e^{-\i t g(k_0,\xi)\sigma_3}\chi(k_0,\xi)^{\sigma_3}
\(\frac{k-k_0}{\sqrt{t}\,z(k+k_0)}\)^{-\i\nu\sigma_3}, \mbox{    for $k\in {\mathcal U}_{k_0},$}
\\
\mathcal{B}_l(k)=M^{(mod)}(k)\e^{\i t g(k_0,\xi)\sigma_3}\chi(k_0,\xi)^{-\sigma_3}
\(\frac{(k-k_0)z_l\sqrt{t}}{k+k_0}\)^{-\i\nu\sigma_3},  \mbox{    for $k\in {\mathcal U}_{-k_0}$}.
\end{split}
\end{equation}
We observe that the matrices $\mathcal{B}_{l,r}(k)$ are holomorphic inside the circles ${\mathcal U}_{\pm k_0}$.
%
 Next we consider the  error matrix $E(k)$ 
 \begin{equation*}
 E(k)=X(k)X_{appr}(k)^{-1}\,.
 \end{equation*}
Such  a matrix has jump $J_E(k)=(E_+(k))^{-1} E_-(k).$ It is easy to check that $J_E(k)$ is exponentially close to the identity as $t\to \infty$ for $
k\in \C\backslash  \overline{{\mathcal U}_{\pm k_0}}$.
We observe that since $P_{r,l}(\zeta)=I+\mathcal{O}(t^{-1/2})$ on the circles $ \partial{\mathcal U}_{\pm k_0}$
we also have  $J_{E}(k)=I+\mathcal{O}(t^{-1/2})$ as $t\to\infty$ and $k\in  \partial{\mathcal U}_{\pm k_0}$.
Finally, let us estimate the jump matrix $J_{E}$ on contours $\Sigma^\pm= \cup_{j=1}^4L_j\cap  \partial{\mathcal U}_{\pm k_0}$.
The terms which appear in the jump matrix $J_{E}(k)$   for $k\in \Sigma^+\cup\Sigma^-$ 
are of the form
$$\(r(k) - r(k_0)\)\zeta^{2\i\nu}\e^{2\i\zeta^2},
\quad 
\(\frac{r(k)}{1+r(k)\ol{r(\ol k)}} - \frac{r(k_0)}{1+|r(k_0)|^2}\)\zeta^{2\i\nu}\e^{2\i\zeta^2}$$
and they can be estimated as 
\begin{equation*}
\begin{split}
|\(r(k) - r(k_0)\)\zeta^{2\i\nu}\e^{2\i\zeta^2}|
&\leq C
\max\limits_{k\in\Sigma^\pm} \(|k-k_0|
\e^{-2|\zeta|^2}\)\\
& \leq C
\max\limits_{z\geq0}
\(|z|
\e^{-2t |z|^2}\)
=
\frac{C}{\sqrt{t}}
\max\limits_{|z|\geq0}
\(|z|\cdot
\e^{-2 |z|^2}\)
\end{split}
\end{equation*} 
with some generic constant $C>0$ (taking the maximum, we first allowed $|z|$ to vary over the whole positive numbers $[0,+\infty)$ instead of a finite interval, and second, we  make the transformation $|z|\leadsto\sqrt{t}|z|$),
and similarly
\begin{equation*}
\left|\(\frac{r(k)}{1+r(k)\ol{r(\ol k)}} - \frac{r(k_0)}{1+|r(k_0)|^2}\)\zeta^{2\i\nu}\e^{2\i\zeta^2}\right|=\mathcal{O}(\frac{1}{\sqrt{t}}).
\end{equation*}

We conclude that  jump matrix $J_{E}$ admits an estimate $I+\mathcal{O}(t^{-1/2})$ uniformly on the whole contour $\Sigma_{E}= \Sigma^\pm\cup\partial {\mathcal U}^\pm$.
\subsubsection{Second term of the asymptotics}
We thus can look for the  error matrix  $E(k)$ in the form 
\begin{equation}\label{Cauchy1}
E(k)=I+\mathcal{C}(F),
\end{equation}
where $\mathcal{C}$ is the Cauchy operator,
$$\mathcal{C}f(k)=\frac{1}{2\pi\i}\int_{\Sigma_{E}}\frac{f(s)\d s}{s-k},$$
where  $\Sigma_{E}= \Sigma^\pm\cup\partial{\mathcal U}_{\pm k_0}$.
Taking the limiting values in \eqref{Cauchy1}, and using the relation $\mathcal{C}_+-\mathcal{C}_-=I,$ we find $F(s)=E_{+}(s)(I-J_{E}(s)).$
Thus,
\begin{equation}\label{Cauchy2}
E(k)=I+\mathcal{C}[E_{+}(s)(I-J_{E}(s))],
\quad \mbox{or }
E_{}(k)=I+\frac{1}{2\pi\i}\int_{\Sigma_E}\frac{E_{+}(s)(I-J_{E}(s)) \d s}{s-k},
\end{equation}
where $E_{+}$ is the solution of the singular integral equation $E_{+}(k)=I+\mathcal{C}_+[E_{+}(s)(I-J_{E}(s))].$
Define
\begin{equation*}\begin{split}&q_{err}(x,t):=\lim\limits_{k\to\infty}2\i k(E(k)-I)_{12}=\lim\limits_{k\to\infty}2\i k(E(k)-I)_{21},
\\
&q_{appr}(x,t):=\lim\limits_{k\to\infty}2\i k(X_{appr}(k)-I)_{12}=\lim\limits_{k\to\infty}2\i k(X_{appr}(k)-I)_{21};
\end{split}\end{equation*}
then $q(x,t)=q_{err}(x,t)+q_{appr}(x,t),$ and $q_{appr}=c_-.$
From \eqref{Cauchy2} we find an expression for $q_{err},$
\begin{equation*}
q_{err}(x,t)=\frac{1}{\pi}\int_{\Sigma_E}\left[E_{+}(s)(J_{E}(s)-I)\right]_{21}\d s.
\end{equation*}
Subtracting and adding the identity matrix from $E_{+},$ and observing that since $J_{E}(k)=I+\mathcal{O}(t^{-1/2})$ uniformly on the contour $\Sigma_E$, then also 
$E(k)=I+\mathcal{O}(t^{-1/2}),$ we find that 
\begin{equation*}
q_{err}(x,t)=\frac{1}{\pi}\int_{\Sigma_E}(J_{E}(s)-I)_{21}\d s+\mathcal{O}(t^{-1}).
\end{equation*}
We start by estimating the integral over the parts of $L_j, j=1,2,3,4,$ inside the circles $|k\mp k_0|<\delta.$ They admit an estimate of the type
\begin{equation*}
\int_{0}^{\delta}y\e^{-2 ty^2}\d y<\frac{1}{t}\int_0^{\infty}(y\sqrt{t})\e^{-2ty^2}\d (y\sqrt{t})=\frac{C}{t},
\end{equation*}
where $C$ is a generic constant, and thus do not contribute to the $t^{-1/2}$ term.
The rest of the contour except for the circles around the points $\pm k_0$ gives an exponentially small contribution. Next we compute the integrals over the circles $|k\mp k_0|=\delta,$ which we orient in the counter-clock-wise direction. 
Due to the symmetries $\ol{E(-\ol k)}=E(k)={\scriptsize\begin{bmatrix}0&1\\-1&0\end{bmatrix}}E(-k){\scriptsize\begin{bmatrix}0&-1\\1&0\end{bmatrix}}$, the integrals over the circles $ \partial{\mathcal U}_{\pm k_0}$   have the following structure:
\begin{equation*}
\frac{1}{\pi}\int_{\partial{\mathcal U}_{k_0}}(J_{E}(s)-I)\d s = \begin{pmatrix}\ldots & \ol{A}\\ A & \ldots\end{pmatrix},
\quad
\frac{1}{\pi}\int_{\partial{\mathcal U}_{-k_0}}(J_{E}(s)-I)\d s = \begin{pmatrix}\ldots & A\\ \ol{A} & \ldots\end{pmatrix},
\end{equation*}
and hence $q_{err}=2\Re A+\mathcal{O}(t^{-1}).$
Note that for $k\in \partial{\mathcal U}_{k_0}$ the jump matrix equals
\begin{equation*}
\begin{split}
&J_E(k)-I=\mathcal{B}_r(k)\left[\begin{pmatrix}
0 & \e^{\frac{\pi\i}{4}}\beta_1
\\
\e^{\frac{3\pi\i}{4}}\beta_2 & 0
\end{pmatrix}\frac{1}{2\zeta}+\mathcal{O}(\zeta^{-2})\right]
(\mathcal{B}_r(k))^{-1}
\end{split}
\end{equation*}
where $\mathcal{B}_r(k)$ is defined in \eqref{Brl}.
Neglecting the $\mathcal{O}(\zeta^{-2})$ terms gives an error of the order $t^{-1},$ and the terms of the order $\zeta^{-1}$ have simple a pole at $k=k_0,$ and thus they can be computed by taking residues.
Since 
\begin{equation*}
 |\e^{-\i t g(k_0)}\chi(k_0) \(\sqrt{t\,}\,2k_0\sqrt{\frac{g''(k_0)}{2}}\)^{\i\nu}|=1,
\end{equation*}
we denote 
\begin{equation*}
e^{\i\vartheta}=\e^{-\i t g(k_0)}\chi(k_0) \(\sqrt{t\,}\,2k_0\sqrt{\frac{g''(k_0)}{2}}\)^{\i\nu}.
\end{equation*}
Then
\begin{equation*}
\begin{split}
&A:=\frac{1}{\pi}\int_{\partial{\mathcal U}_{k_0}}(J_E(k)-I)_{21}\d k = 
\frac{1}{\pi}
\int_{\partial{\mathcal U}_{k_0}}
\(\frac{a_{\gamma}(k)^2}{e^{2\i\vartheta}}\e^{\frac{3\pi\i}{4}}\beta_2
-b_{\gamma}(k)^2e^{2\i \vartheta}\e^{\frac{\pi\i}{4}}\beta_1\)\frac{\d k}{2\zeta}
=
\\
&=2\i\, \mathrm{Res}_{k=k_0}
\(\frac{a_{\gamma}(k)^2}{e^{2\i \vartheta}}\e^{\frac{3\pi\i}{4}}\beta_2
-b_{\gamma}(k)^2 e^{2\i\vartheta}\e^{\frac{\pi\i}{4}}\beta_1\)\frac{\d k}{2\zeta}\\
&=
\frac{1}{\sqrt{\frac{t g''(k_0)}{2}}}
\(\frac{a_{\gamma}(k_0)^2}{ e^{2\i\vartheta}}\e^{\frac{-3\pi\i}{4}}\beta_2
-b_{\gamma}(k_0)^2 e^{2\i\vartheta}\e^{\frac{3\pi\i}{4}}\beta_1\).
\end{split}
\end{equation*}
We have $q_{err}(x,t)=A+\ol{A}+\mathcal{O}(t^{-1}),$ and thus we need to find the real part of the constant $A.$
Substituting the expressions for $a_{\gamma}(k)=\frac12(\gamma(k)+\frac{1}{\gamma(k)}),$
$b_{\gamma}(k)=\frac12(\gamma(k)-\frac{1}{\gamma(k)}),$ $\gamma(k)=\sqrt[4]{\frac{k-\i c_-}{k+\i c_-}},$
 and taking into account that $|\gamma(k_0)|=1$  and $\overline{\beta_1}=\beta_2$
we find
\begin{equation*}
\begin{split}
A=\frac{1}{\sqrt{\frac{t g''(k_0)}{2}}}&
\left\{
\frac12(\frac{2k_0}{\sqrt{k_0^2+c_-^2}} )
\underbrace{\left[
 e^{-2\i\vartheta}\e^{-\frac{3\pi\i}{4}}\beta_2
- e^{2\i\vartheta}\e^{\frac{3\pi\i}{4}}\beta_1
\right]}_{\in\i\mathbb{R}}+\right.\\
&\left.+
\frac12\(
 e^{-2\i\vartheta}\e^{-\frac{3\pi\i}{4}}\beta_2
+ e^{2\i\vartheta}\e^{\frac{3\pi\i}{4}}\beta_1\)
\right\}.
\end{split}
\end{equation*}
The underbraced expression is purely imaginary, and hence does not contribute to the real part of $A.$
We have
\begin{equation*}
2\Re A = \frac{2}{\sqrt{\frac{g''(k_0)t}{2}}}\Re\(e^{2\i\vartheta}\beta_1\e^{\frac{3\pi\i}{4}}\),
\end{equation*}
and substituting expressions for $\beta_1$ from Section \ref{sect_ParCyl},
we can write the latter in more detail,
\begin{equation*}
\begin{split}
2&\Re A=
 \frac{2}{\sqrt{\frac{g''(k_0)t}{2}}}
 \cdot
 \frac{\sqrt{2\pi\,}\,\e^{\frac{\pi\nu}{2}}}{|r(k_0)|\,|\Gamma(\i\nu)|}\cdot\\
 &\cdot
 \cos\left[
 \frac{\pi}{4}-\arg \left(r(k_0)\Gamma(\i\nu)\right)-2tg(k_0)+\arg\chi^2(k_0)
 +
 2\nu\ln\(4k_0\sqrt{\frac{tg''(k_0)}{2}}\)
 \right].
\end{split}
\end{equation*}
The second factor here can be computed using properties of the Gamma-function, $ \frac{\sqrt{2\pi\,}\,\e^{\frac{\pi\nu}{2}}}{|r(k_0)|\,|\Gamma(\i\nu)|} = \sqrt{\nu},$
\begin{equation*}\begin{split}
&q_{err}(x,t)=2\Re A+\mathcal{O}(t^{-1})
=\sqrt{\frac{8\nu}{t\cdot g''(k_0)}}\times\\
&\;\;\;\times
\cos\left[\frac{\pi}{4}
-2t g(k_0)+\nu\ln\(8 t k_0^2 g''(k_0)\)
-\arg \left(r(k_0)\Gamma(\i\nu)\right)+\arg\chi^2(k_0)
\right]
+\mathcal{O}(t^{-1}).
\end{split}\end{equation*}
Substituting finally the expressions for $g(k_0), g''(k_0),$ we get 
\begin{equation}\label{q_errUtmost}
\begin{split}
&q_{err}(x,t)=
\sqrt{\frac{\nu\sqrt{-\xi+\frac{c_-^2}{2}}}{3t(-\xi-\frac{c_-^2}{2})}}\times\\
&\;\;\times\cos
\left[
16t\(-\xi+\frac{c_-^2}{2}\)^{\frac32}
+\nu\ln\(\frac{192\, t\, (\xi+\frac{c_-^2}{2})^2}{\sqrt{-\xi+\frac{c_-^2}{2}}}\)
+
\phi(k_0)
\right]
+\mathcal{O}(t^{-1}),
\end{split}
\end{equation}
where the phase shift is given by the formula 
\begin{equation}\label{phaseUtmost}
\phi(k_0):=\frac{\pi}{4}-\arg r(k_0)-\arg\Gamma(\i\nu)+\arg\chi^2(k_0).
\end{equation}
Note that the formula \eqref{q_errUtmost} is very similar to the radiation part of the rarefaction wave \cite{M11}, but differs from it by the phase shift $\pi.$
Note also, that the error term is uniform in $\xi<\frac{-c_-^2}{2}-\varepsilon,$ for a  positive $\varepsilon>0.$ Looking at \eqref{q_errUtmost}, we see that it blows up on the line $\xi=-\frac{c_-^2}{2},$ which probably indicates the presence of a nontrivial transition region.

\subsection{Second term of the asymptotic expansion  in the middle left constant region $-\frac{c_-^2}{2}<\xi<-\frac{c_-^2}{2}+c_+^2$ or $-6c_-^2t<x<(-6c_-^2+12c_+^2)t$}
\label{sect_SecondTermMiddle}

\begin{figure}[ht!]
\centering
\begin{tikzpicture}
\draw[
decoration={markings, mark=at position 0.365 with {\arrow{>}}},
decoration={markings, mark=at position 0.82 with {\arrow{<}}}, postaction = {decorate},
very thick]
(-5,0.5) to (-4,0.5) [out = 0 ] to [in = -135] (0,1)
to [out = 135, in =-90] (-1,2)
to [out=90, in=180] (0, 3);
\scriptsize{
\node at (-2.2, 2.3){$\begin{pmatrix}1 &\frac{-T^2\e^{-2\i t g}}{\widehat f} \\ 0&1\end{pmatrix}$};
}
%
\draw[
decoration={markings, mark=at position 0.365 with {\arrow{<}}},
decoration={markings, mark=at position 0.82 with {\arrow{<}}}, postaction = {decorate},
very thick]
(5,0.5) to (4,0.5) [out = 180 ] to [in = -45] (0,1)
to [out = 45, in = -90] (1,2)
to [out=90, in=0] (0, 3);
\scriptsize{
\node at (2.2, 2.3){$\begin{pmatrix}1 &\frac{T^2\e^{-2\i t g}}{\widehat f} \\ 0&1\end{pmatrix}$};
}
%
\draw[
decoration={markings, mark=at position 0.365 with {\arrow{>}}},
decoration={markings, mark=at position 0.82 with {\arrow{>}}}, postaction = {decorate},
very thick]
(-5,-0.5) to (-4,-0.5) [out = 0 ] to [in = 135] (0,-1)
to [out = -135, in =90] (-1,-2)
to [out=-90, in=-180] (0, -3);
\scriptsize{
\node at (-2, -2.3){$\begin{pmatrix}1 & 0 \\ \frac{\e^{2\i t g}}{\ol{\widehat f\,}\,T^2} &1\end{pmatrix}$};
}
%
\draw[
decoration={markings, mark=at position 0.365 with {\arrow{<}}},
decoration={markings, mark=at position 0.82 with {\arrow{>}}}, postaction = {decorate},
very thick]
(5,-0.5) to (4,-0.5) [out = -180 ] to [in = 45] (0,-1)
to [out = -45, in = 90] (1,-2)
to [out=-90, in=0] (0, -3);
\scriptsize{
\node at (2, -2.3){$\begin{pmatrix}1 & 0 \\ \frac{-\e^{2\i t g}}{\ol{\widehat f\,}\,T^2} &1\end{pmatrix}$};

\node at (-0.35,1){$\i d_0$}; \node at (-0.45,-1){$-\i d_0$};
\node at (-0.2,2.6){$\i c_-$}; \node at (-0.2,-2.6){$-\i c_-$};
\node at (-0.25,1.6){$\i c_+$}; \node at (-0.35,-1.6){$-\i c_+$};
}
\filldraw (0,1.6) circle (1pt);\filldraw (0,-1.6) circle (1pt);
\filldraw (0,2.5) circle (1pt);\filldraw (0,-2.5) circle (1pt);
%
\draw[very thick, postaction = decorate, decoration={markings, mark= at position 0.23 with {\arrow{>}}},
decoration={markings, mark=at position 0.8 with {\arrow{>}}}]
(0,2.5) to (0,-2.5);
\draw[dashed] (-5,0) to (5,0);
\scriptsize{
\node at (0.45, 2){$\begin{bmatrix}0&\i\\\i&0\end{bmatrix}$};
\node at (0.45, -2){$\begin{bmatrix}0&\i\\\i&0\end{bmatrix}$};

\node at (4,1.1){$\begin{pmatrix}1&0\\\frac{-r\e^{2\i t g}}{T^2}&1\end{pmatrix}$};
\node at (-4,1.1){$\begin{pmatrix}1&0\\\frac{-r\e^{2\i t g}}{T^2}&1\end{pmatrix}$};

\node at (4,-1.1){$\begin{pmatrix}1 & -\ol{r}T^2\e^{-2\i t g}\\0&1\end{pmatrix}$};
\node at (-4,-1.1){$\begin{pmatrix}1 & -\ol{r}T^2\e^{-2\i t g}\\0&1\end{pmatrix}$};

\node at (2.5, 0.7){$L_1$};\node at (-2.5, 0.7){$L_1$}; \node at (2.5,-0.7){$L_2$};
\node at (-2.5,-0.7){$L_2$};
\node at (-0.9,1.4){$L_5$};\node at (0.9,1.4){$L_7$};
\node at (-0.9,-1.3){$L_6$};\node at (0.9,-1.3){$L_8$};
}
\end{tikzpicture}
\caption{The final RH problem in the utmost left constant region $\xi<-\frac{c_-^2}{2}$.}
\label{Fig_ParCyl}
\end{figure}
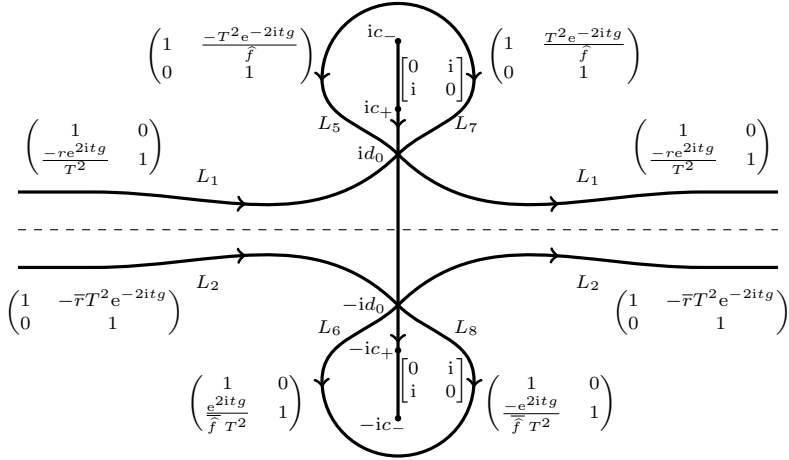
The jumps for the final RH problem~\ref{RHP_final_constant2}  for the matrix function $X(k)$   in the middle left constant region   considered in Section~\ref{middle_lcr} are depicted in  Figure \ref{Fig_ParCyl}.
Note that the jump matrices are exponentially close to the identity matrix everywhere except for interval $[\i c_-, -\i c_-],$  near  the stationary points $\pm \i d_0,$ where $d_0=\sqrt{\xi+\frac{c_-^2}{2}}$ where  $g'(\i d_0,\xi)=0$, where the function $g(k,\xi)$ has been defined in \eqref{g_breather}.
%
%
We will now construct a piecewise matrix-valued function, which satisfies approximately the jumps in the vicinities of the points $k=\pm \i d_0.$
The analysis is similar to that of Section \ref{sect_SecondTermUtmost}, but it is slightly more involved because of discontinuity across the imaginary axis. This however can be handled by multiplication of the matrix by an off-diagonal matrix in the left half-disk around the point $\i d_0.$ We provide analysis for the upper point  $\i d_0,$ since the constrution at the point $-\i d_0$ follows by symmetry.
%
%

\medskip\noindent {\bf Function $\boldsymbol{g(k,\xi).}$}
We start by inspecting the behavior of the phase function $g(k,\xi)$ in the neighborhoods of the points $\pm \i d_0.$ We recall  that $$g(k,\xi)=2\(2k^2-c_{-}^2+6\xi\right)\sqrt{k^2+c_{-}^2}\,$$ has 
 a discontinuity across $[\i c_-,-\i c_-]$,   because we chose $\sqrt{k^2+c_{-}^2}\,$ analytic in $\C\backslash[\i c_-,-\i c_-]$. Furthermore, we have the symmetry $g(k)=-\ol{g(-\ol k)}.$ 
 For $\delta >0$ let $${\mathcal U}_{\pm\i d_0}=\{k\in\C\,, s.t.\;|k\mp\i d_0|<\delta\}.$$ We also define ${\mathcal U}_{\i d_0}^\pm$  the semicircles on the left and right part of the complex plane with respect to the imaginary axis.
 The expansion of  $g(k,\xi)$ near  the point $k=\i d_0$ takes the form
\begin{equation*}
g(k,\xi)=\pm\(g_+(\i d_0,\xi)+\frac12g''_+(\i d_0,\xi)(k-\i d_0)^2+\mathcal{O}\(
(k-\i d_0)^3\)\)
,\; k\to \i d_0,\;k\in{\mathcal U}_{\i d_0}^\pm,\end{equation*}
where
\begin{equation*}
g_+(\i d_0) = -8\(\frac{c_-^2}{2}-\xi\)^{\frac32}<0,
\qquad g_+''(\i d_0)=\frac{-24(\xi+\frac{c_-^2}{2})}
{\sqrt{\frac{c_-^2}{2}-\xi}}<0.
\end{equation*}
 We  introduce the local conformal changes of variables,
\begin{equation*}
z:=\begin{cases}
\sqrt{g(k)- g_{+}(\i d_0)}\, , \quad k\in {\mathcal U}_{\i d_0}^+, \\
\sqrt{-g(k)- g_{+}(\i d_0)}\, , \quad \quad k\in {\mathcal U}_{\i d_0}^-\,.
\end{cases}
\end{equation*}
We have that
\begin{equation*}
z=\sqrt{\frac{|g''(k_0)|}{2}}\(\frac{k-\i d_0}{-\i}\)\(1+\mathcal{O}(k-\i d_0)\),\ k\in {\mathcal U}_{\i d_0} .
\end{equation*}
We also introduce $$\zeta:= \sqrt{t\,}\,z.$$
Note that the local variable $z$ is rotated by $90$ degrees with respect  to the orientation of $k-\i d_0$.
Summing up, we have
\begin{equation*}\begin{split}
&\pm 2\i t g(k,\xi)=2\i t g_+(\i d_0,\xi)+2 \i t z^2 = 2\i t g(k_0,\xi) + 2\i\zeta^2,\ k\in   {\mathcal U}_{\i d_0}^\pm \, .
\end{split}\end{equation*}

\medskip\noindent {\bf Function $\boldsymbol{T(k).}$}
Next, we need to understand the behavior of the function $T(k)$ as $k$ approaches the point $\i d_0$.  We recall that the function $T$ satisfies the jump relations 
$T_+(k)T_-(k)=\frac{1}{|a(k)|^2}, k\in(\i c_-,\i d_0),$ and $T_+(k)T_-(k)=1, k\in(\i d_0, -\i d_0),$ and 
$T_+(k)T_-(k)=\frac{1}{|a|^2},$ for $k\in(-\i d_0, -\i c_-)$. To understand the behavior of the function $T(k)$ at the point $k=\i d_0$ it is convenient to introduce the function 
\begin{equation*}
F(k)=\begin{cases}
&T(k),  \mbox{  for $\Re k>0$}, \\
&\frac{1}{T(k)}, \mbox{  for  $\Re k<0.$}
\end{cases}
\end{equation*}
 Then the function $F(k)$ satisfies the jump $\frac{F_+(k)}{F_-(k)}=\frac{1}{|a(k)|^2}$ on the interval $(\i c_-,\i d_0),$ and it is continuous across the interval $(\i d_0,-\i d_0).$
After straighforward computation we conclude that
\begin{equation*}
T(k)=\begin{cases}
&\(\frac{k-\i d_0}{-(k+\i d_0)}\)^{-\i \nu}\chi(k),\,\mbox{  for $\Re k>0$,}\\
&\(\frac{k-\i d_0}{-(k+\i d_0)}\)^{\i \nu}\frac{1}{\chi(k)},\mbox{  for $\Re k<0$,}
\end{cases}
\end{equation*}
and the latter formula is the definition of the  function $\chi(k),$ which is discontinuous across {\color{black} $[\i c_-,\i d_0]\cup [-\i d_0,-\i c_-]$ and continuous across $[\i d_0,-\i d_0],$} and has a non-zero limit as $k\to\pm \i d_0.$ Here
\begin{equation*}
\nu=\frac{-1}{2\pi}\ln{|a(\i d_0)|^2}<0.
\end{equation*}
By Lemma \ref{lem_abr},  we have that $|a(\i d_0)|^2-|b(\i d_0)|^2=1$ and hence indeed $\nu<0.$
Note that $|\chi(\i d_0)|=1$ in view of the symmetry relation 
$\ol{T(-\ol k)} = T(k).$
Singling out the main components of the function $T(k),$ we can write it as 
\begin{equation*}
\begin{split}
&T(k)=\zeta^{-\i\nu}\cdot\(\frac{k-\i d_0}{-(k+\i d_0) \sqrt{t\,} \, z}\)^{-\i\nu}\cdot\chi(k,\xi),\;\; \mbox{for $k\in {\mathcal U}_{\i d_0}^+,$}\\
\\
&T(k)=\zeta^{\i\nu}\cdot\(\frac{k-\i d_0}{-(k+\i d_0) \sqrt{t\,}\, z}\)^{\i\nu}\cdot\frac{1}{\chi(k,\xi)},\;\;
\mbox{for   $ k\in{\mathcal U}_{\i d_0}^-.$ }
\end{split}
\end{equation*}

\begin{figure}[ht!]
\begin{tikzpicture}
\draw
[very thick,
decoration={markings, mark=at position 0.5 with {\arrow{>}}},
postaction = {decorate}]
(-1.5,1.5) to (0,0);
\draw
[very thick,
decoration={markings, mark=at position 0.5 with {\arrow{>}}},
postaction = {decorate}]
(-1.5, -1.5) to (0,0);
\draw
[very thick,
decoration={markings, mark=at position 0.5 with {\arrow{>}}},
postaction = {decorate}]
(0,0) to (1.5, 1.5);
\draw
[very thick,
decoration={markings, mark=at position 0.5 with {\arrow{>}}},
postaction = {decorate}]
(0,0) to (1.5, -1.5);
\scriptsize{
\node at (-3.6,1){$\begin{pmatrix}
1& -a_+(\i d_0)\ol{b_+(-\i d_0)}\zeta^{-2\i\nu}\e^{-2\i\zeta^2} \\
0&1
\end{pmatrix}$};
\node at (-3.5,-1){$\begin{pmatrix}
1& 0 \\ -\ol{a_+(-\i d_0)} b_+(\i d_0) \zeta^{2\i\nu}\e^{2\i\zeta^2} & 1
\end{pmatrix}$};
\node at (3,1){$\begin{pmatrix}
1&0\\-r_+(\i d_0) \zeta^{2\i\nu}\e^{2\i\zeta^2}
&1
\end{pmatrix}$};
\node at (3,-1){$\begin{pmatrix}
1& \ol{r_+(\i d_0)}\zeta^{-2\i\nu}\e^{-2\i\zeta^2} \\
0&1
\end{pmatrix}$};
\node at (0,-0.3){$0$};
\node at (0.9, 1.3){$L_{1r}$};
\node at (0.9, -1.3){$L_{1l}$};
\node at (-0.9, 1.3){$L_{7}$};
\node at (-0.9, -1.3){$L_{5}$};
}
\end{tikzpicture}
\caption{Choice of the parameters   for the jump matrix $J_{\Phi}$ in \eqref{J_Phi} of  the parabolic cylinder functions $\Phi(\zeta)$  defined in \eqref{Phi}.}
\label{Fig_PCmiddle}
\end{figure}
In Figure \ref{Fig_PCmiddle} we plot the image  of the contours $L_1, L_5$ and $L_7$ passing through  the point $k=\i d_0$ via the map $k\to \zeta(k)$.  We denoted by $L_{1r}$ and $L_{1l}$ the images of the parts of the contour $L_1$ with $\Re k>0$ and $\Re k<0,$ respectively.

\subsubsection{Approximation, far away from breathers}
Now we are almost ready to define an approximation for the function $X(k)$  that solves the RH problem~\ref{RHP_final_constant2} in the middle  left constant region.
 For this purpose we use the local parametrix $\Phi(\zeta)$ in terms of parabolic cylinder functions obtained  in section \ref{sect_ParCyl}.
{\color{black} We define the  upper parametrix as
\begin{equation}
\label{Pu}
\begin{split}
P_{u}(k)=B(k)  \Phi(\zeta(k))\Big|_{\{r_*=r_+(\i d_0),\ \rho_*=-\ol{r_+(\i d_0)}\}}\cdot
\left(\frac{\e^{\i t g_+(\i d_0,\xi)}}{\chi(\i d_0)}\right)^{\sigma_3}
\(\frac{-(k-\i d_0)}{(k+\i d_0) z \sqrt{t}\, }\)^{\i\nu\sigma_3},
\end{split}
\end{equation}

Note that the parameter $r_*$ is the same as  in  Section \ref{sect_ApproximationParCylUtmost}, but the parameter $\rho_*$ has the opposite sign compared to the Section \ref{sect_ApproximationParCylUtmost}.  The matrix $B(k)$  in the above expression will be determined below.

Now we are ready to define an approximation of the function $X(k).$  We define
\begin{equation*}
\begin{split}
X_{appr}(k)=\begin{cases}
P_{u}(k),\quad \mbox{for  $ k\in {\mathcal U}^+_{\i d_0}$ },\\
P_{u}(k)
\begin{bmatrix}0&\i\\\i&0
\end{bmatrix},
\quad \mbox{for  $ k\in {\mathcal U}^-_{\i d_0}$, }\\
{\scriptsize \begin{bmatrix}0&1\\-1&0\end{bmatrix}}\ol{P_{u}(\ol k)}
{\scriptsize \begin{bmatrix}0&-1\\1&0\end{bmatrix}},
 \mbox{  for  $ k\in {\mathcal U}^+_{-\i d_0},$ }\\
-\i{\scriptsize \begin{bmatrix}0&1\\-1&0\end{bmatrix}}\ol{P_{u}(\ol k)}
\sigma_3,
 \mbox{  for  $ k\in {\mathcal U}^-_{-\i d_0},$ }\\
M^{(mod)}(k),\quad |k\mp \i d_0|>\delta,\;
\end{cases}
\end{split}
\end{equation*}
}
%
where $M^{(mod)}(k)=\begin{bmatrix}a_{\gamma}(k) & b_{\gamma}(k) \\b_{\gamma}(k) & a_{\gamma}(k)\end{bmatrix},$ $a_{\gamma}(k)=\frac12(\gamma(k)+\frac{1}{\gamma(k)}),$
$b_{\gamma}(k)=\frac12(\gamma(k)-\frac{1}{\gamma(k)}),$ $\gamma(k)=\sqrt[4]{\frac{k-\i c_-}{k+\i c_-}}.$
The matrix $B(k)$ in \eqref{Pu} is obtained by requesting that   the error matrix
 \begin{equation*}
 E(k)=X(k)X_{appr}(k)^{-1}
 \end{equation*}
has its jumps 
as close to the identity matrix  as possible. In particular for $k\in \partial{\mathcal U}^+_{\i d_0}$  we  have $J_E(k)=E^{-1}_+(k)E_-(k)=(P_u(k))^{-1}M^{(mod)}(k)$ where we assume that the circle 
${\mathcal U}_{\i d_0}$ is oriented anticlockwise.
Requesting that $J_E=I+o(1)$ we obtain 
{\color{black}
\begin{equation*}
B(k)=\begin{cases}
&M^{(mod)}(k)\left(\frac{\chi(\i d_0,\xi)}{\e^{\i t g_+(\i d_0,\xi)}}\right)^{\sigma_3}
\(\frac{-(k-\i d_0)}{(k+\i d_0) z \sqrt{t\,}\,}\)^{-\i\nu\sigma_3},\quad k\in {\mathcal U}^+_{\i d_0}, \\
&M^{(mod)}(k)\begin{bmatrix}0&\i\\\i&0
\end{bmatrix}^{-1}
\left(\frac{\chi(\i d_0,\xi)}{\e^{\i t g_+(\i d_0,\xi)}}\right)^{\sigma_3}
\(\frac{-(k-\i d_0)}{(k+\i d_0) z \sqrt{t\,}\,}\)^{-\i\nu\sigma_3},\;\; k\in {\mathcal U}^-_{\i d_0}.
\end{cases}
\end{equation*}
We observe that with the above definition the matrix $B(k)$ is analytic in $ {\mathcal U}_{\i d_0}$.
}

In a similar  way as  in Section \ref{sect_SecondTermUtmost}, we can verify that the jump matrix $J_{E}(k)=E^{-1}_+(k)E_-(k)$ 
is of the order $J_E(k)=I+\mathcal{O}(t^{-1/2})$ uniformly for $k\in \partial  {\mathcal U}_{\pm\i d_0}$ and on the contours 
$(L_1\cup L_5\cup L_7)\cap {\mathcal U}_{\i d_0}$ and $(L_2\cup L_6\cup L_8)\cap {\mathcal U}_{-\i d_0}$,
 while on the remaining contours the jump matrix $J_E$ is exponentially close to the identity matrix.
 Only the circles $ \partial {\mathcal U}_{\pm\i d_0}$ contribute to the $t^{-1/2}$-term of the asymptotic expansion  of the MKdV solution $q(x,t)$, while the remaining terms give a  higher order  contribution.
 We thus have 
\begin{equation*}q(x,t)=q_{appr}(x,t)+ q_{err}(x,t),\end{equation*}
where
\begin{equation*}
q_{err}(x,t)=\frac{1}{\pi}\int_{ {\mathcal U}_{\i d_0}}(J_E(k)-I)_{21}\d k + \frac{1}{\pi}\int_{ {\mathcal U}_{-\i d_0}}(J_E(k)-I)_{21}\d k
+\mathcal{O}(t^{-1}).
\end{equation*}
The symmetries $\ol{E(-\ol k)}=E(k)={\scriptsize\begin{bmatrix}0&-1\\1&0\end{bmatrix}}
E(-k)=\ol{E(\ol k)}
{\scriptsize\begin{bmatrix}0&1\\-1&0\end{bmatrix}}$
imply that the integrals over the oriented circles $\partial {\mathcal U}_{\pm \i d_0}$ have the following structure as $t\to\infty$:
\begin{equation*}
\frac{1}{\pi}\int_{\partial {\mathcal U}_{ \i d_0}}(J_E(k)-I)\d k
=\begin{pmatrix}
\ldots & B\\ A &\ldots
\end{pmatrix},
\qquad
\frac{1}{\pi}\int_{\partial {\mathcal U}_{- \i d_0}}(J_E(k)-I)\d k
=\begin{pmatrix}
\ldots & A\\ B &\ldots
\end{pmatrix},
\end{equation*}
and both $A,B$ are real.
Hence, $q_{err}=A+B+\mathcal{O}(t^{-1}),$ and it suffices to make   the computation only on the cirlcle $\partial  {\mathcal U}_{\i d_0}$.
For simplifying the notation we define
\begin{equation*}
\begin{split}
f &:= \e^{-\i t g_+(\i d_0)}\chi(\i d_0) \lim\limits_{k\to\i d_0}\(\frac{k-\i d_0}{-(k+\i d_0)z\sqrt{t}}\)^{-\i\nu}\\
&=\e^{-\i t g_+(\i d_0)}\chi(\i d_0) \(2 d_0\sqrt{\frac{|g''_+(\i d_0)| t}{2}}\)^{\i\nu},
\end{split}
\end{equation*}
and observe that $|f|=1$.   Then the integral takes the form
\begin{equation*}
\begin{split}
&\int\limits_{ \partial {\mathcal U}_{\i d_0}}(J_E(k)-I)_{12}\d k =\e^{\frac{\pi\i}{4}}\sqrt{\frac{2\pi^2}{|g''_+(\i d_0)| t}}\left(a_{\gamma,+}(\i d_0)^2f^2\beta_1-\i
\frac{b_{\gamma,+}(\i d_0)^2}{f^2}\beta_2\right)+\mathcal{O}(\frac{1}{t}),\\
&\int\limits_{ \partial {\mathcal U}_{\i d_0}}(J_E(k)-I)_{21}\d k =\e^{\frac{3\pi\i}{4}}\sqrt{\frac{2\pi^2}{|g''_+(\i d_0)| t}}\left(\frac{a_{\gamma,+}(\i d_0)^2}{f^2}\beta_2+\i
b_{\gamma,+}(\i d_0)^2f^2\beta_1\right)
+\mathcal{O}(\frac{1}{t})\,.
\end{split}
\end{equation*}
We can use the relations $\beta_2=-\ol{\beta_1}$ and $\ol{a_{\gamma,+}(\i d_0) }=\i b_{\gamma,+}(\i d_0),$
$\ol{b_{\gamma,+}(\i d_0) }=\i a_{\gamma,+}(\i d_0)$
to verify that indeed $A,B\in\mathbb{R}.$
We have 
$q_{err}(x,t)=A+B+\mathcal{O}(t^{-1}),$ and using the relation $a_{\gamma,+}(\i d_0)^2-b_{\gamma,\i}(\i d_0)^2=1,$ we obtain, after some simplifications,
\begin{equation*}
\begin{split}
q_{err}(x,t)&=\sqrt{\frac{2}{|g''_+(\i d_0)|\, t}}\(\e^{\frac{\pi\i}{4}}\,\beta_1 f^2
+\frac{\e^{\frac{-\pi\i}{4}}\ol{\beta_1}}{f^2}\)+
{\mathcal O}(t^{-1})\\
&=
\sqrt{\frac{2}{|g''_+(\i d_0)|\, t}}\cdot 2\Re\({\e^{\frac{\pi\i}{4}}\,\beta_1}{f^2}\)+{\mathcal O}(t^{-1}).
\end{split}
\end{equation*}
Substituting the expression for $f,$ $g_+(\i d_0)$ and $g''_+(\i d_0)$  
and the  relation
\begin{equation*}
\frac{\sqrt{2\pi}\, \e^{\frac{\pi\nu}{2}}}{|r_+(\i d_0)|\,|\Gamma(\i\nu)|}=\sqrt{|\nu|},
\end{equation*}
in the expression for  $q_{err}(x,t)$,  we   finally obtain 
\begin{equation}\label{q_errMiddle}
\begin{split}
q_{err}(x,t)
=\sqrt{\frac{|\nu|\sqrt{\frac{c_-^2}{2}-\xi}}
{3(\xi+\frac{c_-^2}{2})t}}
&\cos\(16t\(\frac{c_-^2}{2}-\xi\)^{\frac32}+\nu\ln\(\frac{192t(\xi+\frac{c_-^2}{2})^2}
{\sqrt{\frac{c_-^2}{2}-\xi}}\)
+\phi(k_0)\)+\\
&\hfill\quad\qquad\qquad+\mathcal{O}(t^{-1})\,,
\end{split}
\end{equation}
where the phase shift is given by the formula
\begin{equation}\label{phaseMiddle}
\phi(k_0)=-\frac{\pi}{4}-\arg r_+(\i d_0)-\arg\Gamma(\i\nu)+\arg\chi^2(\i d_0).
\end{equation}
Note that the formula \eqref{q_errMiddle} is very similar to the formula \eqref{q_errUtmost}, but the term $\frac{\pi}{4}$ enters with the opposite sign in the phase shift, and hence the phase shift in formula \eqref{phaseMiddle} differs by $\pi/2$ from the one in \eqref{phaseUtmost}.

\color{black}



\begin{appendices}


\section{ Travelling wave solution of MKdV  and Whitham modulation equations\label{Appendix_travelling}}
 We look for the solution  of the  MKdV equation
\begin{equation*}u_t+6u^2u_x+u_{xxx}=0\,
\end{equation*}
 in the form \begin{equation*}u(x,t)=\eta(\theta),\quad \textrm{ where } \theta =kx-\omega t,\quad {\mathcal V}:=\frac{\omega}{k}\,.\end{equation*}
Substituting this ansatz into MKdV equation, we get after two integrations
\begin{equation}\label{v_1
}k^2\eta_{\theta}^2=-\eta^4+v\eta^2+B\eta+A,\end{equation}
where $A,B$ are constants of integration.
We assume that the above polynomials has four real roots   $e_1<e_2<e_3<e_4,$   so that
\begin{equation}
\begin{split}
\label{v}
k^2&\eta_{\theta}^2=-(\eta-e_1)(\eta-e_2)(\eta-e_3)(\eta-e_4),
\qquad 
\sum\limits_{j=1}^4e_j=0,
\qquad
{\mathcal V}=-\sum_{i<j}e_ie_j.
\end{split}
\end{equation}
Finite real valued periodic motion can take place   when $\eta$ varies between $[e_1,e_2]$  or $[e_3,e_4]$. We develop the calculations   for the latter case,
the first  case can be derived by the symmetry $\eta\to -\eta$ and $e_j\to -e_{5-j}$.
We obtain the integral 
$$k\int\limits^{\eta}_{e_3}\frac{\d \widetilde\eta}{\sqrt{(e_1-\widetilde\eta)(\widetilde\eta-e_2)(\widetilde\eta-e_3)(\widetilde\eta-e_4)}}=\theta.$$
After integration we arrive to the expression
\begin{equation}\label{eqn_eta_2}u_{per}(x,t)=\eta(\theta)=e_1+\frac{e_3-e_1}{1-\(\frac{e_4-e_3}{e_4-e_1}\)\mathrm{cn}^2\(\frac{\sqrt{(e_3-e_1)(e_4-e_2)}\ }{2}\ (x-vt)|m\)},\end{equation}
where the function   $\mathrm{cn}(z|m)$ is the Jacobi elliptic function of modulus 
\begin{equation*}
m^2=\dfrac{(e_3-e_2)(e_4-e_1)}{(e_4-e_2)(e_3-e_1)}.
\end{equation*}
Since $e_1+e_2+e_3+e_4=0$ we make a change of variables $\{e_1,e_2,e_3,e_4\}\to \{\beta_1,\beta_2,\beta_3\}$
defined as  $$e_1=-\beta_1-\beta_2-\beta_3,\quad e_2=\beta_1+\beta_2-\beta_3,\quad e_3=\beta_1+\beta_3-\beta_2,\quad e_4=\beta_2+\beta_3-\beta_1,$$
with $\beta_3>\beta_2>\beta_1$.
The inverse transformation is 
\begin{equation*}
\beta_1=\frac{e_2+e_3}{2}=-\frac{e_1+e_4}{2},\;\;\beta_2=\frac{e_2+e_4}{2}=-\frac{e_1+e_3}{2},\;\;\beta_3=\frac{e_3+e_4}{2}=-\frac{e_1+e_2}{2}.
\end{equation*}

 This choice of variables relates the elliptic solution to the   band spectrum of the  Zakharov-Shabat  linear operator \eqref{x-eq}.
In this way we obtain  the expression 
\begin{equation}
\label{periodicA}
u_{per}(x,t)=-\beta_1-\beta_2-\beta_3+\frac{2(\beta_2+\beta_3)(\beta_1+\beta_3)}{\beta_2+\beta_3-(\beta_2-\beta_1)\mathrm{cn}^2\(\sqrt{\beta_3^2-\beta_1^2} (x-{\mathcal V}t)+x_0|m\)},\end{equation}
where $\beta_3>\beta_2>\beta_1$,  the speed ${\mathcal V}=2(\beta_1^2+\beta_2^2+\beta_3^2)$ and $x_0$ is an arbitrary phase. The elliptic modulus $m$  transforms to  
\begin{equation}
\label{m_a}
m^2=\dfrac{\beta_2^2-\beta_1^2}{\beta_3^2-\beta_1^2}.
\end{equation}
When the periodic motion takes place between $e_1$ and $e_2$ the quantity $\beta_j\to\tilde{\beta}_j$ with $\tilde{\beta}_1=-\beta_1$, $\tilde{\beta}_2=\beta_2$ and $\tilde{\beta}_3=\beta_3$.\\
 The limiting case of the travelling wave  solutions are 
 \begin{itemize}
 \item[(a)] $e_1\to e_2$  which means  $\beta_1\to-\beta_2$  or $m\to 0$;
 \item[(b)] $e_3\to e_4$ which means $\beta_2\to\beta_1$  or $m\to 0$;
\item[(c)]  $e_2\to e_3$ which implies $\beta_2\to\beta_3$ or $m\to 1$.
 \end{itemize}
In the case (a) and (b) we have that $\mathrm{cn}(z|m)\to \cos z$   as $m\to0$. Parametrizing $\beta_2=\beta+\delta$ and $\beta_1=\beta-\delta$  the periodic solution \eqref{periodicA}  in case (b) 
 has an expansion of the form
\begin{equation*}
u_{per}(x,t)=\beta_3-2\delta+4\delta\cos^2\(\sqrt{\beta_3^2-\beta^2} (x-(4\beta^3+2\beta_3t)+x_0\)+O(\delta^2).
\end{equation*}
which is the small amplitude limit.
In  case (a)  and in  the limit $\beta_1\to -\beta_2=-\beta$ we obtain
\begin{equation*}
u_{per}(x,t)\simeq-\beta_3+\frac{2(\beta_3^2-\beta^2)}{\beta+\beta_3-2\beta\cos^2\(\sqrt{\beta_3^2-\beta^2} (x-(4\beta^2+2\beta_3)t)+x_0|m\)},
\end{equation*}
which is a  trigonometric solution.
The soliton limit is  case (c) when $m\to 1 $ or $\beta_3=\beta_2=\beta$, and $\mathrm{cn}(z|m)\to \mbox{sech}(z)$
so that 
\begin{equation*}
\begin{split}
u_{per}(x,t)&\simeq \beta_1+\dfrac{2\beta(\beta+\beta_1)}{\beta\,\mbox{cosh}(2\sqrt{\beta^2-\beta_1^2} (x-(4\beta^2+2\beta_1)t)+x_0)+\beta_1},
\end{split}
\end{equation*}
which coincides with the one-soliton solution  \eqref{soliton_intro}  on the constant background  by identifying $\beta$ with the spectral parameter $\kappa_0$ and $\beta_1$ with the constant background $c$.
\section{Whitham modulation equations\label{WhithamApp}}
The  Whitham modulation equations are the   slow modulations of the wave parameters $\beta_j=\beta_j(X,T)$ of the travelling  wave solution \eqref{periodicA}, where $X=\varepsilon x $ and $T=\varepsilon t$, where $\varepsilon $ is a small parameter.
These equations were first derived by Whitham \cite{W} for the KdV case using the method of averaging over conservation laws.
The same equations  can be  obtained requiring that the travelling wave solution with slow variation of the  wave 
parameters  satisfies the KdV equation up to an error of order $O(\varepsilon)$. 
After  Whitham's work the  modulations equations have been obtained for a large set of partial differential equations. In particular for the MKdV equation   they have been obtained  by Driscoll and O'Neil \cite{Driscol} using the original Whitham averaging procedure.
The Whitham modulation equations for MDKV wave parameters $\beta_3>\beta_2>\beta_1$ take the form
\begin{equation*}
\dfrac{\partial}{\partial T}\beta_j+W_j(\beta_1,\beta_2,\beta_3)\dfrac{\partial}{\partial X}\beta_j=0,\quad j=1,2,3,
\end{equation*}
where the speeds $W_j=W_j(\beta_1,\beta_2,\beta_3)$ are 
\begin{align}
\label{speed}
&W_j(\beta_1,\beta_2,\beta_3)=2(\beta_1^2+\beta_2^2+\beta_3^2)+4\dfrac{\prod_{k\neq j}(\beta_j^2-\beta_k^2)}{\beta_j^2+\alpha},\\
\label{alphan}
&\alpha=-\beta^2_3+(\beta_3^2-\beta_1^2)\dfrac{E(m)}{K(m)},\quad m^2=\dfrac{\beta_2^2
-\beta_1^2}{\beta_3^2-\beta_1^2},
\end{align}
where $E(m)=\int_0^{\pi/2}\sqrt{1-m^2\sin \psi^2}d\psi$ is the complete elliptic integral of the second kind.
At the boundary  one has 
\begin{itemize}
\item $m\to 0$ or $\beta_2= \pm\beta_1$,   then  $W_1(\beta_1,\beta_1,\beta_3)=W_2(\beta_1,\beta_1,\beta_3)=-6\beta_3^2+12\beta_1^2$ and $W_3(\beta_1,\beta_1,\beta_3)=6\beta_3^3$;
\item $m\to 1$ or $\beta_2=\pm \beta_3$, then  $W_2(\beta_1,\beta_3,\beta_3)=W_3(\beta_1,\beta_3,\beta_3)=4\beta_3^2+2\beta_1^2$ and $W_1(\beta_1,\beta_3,\beta_3)=6\beta_1^2$.
\end{itemize}
The solution of  Whitham modulation equations   is  obtained as a boundary value problem, namely for an initial monotone decreasing initial 
data $q_0(X)$, the solution satisfies the boundary conditions
\begin{itemize}
\item when $\beta_1=\beta_2$, then $\beta_3(X,T)=q(X,T) $   that solves the equation $q_T+6q^2q_X=0$ with initial data $q_0(X)$;
\item  when $\beta_2=\beta_3$ then $\beta_1(X,T)=q(X,T) $   that solves the equation $q_T+6q^2q_X=0$ with initial data $q_0(X)$.
\end{itemize}
We observe that the Whitham modulation equations for MKdV can be obtained from the Whitham modulation equations for KdV  with speeds $V_j(r_1,r_2,r_3)$ as 
\begin{equation*}
W_j(\beta_1,\beta_2,\beta_3)=V_j(\beta_1^2,\beta_2^2,\beta_3^2),\quad j=1,2,3.
\end{equation*}

It was shown in \cite{Levermore}  that for $r_3>r_2>r_1$  one has 
\begin{itemize}
\item strict  hyperbolicity:   $V_3(r_1,r_2,r_3)>V_2(r_1,r_2,r_3)>V_1(r_1,r_2,r_3),$
\item genuine nonlinearity:  $\dfrac{\partial }{\partial r_j}V_j(r_1,r_2,r_3)>0$.
\end{itemize}
It follows that the Whitham equations for MKdV are strictly hyperbolic and genuinely nonlinear only  when all  $\beta_3>\beta_2>\beta_1>0$.
Indeed one has 
\begin{equation*}
\dfrac{\partial}{\partial \beta_j}W_j(\beta_1,\beta_2,\beta_3)=2\beta_j\dfrac{\partial}{\partial r_j}V_j(r_1,r_2,r_3)|_{r_k=\beta_k^2},
\end{equation*}
so that when one of the $\beta_j=0$  the equations are not genuinely nonlinear.
 
In particular, if we consider the self-similar solution $\beta_j(X,T)=\beta_j(z)$ with $z=X/T=x/t$ we have
that the  Whitham equations reduce to the form
\begin{equation}
\label{Whithamself}
(W_j-z)\dfrac{\partial }{\partial z}\beta_j=0,\quad j=1,2,3.
\end{equation}
The  initial data that corresponds to such  a self-similar solution is the  step initial data 
\begin{equation}
\label{B4}
q_0(X)=\begin{cases}c_{+}\qquad {\rm as}\quad
X>0\\c_{-}\qquad {\rm as }\quad X<0. \end{cases}
\end{equation}
We observe that such initial data is invariant under the scaling $X\to \varepsilon x$ so that 
we can also consider the initial data $q_0(x)$ in the fast variable $x$.
The   self-similar  Whitham equations \eqref{Whithamself} can be uniquely solved  in the form 
\begin{equation*}
\beta_3=c_-, \;\;\beta_1=c_+,\;\; \dfrac{X}{T}=\dfrac{x}{t}=W_2(c_+,\beta_2,c_-),\quad c_-\leq\beta_2\leq c_+\,,
\end{equation*}
as long as we can invert $\beta_2$ as a function of $x/t$ within the region 
\begin{equation*}
=-6c_+^2+12c_-^2=W_2(c_+,c_+,c_-)< \frac{x}{t} <W_3(c_+,c_-,c_-)=4c_-^2+2c_+^2. 
\end{equation*}
This happens only when $W_2(c_+,\beta_2,c_-)$ is a strictly monotone function of $\beta_2$, that is when $\beta_2>0$, namely 
 when $c_->c_+>0$.
Thus we have obtained the solution appearing in the long-time asymptotic regime for  the Cauchy problem for MKdV considered in this manuscript.

When $c_->-c_+>0$  the solution  of the MKdV  will still  develop a  dispersive shock wave, but the Whitham equations are not in general strictly hyperbolic and genuinely nonlinear. In this case   it can be shown that  the solution of  \eqref{Whithamself}   is obtained as follows. We define $z^*=W_1(c_+,-c_+,c_-)=W_2(c_+,-c_+,c_-)=-6c_-^2+12c_+^2$.
Then for $z^*\leq \frac{x}{t}\leq W_3(c_+,c_-,c_-)$  we have that the solution of \eqref{Whithamself} with the initial data \eqref{B4}  is obtained as 
\begin{equation}
\label{solw1}
\beta_3=c_-, \;\;\beta_1=c_+,\;\; \dfrac{X}{T}=\dfrac{x}{t}=W_2(c_+,\beta_2,c_-),
\end{equation}
which is solvable for $\beta_2=\beta_2(x/t)$ in view of the genuinely nonlinearity.
For $ -6c_-^2=W_1(0,0,c_-)  \leq \frac{x}{t}\leq z^*=-6c_-^2+12c_+^2$  we have 
\begin{equation}
\label{solw2}
\beta_3=c_-, \;\; \dfrac{X}{T}=\dfrac{x}{t}=W_1(\beta_1,\beta_2,c_-),\;\; \dfrac{X}{T}=\dfrac{x}{t}=W_2(\beta_1,\beta_2,c_-).
\end{equation}
A  plot of the solution is shown in Figure~\ref{fig_whitham}.
 \begin{figure}[htb]
 \begin{center}
  \includegraphics[scale=0.3]{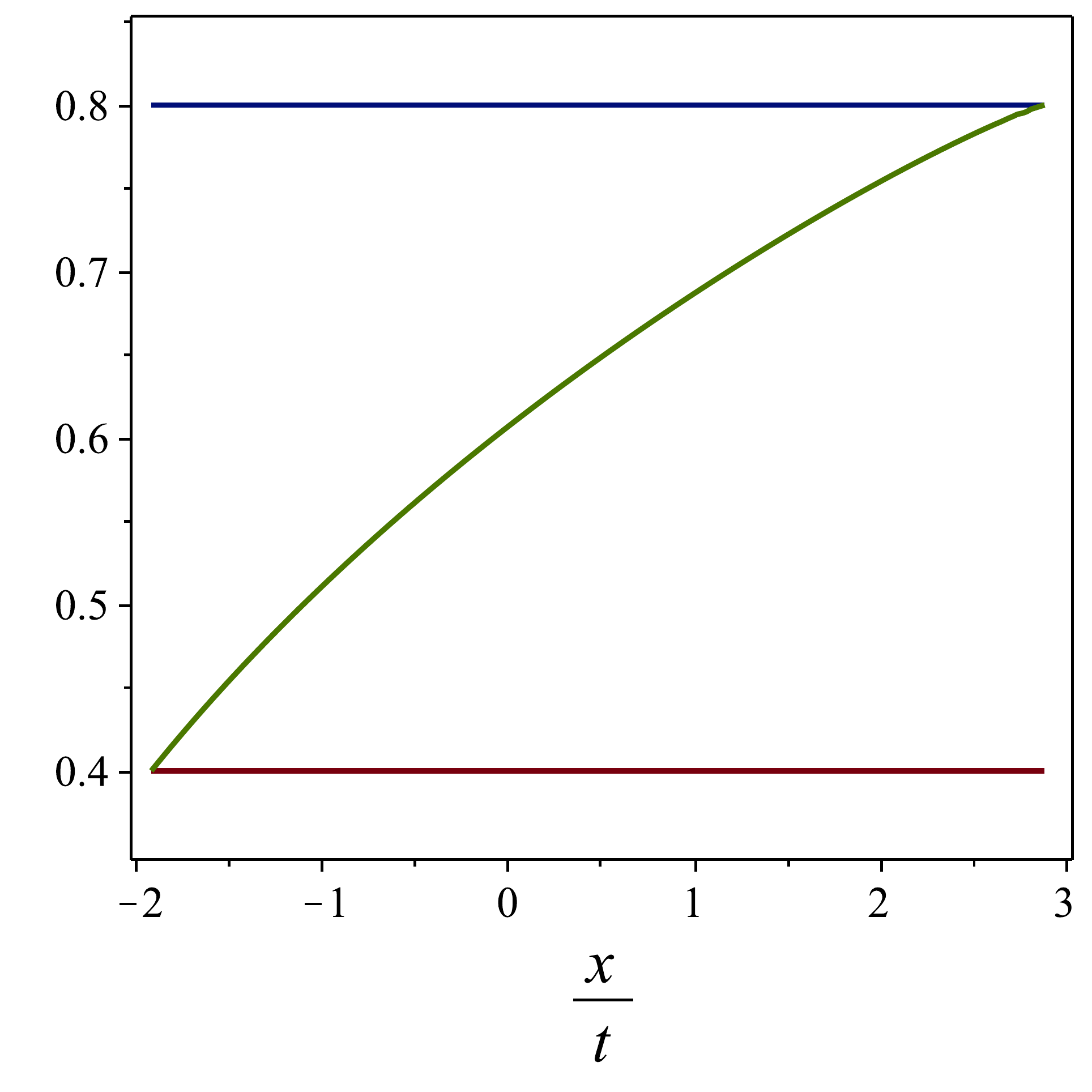}
  \includegraphics[scale=0.31]{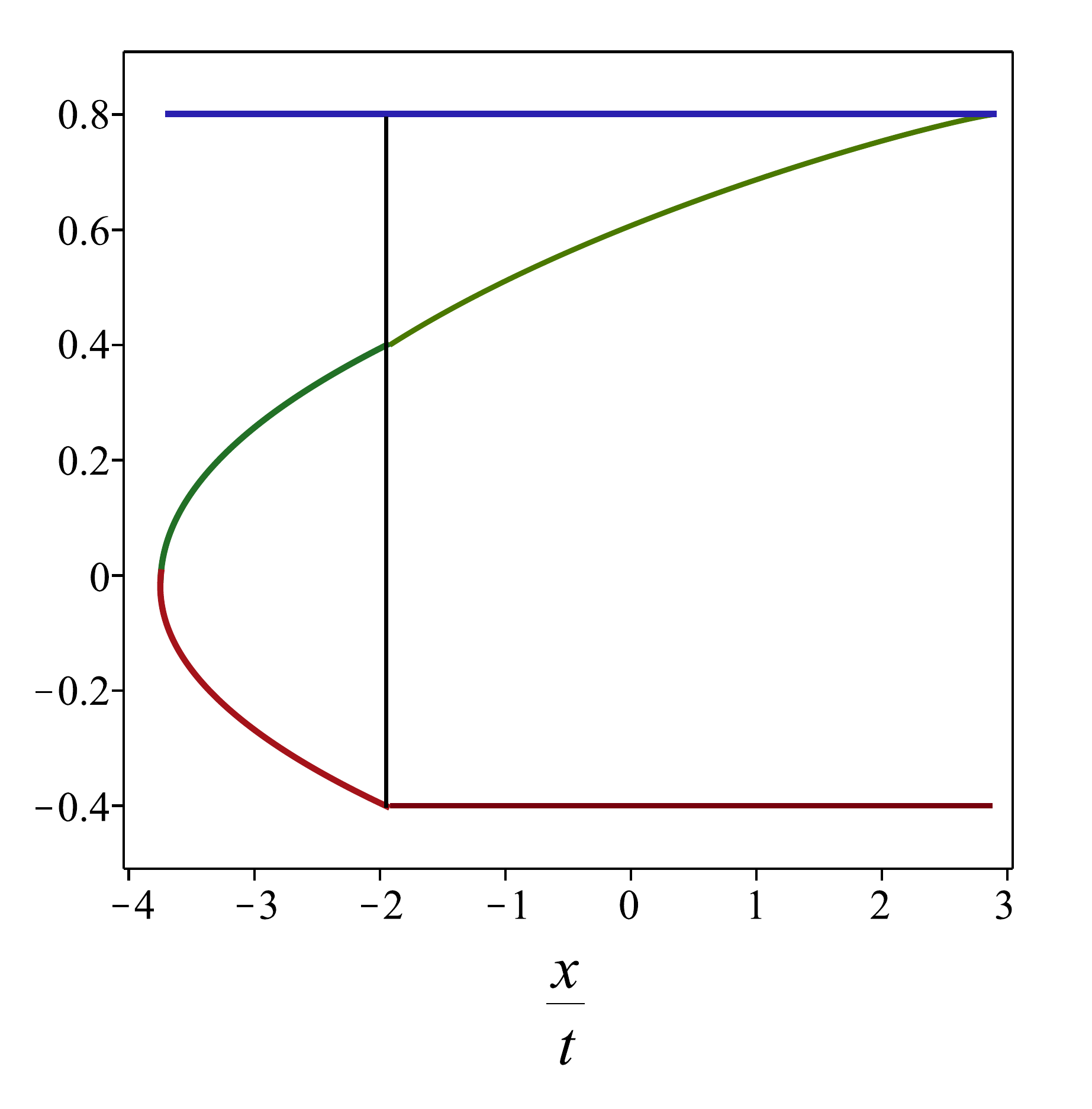}
\end{center}
\caption{The   self-similar  solution of the  Whitham equations \eqref{Whithamself} with  initial condition \eqref{B4}. The quantity  $\beta_1$  is  in red, $\beta_2$ in green and $\beta_3=c_-$ in blue and    $c_-=0.8$ and $c_+=0.4$ (left  figure) and $c_+=-0.4$ (right figure). In the right figure, the vertical black line distinguished between the  solution \eqref{solw1}
(right) and the solution \eqref{solw2} (left).  }
 \label{fig_whitham}
 \end{figure}

The proof of the solvability of the above system of equations   for $\beta_1=\beta_1(x/t)$ and $\beta_2=\beta_2(x/t)$ 
requires further analysis  that has been considered in  a similar setting
 in \cite{GTP}   and also in \cite{M15} for the  modulation equations \cite{AB} for the   Camassa-Holm equation. For nonstrictly hyperbolic equations see also \cite{C1,C2,C3}.
A discussion of the   various cases arising in the long-time asymptotics for the  MKdV  solution 
can be found in \cite{EHS}  A review  of dispersive shock waves for KdV can be found in \cite{Grava_LN}.

\section{ Proofs of properties of Jost solutions}\label{sect_appendC}
Here we prove Lemma \ref{lem_prop_Jost}. For the inspiration, we used the work \cite{Jakovleva}, but our functional spaces are a bit different. Namely, in order to be able to tackle not only continuous, but also discontinuous initial functions $q_0(x),$ like $q_0(x)=c_{\pm},$ for $ \pm x>0,$ we work with functions of bounded variations, $BV_{loc}(\mathbb{R}),$ rather than merely continuous functions. Such functions admit representation
\begin{equation*}
q_0(x)=_{ _{\hskip-3mm a.e.}}q_{ac}(x)+\sum\limits_j \alpha_j \chi_{(-\infty, x_j)}(x),\end{equation*}
 where the set $\left\{x_j\right\}$ is at most countable,  $\sum\limits_j |\alpha_j| =:\alpha<\infty,$
and $\chi_{A}(x)$ is the indicator function of a set $A$. Furthermore, $q_{ac}$ is an absolutely continuous function, i.e.
\begin{equation*}q_{ac}(x)=q_{ac}(x_0)+\int\limits^x_{x_0}q'_{ac}(\widetilde x)\d \widetilde x,\qquad q_{ac}'(.)\in L^1_{loc}(\mathbb{R}).\end{equation*} It follows that $q_0(x)$ is locally bounded.
Furthermore, the associated distributional derivative is the signed measure $\d q_0(x),$
\begin{equation*}
\d q_0(x)=q'_{ac}(x)\d x-\sum\limits_j \alpha_j \delta(x-x_j),
\end{equation*}
where $\delta$ is the Dirac $\delta-$distribution (a.k.a. Dirac $\delta$-function).
For the readers who would like to restrict themselves only to smooth (absolutely continuous) $q_0(x),$ we recommend to think $\alpha_j\equiv 0,$ in which case $\d q_0(x) = q_0'(x) \d x$ is the usual derivative.
We have the following proposition

\begin{proposition}\label{prop1}
\begin{enumerate} \item[(a)] Let $q_0(x)\in BV_{loc}(\mathbb{R}),$ and 
\begin{equation}\label{q0_cond1}\begin{split}&\int\limits_{-\infty}^{0}|x| |q_0(x)-c_-|\d x+ \int\limits^{+\infty}_{0}|x| |q_0(x)-c_+|\d x < \infty,
\end{split}\end{equation}
and $ \mathrm{ess}\sup\limits_{\hskip-5mm x\in\mathbb{R}}|q_0(x)|<\infty. $\\
Then there are exist solutions $\Phi^{\pm}(x;k)$ of $x-$equation \eqref{x-eq}, such that
\begin{equation}
\label{integral_representation_Jost_solution}
\Phi^{\pm}(x;k) = E^{\pm}(x;k) + \int\limits_{\pm\infty} ^ x L^{\pm}(x,y) E^{\pm}(y;k) \d y,
\qquad \color{black}{k\in\mathbb{R}\cup(\i c_{\pm},-\i c_{\pm}),} \end{equation}
where 
$$L^{\pm}(x,y)=\begin{pmatrix}
            L_{1}^{\pm}(x,y) & L_{2}^{\pm}(x,y)\\ -L_{2}^{\pm}(x,y) & L_{1}^{\pm}(x,y)
           \end{pmatrix}
$$
and 
$\mp\int_{\pm\infty} ^ x |L_j^{\pm}(x,y)| \d y <\infty,$ where $j=1,2.$ 
Moreover, $L^{\pm}(x,y)$ admit the following estimates in terms of the auxiliary quantities
\begin{equation*}
\begin{split}
\hat\sigma^{\pm}(x) = &\mp\int\limits_{\pm\infty} ^ x |q_0(\widetilde x)-c_{\pm}|\d\widetilde x\geq 0,
\quad
\hat\sigma_{1}^{\pm}(x) = \mp\int\limits_{\pm\infty} ^ x \hat\sigma^{\pm}(\widetilde x) \d\widetilde x\geq 0,\\
&M^{\pm}(x) = 
\mathrm{ess}\sup\limits_{\hskip-5mm\widetilde u\in(\pm\infty, x)}|q_0(\widetilde u)|,
\end{split}
\end{equation*}
which exist in view of \eqref{q0_cond1}.
Then  
\begin{equation}\label{estimates_Jost _kernel}
|L^{\pm}(x,y)| \leq C[M^{\pm}(x), \hat\sigma_{1}^{\pm}(x) ] \cdot \hat\sigma^{\pm}(\frac{x+y}{2}),
\end{equation}
where $C[M^{\pm}(x), \hat\sigma_{1}^{\pm}(x) ]$ is a constant, which depend only on $M^{\pm}(x),$  $\hat\sigma_{1}^{\pm}(x)$, and whose explicit form is not important for us.
\item[(b)] Let in addition to assumptions (a)  suppose that 
\begin{equation}\label{q0_cond2}\mp\int\limits_{\pm\infty}^0|\d q_0(x)|<\infty,
\qquad \mp\int\limits_{\pm\infty}^0 \mathrm{ess}\sup\limits_{\hskip-5mm \widetilde x\in {(\pm\infty, x)}}|q_0(x)|\,\d x<\infty  \end{equation}
Then (we supress $\pm$ everywhere for the ease of reading)
\begin{equation*}\begin{split}
\Phi(x;k) &= E(x;k) + \frac{1}{\i k}(q_0(x)-c)\sigma_1 E(x;k)-
\frac{1}{\i k}\int\limits_{\infty}^x \sigma_1 E(y;k)\d_yq_0(\frac{x+y}2)-\\
&
-\frac{\widehat L(x,x)\sigma_3E(x;k)}{\i k}-
\frac{1}{\i k}\int\limits_{\infty}^x \(q_0(\frac{x+y}2)-c\)\sigma_1 Q_c E(y;k)\d y+\\
&+\frac{1}{\i k}\int\limits_{\infty}^x\(\widehat L_y(x,y)\sigma_3 + \widehat L\sigma_3Q_c\)E(y;k)\d y,
\end{split}\end{equation*}
where $\sigma_3=\begin{pmatrix}1&0\\0&-1\end{pmatrix},$  $\sigma_{1}=\begin{pmatrix}0&1\\1&0\end{pmatrix},$  $Q_c=\begin{pmatrix}0&c\\-c&0\end{pmatrix}$,
and \begin{equation*}\left| \int_{\infty}^x |\widehat L_y(x,y)|\d y\right|<\infty,
\quad 
\left|\int_{\infty}^x |\widehat L(x,y)|\d y\right|<\infty.\end{equation*}
Moreover, (below $j=1$ or $j=2$)
\begin{equation*}
|\partial_y L_j(x,y)|\leq C[M(x), \hat\sigma_{1}(x), \hat\sigma_{2}(x)]\cdot \(\hat\sigma(\frac{x+y}{2}) + \mathrm{ess}\sup\limits_{\hskip-5mm\widetilde x\in(\infty, x)}|q_0(\widetilde x)| \),
\end{equation*}
where $C[M(x), \hat\sigma_{1}(x), \hat\sigma_2(x) ]$ is a constant, which depend only on $M^{\pm}(x),$  $\hat\sigma_{1}(x)$, $\hat\sigma_2(x)$ and whose explicit form is not important for us, and $\hat\sigma_2(x)$ stands for $\hat\sigma_2^\pm(x)$  where 
\begin{equation*}\hat\sigma_{2}^{\pm}(x) := \mp\int\limits_{\pm\infty}^x|\d q_0(x)|\geq 0.\end{equation*}
\item[(c)] Let in addition to assumptions (a) and  (b) suppose that 
\begin{equation}\label{q0_cond3}\pm\int\limits_{\mp\infty}^0 \e^{C|x|} |q_0(x)-c| \d x<\infty,\end{equation}
with some $C>0.$ Then\begin{equation*}\pm\int\limits_{\mp\infty}^0\e^{ C|x|} |L(x,y)|\d y<\infty, \quad \pm\int\limits_{\mp\infty}^0\e^{ C|x|} |L_y(x,y)|\d y<\infty.\end{equation*}
\end{enumerate}
\end{proposition}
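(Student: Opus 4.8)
The plan is to construct $\Phi^{\pm}$ via transformation (Marchenko) operators and to determine the kernels $L^{\pm}$ as the unique solution of a \emph{$k$-independent} Volterra integral equation, solved by successive approximations. First I would substitute the ansatz \eqref{integral_representation_Jost_solution} into the $x$-equation \eqref{x-eq}, using that $E^{\pm}$ solves the same equation with $q$ replaced by the background constant, i.e. $E^{\pm}_x=(-\i k\sigma_3+Q_{c_\pm})E^{\pm}$ and $E^{\pm}_y=(-\i k\sigma_3+Q_{c_\pm})E^{\pm}$, where $Q_{c_\pm}=\begin{pmatrix}0&c_\pm\\-c_\pm&0\end{pmatrix}$. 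Integrating by parts in $\int_{\pm\infty}^x L^{\pm}(x,y)(-\i k\sigma_3)E^{\pm}(y;k)\,dy$ and commuting $\sigma_3$ through $L^{\pm}$ --- this is precisely where the prescribed matrix structure $L=\begin{pmatrix}L_1 & L_2\\ -L_2 & L_1\end{pmatrix}$ is forced --- all explicit $k$-dependence cancels, leaving a first-order hyperbolic system for $L^{\pm}(x,y)$ together with a diagonal ``trace'' relation that expresses $\frac{d}{dx}L^{\pm}(x,x)$ through $q_0-c_{\pm}$ (with signs fixed by matching $E^{\pm}(x;k)$ in \eqref{integral_representation_Jost_solution}).

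Next I would pass to characteristic coordinates $u=\frac{x+y}{2}$, $v=\frac{x-y}{2}$, in which the system becomes a genuine Volterra integral equation whose driving term is built from $q_0-c_{\pm}$; the appearance of $\hat\sigma^{\pm}(\frac{x+y}{2})$ in \eqref{estimates_Jost _kernel} is exactly the signature of this change of variables. I would then solve by Picard iteration $L^{\pm}=\sum_{n\ge0}L^{\pm}_n$, and prove by induction the majorant bound $|L^{\pm}_n(x,y)|\le C[M^{\pm}(x),\hat\sigma_1^{\pm}(x)]\,\hat\sigma^{\pm}(\frac{x+y}{2})\,\frac{(\hat\sigma_1^{\pm}(x))^n}{n!}$, which gives absolute and uniform convergence of the series and simultaneously yields \eqref{estimates_Jost _kernel}. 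Because $q_0$ is only $BV_{loc}$, every integration must be read in the Stieltjes sense against the signed measure $dq_0$, and the diagonal relation must absorb the jump part of $q_0$; keeping all estimates phrased through $\hat\sigma^{\pm}$, $\hat\sigma_1^{\pm}$ and $M^{\pm}$ rather than through $q_0'$ is what lets the argument survive the loss of continuity.

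For part (b) I would extract the $\frac{1}{\i k}$ expansion by one further integration by parts: writing $E^{\pm}(y;k)=\frac{1}{-\i k}\sigma_3\bigl(E^{\pm}_y(y;k)-Q_{c_\pm}E^{\pm}(y;k)\bigr)$ inside $\int_{\pm\infty}^x L^{\pm}E^{\pm}\,dy$ pulls out the global factor $\frac{1}{\i k}$; the boundary term at $y=x$ produces the explicit $\frac{1}{\i k}(q_0(x)-c)\sigma_1 E$ and $-\frac{1}{\i k}\widehat{L}(x,x)\sigma_3 E$ contributions, while the differentiated kernel $\widehat{L}_y$ supplies the remaining integral. The bound on $\partial_y L^{\pm}$ follows by differentiating the Volterra equation in $y$ and rerunning the same iteration, the hypotheses \eqref{q0_cond2} entering precisely to control $\hat\sigma_2^{\pm}(x)=\mp\int_{\pm\infty}^x|dq_0|$. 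Part (c) is then obtained by inserting the weight $\e^{C|x|}$ into the iteration: multiplying the majorant estimates by the exponential and invoking \eqref{q0_cond3} shows the weighted series still converges, which propagates to $L^{\pm}$ and $L^{\pm}_y$ and underlies the analytic continuation of $a,b,r$ asserted in Lemma~\ref{lem_abr}.

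The hard part will be the very first step: carrying out the integration by parts so that the $k$-dependence cancels \emph{exactly}, and isolating the correct $k$-independent Volterra kernel in the matrix-valued, branch-cut-carrying setting of $E^{\pm}$ (whose entries involve $\gamma(k)$ and $g_{c_\pm}$), while handling the diagonal/trace relation in the distributional sense dictated by the $BV_{loc}$ class. Once the Volterra equation is set up correctly, the inductive estimates and iterative bookkeeping are routine, if lengthy.
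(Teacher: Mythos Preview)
Your plan is correct and matches the paper's proof almost step for step: substitution of the ansatz into \eqref{x-eq}, passage to characteristic coordinates $u=\tfrac{x+y}{2},\,v=\tfrac{x-y}{2}$, reduction to a Volterra system for $H_1,H_2$, and Picard iteration with the majorant $\hat\sigma\cdot(\hat\sigma_1)^n/n!$; parts (b) and (c) are handled exactly as you describe.

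One remark on where the weight lies. You flag the $k$-cancellation as the hard part, but the paper dispatches this in one line (``one can check directly''); the genuine technical device is the one you only allude to when you say ``the diagonal relation must absorb the jump part of $q_0$.'' Concretely: from the raw Volterra system one has $H_2(u,v)=\tfrac12(q_0(u)-c)+\int_0^v(q_0(u+\tilde v)-c)H_1\,d\tilde v$, so $H_2$ inherits the regularity of $q_0$ and no better. The paper therefore sets $\widehat H_2=H_2-\tfrac12(q_0(u)-c)$ (equivalently $\widehat L_2(x,y)=L_2(x,y)-\tfrac12(q_0(\tfrac{x+y}{2})-c)$), so that $\widehat H_1,\widehat H_2$ satisfy a Volterra system whose inhomogeneity is an \emph{integral} of $q_0-c$, gaining one level of regularity. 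This is precisely what makes the kernel $\widehat L$ appearing in the statement of part~(b) differentiable in $y$ even when $q_0$ is only $BV$, and it is what allows the integration by parts you propose to go through. Your phrase ``absorb the jump part'' is the right instinct, but the actual mechanism is this explicit subtraction rather than anything happening at the diagonal $y=x$.
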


\begin{remark}
The second of conditions (a) follows from the first of conditions (b).
\end{remark}

\begin{remark}
Conditions (b) of the Lemma are satisfied if, for example,
\begin{equation*}\mp\int\limits_{\pm\infty}^0|x| |\d q_0(x)|<\infty.\end{equation*}
\end{remark}

\begin{remark}
Both conditions (a), (b) of the Lemma are satisfied if, for example,
\begin{equation*}\mp\int\limits_{\pm\infty}^0|x|^2 |\d q_0(x)|<\infty \mbox{ and } q_0(x) = c_-+\int_{-\infty}^x \d q_0(y) = 
c_+-\int^{+\infty}_x \d q_0(y).
\end{equation*}
\end{remark}

\begin{proof} We start the proof of Proposition~\ref{prop1}.
With a bit of abuse of notations, we will suppress the superscript and subscript  $\pm.$
Assume first that $q_0$ is an absolutely continuous function. Then one can check directly that
\begin{equation*}\Phi(x,k)=E(x,k)+\int\limits_{\infty}^{x}L(x,y)E(y,k)\d y\end{equation*}
is a solution to \eqref{x-eq} provided that
the kernel
$$L(x,y)=\begin{pmatrix}
            L_{1}(x,y) & L_{2}(x,y)\\ -L_{2}(x,y) & L_{1}(x,y)
           \end{pmatrix}
$$
satisfies the following system of equations:

\begin{equation}\label{L1L2sys}\begin{cases}
 L_{1,x}(x,y)+L_{1,y}(x,y)=-(q_0(x)+c)L_{2}(x,y),\quad y\in(x,\infty),\\
  L_{2, x}(x,y)-L_{2, y}(x,y)=(q_0(x)-c)L_{1}(x,y),\quad y\in(x,\infty),
\\
L_{2}(x,x)=\frac{(q_0(x)-c)}{2},\quad \lim\limits_{y\to\infty}L_{2}(x,y)=0.
\end{cases}\end{equation}
In its turn, solutions of \eqref{L1L2sys} can be found as solutions of appropriate integral equations. Namely, denote 
$$u=\frac{x+y}{2},\quad v=\frac{x-y}{2},\qquad H_{1}(u,v)\equiv L_{1}(x,y),\quad H_{2}(u,v)\equiv L_{2}(x,y),$$
then we come to the integral equations
\begin{equation}\label{int_eq_1}
\begin{split}
&H_{1}(u,v)=-\int\limits_{\infty}^u(q_0(\widetilde u+v)+c)H_2(\widetilde u,v)\d \widetilde u,\\
&H_{2}(u,v)=\frac{(q_0(u)-c)}{2}+\int\limits_{0}^v(q_0(u+\widetilde v)-c)H_1(u, \widetilde v)\d \widetilde v,
\end{split}
\end{equation}
or 
\begin{equation}\label{int_eq_2}
\begin{split}
&
H_1(u,v)=\frac{-1}2\int\limits_{\infty}^u(q_0(\widetilde u+v)+c)(q_0(\widetilde u)-c)\d\widetilde u-\\
&\qquad\qquad-\int\limits_{\infty}^u(q_0(\widetilde u+v)+c)\int\limits_{0}^v(q_0(\widetilde u+\widetilde v)-c)
H_1(\widetilde u,\widetilde v)\d\widetilde v\d \widetilde u,
\\
&
H_2(u,v)=\frac{(q_0(u)-c)}{2}-\int\limits_{0}^v(q_0(u+\widetilde v)-c)\int\limits_{\infty}^u(q_0(\widetilde u+\widetilde v)+c)
H_2(\widetilde u,\widetilde v)\d\widetilde u\d\widetilde v.\end{split}\end{equation}
We see that $H_2(x,y)$ has the same level of regularity as $q_0$ does. Hence, system \eqref{L1L2sys} is appropriate for absolutely continuous functions $q_0,$ but it should have another meaning if we would switch to discontinuous functions $q_0$ of bounded variations.
One might think that now we will need to operate with functions of bounded variations of two variables. However, in our case functions of bounded variations of one variable suffices. Indeed, subtracting function $q_0(u+v),$ we obtain a regular function.
To this end, denote 
\begin{equation*} \begin{split}&\widehat L_1(x,y) = L_1(x,y),\quad \widehat L_2(x,y)=L_2(x,y)-\frac12\(q(\frac{x+y}{2})-c\),
\\
&
\widehat H_1(u,v)=H_1(u,v),\quad \widehat H_2(u,v) = H_2(u,v) -\frac12(q(u)-c).
\end{split}\end{equation*}
Furthermore, $\widehat H_1, \widehat H_2$ satisfy the following system of  integral equations:
\begin{equation}\label{int_eq_1a}
\begin{split}&\widehat H_{1}(u,v)=-\frac12 \int\limits_{\infty}^u(q_0(\widetilde u)-c)(q_0(\widetilde u+v)+c)\d \widetilde u-\int\limits_{\infty}^u(q_0(\widetilde u+v)+c)\widehat H_2(\widetilde u,v)\d \widetilde u,\\
&  
\widehat H_{2}(u,v)=\int\limits_{0}^v(q_0(u+\widetilde v)-c)\widehat H_1(u, \widetilde v)\d \widetilde v,\end{split}\end{equation}
or 
\begin{equation}\label{int_eq_2a}
\begin{split}
\widehat H_1(u,v)=&-\frac12\int\limits_{\infty}^u(q_0(\widetilde u+v)+c)(q_0(\widetilde u)-c)\d\widetilde u-\\
&-\int\limits_{\infty}^u(q_0(\widetilde u+v)+c)\int\limits_{0}^v(q_0(\widetilde u+\widetilde v)-c)
\widehat H_1(\widetilde u,\widetilde v)\d\widetilde v\d \widetilde u,
\\
\widehat H_2(u,v)=&\frac{-1}2\int\limits_{0}^v(q_0(u+\widetilde v)-c)\int\limits_{\infty}^u(q_0(\widetilde u+\widetilde v)+c)
(q_0(\widetilde u)-c)\d\widetilde u\d\widetilde v
\\
&-\int\limits_{0}^v(q_0(u+\widetilde v)-c)\int\limits_{\infty}^u(q_0(\widetilde u+\widetilde v)+c)
\widehat H_2(\widetilde u,\widetilde v)\d\widetilde u\d\widetilde v.\end{split}\end{equation}
We see that by subtracting from $L_2$ the `irregular part', which is $q_0,$  we gain in regularity of $\widehat L_2.$ Now $\widehat L_1, \widehat L_2$ have one more level of regularity than $q_0,$ and hence we are able to integrate the integral representation for $\Phi$ by parts, which gives us $k^{-1}$ term in the large $k$ asymptotic expansion.

More precisely, apply the successive approximation method to the first of equations \eqref{int_eq_2a},
i.e. represent $\widehat H_1 = \sum\limits_{j=0}^{\infty} \widehat H_1^{(j)},$
where $H_1^{(0)}$ is the inhomogeneous part of the r.h.s. of the equation, and $H_1^{(j)}$ is the homogeneous part of the r.h.s. of the equation, applied to $H_1^{(j-1)}.$ Then we have by induction that
\begin{equation*}|H_1^{j}|\leq\frac{\(M(u+v)\)^{j+1} \cdot \hat\sigma(u) \cdot (\hat\sigma_1(u+v)-\hat\sigma_1(u))^j}{2\cdot j!},\end{equation*}
where
\begin{equation*}\hat\sigma(u)=\left|\int_{\infty}^u|q(\widetilde u)-c|\d\widetilde u\right|,
\quad 
\hat\sigma_1(u)=\left|\int_{\infty}^u \hat\sigma(\widetilde u) \d\widetilde u\right| = \left|\int_{\infty}^u(u-\widetilde u)|q_0(\widetilde u)|\d\widetilde u\right|,
\end{equation*}
and $M(z) = \mathrm{ess}\sup\limits_{\hskip-5mm\widetilde u\in (\infty,z)}|q_0(\widetilde u)+c|.$
Indeed, the induction step goes as follows:
\begin{equation*}\begin{split}
&|\widehat H_1^{(j+1)}(u,v)| \leq \left|\int\limits_{\infty}^u(q_0(\widetilde u+v)+c)\int\limits_{0}^v(q_0(\widetilde u+\widetilde v)-c)
\widehat H_1^{(j)}(\widetilde u,\widetilde v)\d\widetilde v\d \widetilde u\right|
\\
&
\leq
\left|\int\limits_{\infty}^u|q_0(\widetilde u+v)+c|\int\limits_{0}^v|q_0(\widetilde u+\widetilde v)-c|
\frac{M(\widetilde u+\widetilde v)^{j+1}}{2\cdot j!}\hat\sigma(\widetilde u)\(\hat\sigma_1(\widetilde u+\widetilde v)-\hat\sigma_1(\widetilde u)\)^{j}\d\widetilde v\d \widetilde u\right|.
\end{split}
\end{equation*}
In the latter we estimate \\
$\hat\sigma(\widetilde u)\leq \hat\sigma(u),\quad \hat\sigma_1(\widetilde u+\widetilde v)-\hat\sigma_1(\widetilde u)\leq\hat\sigma_1(\widetilde u+v)-\hat\sigma_1(\widetilde u), \quad M(\widetilde u+\widetilde v)\leq M(u+v),\quad |q_0(\widetilde u+v)+c|\leq^{^{^{\hskip-3mm a.e.}}} M(u+v),$
and thus obtain
\begin{equation*}
|H_1^{(j+1)}(u,v)|
\leq
\frac{M(\widetilde u+\widetilde v)^{j+2}}{2\cdot j!}\cdot \hat\sigma(\widetilde u)
\left|\int\limits_{\infty}^u \(\hat\sigma_1(\widetilde u+v)-\hat\sigma_1(\widetilde u)\)^{j} \int\limits_{0}^v|q_0(\widetilde u+\widetilde v)-c|
\d\widetilde v\d \widetilde u\right|.
\end{equation*}
Here we use the definition of $\hat\sigma$ as the integral of $|q-c|,$ and further make use of $|\hat\sigma_1'(u)|=\hat\sigma(u),$ and integrating by parts, obtain the induction step.

Hence, 
\begin{equation*}|\widehat H_1(u,v)|\leq K_1(u,v) := \frac12M(u+v)\hat\sigma(u)\exp\left[M(u+v)\cdot(\hat\sigma_1(u+v)-\hat\sigma_1(u))\right],\end{equation*}
and similarly,
\begin{equation*}
\begin{split}
|\widehat H_2(u,v)|\leq K_2(u,v)&:= \frac12M(u+v)\,\hat\sigma(u)\,(\hat\sigma(u+v)-\hat\sigma(u))\times\\
&\times \exp\left[M(u+v)\,(\hat\sigma_1(u+v)-\hat\sigma_1(u))\right],
\end{split}
\end{equation*}
and henceforth,
\begin{equation*}|\widehat L_1(x,y)|\leq \widetilde C \hat\sigma(\frac{x+y}{2}),\quad |\widehat L_2(x,y)|\leq \widetilde C \hat\sigma(\frac{x+y}{2}),\end{equation*}
where $\widetilde C$ is a generic constant, which does not depend on $y.$ Since $\hat\sigma\in L_1,$ we henceforth proved the statement (a) of the Lemma.

Passing to the part (b), it is sufficient to prove that $\left|\int_{\infty}^ x |L_y(x,y)|\d y\right|<\infty$, and for this it is sufficient to estimate $\partial_u\widehat H_1,$ $\partial_v\widehat H_1,$ $\partial_u\widehat H_2,$ $\partial_v\widehat H_2.$
We have 
\begin{equation*}\partial_u \widehat H_1(u,v) = -(q_0(u+v)+c)\left\{\widehat H_2(u,v) + \frac12(q_0(u)-c)\right\},
\end{equation*}
$$\partial_v \widehat H_2(u,v) = (q_0(u+v)-c) \widehat H_1(u,v),$$
and hence it is enough to estimate $\partial_v\widehat H_1,$ $\partial_u\widehat H_2.$
We have 
\begin{equation}\label{int_eq_1a_der}
\begin{split}&\partial_v\widehat H_{1}(u,v)=\int\limits_{\infty}^u \left[ \frac{q_0(\widetilde u)-c}{-2}-\widehat H_2(\widetilde u, v)\right] \d_{\widetilde u} q_0(\widetilde u+v)
-\int\limits_{\infty}^u(q_0(\widetilde u+v)+c) \partial_v \widehat H_2(\widetilde u,v)\d \widetilde u, \\
&  
\partial_u \widehat H_{2}(u,v)=\int\limits_{0}^v(q_0(u+\widetilde v)-c)\partial_u \widehat H_1(u, \widetilde v)\d \widetilde v
+
\int\limits_{0}^v \widehat H_1(u, \widetilde v) \d_{\widetilde v} q_0(u+\widetilde v) ,\end{split}\end{equation}
and to estimate, we split $\d q_0(z) = q_{ac}'(z)\d z+ \sum_j\alpha_j \delta(z-x_j).$
We thus have 
\begin{equation*}
\begin{split}
|\partial_v \widehat H_1(u,v)| \leq & \hat\sigma_2(u+v) \(\frac12 \mathrm{ess}\sup\limits_{\hskip-5mm\widetilde u\in (\infty, u)} |q_0(\widetilde u)-c| +
K_2(u,v)\) +\\
&+ M(u+v) \, \hat\sigma(u+v) \, K_1(u,v),
\\ 
|\partial_u \widehat H_2(u, v)| \leq & M(u+v)\, (\hat\sigma(u+v)-\hat\sigma(u))
\(\frac12 \mathrm{ess}\sup\limits_{\hskip-5mm\widetilde u\in (\infty, u)} |q_0(\widetilde u)-c| + K_2(u,v)\)+ 
 \\ & + \(\hat\sigma_2(u+v)-\hat\sigma_2(u)\) K_1(u, v),
\end{split}
\end{equation*}
Since \begin{equation*}\left|\int_{\infty}^0 \mathrm{ess}\sup\limits_{\hskip-5mm\widetilde u\in (\infty, u)} |q_0(\widetilde u)-c| \d u\right|  + 
\left|\int_{\infty}^{0}\hat\sigma(u) \d u\right| <\infty,\end{equation*}
we obtain the statement of $(b).$
The statement (c) follows immediately from the statement (b)
and from the estimate (let us stick to $-\infty$ for certainty)
\begin{equation*}
\begin{split}
\int\limits_{-\infty}^{x} \hat{\sigma}(y)\e^{C|y|}\d y& =  
\int\limits_{-\infty}^{x} \e^{C|y|}\int\limits_{-\infty}^y|q_0(z)-c|\d z\d y = \\
&=
\int\limits_{-\infty}^{x}\d z  |q_0(z)-c| \int\limits^{x}_z \e^{C|y|} \d y = 
\frac{1}{C}\int\limits_{-\infty}^{x}  |q_0(z)-c|  \(\e^{C|x|} - \e^{C|z|}\)\d z.
\end{split}
\end{equation*}

\end{proof}

\begin{corollary}\label{cor_analyticity_Jost}\begin{enumerate}
            \item[(a)]
Provided that conditions \eqref{q0_cond1} are satisfied, the first column $\Phi^{\l}_1$ is analytic in \\ $\Im\sqrt{k^2+c_{\l}^2}>0$, i.e. in $\left\{k: \Im k>0\right\}\setminus[\i c_-,0],$ the second column $\Phi^{\l}_2$ is analytic in $\Im\sqrt{k^2+c_{\l}^2}<0,$ and the first column of the right Jost solution $\Phi^{\r}_1$
is analytic in $\Im \sqrt{k^2+c_{\r}^2}<0,$ and the second column $\Phi^{\r}_2$ is analytic in $\Im \sqrt{k^2+c_{\r}^2}>0.$

\item[(b)] Furthermore, define the transition matrix
$T(k):= \Phi^+(x;k)^{-1}\Phi^-(x;k),$ and denote its elements by
\begin{equation*}T(k)=\begin{pmatrix}a(k) & -\ol{b(\ol k)} \\ b(k) & \ol{a(\ol k)}\end{pmatrix},\quad \mbox{and furthermore }\quad r(k) = \frac{b(k)}{a(k)}.\end{equation*}
Then $a(.)$ is analytic in $k\in\left\{k:\ \Im k>0\right\}\setminus [\i c_-,0],$ continuous up to the boundary with the exception of the points $k\in\left\{\i c_-, \i c_+, 0\right\},$
and has uniform w.r.t. $\arg k \in[0,\pi]$ asymptotics $a(k)=1+\mathcal{O}(k^{-1})$ as $k\to\infty.$
Function $r(k)$ is defined and continuous for $k\in\mathbb{R}\setminus\left\{ 0 \right\},$ and has the asymptotics
$b(k)=\mathcal{O}(k^{-1})$ as $k\to\pm\infty, $ $k\in\mathbb{R}.$

\item[(c)] Suppose that in addition to \eqref{q0_cond1} the following conditions are satisfied: $c_-\geq c_+\geq 0,$ and 
 \begin{equation}
 \label{exp_decreasing}\int\limits_{-\infty}^0|q_0(x)-c_{\l}|\e^{2|x|\sqrt{(c_{\l}+\delta)^2-c_{\l}^2}}\d
x+\int\limits_{0}^{+\infty}|q_0(x)-c_{\r}|\e^{2x\sqrt{(c_{\l}+\delta)^2-c_{\r}^2}}\d
x<\infty,
\end{equation}
for some $\delta>0$. \\
Then the Jost solutions $\Phi^{\l},$ $\Phi^{\r}$ are analytic in a $\delta-$neighborhood of the contour $k\in\Sigma=\mathbb{R}\cup[\i c_{\l},-\i c_{\l}].$
\item[(d)] Furthermore, $a(k)$ and $b(k)$ have analytic extension to $k\in U_{\delta}(\Sigma)\setminus [\i c_-, -\i c_-],$ and have the asymptotics
$b(k) = \mathcal{O}(k^{-1})$, as $k\to\infty, $ $k\in U_{\delta}(\Sigma).$
\end{enumerate}
\end{corollary}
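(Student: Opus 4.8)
The plan is to derive all four assertions directly from the integral representation of the Jost solutions in Proposition~\ref{prop1} together with the kernel estimate \eqref{estimates_Jost _kernel}. In every case the mechanism is the same: the only potential source of divergence in $\int_{\pm\infty}^x L^\pm(x,y)E^\pm(y;k)\,\d y$ is the exponential factor carried by the columns of $E^\pm$, and the domain of analyticity is precisely the set of $k$ on which that factor stays bounded against the (weighted) $L^1$ decay of the kernel.

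For part (a) I would take, say, the left Jost solution and its first column, $\Phi^-_1(x;k)=E^-_1(x;k)+\int_{-\infty}^x L^-(x,y)E^-_1(y;k)\,\d y$. From \eqref{E_c} at $t=0$, the column $E^-_1(y;k)$ is an algebraic prefactor built from $\gamma(k)=\sqrt[4]{(k-\i c_-)/(k+\i c_-)}$ times $\e^{-\i y\sqrt{k^2+c_-^2}}$; the prefactor and $\sqrt{k^2+c_-^2}$ are analytic off the cut $[\i c_-,-\i c_-]$, whose part in the closed upper half-plane is $[\i c_-,0]$. Writing the integrand's exponential relative to its value at $y=x$, namely $\e^{-\i(y-x)\sqrt{k^2+c_-^2}}$ with $y-x\le 0$, its modulus is $\e^{(y-x)\Im\sqrt{k^2+c_-^2}}$, bounded exactly when $\Im\sqrt{k^2+c_-^2}>0$. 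Since $|L^-(x,y)|\le C\,\hat\sigma^-(\tfrac{x+y}{2})$ with $\hat\sigma^-\in L^1$, the integral then converges absolutely and uniformly on compact subsets of $\{\Im\sqrt{k^2+c_-^2}>0\}$, so the limit is analytic there. A short branch computation identifies $\{\Im\sqrt{k^2+c_-^2}>0\}$ with $\{\Im k>0\}\setminus[\i c_-,0]$; the other three columns follow identically (the sign of the exponent and the side of integration flip together) or from the symmetries in property~3 of Lemma~\ref{lem_prop_Jost}.

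For part (b), by \eqref{T_a_b} the coefficient $a(k)=\det(\Phi^-_1,\Phi^+_2)$ is the ($x$-independent) Wronskian of two columns that by part (a) are analytic in $\{\Im k>0\}\setminus[\i c_-,0]$ and $\{\Im k>0\}\setminus[\i c_+,0]$ respectively; since $c_+<c_-$ the larger cut contains the smaller, so $a$ is analytic on $\{\Im k>0\}\setminus[\i c_-,0]$ and continuous up to $\Sigma$ away from the branch points $\i c_-,\i c_+,0$. The normalization $a(k)=1+\mathcal O(k^{-1})$ and, likewise, $b(k)=\det(\Phi^+_1,\Phi^-_1)=\mathcal O(k^{-1})$ come from the $\tfrac1{\i k}$-expansion of $\Phi^\pm$ produced by the integration-by-parts in Proposition~\ref{prop1}(b): the leading $\gamma\to1$ terms give $\det\to1$ for $a$ and a cancellation for $b$, and the estimate is uniform in $\arg k\in[0,\pi]$ because the bound in (b) is. Finally $r=b/a$ is a ratio of continuous functions, hence continuous on $\R\setminus\{0\}$ wherever $a\neq0$, with $r=\mathcal O(k^{-1})$.

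For parts (c) and (d) I would invoke Proposition~\ref{prop1}(c), which under the exponential condition \eqref{exp_decreasing} upgrades the kernel bounds to the weighted ones $\pm\int \e^{C|y|}|L^\pm(x,y)|\,\d y<\infty$. Near $\Sigma$, the boundary of the domain from part (a), the factor $\e^{(y-x)\Im\sqrt{k^2+c_\pm^2}}$ is allowed to grow as $y\to\pm\infty$ at a rate controlled by how far $k$ has moved off $\Sigma$: for $k$ at distance $\le\delta$ one estimates $|\Im\sqrt{k^2+c_\pm^2}|\le \sqrt{(c_-+\delta)^2-c_\pm^2}$, which is exactly the decay rate prescribed in \eqref{exp_decreasing}. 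Thus the weighted kernel decay compensates the exponential growth, the integral representation continues to converge on $U_\delta(\Sigma)\setminus[\i c_-,-\i c_-]$, and analyticity of $\Phi^\pm$ there passes through the Wronskian determinants to $a$ and $b$, with $b=\mathcal O(k^{-1})$ inherited from the integration-by-parts estimate. The main obstacle is precisely this matching in part (c): one must make the estimate of $\Im\sqrt{k^2+c_\pm^2}$ on $U_\delta(\Sigma)$ sharp enough to align with the rate $2\sqrt{(c_-+\delta)^2-c_\pm^2}$, keeping careful track of the two different constants $c_+$ and $c_-$ on the two half-lines and of the branch of the square root near the endpoints $\pm\i c_\pm$.
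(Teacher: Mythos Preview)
Your proposal is correct and follows essentially the same approach as the paper: derive everything from the integral representation \eqref{integral_representation_Jost_solution} and the kernel estimate \eqref{estimates_Jost _kernel}, with the domain of analyticity dictated by where the exponential factor in $E^\pm$ is dominated by the (weighted) $L^1$ bound on the kernel. The paper phrases the matching in part~(c) in the contrapositive direction---it shows the strip $\{|\Im\sqrt{k^2+c_\pm^2}|<\sqrt{(c_-+\delta)^2-c_\pm^2}\}$ contains $U_\delta(\Sigma)$ rather than bounding $|\Im\sqrt{k^2+c_\pm^2}|$ on $U_\delta(\Sigma)$---which sidesteps the ``main obstacle'' you flag, but the content is identical. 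For part~(d) the paper is slightly more explicit than you: it writes $a$ and $b$ out at a fixed $x$ as products of the $a_\gamma^\pm,b_\gamma^\pm$ prefactors and integrals $f_j^\pm(x;k)=\int L^\pm(x,y)(\cdots)\e^{\pm\i(x-y)\chi^\pm(k)}\d y$, then integrates by parts in $y$ to extract the $\mathcal O(k^{-1})$, noting that the global prefactor $\e^{\mp\i x(\chi^+(k)\pm\chi^-(k))}$ stays bounded on $U_\delta(\Sigma)$ for fixed $x$.
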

\begin{proof}
The  part (a)  of the corollary follows from the fact that under \eqref{q0_cond1} the kernels in \eqref{integral_representation_Jost_solution} are summable:
$$ \int\limits_{-\infty}^x|L^\l(x,y)|\d y<\infty,\quad \quad \int\limits^{+\infty}_x|L^\r(x,y)|\d y<\infty.$$
 Statement (c) of the corollary follows directly from the estimates \eqref{estimates_Jost _kernel} and the fact that under \eqref{exp_decreasing} 
$$\int\limits_{-\infty}^0 \e^{2|y| \sqrt{(c_{\l}+\delta)^2-c_{\l}^2}}\hat{\sigma}^\l(y)\d y<\infty,
\qquad\int\limits^{+\infty}_0 \e^{2y \sqrt{(c_{\l}+\delta)^2-c_{\r}^2}}\hat{\sigma}^\r(y)\d y<\infty.$$
The first of the above estimates gives us that $\Phi^{\l}$ is analytic in $0<|\Im\sqrt{k^2+c_{\l}^2}|<\sqrt{(c_{\l}+\delta)^2-c_{\l}^2},$ which includes  a   $\delta-$neighborhood of $\Sigma,$
and the second estimate gives us analyticity of $\Phi^{\r}$ in 
$|\Im \sqrt{k^2+c_{\r}^2}|<\sqrt{(c_{\l}+\delta)^2-c_{\r}^2},$ which also includes  a $\delta-$neighborhood of $\Sigma.$

To prove part   (d) of the corollary  let us  
introduce 
 $\chi^{\pm}(k)=\sqrt{k^2+c^2_{\pm}}$, $\gamma(k;c)=\sqrt[4]{\dfrac{k-\i c}{k+\i c}}$,   $a^{\pm}_\gamma(k)=\dfrac{1}{2}\(\gamma(k,c_\pm)+\frac{1}{\gamma(k,c_\pm)}\right)$,$b^{\pm}_\gamma(k)=\dfrac{1}{2}\(\gamma(k,c_\pm)-\frac{1}{\gamma(k,c_\pm)}\right)$,
 and
 $$
 f_1(x;k):=\int\limits_{\infty}^{x}(L_1(x,y)a_\gamma(k)+L_2(x,y)b_\gamma(k))\e^{\i(x-y)\chi(k)}\d y
 $$
 and 
 $$
f_2(x;k):= \int\limits_{\infty}^{x}(-L_2(x,y)a_\gamma(k)+L_1(x,y)b_\gamma(k))\e^{\i(x-y)\chi(k)}\d y
$$
 where we drop everywhere the superscript $\pm$ in the definition of $f_1(x;k)$ and $f_2(x;k)$
 and
 $$
 f_3^+(x;k):= \int\limits_{+\infty}^{x}(-L_2^+(x,y)b_\gamma^+(k)+L_1^+(x,y)a_\gamma^+(k))\e^{\i(y-x)\chi^+(k)}\d y,
 $$
 $$
 f_4^+(x;k):=\int\limits_{+\infty}^{x}(L_1^+(x,y)b_\gamma^+(k)+L_2^+(x,y)a_\gamma^+(k))\e^{\i(y-x)\chi^+(k)}\d y.
 $$
Then, substituting the expressions 
$$\Phi^-(x;k)=:\begin{pmatrix}\varphi^-(x;k) & -\overline{\psi^-(x;\overline{k})} \\
\psi^-(x;k) & \overline{\varphi^-(x;\overline{k})} \end{pmatrix}
\
\mbox{ and } \
\Phi^+(x;k)=:\begin{pmatrix} -\overline{\psi^+(x;\overline{k})} & \varphi^+(x;k) \\
\overline{\varphi^+(x;\overline{k})} &  \psi^+(x;k)\end{pmatrix},$$
 into the definition of the transition matrix $T(k),$
we find the following expressions for $a, b$
(which are $x$-independent, as they should be):
\begin{equation}\label{b}
\begin{split}
b(k) =& \e^{-\i x(\chi^+(k)+\chi^-(k)) }\left[\(a^-_\gamma(k) +f_1^-(x;k) \)\(b_\gamma^+(k) + f^+_2(x;k)\)-\right.\\
&\left.-\(b_\gamma^-(k) +f_2^-(x;k)\)(\(a_\gamma^+(k) + f_1^+(x;k)\)\right]\end{split}\end{equation}
and
\begin{equation*}
\begin{split}
a(k) =& \e^{\i (x(\chi^+(k)-\chi^-(k)) }\left[(a_\gamma^-(k) +f_1^-(x;k)) (a_\gamma^+(k) +f_3^+(x;k))
-\right.\\
&\left.-(b_\gamma^-(k) + f_2^-(x;k))(b_\gamma^+(k) +f_4^+(x;k) )
\right].
\end{split}\end{equation*}
Integrating here by parts, and using estimates
$|L^{\pm}_j(x,y)|\leq \widetilde C(x) \hat{\sigma}^\pm((x+y)/2),$ $|\partial_y L^{\pm}_j(x,y)|\leq \widetilde C(x) \hat{\sigma}^{\pm}((x+y)/2),$
we see that all the integrals are convergent in the domain $k\in U_{\delta}(\Sigma),$ and, furthermore, all the exponential terms with  dependence on $k$  are estimated by $\e^{|x| |\Im \chi(k)|}\leq \e^{\widetilde C |x|},$ and hence are bounded as $k\to\infty, |\Im k|\leq\delta.$
\end{proof}

\begin{proof}[Proof of Lemma \ref{lem_prop_Jost}]
 The second property follows from the  integral representation in  Proposition \ref{prop1}.
  The large $k$ asymptotics in property 4  follows from integral representations in Proposition \ref{prop1} (b), and the possibility of integrating them by parts, as in Proposition \ref{prop1} (b).
  \end{proof}

\section{Zeros of $a(k)$ in $\mathrm{Im}\sqrt{k^2+c_{\l}^2}\geq0.$}\label{sect_nongen}
In this section we 
derive a RH problem in the case of higher order poles, which gives higher order solitons or breathers.

\begin{lemma}\label{lem_zeros_a_multiple}
Let $k_0,$ $\mathrm{Im}k_0>0,$ $k_0\notin[\i c_{\l},\i c_{\r}]$ be a zero of $a(.)$ of the order $n\geq1,$ i.e.
$$a(k_0)=\cdots a^{(n-1)}(k_0)=0,\quad a^{(n)}(k_0)\neq0.$$
Then there exist coefficients $$\mu_0,\cdots,\mu_{n-1}$$
independent on $x,t,$ such that the following equalities hold:\\
$$f_{\l}^{(m)}=\sum\limits_{k=0}^{m}C_m^k\mu_{m-k}f_{\r}^{(k)},\quad m=0,\cdots, n-1,\quad C_{m}^{k}:=\frac{m!}{k!(m-k)!}$$
where $f_{\pm}=\begin{pmatrix}\varphi_{\pm},\psi_{\pm}\end{pmatrix}^T,$
i.e.
$\begin{cases}
 f_{\l}=\mu_{0}f_{\r},\\
 \dot{f}_{\l}=\mu_{0}\dot{f}_{\r}+\mu_1f_{\r},\\
 \ddot{f}_{\l}=\mu_{0}\ddot{f}_{\r}+2\mu_1\dot{f}_{\r}+\mu_2 f_{\r},\\
 \dddot{f}_{\l}=\mu_{0}\dddot{f}_{\r}+3\mu_1\ddot{f}_{\r}+3\mu_2 \dot{f}_{\r}+\mu_3 f_{\r}.\\\cdots
\end{cases}
$
\end{lemma}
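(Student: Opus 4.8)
The plan is to reduce everything to the single bilinear identity $a(k)=\det\bigl(f_-(x;k),f_+(x;k)\bigr)$ furnished by \eqref{defap}, with $f_\pm=(\varphi^\pm,\psi^\pm)^T$, and to upgrade the simple--pole computation of Lemma~\ref{lem_zeros} to all orders by differentiating this determinant in $k$. First I would record that near $k_0$ both columns $f_-=\Phi^-_1$ and $f_+=\Phi^+_2$ are analytic (property~2 of Lemma~\ref{lem_prop_Jost}, since $k_0\notin[\i c_-,0]$), and that $f_+(x;k_0)\neq 0$ because $\det\Phi^+\equiv 1$ forces every column of $\Phi^+$ to be nonzero.

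\textbf{Step 1 (existence at fixed $x$).} Choose a constant vector $g$ with $\det\bigl(f_+(x;k_0),g\bigr)\neq0$; by continuity $\det(f_+,g)\neq0$ in a neighbourhood of $k_0$, so $\{f_+,g\}$ is a holomorphic frame and one may write $f_-(x;k)=\mu(x;k)f_+(x;k)+\beta(x;k)g$ with $\mu,\beta$ holomorphic in $k$. Then $a=\det(f_-,f_+)=\beta\,\det(g,f_+)$, and since $\det(g,f_+)$ is nonvanishing, the hypothesis $a(k_0)=\cdots=a^{(n-1)}(k_0)=0$, $a^{(n)}(k_0)\neq0$ is equivalent to $\beta$ having a zero of order exactly $n$ at $k_0$, so $\beta^{(j)}(k_0)=0$ for $0\le j\le n-1$. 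Applying the Leibniz rule to $f_-=\mu f_+ +\beta g$ at $k_0$ and noting that, $g$ being constant, the only $\beta$--contribution to $f_-^{(m)}(k_0)$ is $\beta^{(m)}(k_0)g=0$, I obtain $f_-^{(m)}(k_0)=\sum_{k=0}^m C_m^k\,\mu^{(m-k)}(k_0)\,f_+^{(k)}(k_0)$ for all $m\le n-1$. Setting $\mu_j:=\partial_k^j\mu(x;k)\big|_{k=k_0}$ gives the asserted relations, and uniqueness of the $\mu_j$ is immediate from $f_+(x;k_0)\neq0$.

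\textbf{Step 2 ($x$-- and $t$--independence).} This is the delicate point. Differentiating the $x$--equation \eqref{x-eq} $m$ times in $k$ and using that $A(x,k)=-\i k\sigma_3+Q$ is affine in $k$ yields the bidiagonal jet system $\partial_x f^{(m)}=A f^{(m)}-\i m\sigma_3 f^{(m-1)}$, satisfied by both $f_\pm$. Differentiating the relation of Step~1 in $x$, substituting this jet system on each side and cancelling the common terms $A\sum_k C_m^k\mu_{m-k}f_+^{(k)}$, the remaining $\sigma_3$--terms match one another by virtue of the identity $(k+1)C_m^{k+1}=m\,C_{m-1}^k$; what survives is precisely $\sum_{k=0}^m C_m^k\,(\partial_x\mu_{m-k})\,f_+^{(k)}(k_0)=0$ for each $m\le n-1$. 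Since $f_+(x;k_0)\neq0$, the case $m=0$ gives $\partial_x\mu_0=0$, and induction on $m$ (the lower $\partial_x\mu_j$ already vanishing forces every $k\ge1$ term to drop, leaving $\partial_x\mu_m\,f_+=0$) gives $\partial_x\mu_m=0$. Running the identical computation with the $t$--equation \eqref{t-eq} gives $\partial_t\mu_j=0$, so the $\mu_j$ are genuine constants, completing the proof.

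The main obstacle is Step~2: one must check that the specific binomial/shift structure of the claimed relations is exactly the structure preserved by the $k$--differentiated Lax flow, which boils down to the cancellation produced by $(k+1)C_m^{k+1}=m\,C_{m-1}^k$. Everything else is elementary linear algebra together with the analyticity of the Jost solutions and of $a(k)$ near $k_0$ already established in Lemma~\ref{lem_prop_Jost} and Corollary~\ref{cor_analyticity_Jost}.
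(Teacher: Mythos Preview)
Your proof is correct and genuinely different from the paper's. For the existence of the $\mu_j$ (your Step~1) the paper proceeds by induction on $m$, expanding $a^{(m)}(k_0)=\sum_k C_m^k\det[f^{(m-k)},g^{(k)}]$ via Leibniz, substituting the lower--order relations, and showing that the cross terms $\det[g^{(j)},g^{(k)}]$ cancel in pairs thanks to $C_m^jC_{m-j}^k=C_m^kC_{m-k}^j$; this leaves $\det[f^{(m)}-\sum_{k\ge1}C_m^k\mu_{m-k}g^{(k)},g]=0$, whence $\mu_m$. Your frame argument $f_-=\mu f_++\beta g$ with $g$ a constant vector bypasses this inductive cancellation entirely: once you know $\beta$ vanishes to order $n$, Leibniz and the constancy of $g$ give all $n$ relations at once with $\mu_j=\partial_k^j\mu|_{k_0}$. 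This is cleaner, and the uniqueness you observe from $f_+(x;k_0)\neq0$ also shows the $\mu_j$ are independent of the auxiliary choice of $g$.

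On the other hand, the paper's proof does not address $x,t$--independence at all, whereas your Step~2 does, via the jet--differentiated Lax equation and the binomial cancellation $(k+1)C_m^{k+1}=mC_{m-1}^k$; this is a genuine addition. One small caveat: your remark that ``the identical computation with the $t$--equation'' works is slightly loose. Since $-4\i k^3\sigma_3+\widehat Q(x,t;k)$ is cubic in $k$ rather than affine, differentiating \eqref{t-eq} $m$ times in $k$ produces up to four terms, $\partial_t f^{(m)}=\sum_{j=0}^{\min(m,3)}C_m^j\,\widehat A^{(j)}f^{(m-j)}$. The cancellation still goes through, but now uses the full multinomial identity $C_m^jC_{m-j}^l=C_m^{l+j}C_{l+j}^j$ (your identity is its $j=1$ specialisation). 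With that remark the $t$--independence follows exactly as you indicate.
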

\begin{proof}
 Denote $$f=\Phi_{\l,1}, \quad g=\Phi_{\r,2}$$
the first and the second columns of $\Phi_{\l,\r},$ respectfully.
$f,g$ are analytic in the upper half-plane.
The spectral coefficient $a(k)$ is given by the formula
$$a(k)=\det[f(x,t;k),g(x,t;k)].$$

The base of induction is the fact that if $a(k_0)=0,$ then $$\exists\mu_0 \textrm{ such that }\quad f=\mu_0 g.$$

Further, suppose that the statement of the lemma is satisfied for $m=1,\cdots, m-1.$
By the Lebesgue formula,
$$a^{(m)}(k_0)=\sum\limits_{k=0}^{m}\det[f^{(m-k)},g^{(k)}].$$
In the above formula we split off the term containing $f^{m},$ and for all the smaller order derivatives of $f$ we substitute their expression in terms of the derivatives of $g.$ We get
\begin{equation}\label{a_der_m_prom}a^{(m)}(k_0)=\det[f^{(m)},g]+\sum\limits_{k=1}^{m}\sum\limits_{j=0}^{m-k}C_{m}^kC_{m-k}^j\mu_{m-k-j}\det[g^{(j)},g^{(k)}].\end{equation}
Now, in the above expression each term $\det[g^{(j)},g^{(k)}]$ with $k\geq1,$ $k\neq j,$ will appear twice: once as
$$C_{m}^kC_{m-k}^j\mu_{m-k-j}\det[g^{(j)},g^{(k)}],$$
and another time as 
$$C_{m}^jC_{m-j}^k\mu_{m-j-k}\det[g^{(k)},g^{(j)}].$$
Since $$C_{m}^jC_{m-j}^k=C_{m}^kC_{m-k}^j=\frac{m!}{j!k!(m-k-j)!},$$
these two terms will cancel each other. The same is for $j=k.$ Hence, the only terms which remain in \eqref{a_der_m_prom} are those corresponding to $j=0,$ i.e.
\begin{equation}\label{a_der_m_prom_2}a^{(m)}(k_0)=\det[f^{(m)},g]-\sum\limits_{k=1}^{m}C_{m}^k\mu_{m-k}\det[g^{(k)},g]=
\det[f^{(m)}-\sum\limits_{k=1}^{m}C_{m}^k\mu_{m-k}g^{(k)},g],\end{equation}
and hence $a^{(m)}(k_0)=0$ implies the existence of a coefficient $\mu_{m}$ such that 
$$f^{(m)}-\sum\limits_{k=1}^{m}C_{m}^k\mu_{m-k}g^{(k)}=\mu_{m}g,\quad \textrm{ or }\quad f^{(m)}=\sum\limits_{k=0}^{m}C_{m}^k\mu_{m-k}g^{(k)}.$$
This finishes the proof.
\end{proof}

\begin{lemma}\label{lem_reg_multi}
 Let $k_0,$ $\mathrm{Im}k_0>0,$ $k_0\notin[\i c_{\l},\i c_{\r}]$ be a zero of $a(.)$ of the $n^{th}$ order, $n\geq1,$ i.e.
$$a(k_0)=\ldots=a^{(n-1)}(k_0)=0,\quad a^{(n)}(k_0)\neq0;$$
let the coefficients $\mu_0,\ldots,\mu_{n-1}$  (independent of $x,t$!) be as in Lemma \ref{lem_zeros_a_multiple}, i.e.
\begin{equation}\label{f_l_in_f_r}f_{\l}^{(m)}(k_0)=\sum\limits_{j=0}^{m}C_{m}^{j}\mu_jf_{\r}^{(m-j)}(k_0),\quad m=0,\ldots,n-1, \qquad\textrm{ where }C_{m}^{j}:=\frac{m!}{j!(m-j)!} \end{equation}
are the binomial coefficients,
and let $T_1(x,t),\ldots,T_{n}(x,t)$ be the coefficients in the Taylor expansion of the function 
$\frac{\e^{2\i\theta(x,t;k)-2\i\theta(x,t;k_0)}}{a(k)}$ at the point $k=k_0,$ i.e.
\begin{equation}\label{a_decomposition}\frac{\e^{2\i\theta(x,t;k)}}{a(k)}=\sum\limits_{q=1}^{n}\frac{T_{q}(x,t)\ \e^{2\i\theta(x,t; k_0)}}{(k-k_0)^q}+\mathcal{O}(1),\quad k\to k_0.\end{equation}
Then 
\begin{equation}\label{regular_expression}
\frac{1}{a(k)}f_{\l}(x,t;k)\e^{\i\theta(x,t;k)}-\left[\sum\limits_{j=1}^{n}\frac{A_j(x,t)\ \e^{2\i\theta(x,t;k_0)}}{(k-k_0)^j}\right]f_{\r}(x,t;k)\e^{-\i\theta(x,t;k)}
\end{equation}
is regular in a vicinity of the point $k=k_0$ provided that 
\begin{equation}\label{A_j}A_{j}(x,t)=\sum\limits_{m=j}^{n}T_{m}(x,t)\frac{\mu_{m-j}}{(m-j)!}=\sum\limits_{m=0}^{n-j}T_{m+j}(x,t)\frac{\mu_{m}}{m!},\quad j=1,\ldots,n.\end{equation}
\end{lemma}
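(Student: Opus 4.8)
The plan is to show that the two terms in \eqref{regular_expression} have the same principal part at $k=k_0$, so that their difference is holomorphic there. Introduce the local coordinate $z=k-k_0$ and set $\theta_0:=\theta(x,t;k_0)$. Since $\theta(x,t;\cdot)$ is entire and $f_{\l},f_{\r}$ are analytic in a neighbourhood of $k_0$ (which lies off the cuts), the vector functions
$$g(k):=f_{\l}(x,t;k)\e^{-\i\theta(x,t;k)},\qquad h(k):=f_{\r}(x,t;k)\e^{-\i\theta(x,t;k)}$$
are analytic at $k_0$. Writing the first term of \eqref{regular_expression} as $\frac{\e^{2\i\theta}}{a(k)}\,g(k)$ and using the Laurent data \eqref{a_decomposition}, its principal part equals
$$\sum_{j=1}^{n}\frac{\e^{2\i\theta_0}}{(k-k_0)^j}\sum_{l=0}^{n-j}T_{j+l}\,\frac{g^{(l)}(k_0)}{l!},$$
because the holomorphic remainder of $\e^{2\i\theta}/a$, multiplied by the analytic $g$, contributes nothing to the pole. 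In the same way the bracketed term of \eqref{regular_expression} has principal part $\sum_{j=1}^{n}\frac{\e^{2\i\theta_0}}{(k-k_0)^j}\sum_{l=0}^{n-j}A_{j+l}\frac{h^{(l)}(k_0)}{l!}$. Hence it suffices to prove, for every $j=1,\dots,n$,
$$\sum_{l=0}^{n-j}T_{j+l}\,\frac{g^{(l)}(k_0)}{l!}=\sum_{l=0}^{n-j}A_{j+l}\,\frac{h^{(l)}(k_0)}{l!}.\qquad(\star)$$

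The second step is a transfer relation between the jets of $g$ and $h$, namely $g^{(m)}(k_0)=\sum_{j=0}^{m}C_m^j\,\mu_j\,h^{(m-j)}(k_0)$ for $m=0,\dots,n-1$. This is cleanest in generating–function form: relation \eqref{f_l_in_f_r} says exactly that the $(n-1)$–jets satisfy $\sum_m \frac{f_{\l}^{(m)}(k_0)}{m!}z^m=\bigl(\sum_j\frac{\mu_j}{j!}z^j\bigr)\bigl(\sum_m\frac{f_{\r}^{(m)}(k_0)}{m!}z^m\bigr)$ modulo $z^n$, since $C_m^j/m!=\frac{1}{j!\,(m-j)!}$. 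Multiplying both sides by the jet of the common scalar factor $\e^{-\i\theta(x,t;k)}$ and reading off coefficients yields the stated relation for $g,h$. Equivalently, one applies the Leibniz rule to $g=f_{\l}\e^{-\i\theta}$, substitutes \eqref{f_l_in_f_r}, and collapses the resulting double sum by the subset–of–a–subset identity $C_m^pC_p^s=C_m^sC_{m-s}^{p-s}$ followed by the Leibniz rule in reverse, recovering $h^{(m-s)}(k_0)$.

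Finally I would verify $(\star)$ by substituting the definition \eqref{A_j} in the form $A_{j+l}=\sum_{s=0}^{n-j-l}T_{j+l+s}\frac{\mu_s}{s!}$ into its right-hand side and reindexing by $m=l+s$:
$$\sum_{l=0}^{n-j}\frac{h^{(l)}(k_0)}{l!}\sum_{s=0}^{n-j-l}T_{j+l+s}\frac{\mu_s}{s!}=\sum_{m=0}^{n-j}T_{j+m}\,\frac{1}{m!}\sum_{l=0}^{m}C_m^l\,\mu_{m-l}\,h^{(l)}(k_0),$$
where I used $\frac{1}{l!\,s!}=\frac{C_m^l}{m!}$ with $s=m-l$. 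Because $m$ runs only up to $n-j\le n-1$, the transfer relation applies and turns the inner sum into $g^{(m)}(k_0)$, producing exactly the left-hand side of $(\star)$; this closes the argument. The only genuine difficulty is the combinatorial bookkeeping of Laurent coefficients against Taylor coefficients, and routing it through the convolution/jet identities above keeps every index within the admissible range $\le n-1$, which is precisely where the hypothesis $a(k_0)=\dots=a^{(n-1)}(k_0)=0$ enters and makes Lemma~\ref{lem_zeros_a_multiple} applicable.
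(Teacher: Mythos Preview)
Your proof is correct and follows essentially the same strategy as the paper: compute the principal part at $k_0$ of each of the two terms in \eqref{regular_expression} and match them coefficient by coefficient, using the convolution structure of \eqref{f_l_in_f_r} to turn derivatives of $f_{\l}$ into derivatives of $f_{\r}$. The paper works directly with $f_{\l},f_{\r}$ after multiplying \eqref{regular_expression} by the analytic nonvanishing factor $\e^{\i\theta-2\i\theta_0}$ and then \emph{solves} the resulting linear system for the $A_j$, whereas you absorb $\e^{-\i\theta}$ into $g,h$, prove the transfer relation for these weighted functions via the generating-function/jet argument, and then \emph{verify} that the given $A_j$ satisfy $(\star)$; the two routes are equivalent and your generating-function bookkeeping is a tidy way to keep all indices in the admissible range $\le n-1$.
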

\begin{remark}
 It is remarkable that the coefficients $A_j$ are determined only by $\mu_j,$ $j=0,\ldots,n-1,$ and not by $f_{\r}$ and its derivatives at $k=k_0.$
\end{remark}
\begin{remark}
 Definition of $A_j, j=1,\ldots,n,$ requires up to the $2n-1^{st}$ derivative of $a(.)$ at the point $k=k_0,$ i.e. $a^{(n)}(k_0),\ldots,a^{(2n-1)}(k_0).$
\end{remark}

\begin{proof}
 Regularity of the expression \eqref{regular_expression} is equivalent to the regularity of 
\begin{equation}\label{regular_expression_2}
\frac{1}{a(k)}f_{\l}(x,t;k)\e^{2\i\theta(x,t;k)-2\i\theta(x,t;k_0)}-\left[\sum\limits_{j=1}^{n}\frac{A_j(x,t)}{(k-k_0)^j}\right]f_{\r}(x,t;k).
\end{equation}
Let's first treat the left summand in \eqref{regular_expression_2}. We have $$f_{\l}=\sum\limits_{m=0}^{n-1}\frac{1}{m!}f_{\l}^{(m)}(k_0)(k-k_0)^m+\mathcal{O}\((k-k_0)^n\),$$
and substituting in the above formula the expressions \eqref{f_l_in_f_r} of $f_{\l}^{(m)}$ in terms of $f_{\r}^{(j)},$ we obtain
$$f_{\l}=\sum\limits_{m=0}^{n-1}\sum\limits_{j=0}^{m}\frac{\mu_{m-j} f_{\r}^{(j)}(k_0)}{j!(m-j)!}(k-k_0)^m+\mathcal{O}\((k-k_0)^n\).$$
Multiplying the above formula by \eqref{a_decomposition}, we obtain
\begin{equation}\label{first_summand}
\frac{\e^{2\i\theta(x,t;k)}}{a(k)\e^{2\i\theta(x,t;k_0)}}f_{\l}(x,t;k)=\sum\limits_{p=1}^{n}\sum\limits_{m=0}^{n-p}T_{m+p}(x,t)\left[\sum\limits_{j=0}^{m}\frac{\mu_{m-j}f_{\r}^{(j)}(k_0)}{j!(m-j)!}\right](k-k_0)^{-p}+\mathcal{O}(1).\end{equation}
The second summand in \eqref{regular_expression_2} has the following decomposition in the Taylor series:
\begin{equation}\label{second_summand}\left[\sum\limits_{j=1}^{n}\frac{A_j(x,t)}{(k-k_0)^j}\right]f_{\r}(x,t;k) = \sum\limits_{p=1}^{n}\sum\limits_{j=p}^{n}\frac{A_j(x,t)f_{\r}^{(j-p)}(k_0)}{(j-p)!}(k-k_0)^{-p}+\mathcal{O}(1).\end{equation}
Equating terms $(k-k_0)^{-p},$ $p=1,\ldots,n,$ in \eqref{first_summand} and \eqref{second_summand}, we come to the system of equations for $A_j:$
\begin{equation}\label{equations_for_A_j}\sum\limits_{j=p}^{n}\frac{A_jf_{\r}^{(j-p)}(k_0)}{(j-p)!}=\sum\limits_{m=0}^{n-p}T_{m+p}\sum\limits_{j=0}^{m}\frac{\mu_{m-j}f_{\r}^{(j)}(k_0)}{j!(m-j)!},\quad p=1,\ldots,n.\end{equation}
The system \eqref{equations_for_A_j} allows   us to determine $A_j, j=1,\ldots,n,$ one by one, starting from $A_n.$
However, it is not clear at this point that $A_j$ do not depend on $f_{\r}$ and its derivatives. In order to solve the system \eqref{equations_for_A_j}, let us rearrange the terms in \eqref{equations_for_A_j}:
\begin{equation}\label{equations_for_A_j_2}
\sum\limits_{j=0}^{n-p}\frac{A_{j+p}f_{\r}^{(j)}(k_0)}{j!}=\sum\limits_{j=0}^{n-p}\left[\sum\limits_{m=j}^{n-p}T_{m+p}\frac{\mu_{m-j}}{(m-j)!}\right]\frac{f_{\r}^{(j)}(k_0)}{j!},\quad p=1,\ldots,n.
\end{equation}
The above expression \eqref{equations_for_A_j_2} is satisfied provided that 
\begin{equation}\label{A_j+p}A_{j+p}=\sum\limits_{m=j}^{n-p}T_{m+p}\frac{\mu_{m-j}}{(m-j)!},\quad p=1,\ldots,n,\quad j=0,\ldots,n-p,\end{equation}
however, \eqref{A_j+p} is not yet the definition of $A_{j+p},$ since the r.-h.-s. might depend on different choices for $j,p$ with $j+p.$ However, reorganizing terms in \eqref{A_j+p}, we obtain
$$A_j(x,t)=\sum\limits_{m=j}^{n}T_{m}(x,t)\frac{\mu_{m-j}}{(m-j)!},\quad j=1,\ldots,n.$$ 
\end{proof}

\subsection{Handling pole conditions}
The analysis of  RH problems with  simple poles  was first introduced  in \cite{GT09}  for KdV  (see also \cite{M15} for the CH equation).

%
%

Here we explain how to proceed, if our spectral problem has poles of multiple order.
First of all, the pole condition 3 of the RH problem \ref{RH_problem_1} will change to the following conditions:
for $\kappa_j, \Re \kappa_j>0, \Im \kappa_j>0,$   let  $M_1, M_2$ be  the first and the second columns of $M.$ Then the following pole conditions hold:
\begin{equation*}
\begin{split}
&
M_1(k) - \left[\sum\limits_{l=1}^{n_j}\frac{\nu_j^{(l)}\ e^{2\i \theta(x,t;k)}}{(k-\kappa_j)^l}\right]M_2(k) = \mathcal{O}(1) , k\to \kappa_j,\\
&
M_1(k) - \left[\sum\limits_{l=1}^{n_j}\frac{\overline{\nu_j^{(l)}}\ e^{2\i \theta(x,t;k)}}{(-1)^l(k+\ol{\kappa_j})^l}\right]M_2(k) = \mathcal{O}(1) , k\to -\ol{\kappa_j},
\\
&
M_2(k) + \left[\sum\limits_{l=1}^{n_j}\frac{\ol{\nu_j^{(l)}}\ e^{-2\i \theta(x,t;k)}}{(k-\ol{\kappa_j})^l}\right]M_1(k) = \mathcal{O}(1) , k\to \ol{\kappa_j},\\
&
M_2(k) + \left[\sum\limits_{l=1}^{n_j}\frac{{\nu_j^{(l)}}\ e^{-2\i \theta(x,t;k)}}{(-1)^{l}(k+{\kappa_j})^l}\right]M_1(k) = \mathcal{O}(1) , k\to -{\kappa_j},
\\
&
M_2(k)=\mathcal{O}(1), k\to \kappa_j, \mbox { and } k \to -\ol{\kappa_j},\qquad  M_1(k)=\mathcal{O}(1), k\to \ol{\kappa_j} \mbox{ and } k\to -\kappa_j. \ \ 
\end{split}
\end{equation*}

\subsubsection{Applications: Solitons and breathers of multiple order}
They are generated by the following meromorphic RH problem.

\begin{RHP}\label{RHP_for_multi_solitons_breathers}
\begin{enumerate}Find a meromorphic $2\times2$ matrix-valued function $M(x,t;k)$ such that
\item $M$ is meromorphic in $\mathbb{C}$ with poles at $\kappa,$ $\ol{\kappa}$, $-\kappa,$ $-\ol{\kappa},$
for some  $\Re\kappa\geq0,$ $\Im\kappa>0;$
\item pole conditions, upper half-plane:
$$M_{1}(k)-\left[\sum\limits_{j=1}^{n}\frac{A_j\ \e^{2\i \theta(\kappa)}}{(k-\kappa)^j}\right]M_2(k)=\mathcal{O}(1),\quad k\to \kappa,$$
$$M_{1}(k)-\left[\sum\limits_{j=1}^{n}\frac{(-1)^j\ \ol{A_j}\ \e^{2\i \theta(-\ol\kappa)} }{\(k+\ol{\kappa}\)^j}\right]M_2(k)=\mathcal{O}(1),\quad k\to -\ol{\kappa},$$
lower half-plane:
$$M_{2}(k)-\left[\sum\limits_{j=1}^{n}\frac{-\ol{A_j}\ \e^{-2\i \theta(\ol\kappa)} }{\(k-\ol{\kappa}\)^j}\right]M_1(k)=\mathcal{O}(1),\quad k\to \ol{\kappa},$$
$$M_{2}(k)-\left[\sum\limits_{j=1}^{n}\frac{(-1)^{j-1}A_j \ \e^{2\i \theta(\kappa)}}{(k+\kappa)^j}\right]M_1(k)=\mathcal{O}(1),\quad k\to -\kappa;$$

\item asymptotics: $M(x,t;k)\to I$ as $k\to\infty.$
\end{enumerate}
\end{RHP}
Here $A_j = A_j(x,t)$ are as in \eqref{a_decomposition}, \eqref{A_j}, i.e.
\begin{itemize}
 \item simple pole $n=1,\ a(\kappa)=0,\ \dot{a}(\kappa)\neq0.$ In this case we have 
$$T_1=\frac{\e^{2\i\theta(\kappa)}}{\dot{a}(\kappa)},\quad A_1=\mu_0T_1\,.$$
Here $\theta(k)=4k^3t+xk,$ and the dependence on $x,t$ comes only from $\theta(k).$

\item double-pole $n=2, \ a(\kappa)=0,\ \dot{a}(\kappa)=0,\ \ddot{a}(\kappa)\neq0.$ In this case we have 
$$T_2=\frac{2\e^{2\i\theta(\kappa)}}{\ddot{a}(\kappa)},\quad T_1=\frac{2\i\e^{2\i\theta(\kappa)}\left[6\ddot{a}(\kappa)\(x+12\kappa^2t\)+\i\dddot{a}(\kappa)\right]}{3\(\ddot{a}(\kappa)\)^2},$$
and $ A_2=T_2\mu_0,\quad A_1=\mu_0T_1+\mu_1T_2,$
\item triple-pole $n=3, \ a(\kappa)=0,\ \dot{a}(\kappa)=0,\ \ddot{a}(\kappa)=0, \dddot{a}(\kappa)\neq0.$ In this case we have 
$$T_3=\frac{6\e^{2\i\theta(\kappa)}}{\dddot{a}(\kappa)},\quad T_2=\frac{24\ \i\ \dddot{a}(\kappa)\(x+12\kappa^2t\)-3a^{(4)}(\kappa)}{2\(\dddot{a}(\kappa)\)^2}\e^{2\i\theta(\kappa)},$$
\begin{multline*}
T_1\e^{-2\i\theta(\kappa)}=\frac{-12\left[(x+12\kappa^2t)^2-12\i\kappa t\right]}{\dddot{a}(\kappa)}-\\
-\frac{3\i a^{(4)}(\kappa)\(x+12\kappa^2t\)}{\({\dddot{a}(\kappa)}\)^2}
+\frac{15\( a^{(4)}(\kappa) \)^2 -12a^{(3)}(\kappa)a^{(5)}(\kappa)}{40\({\dddot{a}(\kappa)}\)^3},
\end{multline*}
$$ A_3=T_3\mu_0,\quad A_2=\mu_0T_2+\mu_1T_3,\quad A_1=\mu_0T_1+\mu_1T_2+\frac{\mu_2}{2}T_3.$$
\end{itemize}

Taking into account symmetry \eqref{Symmetries}, the solution can be found in the form
$$M(k)=\begin{pmatrix}
        1+\sum\limits_{j=1}^n\frac{\alpha_j(x,t)}{(k-\kappa)^j} + \sum\limits_{j=1}^n\frac{(-1)^j\ol{\alpha_j(x,t)}}{(k+\ol\kappa)^j} & 
\sum\limits_{j=1}^n\frac{-\ol{\beta_j(x,t)}}{(k-\ol\kappa)^j} + \sum\limits_{j=1}^n\frac{(-1)^{j-1}\ \beta_j(x,t)}{(k+\kappa)^j}
 \\
\sum\limits_{j=1}^n\frac{\beta_j(x,t)}{(k-\kappa)^j} + \sum\limits_{j=1}^n\frac{(-1)^j\ol{\beta_j(x,t)}}{(k+\ol\kappa)^j}
 & 1+ \sum\limits_{j=1}^n\frac{\ol{\alpha_j(x,t)}}{(k-\ol\kappa)^j} + \sum\limits_{j=1}^n\frac{(-1)^j\ \alpha_j(x,t)}{(k+\kappa)^j}
       \end{pmatrix},
$$
(If $\kappa=-\ol{\kappa}\neq0,$ then we make a straightforward reduction of the above expression).
Then we find the solution of the MKdV by the formula
\begin{equation*}q(x,t)=2\i\(\beta_1(x,t)-\ol{\beta_1(x,t)}\)=-4\Im\beta_1(x,t).\end{equation*}

If $\Im\kappa=0,\ \Re{\kappa>0}$, then we have a multiple soliton, if $\Im \kappa>0,\ \Re\kappa>0,$ then we have a multiple breather.
As was pointed out in \cite{WO82}, double-pole soliton can be obtained as a limit of a breather with $\Re\kappa\to0.$

\end{appendices}
\vskip 1cm
\noindent{\bf Acknowledgment.}
This manuscript was partially developed while T.G. was Chair Morlet at the  Centre International de Rencontres Mathematiques, (CIRM), Luminy France.
T.G. thanks CIRM for  the warm hospitality.  T.G. wish to thank   GNFM (Gruppo Nazionale di Fisica matematica) and INDAM.


\begin{thebibliography}{99 }


\bibitem{AB} S. Abenda, T. Grava,  Modulation of Camassa-Holm equation and reciprocal transformations. Ann. Inst. Fourier (Grenoble) 55 (2005), no. 6, 1803--1834. 

\bibitem{Clarkson} M. Ablowitz,  P. Clarkson,  
Solitons, nonlinear evolution equations and inverse scattering.
London Mathematical Society Lecture Note Series, 149. Cambridge University Press, Cambridge, 1991. 

\bibitem{AS} M.~Ablowitz, H.~Segur,  Asymptotic solutions of the Korteweg-deVries equation. Studies in Appl. Math. 57 (1976), no. 1, 13--44.

\bibitem{Alejo} M.A. Alejo,  Focusing MKdV breather solutions with nonvanishing boundary condition by the inverse scattering method. J. Nonlinear Math. Phys. 19 (2012), no. 1, 1250009, 17 pp.

\bibitem{AK91}  I.~Anders, V.~Kotlyarov,  Characterization of scattering data for Schrödinger and Dirac operators, Theoret. and Math. Phys. 88, no. 1, (1992), 725--734.





\bibitem{BC} R. Beals, R.R. Coifman, Scattering and inverse scattering for first order systems. Comm. Pure Appl. Math. 37 (1984), no. 1, 39--90.


\bibitem{BM} M.~Bertola, A.~Minakov,  Laguerre polynomials and transitional asymptotics of the modified Korteweg-de Vries equation for step-like initial data, arXiv:1711.02362, Analysis and Mathematical Physics, 2018, \url{"https://doi.org/10.1007/s13324-018-0273-1"}




\bibitem{BiM}  G.~Biondini, D.~Mantzavinos,  Long-time asymptotics for the focusing nonlinear Schrödinger equation with nonzero boundary conditions at infinity and asymptotic stage of modulational instability. 
Comm. Pure Appl. Math. 70 (2017), no. 12, 2300--2365.



\bibitem{Bikb1} R.F.~Bikbaev,  Structure of a shock wave in the theory of the Korteweg-de Vries equation. Phys. Lett. A   \textbf{141/5-6}  (1989), 289--293.







\bibitem{BET}
A.~Boutet de Monvel, I.~Egorova, G.~Teschl,  Inverse scattering theory for one-dimensional Schrödinger operators with steplike finite-gap potentials. J. Anal. Math. 106 (2008), 271--316.

\bibitem{BJM}
M. Borghese,  R. Jenkins, K.D. T.-R. McLaughlin,
Long time asymptotic behavior of the focusing nonlinear Schr\"odinger equation.
Ann. Inst. H. Poincaré Anal. Non Linéaire 35 (2018), no. 4, 887--920. 

\bibitem{BIK} A.~Boutet de Monvel, A.~Its and V.~Kotlyarov,  Long-time asymptotics for the focusing NLS equation with time-periodic boundary condition on the half-line. Comm. Math. Phys. 290 (2009), no. 2, 479--522.

\bibitem{BoutetKotlyarovShepelskyZheng} A. Boutet de Monvel, V. Kotlyarov,   D. Shepelsky,  and C. Zheng, 
Initial boundary value problems for integrable systems: towards the long time asymptotics. Nonlinearity 23 (2010), no. 10, 2483--2499. 




\bibitem{BKS11}  A. Boutet de Monvel, V.P.  Kotlyarov  and D. Shepelsky,  
 Focusing NLS equation: long-time dynamics of step-like initial
data.  {International Mathematics Research Notices} \textbf{7}  (2011), 1613--1653. 

\bibitem{MLS20} A. Boutet de Monvel, J. Lenells, D. Shepelsky. The focusing NLS equation with step-like oscillating background: scenarios of long-time asymptotics, 2020, arXiv:2003.08862.

\bibitem{BV} R. Buckingham and S. Venakides,  Long-time asymptotics of the non-linear Schrodinger equation shock problem. {Comm. Pure Appl. Math.} Vol.60, Issue 9 (2007), 1349--1414.

\bibitem{BF} P.~Byrd, M.~Friedman, Handbook of elliptic integrals for engineers and scientists. Second edition, revised, 1971.

\bibitem{C1} R. Camassa, G. Falqui, G, Ortenzi, M. Pedroni, G. Pitton, Singularity formation as a wetting mechanism in a dispersionless water wave model, Nonlinearity 32, (2019) pp 4079--4116.
\bibitem{C2} R. Camassa, G. Falqui, G, Ortenzi, M. Pedroni, C. Thomson, Hydrodynamic Models and Confinement Effects by Horizontal Boundaries, Journal of Nonlinear Science 29, (2019) pp. 1445--1498.

\bibitem{C3} R. Camassa, G. Falqui, G, Ortenzi, Two-layer interfacial flows beyond the Boussinesq approximation: A Hamiltonian approach, Nonlinearity 30, (2017) pp. 466--491.


\bibitem{CG}  T. Claeys and  T. Grava, Solitonic asymptotics for the Korteweg-de Vries equation in the small dispersion limit. SIAM J. Math. Analysis, vol 42, (2010), pp. 2132--2154.
\bibitem{CL} Gong Chen, Jiaqi Liu, Long-time asymptotics of the modified KdV equation in weighted Sobolev spaces. Preprint: arxiv:1903.03855.

\bibitem{CL2} Gong Chen, Jiaqi Liu, Soliton resolution for the modified KdV equation. Preprint: arxiv:1907.07115.

\bibitem{Grim1} K.W. Chow, R.H. Grimshaw,  E. Ding,  Interactions of breathers and solitons in the extended Korteweg-de Vries equation. Wave Motion 43 (2005), no. 2, 158--166.





\bibitem{CGM} S.~Clarke, R.~Grimshaw, P.~Miller,  E. Pelinovsky and  T. Talipova,  On the generation of solitons and breathers in the modified Korteweg-de Vries equation. Chaos 10 (2000), no. 2, 383--392.
\bibitem{CJ} S.~Cuccagna  and R.~Jenkins, On the asymptotic stability of N-soliton solutions of thedefocusingnonlinearSchrödingerequation,  Comm. Math. Phys. 343 (2016), n.3, 921--969.

\bibitem{DIZ}  P. Deift,    A. Its,  X. Zhou,   Long-time asymptotics for integrable nonlinear wave equations. Important developments in soliton theory, 181–204, Springer Ser. Nonlinear Dynam.,Springer, Berlin, 1993.

\bibitem{DKMVZ} P. Deift, T. Kriecherbauer, K. T-R McLaughlin, S. Venakides, X. Zhou. Uniform asymptotics for polynomials orthogonal with respect to varying exponential weights and applications to universality questions in random matrix theory, 
{Comm. Pure Appl. Math.}  Volume 52, Issue 11, (1999),  1335--1425.



\bibitem{DVZ} P.~Deift, S.~Venakides, X.~Zhou,   New results in small dispersion KdV by an extension of the steepest descent method for Riemann-Hilbert problems. Internat. Math. Res. Notices  6, (1997),  286--299.





\bibitem{DZnonint} P. Deift, X. Zhou, Near integrable systems on the line. A case study-perturbation theory of the defocusing nonlinear Schrödinger equation. Math. Res. Lett. 4 (1997), no. 5, 761--772.

\bibitem{DZ93} P. Deift  and X. Zhou,   A steepest descent method for oscillatory Riemann -- Hilbert problems. Asymptotics for the MKdV equation.  Annals of Mathematics {137/2} (1993),  295 - 368.
 
\bibitem{Deift99} P. Deift,   Orthogonal polynomials and random
matrices: a Rie\-mann --- Hilbert approach. Courant Lecture Notes in Mathematics, 3. New York University, Courant Institute of Mathematical Sciences, New York; American Mathematical Society, Providence, RI, 1999. viii+273 pp. 
\bibitem{DMM} D.~Momar, K. D. T. -R.~McLaughlin, P. D. Miller,  Dispersive Asymptotics for Linear and Integrable Equations by the $\bar{\partial}$ Steepest Descent Method in  {\it Nonlinear dispersive partial differential equations and inverse scattering} , p.253–291, Fields Inst. Commun., 83, Springer, New York, 2019.
\bibitem{Driscol} C. F. Driscoll, T. M. O'Neil, Modulational instability of cnoidal wave solutions of the modified Kortweg-de Vries equation, J. Math. Phys., 17 (1975), pp. 1196.
\bibitem{EHS}  G.A. El, M.  Hoefer,  M.Shearer,  Dispersive and diffusive-dispersive shock waves for nonconvex conservation laws. SIAM Rev. 59 (2017), no. 1, 3--61. 


\bibitem{EGKT} I. Egorova,  Z. Gladka,  V. Kotlyarov,  G. Teschl,   Long-Time Asymptotics for the Korteweg-de Vries Equation with Steplike Initial Data. {Nonlinearity } {26/7}  (2012) 1839--1864.
\bibitem{EGT16} I. Egorova, Z. Gladka, G.Teschl. On the form of dispersive shock waves of the Korteweg-de Vries equation.  Zh. Mat. Fiz. Anal. Geom. 12, (2016), 3--16.

\bibitem{Germain}  P. Germain,  F. Pusateri,   F. Rousset,    Asymptotic stability of solitons for mKdV.  Adv. Math.299 (2016), 272--330.
\bibitem{GTP} T. Grava, V.U. Pierce,  Fei-Ran Tian, Initial value problem of the Whitham equations for the Camassa-Holm equation. Phys. D 238 (2009), no. 1, 55 - 66. 
\bibitem{Grava_LN} T. Grava,  Whitham Modulation Equations and Application to Small Dispersion Asymptotics and Long Time Asymptotics of NonlinearDispersive Equations. In M. Onorato, S. Resitori and F. Baronio (Eds.), Rogue and Shock Waves in Nonlinear Dispersive Media pp. 309-335, (2016), LectureNotes in Physics; Vol. 926. Springer. 
\bibitem{GT} T. Grava,  Fei-Ran Tian, The generation, propagation, and extinction of multiphases in the KdV zero-dispersion limit. Comm. Pure Appl. Math. 55 (2002), no. 12, 1569--1639.
\bibitem{GS} R.H.J. Grimshaw, N.F. Smyth, Resonant flow of a stratified fluid over topography, J. Fluid Mech. 169 (1986), 429--464.

\bibitem{GGM} M. Girotti, T. Grava,   K. D. T. -R.  McLaughlin  Rigorous asymptotics of a KdV soliton gas, Preprint https://arxiv.org/pdf/1807.00608.pdf.
\bibitem{GT09} K. Grunert, G. Teschl, Long-time asymptotics for the Korteweg-de Vries equation via nonlinear steepest descent, Math. Phys. Anal. Geom. 12, (2009) 287--324.

\bibitem{GP} A.V. Gurevich and  L.P. Pitaevskii,  Decay of Initial Discontinuity in the Korteweg-de Vries Equation. {JETP Letters} {17/5} (1973)  p.193.

\bibitem{Harrop} B. Harrop-Griffiths,  Long time behavior of solutions to the mKdV. Comm. Partial Differential Equations  {\bf 41} (2016), no. 2, 282--317.

\bibitem{Hayashi} N. Hayashi,  P. Naumkin, P. On the modified Korteweg-de Vries equation.  Math. Phys. Anal.Geom. {\bf 4} (2001), no. 3, 197--227.

\bibitem{Its81} A.~Its,  Asymptotic behavior of the solutions to the nonlinear Schrödinger equation, and isomonodromic deformations of systems of linear differential equations. (Russian) Dokl. Akad. Nauk SSSR 261 (1981), no. 1, 14--18.

\bibitem{Jakovleva} A.I. Jakovleva,  Master thesis "Application of inverse scattering transform method to a Cauchy problem for the modified Korteweg-de Vries equation", under supervision of E.~Khruslov, Kharkiv, 1980. [Ukrainian]

\bibitem{Jenkins}R. Jenkins,   Regularization of a sharp shock by the defocusing nonlinear Schr\"odinger equation. Nonlinearity 28 (2015),  2131--2180.


\bibitem{KMM} S. Kamvissis,  K. D. T.-RMcLaughlin,  P.D. Miller. 
Semiclassical soliton ensembles for the focusing nonlinear Schrödinger equation. Annals of Mathematics Studies, 154.
Princeton University Press, Princeton, NJ, 2003. xii+265 pp.


\bibitem{Kh2} E.Y. Khruslov, Asymptotics of the solution of the Cauchy prob\-lem for the Korteweg de Vries equation with initial data of step type. {Matem. Sbornik (New Series)} {99(141):2} (1976), 261--281.

\bibitem{KK} E. Y. Khruslov and V.P. Kotlyarov,  Asymptotic solitons of the modified Korteweg-de Vries equation.   Inverse problems, \textbf{5/6},  (1989), 1075--1088.



\bibitem{KM} V.P. Kotlyarov and A. Minakov,  Riemann-Hilbert problem to the modified Korteveg de Vries equation:
Long-time dynamics of the steplike initial data. {Journal of Mathematical Physics} {51} (2010) 093506

\bibitem{KM2} V.P. Kotlyarov and A. Minakov, Step-Initial Function to the MKdV Equation: Hyper-Elliptic Long-Time Asymptotics of the Solution.  Journal of Mathematical Physics, Analysis, Geometry,  { \bf 8/1}, (2011), 37--61.

\bibitem{KMshort} V. Kotlyarov, Vladimir, A/  Minakov, Riemann-Hilbert problems and the mKdV equation with step initial data: short-time behavior of solutions and the nonlinear Gibbs-type phenomenon. J. Phys. A 45 (2012), no. 32, 325201, 17 pp.

\bibitem{KM2015} V.P. Kotlyarov and A. Minakov, Modulated elliptic wave and asymptotic solitons in a shock problem to the modified Kortweg--de Vries equation. J. Phys. A 48 (2015), no. 30, 305201, 35 pp.

\bibitem{Levermore} C.D. Levermore,  The hyperbolic nature of the zero dispersion KdV limit. Comm. Partial Differential Equations 13 (1988), no. 4, 495--514.

\bibitem{Lawden} D.~Lawden, Elliptic functions and applications. Applied Mathematical Sciences, 80. Springer--Verlag, New York, 1989.

\bibitem{Leach} 
 J.A. Leach,  An initial-value problem for the modified Korteweg-de Vries equation. IMA J. Appl. Math. 78 (2013), no. 6, 1196--1213.
\bibitem{Linares}F. Linares, G. Ponce,  Introduction to nonlinear dispersive equations. Second edition. Uni-versitext.Springer, New York, 2015. xiv+301 pp.



\bibitem{Marchant}  T.R. Marchant,  Undular bores and the initial-boundary value problem for the modified Korteweg-de Vries equation. Wave Motion 45 (2008), no. 4, 540--555.


\bibitem{MM1} K. D. T.-R. McLaughlin and P. D. Miller,  The $\bar{\partial}$ steepest descent method for orthogonal polynomials on the real line with varying weights. Intern. Math. Res. Notices  2008,   1--66.

\bibitem{M11} A. Minakov,   Long-time behaviour of the solution to the MKdV equation with step-like initial data. { J. Phys. A: Math. Theor.} {\bf 44}, (2011), 085206.






\bibitem{M_disser} A. Minakov, PhD thesis "Riemann – Hilbert problems and the modified Korteweg de Vries equation: asymptotic analysis of solutions with step-like initial data", 2013.

\bibitem{M15} A. Minakov,  Riemann-Hilbert problem for Camassa-Holm equation with step-like initial data. J. Math. Anal. Appl. 429 (2015), no. 1, 81--104.

\bibitem{M16} A. Minakov,  Asymptotics of step-like solutions for the Camassa-Holm equation. J. Differential Equations 261 (2016), no. 11, 6055--6098.


\bibitem{PS} E .Pelinovsky   and V. Sokolov, Nonlinear theory of propagation of electromagnetic waves in size-quantized films. Radiophys. Qunatum Electron  19, (1976) 378--382.





\bibitem{RTP} M. Rudman, T. Talipova, E. Pelinovsky, Dynamics of modulationally unstable ion-acoustic wavepackets in plasmas with negative ions. J. Plasma Physics, 74(1) (2008), 639--656.

   \bibitem{RS20}
Y.  Rybalko, D. Shepelsky, Curved wedges in the long-time asymptotics for the integrable nonlocal nonlinear Schrödinger equation, arXiv:2004.05987, 2020.




\bibitem{ElHoefer} P. Sprenger  M. Hoefer and  G.El,  Hydrodynamic Optical Soliton Tunneling, arXiv:1711.05239, 2017.

\bibitem{Grimshaw}  K. Terletska,   K.T. Jung, T. Talipova, V. Maderich, I.  Brovchenko,  R. Grimshaw, Internal breather-like wave generation by the second mode solitary wave interaction with a step.
  Physics of Fluids, 28, Issue 11,  (2016)  Article number 116602.

  
\bibitem{Wadati}  M. Wadati,  The modified Korteweg-de Vries equation. J. Phys. Soc. Japan 34 (1973), 1289--1296.

\bibitem{WO82} M.~Wadati, K.~Ohkuma,  Multiple-pole solutions of the modified Korteweg-de Vries equation. J. Phys. Soc. Japan 51 (1982), no. 6, 2029--2035.

\bibitem{W} G.B.~Whitham, Linear and nonlinear waves. Pure and Applied Mathematics. Wiley-Interscience [John Wiley \& Sons], New York-London-Sydney, 1974

\bibitem{Zakharov-Shabat} V.E. Zaharov and  A.B. Shabat,  A plan for integrating the nonlinear equations of mathematical physics by the method of the inverse scattering problem. I. 
(Russian) Funkcional. Anal. i Priložen. 8 (1974), no. 3, 43--53.

\bibitem{MZ}V. E. Zakharov and S. V. Manakov, Asymptotic behavior of nonlinear wave systems integrated by the inverse method.  Sov. Phys.-JETP 44 (1976), 106--112. 


\bibitem{Zhou} X. Zhou,  The Riemann-Hilbert problem and inverse scattering. SIAM J. Math. Anal., 1989, Vol. 20, 966--986.

\end{thebibliography}
\end{document}